\lstdefinelanguage{pseudo}{morekeywords={init,with,or,if,then,else,fi,and,not,while,do,od,distinct,
    case, goto,local,algorithm, function, for, each, times, from, to,
    variables, procedure, recursive, return},
  morecomment=[l]{//}, morecomment=[s]{/*}{*/},
  mathescape=true,escapechar={@},
  basicstyle=\sffamily\small,
  commentstyle=\itshape\rmfamily\small,
  keywordstyle=\sffamily\bfseries\small
}
\definecolor{processblue}{cmyk}{0.96,0,0,0}
\newcommand{\dashact}[2][]{\ext@arrow 0359\rightarrowfill@@{#1}{#2}}
\def\rightarrowfill@@{\arrowfill@@\relax\relbar\dashrightarrow}
\def\arrowfill@@#1#2#3#4{%
	$\m@th\thickmuskip0mu\medmuskip\thickmuskip\thinmuskip\thickmuskip
	\relax#4#1
	\xleaders\hbox{$#4#2$}\hfill
	#3$%
}
\newcommand{\nn}{\mathbb{N}}
\newcommand{\zn}{\mathbb{Z}}
\newcommand{\qnz}{\mathbb{Q}_{+}}
\newcommand{\be}{\begin{enumerate}}
\newcommand{\ee}{\end{enumerate}}
\newcommand{\bc}{\begin{center}}
\newcommand{\ec}{\end{center}}
\newcommand{\bi}{\begin{itemize}}
\newcommand{\ei}{\end{itemize}}
\newcommand{\act}{\xrightarrow}
\newcommand{\Act}{\xRightarrow}
\newcommand{\fin}{q_f}
\newcommand{\node}{\mathsf{n}}
\newcommand{\mach}{\mathcal{M}}
\newcommand{\bx}{\mathbf{x}}
\newcommand{\bz}{\mathbf{z}}
\newcommand{\bu}{\mathbf{u}}
\newcommand{\bv}{\mathbf{v}}
\newcommand{\bw}{\mathbf{w}}
\newcommand{\init}{q_{in}}
\newcommand\slice[2]{#1{\raise-.5ex\hbox{\ensuremath|}}_{#2}}
\newcommand{\cpp}[1]{#1{\footnotesize{\texttt{++}}}}
\newcommand{\cmm}[1]{#1{\footnotesize{\texttt{-\hspace{0.1pt}-}}}}
\newcommand{\cpptwo}[1]{#1{\footnotesize{\texttt{+=2}}}}
\newcommand{\cppfour}[1]{#1{\footnotesize{\texttt{+=4}}}}
\newcommand{\cmmtwo}[1]{#1{\footnotesize{\texttt{-\hspace{0.1pt}=2}}}}
\newcommand{\inc}{\mathbf{inc}}
\newcommand{\double}{\mathbf{double}}
\newcommand{\nop}{\mathbf{nop}}
\newcommand{\sta}[1]{\mathsf{state}(#1)}
\newcommand{\valu}[1]{\mathsf{val}(#1)}
\mathchardef\mhyphen="2D
\newcommand{\N}{\mathbb{N}}                    
\newcommand{\Z}{\mathbb{Z}}                    
\newcommand{\Q}{\mathbb{Q}}                    
\renewcommand{\vec}[1]{\bm{#1}}                
\newcommand{\supp}[1]{{\llbracket#1\rrbracket}}  
\renewcommand{\vec}[1]{\bm{#1}}                
\newcommand{\val}{\mathrm{val}}                
\newcommand{\CGVAS}{\mathsf{CGVAS}}
\newcommand{\CRVASZ}{\mathsf{CRVASZ}}
\newcommand{\VAS}{\mathsf{VAS}}
\newcommand{\CVAS}{\mathsf{CVAS}}
\newcommand{\rl}{\mathsf{RL}}
\newcommand{\PVASS}{\mathsf{PVASS}}
\newcommand{\QVASS}{\qnz\mhyphen\mathsf{VASS}}
\newcommand{\QPVASS}{\qnz\mhyphen\mathsf{PVASS}}
\newcommand{\TCM}{\mathsf{2CM}^{\cdot 2,+1}_{\rl}} 
\newcommand{\IVASSRL}{[0,1]\mhyphen\mathsf{VASS}_{\rl}^{0?}}
\newcommand{\QVASSRL}{\qnz\mhyphen\mathsf{VASS}_{\rl}}
\newcommand{\CVASS}{\QVASS}
\newcommand{\MCM}{\TCM}
\newcommand{\NCCM}{\IVASSRL}
\def\cC{\mathcal{C}}
\def\cG{\mathcal{G}}
\def\cR{\mathcal{R}}
\def\cV{\mathcal{V}}
\def\cA{\mathcal{A}}
\def\cT{\mathcal{T}}
\def\cM{\mathcal{M}}
\def\rat{\mathbb{Q}}
\def\ratplus{\mathbb{Q}_+}
\def\integ{\mathbb{Z}}
\def\nat{\mathbb{N}}
\def\ratmonoid{((\mathbb{Q}^d)^*,\cdot,\varepsilon)}
\def\ratM{M_{\mathrm{rat}}}
\def\Mstar{M^{\bigstar}}
\newcommand{\addstar}[1]{{#1}^{\bigstar}}
\newcommand{\deriv}[1]{\xRightarrow{#1}} 
\newcommand{\derivstar}{\xRightarrow{*}} 
\newcommand{\step}[2]{\xrightarrow{#2}_{#1}} 
\newcommand{\mvec}[1]{\mathbf{#1}}
\def\pifwd{\pi_{\mathsf{fwd}}}
\def\pibwd{\pi_{\mathsf{bwd}}}
\def\extright{\upharpoonright}
\def\extleft{\upharpoonleft}
\def\ctrl{\mathrm{ctrl}}
\newcommand{\NEXPTIME}{\mathsf{NEXPTIME}}
\newcommand{\eqxrightarrow}[2]{%
  \mathop{%
    \vtop{%
      \m@th 
      \offinterlineskip 
      \ialign{%
        \hfil##\hfil\cr
        \rightarrowfill\cr
        \hphantom{$\scriptstyle\mskip8mu{#2}\mskip8mu$}\cr
        \vrule height0pt width 1.5em\cr
        $\scriptscriptstyle {#1}$\cr
      }%
    }%
  }\limits^{#2}%
}
\newcommand{\Parikh}{\Psi}
\newcommand{\bala}[1]{\todo[color=blue!30]{\small #1}}
\newcommand{\gzin}[1]{\todo[color=green!30,inline]{\small GZ: #1}}
\newcommand{\balain}[1][]{\todo[color=blue!30,inline]{#1}}
\newcommand{\rst}[1]{\todo[color=yellow]{\small Ram: #1}}
\newcommand{\ramen}[1]{\todo[color=yellow!30, inline]{\small Ram: #1}}
\begin{document}
	
	\title{Reachability in Continuous Pushdown VASS}

	\author{A. R. Balasubramanian}
	\email{bayikudi@mpi-sws.org}
	\affiliation{
		\institution{Max Planck Institute for Software Systems (MPI-SWS)}
		\city{Kaiserslautern}
		\country{Germany}}
	\orcid{https://orcid.org/0000-0002-7258-5445}
	\authornote{A part of the work was done when this author was at Technical University of Munich (TUM).}
	
	\author{Rupak Majumdar}
	\email{rupak@mpi-sws.org}
	\affiliation{
		\institution{Max Planck Institute for Software Systems (MPI-SWS)}
		\city{Kaiserslautern}
		\country{Germany}}
	\orcid{https://orcid.org/0000-0003-2136-0542}
	
	\author{Ramanathan S. Thinniyam}
	\email{ramanathan.s.thinniyam@it.uu.se}
	\affiliation{
		\institution{Uppsala University}
		\city{Uppsala}
		\country{Sweden}}
	\orcid{https://orcid.org/0000-0002-9926-0931}
	\authornote{A part of the work was done when this author was at Max Planck Institute for Software Systems (MPI-SWS).}
	
	\author{Georg Zetzsche}
	\email{georg@mpi-sws.org}
	\affiliation{
		\institution{Max Planck Institute for Software Systems (MPI-SWS)}
		\city{Kaiserslautern}
		\country{Germany}}
	\orcid{https://orcid.org/0000-0002-6421-4388}
	
	\renewcommand{\shortauthors}{A. R. Balasubramanian et al.}
	
	\begin{abstract}
		Pushdown Vector Addition Systems with States (PVASS) consist of finitely many control states, a pushdown stack, and a set of counters that can be incremented and decremented, but not tested for zero. 
		Whether the reachability problem is decidable for PVASS is a long-standing open problem.
		
		We consider \emph{continuous PVASS}, which are PVASS with a continuous semantics.
		This means, the counter values are rational numbers and whenever a vector is added to the current counter values, this vector is first scaled with an arbitrarily chosen rational factor between zero and one.
		
		We show that reachability in continuous PVASS is $\NEXPTIME$-complete.
		Our result is unusually robust: Reachability can be decided in $\NEXPTIME$ even if all numbers are specified in binary. 
		On the other hand, $\NEXPTIME$-hardness already holds for coverability, in fixed dimension, for bounded stack, and even if all numbers are specified in unary.
	\end{abstract}

	\begin{CCSXML}
		<ccs2012>
		<concept>
		<concept_id>10003752.10003753</concept_id>
		<concept_desc>Theory of computation~Models of computation</concept_desc>
		<concept_significance>500</concept_significance>
		</concept>
		</ccs2012>
	\end{CCSXML}
	
	\ccsdesc[500]{Theory of computation~Models of computation}
	
	\keywords{Vector addition systems, Pushdown automata, Reachability, Decidability, Complexity}

	
	\maketitle

\section{Introduction} 
\label{sec:introduction}

\def\VASS{\mathrm{VASS}}
\def\PVASS{\mathrm{PVASS}}
\def\PDA{\mathrm{PDA}}

Pushdown $\VASS$ ($\PVASS$) is a model of computation which combines a Pushdown Automaton ($\PDA$) and a Vector Addition System with States ($\VASS$) by using both a stack and 
counters. 
Since $\PDA$s naturally model sequential computation with recursion \cite{RHS95,alurAnalysisRecursiveState} 
and $\VASS$s naturally model concurrency \cite{karpParallelProgramSchemata1969}, 
the combination of the two is an expressive modelling paradigm. 
For instance, $\PVASS$ can be used to model recursive programs with unbounded data domains \cite{atigApproximatingPetriNet2011}, beyond the capability of $\PDA$ alone.
They can also model context-bounded analysis of multi-threaded programs \cite{ABQ09}, even when one thread can have arbitrarily many context switches. 
In program analysis, one-dimensional PVASS models certain pointer analysis problems \cite{Pavlogiannis,LiReps}.

Many verification problems can then be reduced to the \emph{reachability} problem \cite{hack1976decidability}, where 
one asks, given a $\PVASS$ $\mathcal{M}$ and two of its configurations $c_0$ and $c_f$, whether there is a run of $\mathcal{M}$ that starts at $c_0$ and ends at $c_f$. 
Unfortunately, in spite of our understanding of the reachability problem in the context of $\PDA$ \cite{Yannakakis90,RHS95,alurAnalysisRecursiveState,chaudhuriSubcubicAlgorithmsRecursive}  and 
$\VASS$ models \cite{DBLP:conf/lics/LerouxS19,leroux2022reachability,czerwinskiReachabilityVectorAddition2022}, the decidability of $\PVASS$ reachability remains open \cite{schmitzCoverabilityUndecidableOneDimensional2019a,englertLowerBoundCoverability2021,ganardiReachabilityBidirectedPushdown2022}, even with just one stack and one $\VASS$ counter. 
Hence a natural approach to take is to consider \emph{overapproximations} of the system behaviour, both to build up theoretical understanding and to approximate verification questions in practice.
The approach we take in this paper is to \emph{approximate} the model via continuous semantics.
A continuous semantics gives an over-approximation of the possible behaviors, so if the relaxed program cannot reach a location, neither can the original one.

By \emph{continuous} semantics of $\VASS$, we mean that a transition is allowed to be fired \emph{fractionally}, allowing the addition or removal of a \emph{rational fraction} 
of tokens from a counter.
This model was first introduced by David and Alla in the context of Petri nets \cite{david1987continuous}.
Continuous $\VASS$ ($\CVASS$) were studied by \citet{blondinLogicsContinuousReachability2017}, who showed reachability and coverability are $\NP$-complete.
Approximation via continuous semantics has allowed the application of SMT solvers and the development of state-of-the-art 
solvers from an empirical perspective \cite{blondinApproachingCoverabilityProblem2016} to the coverability problem for $\VASS$. 
More generally, relaxing integer-valued programs to continuous-valued programs is a common approximation in invariant synthesis \cite{SrivastavaGulwaniFoster,ColonSankaranarayananSipmaCAV03}.

In addition to its use as an overapproximation, the continuous semantics
captures the exact behavior in systems where update vectors represent change
rates (to real variables such as temperature, energy consumption) per time
unit. Here, fractional addition corresponds to executing steps that take at
most one time unit. For example, $\CVASS$ are constant-rate multi-mode
systems~\cite{AlurTW12} where each action takes at most one time unit.
Continuous PVASS can then be seen as recursive programs with such constant-rate
dynamics.
\balain[Few points for intro that we discussed before: 
For checkers that use Farka's lemma style arguments, our result gives expressive power; 
over-approx. for context-bounded programs where one process is not context-bounded; 
over-approx for interleaved Dyck reachabiility. 
In addition, do we exhibit a program for which our approximation is able to prove unreachability?]

\smallskip
\noindent\textbf{Contribution} 
We study $\PVASS$ with continuous semantics (denoted by $\qnz\mhyphen\mathsf{PVASS}$), where we allow fractional transitions on counters, but retain the discrete nature of the stack. 
Hence, a \emph{configuration} is a tuple $(q,\gamma,\bv)$ where
$q$ is the control-state, $\gamma$ is the stack content and $\bv$
represents the counter values.
We show that reachability is decidable, and we provide a comprehensive complexity landscape for reachability, coverability, and state reachability. The \emph{reachability problem} asks for given configurations $(q,\gamma,\bv)$ and $(q',\gamma',\bv')$, whether from $(q,\gamma,\bv)$, the system can reach $(q',\gamma',\bv')$.
The \emph{coverability problem} asks for given configurations $(q,\gamma,\bv)$ and $(q',\gamma',\bv')$, whether from 
$(q,\gamma,\bv)$, the system can reach a configuration of the form $(q',\gamma',\bv'')$ where $\bv'' \ge \bv'$). Moreover, \emph{state reachability} asks for a given configuration $(q,\gamma,\bv)$ and a state $q'$, whether from $(q,\gamma,\bv)$, the system can reach a configuration of the form
$(q',\gamma',\bv')$ for some $\gamma'$ and $\bv'$.

Our main result is the following: 
\begin{theorem}\label{main-result}
	Reachability in $\qnz\mhyphen\mathsf{PVASS}$ is $\NEXPTIME$-complete.
\end{theorem}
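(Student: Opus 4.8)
The plan is to establish the two bounds separately; the $\NEXPTIME$ upper bound is the substantial part.

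\smallskip\noindent\textbf{Upper bound ($\NEXPTIME$ membership).}
First I would put the $\QPVASS$ into a normal form: via the standard triple construction, convert it into a context-free grammar $G$ whose terminals are labelled by integer update vectors, so that a derivation of $G$ producing a word $w$ corresponds exactly to a run of the $\QPVASS$ from the source state to the target state whose sequence of transition updates (read in order along $w$) is the sequence of labels of $w$; the stack height of the run is mirrored by the depth inside the derivation tree. Exact reachability then becomes: is there a derivation tree $T$ of $G$ with yield $w$ and a choice of scaling factors $\alpha_i \in (0,1]$ for the terminal occurrences along $w$ such that every prefix of the resulting sequence of scaled additions keeps all counters $\ge 0$ and the total effect equals $\bv' - \bv$? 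A preliminary reduction from reachability to coverability, applied to both the $\QPVASS$ and its reverse, lets me assume the source and target counter vectors have a particularly simple form.

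The core of the argument is a \emph{small-witness theorem}: if such a $T$ exists, then one exists whose \emph{pattern} — the derivation tree after collapsing repeated sub-derivations — has size at most exponential in the input (polynomial in its bit-size). I would prove this by a pumping-and-Carathéodory argument on derivation trees. Along any long root-to-leaf path a nonterminal repeats, yielding a derivation context ($X \Rightarrow^{*} u\,X\,v$) with ``frame effect'' $\val(uv)$; the set of available frame effects is itself hierarchical but, for the purpose of hitting the target exactly, it can be captured by a linear system over the rationals. Here the continuous semantics is essential: since every transition can be fired with an arbitrarily small factor, the only obstruction to activating a transition is whether the coordinates it needs can be made positive at all — a purely combinatorial, support-level condition — so a pump can be scaled down to contribute an arbitrarily small fraction of its effect, and a Carathéodory/LP argument shows that a bounded number of distinct pump types, each used with a bounded (though possibly exponentially large) multiplicity, suffices. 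Combined with forward/backward support-activation certificates (in the style of the known $\CVASS$ characterization) for the non-pumped skeleton, this bounds the whole witness by an object of exponential size. The algorithm guesses this object — a schematic derivation, rational scaling data written with exponentially many bits, and the certifying orderings — and verifies it by solving a linear feasibility problem of exponential size; this runs in nondeterministic exponential time. Because the bound is polynomial in the bit-size of the numbers, the procedure remains in $\NEXPTIME$ even when all numbers are given in binary, which accounts for the robustness claimed in the abstract.

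\smallskip\noindent\textbf{Lower bound ($\NEXPTIME$-hardness).}
I would reduce from an $\NEXPTIME$-complete problem about succinct counter machines — for instance halting of a two-counter machine with doubling and increment (the model $\TCM$), whose counters reach exponential values in linearly many steps — or equivalently acceptance of an exponentially space-bounded Turing machine. One simulates the machine's counters by continuous $\QPVASS$ counters and uses the discrete stack to implement the zero-tests and the recursion needed to realize doubling. The obstacle peculiar to the continuous setting is that transition factors need not equal $1$; I would pin them down with companion-counter gadgets (for each relevant quantity $x$ maintain an auxiliary counter holding $1-x$, so that any cheating on a factor is detectable), which is precisely the role of the bounded, zero-test-equipped intermediate model $\IVASSRL$. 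Care is needed so that the final instance uses fixed dimension, bounded stack, and only unary constants, so that the hardness already applies to coverability in those restricted regimes.

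\smallskip\noindent\textbf{Main obstacle.}
The crux is the small-witness theorem: bounding, by a single exponential, the size of a derivation pattern sufficient for \emph{exact} continuous reachability. The tree-shaped structure of pushdown runs interacts with counter-nonnegativity and the exact target far more delicately than in the flat $\CVASS$ case, and finding the right decomposition of derivation trees into pump types — together with a Carathéodory-type argument that stays valid once the forward/backward activation conditions are tracked — is where the real work lies.
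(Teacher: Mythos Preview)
Your high-level architecture (grammar translation, pump-free skeleton plus inserted pumps, LP verification) matches the paper's, but the treatment of pumps has a real gap. Recording the contribution of $X \Rightarrow^{*} uXv$ as a single ``frame effect'' $\val(uv)$ loses essential information: $u$ is applied \emph{before} the subtree rooted at $X$ and $v$ \emph{after} it, so a pump with zero total effect can still be indispensable (e.g., $u$ raises a counter that the subtree needs positive and $v$ cancels it afterwards). A Carath\'eodory argument on total effects therefore does not determine which pumps are usable where, and the activation certificates you reserve for the skeleton do not repair this. The paper's fix is to linearize pumps in $2d$ dimensions: each derivation $X \Rightarrow^{*} uXv$ becomes a word over $\Z^{2d}$ in which $u$ acts forward on the first $d$ coordinates and the reversal of $v$ acts, with negated updates, on the last $d$. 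The resulting language is a \emph{semigroup} (nesting pumps corresponds to concatenation), and the key observation is that the Blondin--Haase three-runs characterization holds for any letter-uniform semigroup, not only for cycles in a finite $\QVASS$. This yields an ELRA formula for the full four-place pump reachability relation directly. So the $\Q$-reachability/admissibility analysis is applied to the \emph{pumps}, not to the skeleton; the exponential-size pump-free tree is simply guessed and the per-node pump formulas are conjoined.

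\textbf{Lower bound.} You name the right source ($\TCM$) and the complement-counter idea, but the stack's role is not ``zero tests and doubling'': the paper uses the stack solely as a binary run-length counter. The run-length bound then feeds a separate \emph{controlling counter} that must reach an a~priori known target; since only the transitions simulating zero tests increment it, all of them are forced to fire with fraction exactly~$1$, and only then does the complement-counter check $\bar{x}\geq 1$ become a genuine test $x=0$. You also omit the exponential-precision encoding $n \mapsto n/2^m$ (needed because a $\QVASS$ cannot natively double a counter) and the halving gadget that manufactures the value $1/2^m$ from a unary-size initial configuration; both are required to keep the reduction polynomial.
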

The $\NEXPTIME$-completeness is unusually robust. 
Specifically, the complexity is not affected by (i)~whether we consider reachability or coverability, or (ii)~whether the number of counters is part of the input or fixed, or (iii)~whether counter values (in updates and configurations) are encoded in unary or binary. 
This is summarized in the following stronger version:
\begin{theorem}\label{main-result-finer}
	Reachability in $\qnz\mhyphen\mathsf{PVASS}$, with binary encoded
	numbers, is in $\NEXPTIME$. Moreover, $\NEXPTIME$-hardness already
	holds for coverability in $85$-dimensional
	$\qnz\mhyphen\mathsf{PVASS}$, with unary encoded numbers, and bounded stack-height.
\end{theorem}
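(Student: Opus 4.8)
The plan is to prove the two halves separately: first, membership of reachability (even with binary encodings) in $\NEXPTIME$; second, $\NEXPTIME$-hardness of coverability already in fixed dimension, with unary encodings, and bounded stack. For the upper bound I would reduce $\QPVASS$ reachability to a combinatorial \emph{cover problem on bounded-height derivation trees} (the \treecoverp problem). The point is that, after normalisation, a continuous run is governed by two layers: (i)~the context-free structure of the sequence of push/pop operations, which can be laid out as a derivation tree whose height we can bound; and (ii)~a family of linear (in)equalities over $\qnz$ recording, along each branch of that tree, the net counter effect together with continuous-firability ("unlocking") conditions in the style of the characterisation of reachability in $\CVASS$ by Fraca--Haddad and Blondin--Haase. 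A witness for reachability is then a bounded-height tree over these summaries, decorated with rational multiplicities and unlocking orders that satisfy the composed $\qnz$-linear system. One guesses such a tree, which can be taken of exponential size, together with a linear-arithmetic witness, and verifies in exponential time, hence in $\NEXPTIME$. Intuitively, the reason this does not simply reduce to $\NP$ as plain $\CVASS$ reachability does is that the unlocking conditions must be re-established at unboundedly many nested stack contexts, so the witness tree is genuinely of exponential size.

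\textbf{Upper bound, in more detail.}
First put the $\QPVASS$ into a normal form separating push, pop, and counter-update transitions and bounding the constants; crucially nothing here depends on whether numbers are written in unary or binary, which is what makes the bound robust. Next, summarise well-nested segments: for each pair of control states $(p,q)$ and each set $S$ of transitions that may appear in a well-nested $p\leadsto q$ segment, the achievable net counter displacements together with the orderings of $S$ witnessing forward/backward unlocking form a linear-arithmetic--definable set, and these summaries compose along the derivation tree. Soundness (a witness yields a genuine continuous run) is the substantive direction: it needs a \emph{rational pumping} argument that converts the static certificate into a concrete run, splicing rescaled copies of sub-runs at matching stack levels so that well-nestedness is preserved and every transition of each $S$ eventually becomes enabled; continuity is used precisely to rescale the counter parts while leaving the discrete stack skeleton intact. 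Completeness is easier: read off the tree of an existing run and project to supports and displacements. Since the witness has exponential size and every check is existential linear arithmetic over $\qnz$ (satisfiability in $\NP$ in the formula size), the whole procedure runs in $\NEXPTIME$.

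\textbf{Lower bound.}
I would reduce from a canonical $\NEXPTIME$-complete problem, for instance the $2^n\times 2^n$ exponential tiling problem (equivalently, acceptance of a $2^n$-time bounded nondeterministic Turing machine). The key gadget uses a stack of height $O(n)$ as an $O(n)$-bit register supporting increment, decrement, and zero-test — zero-test is available because stack-emptiness is observable — all with unary update constants and in fixed dimension. This register lets the machine enumerate the $2^n\times 2^n$ grid cells in lexicographic order in polynomial space. As each cell is visited its tile is guessed; the short-range (same-row) adjacency constraints are checked directly from the control state, and the long-range (previous-row) constraints are enforced by depositing, when a cell is visited, an obligation that must be discharged exactly $2^n$ steps later when the cell directly below it is visited, implemented by a small request/acknowledge handshaking discipline between the relevant stack levels and a fixed number of continuous counters acting as obligation carriers. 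Coverability, and a bounded stack, suffice throughout. Accounting for the counters needed for the register arithmetic, the carriers, and the tiling bookkeeping yields the stated bound of $85$ dimensions; together with the $\NEXPTIME$ upper bound this gives $\NEXPTIME$-completeness, hence also \Cref{main-result}.

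\textbf{Main obstacle.}
The crux is the soundness direction of the tree-cover characterisation. In $\CVASS$ one may uniformly rescale an entire run, but here push and pop are discrete and cannot be scaled, so the rational-pumping construction has to interleave rescaled copies of sub-runs at matching stack levels without breaking well-nestedness, while simultaneously satisfying all the unlocking constraints at every context and keeping the resulting certificate at most exponential in size — a super-exponential blow-up would miss the $\NEXPTIME$ bound and is the delicate point to control. On the hardness side, the corresponding difficulty is making the fixed, bounded collection of continuous counters transmit the previous-row information faithfully across $2^n$ steps without leakage, so that the target can be covered if and only if a correct tiling exists.
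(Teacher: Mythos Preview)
Your upper-bound plan is broadly on the right track and in fact quite close to what the paper does: decompose derivations into a pump-free skeleton of exponential size plus pumps $A\derivstar wAw'$, and for the pumps appeal to a Blondin--Haase style three-runs characterisation to obtain an exponential-size ELRA formula. The paper's organising trick for the pumps is to turn each $A\derivstar wAw'$ into a run of a \emph{semigroup} over $\Z^{2d}$ by applying $w$ forward on one block of $d$ counters and $w'$ backward (reversed and negated) on another; this is the precise sense in which the cyclic-$\CVASS$ characterisation extends, and it is what makes the ``rational pumping'' you mention go through cleanly without having to interleave at many nested stack levels.

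The lower bound, however, has a genuine gap. Your reduction from exponential tiling relies on continuous counters as ``obligation carriers'' that must be discharged exactly $2^n$ steps later, but you never explain how a continuous counter can enforce any discrete constraint: every update can be fired with an arbitrary fraction in $(0,1]$, so an adversary can always under-deposit or under-withdraw and the coverability target cannot detect this. This is \emph{the} central difficulty of the hardness proof, and ``a small request/acknowledge handshaking discipline'' does not address it. The paper solves it by a specific chain of gadgets: (i)~encode a discrete value $n$ as the fraction $n/2^m$ in a $[0,1]$-valued counter; (ii)~simulate zero tests via complement counters with the invariant $x+\bar{x}=1$; (iii)~force the fractional firings in the zero-test gadgets to be exactly $1$ by incrementing a separate \emph{controlling} counter at each such step and requiring it to reach a value equal to the total number of zero tests --- which is known because a run-length bound is enforced by a binary counter on the (bounded-height) stack; (iv)~build the constant $1/2^m$ from unary-encoded inputs by repeated halving. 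None of these ingredients appear in your sketch, and without some mechanism of this kind the tiling obligations simply leak. The $85$ you quote is not incidental bookkeeping from a tiling encoding; it is $10\cdot 6+25$, arising from the paper's alternate reduction that produces unary-encoded instances of coverability in $\QVASSRL$ from a $6$-counter $\NCCM$ instance.
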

%
%
Further, if we allow the configurations to be encoded in binary, then hardness
already holds for coverability in 13-dimensional $\qnz\mhyphen\mathsf{PVASS}$. 

Our result is in stark contrast to reachability problems in classical VASS:
It is well-known that there, coverability is $\EXPSPACE$-complete~\cite{CovPetrinets}, whereas general reachability is Ackermann-complete~\cite{leroux2022reachability,czerwinskiReachabilityVectorAddition2022}. 
Furthermore, fixing the dimension brings the complexity down to primitive-recursive~\cite{DBLP:conf/lics/LerouxS19} (or from $\EXPSPACE$ to $\PSPACE$ in the case of coverability \cite{RosierY86}). 

Another surprising aspect is that for continuous PVASS, the coverability problem and the state reachability problem do not have the same complexity. We also show:
\begin{theorem}\label{main-result-state-reachability}
	The state reachability problem for $\qnz\mhyphen\mathsf{PVASS}$ is $\NP$-complete.
\end{theorem}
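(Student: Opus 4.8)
The plan is to treat the two bounds separately; the lower bound is short and the upper bound carries the weight. For $\NP$-hardness I would reduce from $3$-SAT, producing an instance that is in fact a continuous $\qnz\mhyphen\mathsf{VASS}$ (a $\qnz\mhyphen\mathsf{PVASS}$ whose stack is never used) in which the question is whether a designated control state is reached. Introduce one counter $p_\ell$ for every literal $\ell$. A first phase runs through the variables; at variable $x_i$ the system picks between a transition incrementing $p_{x_i}$ and one incrementing $p_{\neg x_i}$. As continuous firing uses a strictly positive factor, exactly one of the two counters becomes positive, so the phase commits to a total truth assignment. A second phase runs through the clauses; at clause $\ell_1\vee\ell_2\vee\ell_3$ the system picks one of three transitions, the $k$-th decrementing $p_{\ell_k}$ by $1$. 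Such a transition is enabled precisely when $p_{\ell_k}>0$, i.e.\ when $\ell_k$ is true under the chosen assignment, and it may be fired with an arbitrarily small factor, so it does not prevent reuse of $p_{\ell_k}$ in later clauses. Hence the designated state is reachable iff the formula is satisfiable; as this is a $\qnz\mhyphen\mathsf{VASS}$, $\NP$-hardness for $\qnz\mhyphen\mathsf{PVASS}$ follows.

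For membership in $\NP$ I would first decompose a witnessing run along the stack. If a run from $(q,\gamma,\bv)$ reaches some $(q',\gamma'',\bv'')$, let $h$ be the minimal stack height occurring along it; then the bottom $h$ symbols of $\gamma$ are never touched, so writing $\gamma=\gamma_{\mathrm{top}}\gamma_{\mathrm{bot}}$ with $|\gamma_{\mathrm{bot}}|=h$, the run factors into a \emph{pop phase} consuming $\gamma_{\mathrm{top}}$ and ending in a configuration $(r,\epsilon,\bw)$ over the floor $\gamma_{\mathrm{bot}}$, followed by an \emph{ascending phase} that starts from $(r,\epsilon,\bw)$, never descends below the floor, and reaches $q'$. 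The pop phase is a concatenation of single-symbol pop runs, and single-symbol pop runs and ascending runs are governed by mutually recursive summaries whose only new ingredient over the pushdown-only case is the bookkeeping of continuous counter updates. I would then reduce state reachability to reaching $q'$ in a continuous $\qnz\mhyphen\mathsf{VASS}$ of polynomial size, obtained by discarding the stack and adding polynomially many ``summary'' meta-transitions: one family for every balanced triple $(s,X,s')$ that is \emph{live}, together with a finite amount of data recording how such a balanced excursion may raise, lower, or leave the counters and which supports it needs on entry and exit. The live triples and their data are computed beforehand by a saturation procedure, exactly as for reachability in pushdown systems but augmented with the counter data; guessing a run of the resulting $\qnz\mhyphen\mathsf{VASS}$ together with the sequence of counter supports and then solving a single linear program for the firing rates places the problem in $\NP$, using the $\NP$ algorithm for continuous $\qnz\mhyphen\mathsf{VASS}$ reachability~\cite{blondinLogicsContinuousReachability2017}.

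The crux — and the step I expect to be the main obstacle — is to show that polynomially much summary data suffices, i.e.\ a ``small witness'' property: whenever $q'$ is state-reachable, there is a witnessing run whose summary decomposition, and in particular the relevant counter information about each balanced excursion, is of polynomial size, even though the stack and the nesting of excursions may a priori be exponentially deep. Here the continuous semantics is indispensable: deeply nested or heavily iterated balanced excursions can be compressed because continuous runs may be rescaled and reordered (as in the analysis of continuous $\qnz\mhyphen\mathsf{VASS}$), and along a suitably chosen minimal witness the counter supports change only polynomially often, so a deep nesting folds into a polynomial-size summary tree without violating counter nonnegativity. One must check that this compression respects the LIFO discipline of the stack and leaves the residual constraints linear; tellingly, no analogous compression is available in the discrete setting, which is mirrored by the long-standing openness of discrete $\mathrm{PVASS}$ reachability.
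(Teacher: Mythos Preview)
Your $\NP$-hardness argument is fine; the paper simply cites the analogous result for $\QVASS$ from \cite{blondinLogicsContinuousReachability2017}, so you are reproving what is cited.

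For the upper bound, your approach has a genuine gap and misses the key simplification. You propose to compute polynomial-size \emph{summaries} of balanced stack excursions, recording enough counter information to reduce to $\QVASS$ reachability. But as you yourself flag, the obstacle is the ``small witness'' property: you would need each summary to carry only polynomially much data about how the excursion affects the counters. This is precisely what fails. The counter behaviour along a balanced excursion is captured by the pump reachability relation $P_A$ of the paper (Definition~\ref{defn:cyclicReachRelationA}), and Theorem~\ref{thm:ratArithCycleFormula} shows that expressing $P_A$ requires an exponential-size ELRA formula in general. Your compression step---folding deep nestings into polynomial-size summary trees while preserving nonnegativity---is asserted rather than proved, and there is no evident reason it should hold for the full reachability relation.

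The paper's route avoids this entirely by exploiting what is special about \emph{state} reachability. Via the language-theoretic translation at the start of Section~\ref{sec:upper-bound}, state reachability becomes: given a context-free $K\subseteq(\Z^d)^*$ and $\bu\in\ratplus^d$, is $\bu$ \emph{$K$-admissible}, i.e.\ does $\bu\step{\ratplus}{K}\bv$ for some $\bv$? The crucial observation is Lemma~\ref{lem:coverKcharacter}: admissibility depends only on the \emph{support sequence} $(\Gamma,<)$---which letters appear and in what order of first occurrence---not on the full structure of the run. So one guesses $(\Gamma,<)$ (polynomial size), checks that $K_{(\Gamma,<)}\neq\emptyset$ by intersecting $K$ with a polynomial-size regular language, and checks $<$-admissibility of $\bu$ via the trivial formula in Lemma~\ref{lem:constructCoverKFormulaInNP}. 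No summaries, no compression, no LP beyond a fixed polynomial-size ELRA check. The insight you are missing is that for state reachability one never needs to track \emph{where} the counters end up, only whether they can stay nonnegative---and that is governed by the order of first occurrences alone.
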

This is also in contrast to the situation in PVASS: There is a simple reduction from the reachability problem to the coverability problem~\cite{lerouxCoverabilityProblemPushdown2015}, and from there to state reachability. Thus, the three problems are polynomial-time inter-reducible for PVASS.


Our results are based on a number of novel technical ingredients.
Especially for our lower bounds, we show a number of subtle constructions that enable us to encode discrete computations of
bounded runs of counter machines in the continuous semantics.

\smallskip
\noindent\textbf{Ingredient I: Upper bound via rational arithmetic} We prove the
$\NEXPTIME$ upper bound by observing that a characterization of runs in a cyclic $\CVASS$ (meaning: the initial state is also the only final one) by
\citet{blondinLogicsContinuousReachability2017} still holds in a more general setting of cyclic $\qnz\mhyphen\mathsf{PVASS}$.
We apply this observation by combining several (known) techniques. As is standard in the analysis of PVASS~\cite{lerouxCoverabilityProblemPushdown2015,englertLowerBoundCoverability2021}, we view runs as derivations in a suitable grammar. As usual, one can then decompose each derivation tree into an acyclic part and 
``pump derivations'' of the form $A\derivstar uAv$ for some non-terminal $A$. Such pumps, in turn, can 
by simulated by a cyclic $\qnz\mhyphen\mathsf{PVASS}$. Here, to simulate $A\derivstar uAv$, one applies $u$ as is and one applies $v$ in reverse on a separate set of counters. This idea of simulating grammar derivations by applying ``the left part forward'' and ``the right part backward'' is a recurring theme in the literature on context-free grammars (see, e.g.~\cite{rosenberg1967machine,Berstel1979,DBLP:journals/pacmpl/BaumannGMTZ23,DBLP:conf/mfcs/LohreyRZ22,DBLP:conf/popl/RepsTP16}) and has been applied to PVASS by \citet[Section~5]{lerouxCoverabilityProblemPushdown2015}.

As a consequence, reachability can be decided by guessing an exponential-size
formula of Existential Linear Rational Arithmetic (ELRA). 
Since satisfiability for ELRA is in $\NP$, this yields an $\NEXPTIME$ upper bound.  

\smallskip
\noindent\textbf{Ingredient II: High precision and zero tests}
For our lower bound, we reduce from reachability in machines with two counters, which can only be doubled and incremented. A run in these machines is accepted if it is of a particular length (given in binary) and has both counters equal in the end. We call such machines $\MCM$, the $\RL$ stands for ``run length''. 
This problem is $\NEXPTIME$-hard by a reduction from a variant of the Post Correspondence Problem where the word pair is restricted to have a specified length (given in binary)~\cite{AiswaryaMS22}.

We give the desired reduction from $\MCM$, through a series of intricate constructions that ``control'' the fractional firings.
We go through an intermediate machine model called $\NCCM$. An $\NCCM$
is just like a $\CVASS$, except that the counter values are
constrained to be in the interval $[0,1]$ and we allow zero tests.
Further, we only consider runs up to a particular length (given in binary),
as indicated by the $\RL$ subscript. When reducing from $\MCM$, we are 
confronted with two challenges: First, $\NCCM$ cannot store numbers beyond 1 and second, $\NCCM$ cannot natively double numbers. The key idea
here is that, since we only consider runs of a $\MCM$ up to a given length $m$, the counter values of a $\MCM$ are bounded by $2^m$ along any run.
Hence, instead of storing the counter values of a $\MCM$ exactly, we instead use \emph{exponential precision}.
We encode a number $n\in\N$ by $\tfrac{n}{2^m}\in[0,1]$.
Since then all the values are in $[0,1]$, we can double the counter values in a $\NCCM$ by forcing the firing 
fraction of the $\NCCM$ to be a particular value. 
The firing fraction is controlled, in turn, by means of the zero tests. 

\smallskip
\noindent\textbf{Ingredient III: Constructing precise numbers}
In order to simulate increments of the $\MCM$ in our $\NCCM$, we need to be able to add $\tfrac{1}{2^m}$ to a counter. To this end, we present a $\NCCM$ gadget of polynomial size that produces the (exponentially precise) number $\frac{1}{2^m}$ in a given counter. 
The idea is to start with $1$ in the counter and halve it $m$ times. 
The trick is to use an additional counter that goes up by $1/m$ for each halving step. 
Checking this counter to be $1$ in the end ensures that exactly $m$ halvings have been performed.

\smallskip
\noindent\textbf{Ingredient IV: Zero tests via run length}
We then reduce $\NCCM$ to $\QVASSRL$, which are $\QVASS$ with a run-length constraint. Here, we need to (i)~make sure that the counter values remain in $[0,1]$ and (ii)~simulate zero tests. We achieve both by introducing a \emph{complement counter} $\bar{x}$ for each counter $x$, where it is guaranteed that $x+\bar{x}=1$ at all times. This means that instead of checking $x=0$, we can check $\bar{x}=1$ by subtracting $1$ from $\bar{x}$. However, this does not suffice---we need to ensure that the firing fraction is exactly $1$ in these steps. Here, the key idea is, whenever we check $\bar{x} = 1$, we also increment at the same time (and thus with the same firing fraction), a separate counter called the \emph{controlling counter}, which in the end must equal $Z$, the total number of zero tests. This exploits the fact that every run attempts the same pre-determined number of zero tests due to the run-length constraint.
If the controlling counter reaches the value $Z$ at the very end,
then we are assured that every zero-test along the run was indeed
performed correctly.

Finally, we reduce from $\QVASSRL$ to $\QPVASS$ by using the pushdown stack to count from zero up to a number specified in binary. This employs a standard trick for encoding a binary number on the stack, where the least significant bit is on top. We further show that the 
final $\QPVASS$ that we construct has bounded stack-height, 13 counters, and also that the
target configuration can be reached from the source configuration
if and only if the target can be covered from the source.
This proves that coverability is hard even for a constant number of counters.

\smallskip
\noindent\textbf{Ingredient V: Unary encodings}\label{ingre:v} The above reduction
produces instances of $\QPVASS$ where the 
configurations are encoded in binary. Proving hardness for
unary encodings requires more ideas. First, by using a trick akin to exponential precision from above, we show that
hardness of coverability in $\QPVASS$ holds already when all the values 
of the given configurations are less than 1. Next,
by reusing the doubling and the halving gadgets from Ingredients~II and~III, we show that for any fraction $p/2^k$ where $p$ is given in binary,
there exists an \emph{amplifier}, i.e., there is a $\QVASS$ of polynomial size in $\log(p)$ and $k$, which starting from an unary-encoded configuration
is able to put the value $p/2^k$ in a given counter. We then simply plug in
a collection of these amplifiers before and after our original $\QPVASS$ to get the desired result.

\smallskip
\noindent\textbf{Related work.}
There have been several attempts to study restrictions or relaxations of the $\PVASS$ reachability problem.
For example, reachability is decidable when one is allowed to simultaneously test the first $i$ counters of a $\VASS$ for zero for any $i$ \cite{reinhardt2008reachability}; this model can be seen as a special case of $\PVASS$ \cite{atigApproximatingPetriNet2011}. Furthermore, the coverability problem in one-dimensional $\PVASS$ is decidable~\cite{lerouxCoverabilityProblemPushdown2015} and $\PSPACE$-hard~\cite{englertLowerBoundCoverability2021}.
Reachability is decidable for \emph{bidirected} $\PVASS$ \cite{ganardiReachabilityBidirectedPushdown2022}, although the best known upper bound is Ackermann time (primitive recursive time in fixed dimension).
Our work is in the same spirit.
The continuous semantics reduces the complexity of reachability from Ackermann-complete for $\VASS$ to $\NP$-complete \cite{blondinLogicsContinuousReachability2017} (and even to $\P$ for Petri nets \cite{fraca2015complexity}).
Our results show that the presence of a stack retains decidability, but allows exponentially more computational power.

\smallskip
All missing proofs can be found in the appendix of this paper.
\section{Preliminaries} 
\label{sec:preliminaries}

We write $\rat$ for the set of rationals and $\ratplus$ for the set of nonnegative rationals. Vectors over $\rat$ (or $\ratplus$) are written in bold ($\mvec u, \mvec v$ etc.) and are represented as a pair of natural numbers (numerator and denominator) for each rational.
Throughout this paper, all numbers will be encoded in binary, unless stated otherwise. Note that, this means that, each rational number
is a pair of natural numbers, with both of them encoded in binary.  

\smallskip\noindent\textbf{Machine models.} 
A $d$-dimensional \emph{Continuous Vector Addition System with States} ($d$-$\CVASS$ or simply $\CVASS$) $\mach = (Q,T,\Delta)$ consists of  
a finite set $Q$ of states, 
a finite set 
$T \subseteq \zn^d$ of transitions, and 
a finite set $\Delta \subseteq Q \times T \times Q$ of rules. 
We will on occasion consider an infinite $\CVASS$ where $T$ continues to be finite, but $Q$ and $\Delta$ are infinite. 

A \emph{configuration} of $\mach$ is a tuple $C=(q,\bv)$ where $q$ is a state
and $\bv \in \qnz^d$ is a valuation of the counters. We use the notations $\sta{C}, \valu{C}, C(i)$ to denote $q, \bv, \bv(i)$ respectively. Let $I = (q,t,q') \in \Delta$ be a rule and let $\alpha \in (0,1]$ be the \emph{firing fraction}. A \emph{step} from a configuration $C$ to another
configuration $C'$ by means of the pair $(\alpha,I)$ (denoted by $C \act{\alpha I} C')$ is possible if{}f $\sta{C} = q, \sta{C'} = q'$ and $\valu{C'} = \valu{C} + \alpha t$. A run of $\mach$ is a finite sequence of steps $C_0 \act{\alpha_1 I_1} C_1\act{\alpha_2 I_2}\ldots \act{\alpha_n I_n} C_n$, where $\alpha_1 I_1 \ldots \alpha_n I_n$ is called a \emph{firing sequence}, and we say $C_n$ is \emph{reachable} from $C_0$ (written $C_0 \act{\alpha_1 I_1, \alpha_2 I_2, \dots, \alpha_n I_n} C_n$ or $C_0 \act{*} C_n$). 

We assume the reader is familiar with context-free grammars and give basic definitions and notation (see, e.g., \cite{sipserIntroductionTheoryComputation2012}). 
A \emph{context-free grammar} $\cG=(S,N,\Sigma,P)$ consists of a finite set of nonterminals $N$, a starting nonterminal $S$, a finite alphabet $\Sigma$ and a finite set of production rules $P \subseteq N \times (N \cup \Sigma)^*$. We will assume that $\cG$ is in Chomsky Normal Form.
As usual $\derivstar$ is the reflexive, transitive closure of $\deriv{}$. 
A word $w \in \Sigma^*$ belongs to the language $L(\cG)$ of the grammar iff $S \smash{\derivstar} w$.

\newcommand{\stack}{\mathsf{stack}}
A \emph{Continuous Pushdown $\VASS$} ($\QPVASS$) is a $\CVASS$ additionally equipped with a stack. 
Formally, it is a tuple $\mach=(Q,\Gamma,T,\Delta)$ where $Q$ is a finite set of states, $\Gamma$ is a finite stack alphabet, 
$T \subseteq \zn^d \times (\Gamma \cup \bar{\Gamma} \cup \varepsilon)$ is a finite set of transitions, and 
$\Delta \subseteq Q \times T \times Q$ is a finite set of rules. 
A configuration $C=(q,w,\bv)$ of $\mach$ contains additionally the stack $w \in \Gamma^*$ and we write $w=\stack(C)$. 
A \emph{step} $C \act{\alpha I} C'$ using rule $I=(q,a,t,q')$ is possible if{}f $\sta{C} = q$, $\sta{C'} = q'$, $\valu{C'} = \valu{C} + \alpha t$,
and one of the following holds: (1) $a \in \Gamma$ and $\stack(C')=a\;\stack(C)$, (2) $a \in \bar{\Gamma}$ and $\stack(C)=a\;\stack(C')$, or 
(3) $a=\varepsilon$ and $\stack(C)=\stack(C')$. 
The notion of run, firing sequence, and reachability is defined as for $\CVASS$. 
In some cases, we will need to extend the notion of step to allow vectors in $\rat^d$ in a configuration rather 
than just $\ratplus^d$. 
We then explicitly specify this in the form of an underscript: $\step{\ratplus}{}$ or $\step{\rat}{}$ to make it clear.




\smallskip\noindent\textbf{Decision Problems.} The reachability problem for $\QPVASS$ is defined as follows:
\begin{quote}
	Given a $\QPVASS$ $\mach$ and two of its configurations $C_0,C_1$, is $C_1$ reachable from $C_0$?
\end{quote}
The coverability problem for $\QPVASS$ is defined as follows:
\begin{quote}
	Given a $\QPVASS$ $\mach$ and two of its configurations $C_0,C_1=(q_1,w_1,\bv_1)$, does there exist a configuration $C'=(q_1,w_1,\bv_1')$ with $\bv_1' \geq \bv_1$ such that $C'$ reachable from $C_0$?
\end{quote}
The state reachability problem is defined as follows:
\begin{quote}
	Given a $\QPVASS$ $\mach$, a configuration $C_0$ and a state $q$, does there exist a configuration $C_1$ with $\sta{C_1}=q$ that is reachable from $C_0$?
\end{quote}

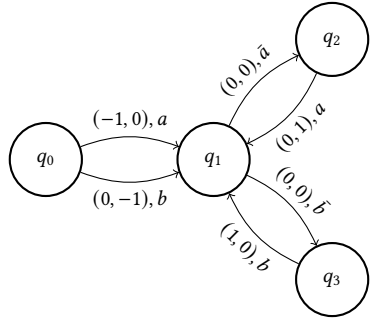
\begin{figure}
	\begin{center}
		\tikzstyle{node}=[circle,draw=black,thick,minimum size=12mm,inner sep=0.75mm,font=\normalsize]
		\tikzstyle{edgelabelabove}=[sloped, above, align= center]
		\tikzstyle{edgelabelbelow}=[sloped, below, align= center]
		\begin{tikzpicture}[->,node distance = 1.25cm,scale=0.8, every node/.style={scale=0.8}]
			\node[node] (q0) {$q_0$};
			\node[node, right = of q0] (q1) {$q_1$};
			\node[node, above right = of q1] (q2) {$q_2$};
			\node[node, below right = of q1] (q3) {$q_3$};
			
			\draw(q0) edge[bend left = 20, edgelabelabove] node[above]{$(-1,0), a$} (q1);
			\draw(q0) edge[bend right = 20, edgelabelabove] node[below]{$(0,-1), b$} (q1);
			\draw(q1) edge[bend left = 20, edgelabelabove] node[above]{$(0,0), \bar{a}$} (q2);
			\draw(q2) edge[bend left = 20, edgelabelabove] node[below]{$(0,1), a$} (q1);
			\draw(q1) edge[bend left = 20, edgelabelabove] node[above]{$(0,0), \bar{b}$} (q3);
			\draw(q3) edge[bend left = 20, edgelabelabove] node[below]{$(1,0), b$} (q1);
		\end{tikzpicture}
	\end{center}
	\caption{An example $\QPVASS$ with 2 counters. Here $a$ and $b$ are the stack symbols.}
	\label{fig:example-cpvass}
\end{figure} 

\begin{example}
	Let us consider the $\QPVASS$ from Figure~\ref{fig:example-cpvass}, which we shall denote by $\mach$.
	It has 2 counters and stack symbols $a$ and $b$. Recall that a label $a$ represents a push of $a$ and $\bar{a}$ represents a pop of $a$.
	There are only two outgoing rules from the state $q_0$: the first rule $r_1$ decrements the first counter by 1, does not modify the second counter and pushes $a$ onto the stack and the second rule $r_2$ decrements the second counter by 1, does not 
	modify the first counter and pushes $b$ onto the stack. 
	Hence, starting from the configuration $(q_0,\varepsilon,(0,0))$ it is not possible to reach a configuration whose state is $q_1$. This implies that the input $(\mach,(q_0,\varepsilon,(0,0))),q_1)$ is 
	a negative instance of the state reachability problem.
	
	On the other hand, starting from $(q_0,\varepsilon,(1,1))$, by firing $r_1$ with fraction $0.5$, 
	we can reach $(q_1,a,(0.5,1))$. This means that $(\mach,(q_0,\varepsilon,(1,1)),q_1)$ is a positive
	instance of the state reachability problem. Moreover, this also means that
	$(\mach,(q_0,\varepsilon,(1,1)),(q_1,a,(0.5,0.5)))$
	is a positive instance of the coverability problem. 
	
	However, the input $(\mach,(q_0,\varepsilon,(1,1)),(q_1,a,(1,1)))$ is a negative instance
	of the coverability problem. To see this, suppose for the sake of contradiction, a run exists
	between $(q_0,\varepsilon,(1,1))$ and $(q_1,a,(n_1,n_2))$ for some $n_1 \ge 1, n_2 \ge 1$.
	The first step of this run has
	to fire either $r_1$ or $r_2$ by some non-zero fraction $\alpha$. Suppose $r_1$ is fired. (The argument is similar for the other case). 
	Then $a$ gets pushed onto the stack and the value of the first counter becomes $1-\alpha$. 
	From that point onwards, the only rules that can be fired are the ones going in and out of the state $q_2$, both of which 
	do not increment the first counter. Hence, the first counter will have $1-\alpha$ as its value throughout the run, which leads to a contradiction. 
	
	Finally, note that starting from $(q_0,\varepsilon,(1.1,0.6))$, it is possible to 
	reach $(q_1,a,(1,1))$: first, fire $r_1$ with fraction $0.1$, then fire the incoming rule
	to $q_2$ (which pops $a$) with fraction 1  and then fire the outgoing rule from $q_2$ (which pushes $a$) with fraction $0.4$. Hence, $(\mach,(q_0,\varepsilon,(1.1,0.6)),(q_1,a,(1,1)))$ is a positive
	instance of the reachability problem.
\end{example}


\section{Upper Bound for Reachability}\label{sec:upper-bound}

\tikzset{
	nonterminal/.style={fill,circle,inner sep=0, minimum size=3pt},
	terminal/.style={inner sep=3pt,yshift=-0.4cm},
	epath/.style={edge from parent/.style={very thick,draw}}, 
	npath/.style={edge from parent/.style={draw,thin}},
}

\newcommand{\theight}{0.9cm}
\newcommand{\twidth}{0.9cm}
\pgfdeclareshape{ltree}{
	\nodeparts{}
	\anchor{north}{\pgfpoint{0cm}{0cm}}
	\anchor{center}{\pgfpoint{0cm}{0cm}}
	\behindbackgroundpath{
		\path [draw,thin,fill=blue!30] (0,0) -- (0,-\theight) -- (-\twidth,-\theight) -- cycle;
	}
}
\pgfdeclareshape{rtree}{
	\nodeparts{}
	\anchor{north}{\pgfpoint{0cm}{0cm}}
	\anchor{center}{\pgfpoint{0cm}{0cm}}
	\behindbackgroundpath{
		\path [draw,thin,fill=blue!30] (0,0) -- (0,-\theight) -- (\twidth,-\theight) -- cycle;
	}
}

\newcommand{\ftheight}{4cm}
\newcommand{\ftwidth}{3cm}





We first prove the $\NEXPTIME$ upper bound in \cref{main-result-finer}.
To this end, we first use a standard language-theoretic translation to slightly rephrase the reachability problem in $\QPVASS$ (this slight change in viewpoint is also taken in other work on $\PVASS$~\cite{lerouxCoverabilityProblemPushdown2015, englertLowerBoundCoverability2021}).
Observe that when we are given two configurations $C_0$ and $C_1$ of a $\QPVASS$, we want to know whether there exists a sequence $w\in(\Z^d)^*$ of update vectors such that (i)~there exists a sequence $\sigma$ of transitions that applies $w$, such that $\sigma$ is a valid run from $C_0$ to $C_1$ in the pushdown automaton underlying the $\QPVASS$ (thus ignoring the counter updates) and (ii)~there exist firing fractions for each vector in $w$ such that adding the resulting update vectors will lead from $\mvec u$ to $\mvec v$, where $\mvec u,\mvec v$ are the vectors in the configurations $C_0$ and $C_1$.
Now observe that the set of words $w$ as in condition (i) are a context-free language. Therefore, we can phrase the reachability problem in $\QPVASS$ by asking for a word in a context-free language that satisfies condition~(ii). 

Let us make condition~(ii) precise.
Let $\Sigma\subseteq\Z^d$ be the finite set of vectors that appear as transition labels in our $\QPVASS$.
Given two configurations $\mvec u, \mvec v \in \qnz^d$ and a word $w = w_1w_2\dots w_n\in \Sigma^*$ with each $w_i \in \Sigma$, 
we say that $\mvec u\step{\ratplus}{w}\mvec v$ iff 
there exist $\alpha_1, \dots, \alpha_n$ such that $\mvec u \step{\ratplus}{\alpha_1 w_1, \alpha_2 w_2, \dots, \alpha_n w_n} \mvec v$. 
Similarly, given a language $L \subseteq \Sigma^*$, we say that
$\mvec u \step{\ratplus}{L} \mvec v$ iff $\mvec u \step{\ratplus}{w} \mvec v$ for some $w \in L$.
By our observation above, the reachability problem in $\QPVASS$ is equivalent to the following problem:  
\begin{description}
	\item[Given] A set of vectors $\Sigma \subseteq \Z^d$, a context-free language $L\subseteq \Sigma^*$ and $\mvec u,\mvec v\in\ratplus^d$.
	\item[Question] Does $\mvec u\step{\ratplus}{L}\mvec v$?
\end{description}

We solve this problem using results about the existential fragment of the first-order theory of $(\rat, +, <)$, which we call Existential Linear Rational Arithmetic (ELRA).
Our algorithm constructs an ELRA formula for the following relation $R_L$.
\begin{definition}
	\label{defn:reachRelation}
	The \emph{reachability relation $R_L$} corresponding to a context-free language $L\subseteq(\Z^d)^*$ is given by  $R_L=\{(\mvec u,\mvec v)\in\ratplus^d\times\ratplus^d\mid \mvec u\step{\ratplus}{L}\mvec v\}$.
\end{definition}

The following definition of computing a formula using a non-deterministic algorithm is inspired by the definition of \emph{leaf language} from complexity theory \cite{papaComplexityBook2007}.
We say that one can \emph{construct an ELRA formula in $\NEXPTIME$ (resp. $\NP$) for a
	relation $R\subseteq\qnz^n$} if there is a non-deterministic exponential (resp. polynomial)
time-bounded Turing machine such that every accepting path of the machine computes
an ELRA formula such that if $\varphi_1,\ldots,\varphi_m$ are the produced
formulae, then their disjunction $\bigvee_{i=1}^m \varphi_i$ defines the
relation $R$. Here, a formula $\phi$ is said to define a relation $R\subseteq\qnz^n$ if for every $n$-tuple $\vec u \in \qnz^n$, we have $R(\vec u)$ holds iff $\phi(\vec u)$ is true of the rational numbers.

\begin{proposition} \label{prop:ELRAformula}
	Given a context-free language $L\subseteq(\Z^d)^*$ one can construct in $\NEXPTIME$ an ELRA formula for the relation $R_L$.
\end{proposition}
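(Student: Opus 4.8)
The plan is to recast the question in terms of grammar derivations, decompose each derivation into an acyclic ``core'' into which iterable ``pumps'' are inserted, summarize every pump by a polynomial-size ELRA formula obtained by lifting the characterization of runs in cyclic $\CVASS$ due to \citet{blondinLogicsContinuousReachability2017} to cyclic $\QPVASS$, and finally stitch these local formulas into a single formula defining $R_L$. Since satisfiability of ELRA is in $\NP$, the nondeterministic procedure is free to produce a formula of exponential size. First, fix a context-free grammar $\cG=(S,N,\Sigma,P)$ in Chomsky normal form with $L(\cG)=L$ and $\Sigma\subseteq\Z^d$, computable in polynomial time from the given representation of $L$. The machine will guess the ``shape'' of a derivation tree of $\cG$ and output an ELRA formula stating that some choice of firing fractions turns this shape into a valid continuous run from $\mvec u$ to $\mvec v$; the disjunction over all accepting computations will then define $R_L$.

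For the decomposition I would use the standard one from work on $\PVASS$ (cf.\ \cite{lerouxCoverabilityProblemPushdown2015}): in a derivation tree, whenever a nonterminal repeats along a root-to-leaf path, the derivation between the two occurrences is a \emph{pump} $A\derivstar uAv$; contracting all outermost pumps leaves an \emph{acyclic core} in which no nonterminal repeats on any path, hence of height at most $|N|$ and so with at most $2^{|N|}$ leaves. Applying the same decomposition recursively inside the $u$- and $v$-parts of the pumps, and keeping the nesting depth bounded, the machine guesses: the core tree, and at each of its leaves carrying a nonterminal the list of pumps attached there, recursively together with their own cores and pumps. Writing down such a shape takes exponential time.

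At a nonterminal leaf $A$ with attached pumps $A\derivstar u_iAv_i$, applying these pumps $k$ times and then completing $A$ pump-free produces in the generated word a factor $u_{i_1}\cdots u_{i_k}\,w_A\,v_{i_k}\cdots v_{i_1}$, where $w_A$ is generated by a pump-free derivation of $A$. This is exactly a run of a \emph{cyclic} $\QPVASS$ on two copies of the $d$ counters: the cycle body reads each $u_i$ forward on the first copy and the matching $v_i$ in reverse on the second copy, the pushdown reconstructs the derivations of any nonterminals occurring inside the $u_i$ and $v_i$ (with nested cycles for their pumps), and it is ``cyclic'' because each pump application returns to the control location ``at $A$''. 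Since the cyclic-$\CVASS$ characterization of \citet{blondinLogicsContinuousReachability2017} still holds for cyclic $\QPVASS$, the set of pairs consisting of the counter valuation entering the slot and the counter valuation leaving it is definable by a polynomial-size ELRA formula $\psi_A$. The global formula is then assembled by a left-to-right traversal of the core: a fresh vector of ELRA variables records the counter valuation flowing into and out of each node, with an intermediate vector at the split of every production $A\to BC$ so that the left/right structure is respected; a terminal leaf $A\to t$ contributes $\mvec z'=\mvec z+\alpha t$ with a fresh $\alpha$ constrained by $0<\alpha\le 1$; a pump-slot leaf contributes $\psi_A$ on its in/out vectors; and every vector is constrained to be $\ge\mvec 0$. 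Setting the root's in-vector to $\mvec u$, its out-vector to $\mvec v$, and existentially quantifying the remaining variables yields the formula the machine outputs; it has size $2^{O(|\cG|)}$, and ranging over all guesses these formulas define $R_L$.

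I expect the main obstacle to be the soundness and completeness of this decomposition for the \emph{continuous} semantics, and in particular the claim that the nonnegativity requirement along a run is captured exactly by the per-node constraints together with the formulas $\psi_A$. The ``left forward / right backward'' encoding is what makes this go through: it converts the nonnegativity requirement on the $v$-part of a pump --- which depends on counter values determined only later in the run --- into a nonnegativity requirement on the separate counter copy, available immediately, and it is exactly this move that lets the cyclic characterization survive in the presence of a pushdown. A secondary technical point is to bound the nesting depth of pumps so that the guessed shape, and hence the constructed formula, stays of exponential size.
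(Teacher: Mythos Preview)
Your proposal follows the same architecture as the paper: decompose a derivation into a pump-free skeleton with pump insertions, encode each pump via the ``left forward / right backward'' trick on $2d$ counters, and invoke the cyclic characterization of \citet{blondinLogicsContinuousReachability2017}. Two points need correction.

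First, the claim that $\psi_A$ is \emph{polynomial-size} is not justified and, on the paper's route, is false. The three-runs characterization reduces the pump relation to $\Q$-reachability along the context-free pump language plus admissibility; admissibility is polynomial, but expressing $\Q$-reachability along a CFG in pure ELRA goes through a Parikh-equivalent NFA and is exponential (\cref{q-reachability}). A polynomial-size formula would require mixed integer-rational arithmetic as in~\cite{VermaSS05}. This does not endanger the $\NEXPTIME$ bound---exponential $\psi_A$'s at exponentially many slots is still a single exponential---but your size accounting is off.

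Second, the talk of ``applying the same decomposition recursively inside the $u$- and $v$-parts'' and of needing to ``bound the nesting depth'' signals a conceptual wobble. The paper uses exactly one level of decomposition: a pump-free tree decorated with pump letters $(A,\mvec n)$ and $\overline{(A,\mvec n)}$, together with a \emph{single} formula for $P_A$ that captures \emph{all} derivations $A\derivstar wAw'$ at once, however deeply their internal pumps are nested. This works because the language of left-forward/right-backward encodings of such derivations is itself a context-free semigroup (\cref{lem:reduction-to-semigroup}), and the cyclic characterization applies to it directly. No recursive guessing of pump shapes is needed, and hence no nesting-depth bound has to be argued. If you genuinely intend a recursive scheme in which the machine guesses pump shapes inside pump shapes, you owe an argument that bounded depth already suffices to realize every $(\mvec u,\mvec v)\in R_L$; that argument is absent, and in fact unnecessary once you let the grammar-as-semigroup absorb the nesting.
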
 

Since the truth problem for ELRA formulae can be solved in $\NP$~\cite{sontagRealAdditionPolynomial1985}, the $\NEXPTIME$ upper bound follows from \cref{prop:ELRAformula}:
Our algorithm would first non-deterministically compute a disjunct $\varphi$ of the ELRA formula for $R_L$ and then check the truth of $\varphi$ in $\NP$ in the size of $\varphi$. 
This is a non-deterministic algorithm that runs in exponential time.

Therefore, the remainder of this section is devoted to proving \cref{prop:ELRAformula}.
The key difficulty lies in understanding the reachability relation along \emph{pumps}, which are derivations of the form $A\derivstar wAw'$ for some non-terminal $A$.


\begin{definition}
	\label{defn:cyclicReachRelationA}
	Let $\cG$ be a context-free grammar over $\Z^d$ and $A$ a non-terminal in $\cG$.
	The \emph{pump reachability relation} is defined as
	\[ P_A=\left\{(\mvec u,\mvec v, \mvec{u}', \mvec{v}')\in\Q_+^d\times\Q_+^d\times\Q_+^d\times\Q_+^d \mid \exists w,w'\colon A\derivstar wAw',~\mvec{u}\step{\ratplus}{w}\mvec{v},~\mvec{u}'\step{\ratplus}{w'}\mvec{v}'\right\}. \]
\end{definition}

\begin{theorem}\label{thm:ratArithCycleFormula}
	Given a context-free grammar $\cG$ over $\Z^d$ and a non-terminal $A$ in $\cG$, one can compute in $\NEXPTIME$ an ELRA formula for the relation $P_A$.
	
	%
\end{theorem}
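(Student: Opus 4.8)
The plan is to reduce a single pump $A\derivstar wAw'$ to a run of a \emph{cyclic} continuous pushdown VASS (one whose initial state is its unique final state) and then express reachability in such a machine in Existential Linear Rational Arithmetic (ELRA), by lifting the characterization of cyclic $\CVASS$ runs of \citet{blondinLogicsContinuousReachability2017} to the pushdown setting. The starting observation is the reversibility of continuous runs: for any $w=w_1\cdots w_n\in(\Z^d)^*$ and $\mvec u,\mvec v$, one has $\mvec u\step{\ratplus}{w}\mvec v$ iff $\mvec v\step{\ratplus}{\mathrm{rev}(w)}\mvec u$, where $\mathrm{rev}(w)=(-w_n)\cdots(-w_1)$; indeed the two runs traverse exactly the same sequence of (nonnegative) intermediate configurations in opposite order, with the same firing fractions. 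Hence $P_A(\mvec u,\mvec v,\mvec u',\mvec v')$ holds iff there is a pump $A\derivstar wAw'$ with $\mvec u\step{\ratplus}{w}\mvec v$ and $\mvec v'\step{\ratplus}{\mathrm{rev}(w')}\mvec u'$. The goal is therefore a $2d$-dimensional $\QPVASS$ $\mach_A$, constructible in polynomial time from $\cG$ and $A$, with start/final state $q_A$, such that $(q_A,\varepsilon,(\mvec u,\mvec v'))\step{\ratplus}{*}(q_A,\varepsilon,(\mvec v,\mvec u'))$ precisely when such a pump exists: $\mach_A$ replays $w$ forward on its first $d$ counters and $\mathrm{rev}(w')$ on its last $d$ counters.

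To build $\mach_A$, decompose a pump derivation tree along its \emph{spine} $A=B_0,B_1,\dots,B_\ell=A$: at spine node $B_i$ (in Chomsky form) one applies either a \emph{left-production} $B_i\to C_iB_{i+1}$ (then the off-spine nonterminal $C_i$ lies left of the spine leaf, and its terminal yield is a factor of $w$) or a \emph{right-production} $B_i\to B_{i+1}C_i$ (then $C_i$ lies right of the spine leaf, and its yield is a factor of $w'$). Walking top-down along the spine, $\mach_A$ derives each off-spine $C_i$ using the pushdown: for a left-production it derives $C_i$ as in $\cG$ and applies the terminals forward on the first $d$ counters, and for a right-production it derives $C_i$ in the mirror grammar of $\cG$ and applies the negations of the terminals on the last $d$ counters; the pushdown is emptied once $B_\ell=A$ is reached. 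A short induction on the spine shows that the first $d$ counters then see exactly $w$ and the last $d$ exactly $\mathrm{rev}(w')$. Correctness of the reduction relies on one further observation: since the first $d$ and last $d$ counters are updated by disjoint blocks of vectors, a run of $\mach_A$ is valid iff its projection to the first $d$ counters is a valid run on $w$ and its projection to the last $d$ is a valid run on $\mathrm{rev}(w')$; in particular the precise interleaving produced by $\mach_A$ is immaterial. Combining this with reversibility gives $P_A(\mvec u,\mvec v,\mvec u',\mvec v')\iff(q_A,\varepsilon,(\mvec u,\mvec v'))\step{\ratplus}{*}(q_A,\varepsilon,(\mvec v,\mvec u'))$ in the cyclic $\QPVASS$ $\mach_A$.

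It remains to show that for a cyclic $\QPVASS$ $\mach$ with start/final state $q$ one can construct in $\NEXPTIME$ an ELRA formula for $\{(\vec s,\vec t)\mid(q,\varepsilon,\vec s)\step{\ratplus}{*}(q,\varepsilon,\vec t)\}$; applying this to $\mach_A$ and renaming variables via $\vec s=(\mvec u,\mvec v')$, $\vec t=(\mvec v,\mvec u')$ then proves the theorem. Following \citet{blondinLogicsContinuousReachability2017}, each accepting branch of the machine guesses a set $U\subseteq\Z^{2d}$ of update vectors together with a ``forward'' enumeration $u_1,\dots,u_k$ of $U$ and a ``backward'' enumeration of $U$, and outputs the conjunction of the effect constraint $\vec t-\vec s=\sum_{u\in U}\mu_u\,u$ with $\mu_u>0$ for each $u\in U$ (the $\mu_u$ existentially quantified) together with the support constraints stating that the forward enumeration is firable from $\vec s$ (for each $j$ and each coordinate $i$ with $u_j(i)<0$, either $\vec s(i)>0$ or $u_l(i)>0$ for some $l<j$) and, symmetrically, that the backward enumeration is firable into $\vec t$; the empty guess $U=\emptyset$ yields the disjunct $\vec s=\vec t$. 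The branch accepts iff the guessed data is \emph{realizable}, i.e., some pump $A\derivstar wAw'$ uses exactly the vectors in $U$ and admits the guessed enumerations as its initial and final firing patterns.

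The main obstacle is precisely this last point: establishing soundness and completeness of the characterization in the pushdown case, and checking realizability effectively. In a plain cyclic $\CVASS$, realizability of an arbitrary $U$ with arbitrary firing orders is immediate from strong connectivity of the control graph; in the pushdown case one must instead prove a combinatorial statement about derivation trees, namely that pump derivations $A\derivstar wAw'$ are flexible enough to realize any admissible set of update vectors in any prescribed initial/final order, and that this condition can be decided within the stated time bound. One must also verify that validity (nonnegativity of all intermediate configurations) is preserved through the reversal $w'\mapsto\mathrm{rev}(w')$ and through the coupling of the two counter blocks; both follow from the disjoint-blocks observation used in the reduction, since a run of $\mach_A$ decomposes coordinate-wise into an independent run on each block.
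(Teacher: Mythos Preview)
Your reduction of $P_A$ to reachability in a $2d$-dimensional cyclic structure---applying $w$ forward on the first $d$ counters and $\mathrm{rev}(w')$ on the last $d$ via a spine walk---is precisely the paper's \cref{lem:reduction-to-semigroup}; the paper phrases the target as a context-free language $K\subseteq(\Z^{2d})^*$ rather than a cyclic $\QPVASS$, but the two views coincide. Your disjoint-blocks and reversibility observations are also the ones the paper uses.

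The gap you flag is real, and the ingredient you are missing is exactly what the paper supplies: the language $K$ is a \emph{semigroup}, i.e.\ closed under concatenation (in the grammar transformation every derivation has the hat nonterminal as its last symbol until the final step, so $\hat A\derivstar u$ implies $\hat A\derivstar u\hat A$, hence $u,v\in K$ gives $uv\in K$). This is precisely what makes the Blondin--Haase three-runs argument go through unchanged in the pushdown setting: the $\Q$-run, the forward-admissible run, and the backward-admissible run may come from three \emph{different} pumps, and one concatenates scaled copies of them into a single $\ratplus$-run, which is still in $K$ by the semigroup property. Your realizability condition instead demands a \emph{single} pump simultaneously witnessing the support $U$ and both firing orders; that is stronger than what holds, and it is why you cannot complete the soundness direction. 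With the semigroup property in hand, the paper guesses a support sequence $(\Gamma,<)$, restricts to $K_{(\Gamma,<)}$ (itself a letter-uniform semigroup), checks non-emptiness in polynomial time by intersecting with a small automaton, and then the three-runs lemma applies verbatim to $K_{(\Gamma,<)}$.

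A side remark: once $K_{(\Gamma,<)}$ is known to be a non-empty letter-uniform semigroup, your bare effect constraint $\vec t-\vec s=\sum_{u\in U}\mu_u u$ with $\mu_u>0$ already captures $\Q$-reachability (any word of $K_{(\Gamma,<)}$ can be iterated to realize arbitrary positive multiplicities). The paper instead routes $\Q$-reachability through an exponential Parikh-equivalent NFA and the $\CVASS$ formula of Blondin--Haase, which is heavier but does not depend on the semigroup property at that step.
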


Before proving \cref{thm:ratArithCycleFormula}, we first show how \cref{prop:ELRAformula} follows from \cref{thm:ratArithCycleFormula}.

Let $\cG$ be a grammar for the language $L$ in \cref{prop:ELRAformula}. Consider an arbitrary derivation tree $\cT$ of $\cG$. We say a derivation tree is \emph{pumpfree} if along every path of the tree, every nonterminal occurs at most once. Clearly, an arbitrary derivation tree $\cT$ can be  obtained from a pumpfree tree by inserting ``pumping'' derivations of the form $A \derivstar wAw'$. 
Since every pumpfree tree is exponentially bounded in size (since its depth is bounded by the number of nonterminals $|N|$), there can only be exponentially many such pumps that need to be inserted for any given nonterminal $A$. 

A pump $A \derivstar vAx$ on a nonterminal $A$ in an arbitrary derivation tree $\cT$ can be replaced by additional terminal letters called \emph{pump letters}  $(A,\mvec n)$ and $\overline{(A,\mvec n)}$ as shown in \cref{fig:insertCycles}. Let the two occurrences of $A$ be the first and last occurences of $A$ along a path. Here $\mvec n \in \{ 0,1\}^*$ is a vector denoting the node which is labelled by the first $A$. Note that we assume that the grammar is in Chomsky Normal Form and hence nodes in the derivation tree can be identified in this manner since the trees are binary trees.
The tree $\cT'$ contains four children at $\mvec n$, with the first and fourth being pump letters and the second and third being labelled by the nonterminals $B,C$ occurring in the production $A \deriv{} BC$. It could also be the case that the rule used is of the form $A \deriv{} a$, in which case there are only three letters: the middle letter being $a$ and the other two pump letters. Repeating this replacement procedure along each path, we finally obtain a pumpfree tree $\tilde{\cT}$ which does not have a repeated nonterminal along any path. Since $\tilde{\cT}$ is exponentially bounded in size, the number of pump terminals introduced is also exponentially bounded. In particular, every vector $\mvec n \in \{0,1 \}^h$ for $h \leq |N|$ where $N$ is the set of nonterminals of $\cG$.

\begin{figure}[t]
	\begin{center}
		\begin{tikzpicture}[level distance=0.7cm, sibling distance=0.8cm]
			\newcommand{\godown}{-0.5cm}
			\newcommand{\goleft}{-0.2cm}
			
			\path [draw,thin] (0,0) -- (-\ftwidth,-\ftheight) -- (\ftwidth,-\ftheight) -- cycle;
			\node at (-2cm,-1cm) {\large $\mathcal{T}=$};	
			
			\path[draw,thick] (0,0)--(\goleft,\godown)--($(-\goleft,2*\godown)$)--($(\goleft,3*\godown)$)--($(-\goleft,4*\godown)$)--($(\goleft,5*\godown)$);
			\coordinate (a1) at ($(-\goleft,2*\godown)$);
			
			\coordinate (a2) at ($(\goleft,5*\godown)$);
			\node (n1) at ($(a1)+(2,0)$) {node $\vec n$};
			\path[draw, ->] (a1)--(n1);
			
			\node at ($(-\goleft-0.2cm,2*\godown)$) {$A$};
			
			\node at ($(\goleft-0.1cm,5*\godown)$) {$A$};	
			\path[draw,thin] (a2)--($(-0.3*\ftwidth,-\ftheight)$)--($(0.3*\ftwidth,-\ftheight)$)--(a2);
			\path[draw,thin] (a1)--($(-0.7*\ftwidth,-\ftheight)$)--($(0.7*\ftwidth,-\ftheight)$)--(a1);
			\node at ($(-0.85*\ftwidth,-\ftheight-0.2cm)$) {$u$};	
			\node at ($(-0.5*\ftwidth,-\ftheight-0.2cm)$) {$v$};	
			\node at ($(0,-\ftheight-0.2cm)$) {$w$};	
			\node at ($(0.5*\ftwidth,-\ftheight-0.2cm)$) {$x$};	
			\node at ($(0.85*\ftwidth,-\ftheight-0.2cm)$) {$y$};	
			
			\newcommand{\ftheighttwo}{2cm}
			\begin{scope}[xshift=8cm,yshift=-1cm]
				\node at (2.5cm,-1cm) {\large $=\mathcal{T}'$};	
				\path [draw,thin] (0,0) -- (-\ftwidth,-\ftheighttwo) -- (\ftwidth,-\ftheighttwo) -- cycle;
				\path[draw,thick] (0,0)--(\goleft,\godown)--($(-\goleft,2*\godown)$);
				\coordinate (a1) at ($(-\goleft,2*\godown)$);
				\node (n2) at ($(a1)-(3,0)$) {node $\vec n$};
				\path[draw, ->] (a1)--(n2);
				
				\node at ($(-\goleft+0.1cm,2*\godown)$) {$A$};	
				\path[draw,thin] (a1)--($(-0.6*\ftwidth,-\ftheighttwo)$)--($(0.6*\ftwidth,-\ftheighttwo)$)--(a1);
				\node at ($(-0.85*\ftwidth,-\ftheighttwo-0.2cm)$) {$u$};	
				\node at ($(-0.6*\ftwidth,-\ftheighttwo-0.2cm)$) {$(A,n)$};	
				\node at ($(0,-\ftheighttwo-0.2cm)$) {$w$};	
				\node at ($(0.6*\ftwidth,-\ftheighttwo-0.25cm)$) {$\overline{(A,n)}$};	
				\node at ($(0.85*\ftwidth,-\ftheighttwo-0.2cm)$) {$y$};	
			\end{scope}
			
			
			

			\node at (current bounding box.center) {$\leadsto$};
		\end{tikzpicture}
	\end{center}
	\caption{Removal of a single cycle in an arbitrary derivation tree $\cT$ to get a tree $\cT'$ with additional leaves of the form $(A,n)$ and $\overline{(A,n)}$. }\label{fig:insertCycles}
\end{figure}
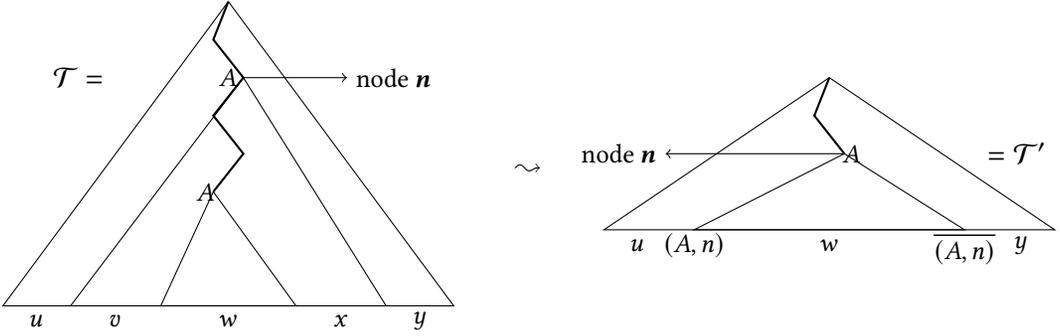

The algorithm guesses an exponential sized tree $\tilde{\cT}$ and verifies the consistency of node labels between parent and children nodes in the tree using the rule set $P$ of the grammar. It then  constructs a formula $\phi_{\tilde{\cT}}$ as follows. 
The formula $\phi_{\tilde{\cT}}$ contains variables for a sequence of fractions and vectors $\mvec{x}_0,\alpha_1,\mvec{x}_1 \ldots \alpha_l, \mvec{x}_l$ where $l$ is the number of leaf nodes in $\tilde{\cT}$. Let $\gamma_i$ be the label of the $i^{th}$ leaf node. The constructed formula is the conjunction of the following formulae $\phi_i$ for each leaf $i$: 
\begin{itemize}
	\item if $\gamma_i$ is a nonpump letter then $\phi_i := (\mvec{x}_i + \alpha_{i+1}\gamma_i=\mvec{x}_{i+1})$, else
	\item  $\gamma_i$ is a pump letter $(A,\mvec n)$, then $\mvec{x}_i,\mvec{x}_{i+1}$ are plugged into an instantiation of the formula obtained from \cref{thm:ratArithCycleFormula} for $A$, along with the corresponding vectors $\mvec{x}_j,\mvec{x}_{j+1}$ for the dual letter $\overline{(A,\mvec n)}=\gamma_j$ to give the formula $\phi_{i,j}$. In this case, $\phi_i=\phi_j=\phi_{i,j}$. 
\end{itemize}
The formula $\phi_{\tilde{\cT}}$ existentially quantifies over the variables $\vec x_1,\ldots, \vec x_{l-1}$ as well as the firing fractions $\alpha_1,\ldots, \alpha_l$, while $\vec x_0$ and $\vec x_l$ are free variables corresponding to $\vec u$ and $\vec v$ respectively. The final formula we want is $\bigvee_{\tilde{\cT}} \phi_{\tilde{\cT}}$.

\subsection{Capturing pump reachability relations} 
\label{ssec:cyclicReachCFG}
It remains to prove \cref{thm:ratArithCycleFormula}.
The key observation is that a characterization of~\citet{blondinLogicsContinuousReachability2017} 
of the existence of ``cyclic runs'' (i.e.\ ones that start and end in the same control state) in $\QVASS$ actually also applies to $\QVASS$ with infinitely many control states.
Thus, the first step is to translate the setting of pumps into that of cyclic $\QVASS$ with infinitely many control states.
It is more convenient for us to use algebraic terminology, so we will phrase this as a translation to the case of semigroups.
We say a language $K \subseteq \Sigma^*$ is a \emph{semigroup} if it is closed under concatenation, 
i.e., for any $u,v\in K$, we have $uv\in K$.
We will show that for our particular cyclic $\QVASS$, the characterization of Blondin and Haase allows us to build an exponential-sized ELRA formula.

\smallskip\noindent\textbf{Reduction to semigroups}
Let us first show that the pump reachability relations $P_A$ can be captured using context-free semigroups.
In this section, we often write letters $a$ in normal font, even though they are vectors in $\integ^d$. Vectors in $\ratplus^d$ are represented by bold font e.g. $\mvec{u}$.

The following lemma uses the idea of simulating grammar derivations by applying ``the left part forward'' and ``the right part backward'', which is a recurring theme in the literature on context-free grammars (see, e.g.~\cite{rosenberg1967machine,Berstel1979,DBLP:journals/pacmpl/BaumannGMTZ23,DBLP:conf/mfcs/LohreyRZ22,DBLP:conf/popl/RepsTP16}) and has been applied to PVASS by \citet[Section~5]{lerouxCoverabilityProblemPushdown2015}.
\begin{lemma}\label{lem:reduction-to-semigroup} 
	Given a grammar $\cG$ over $\Z^d$ and a non-terminal $A$
	in $\cG$, one can compute, in polynomial time, a context-free language
	$K\subseteq(\Z^{2d})^*$ such that (i)~$K$ is a semigroup and (ii)~for
	any $\mvec{u},\mvec{v},\mvec{u}',\mvec{v}'\in\Q^d$, we have
	$(\mvec{u},\mvec{v}')\step{\ratplus}{K} (\mvec{u}',\mvec{v})$ if and
	only if $(\mvec{u},\mvec{v},\mvec{u}',\mvec{v}')\in P_A$.  
\end{lemma}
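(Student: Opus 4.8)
The plan is to realize every pump $A\derivstar wAw'$ of $\cG$ inside the continuous semantics over $\Z^{2d}$ by running $w$ \emph{forwards} on the first $d$ counters and $w'$ \emph{backwards} --- with its update vectors reversed in order and negated --- on the last $d$ counters, and to let $K$ be the concatenation closure of all these ``simulating words''. Write $\iota_1,\iota_2\colon(\Z^d)^*\to(\Z^{2d})^*$ for the homomorphisms $\iota_1(a)=(a,\vec 0)$ and $\iota_2(a)=(\vec 0,a)$, and for $x=a_1\cdots a_k\in(\Z^d)^*$ put $\overline x=(-a_k)\cdots(-a_1)$. Two elementary facts carry the argument. \emph{Reversal:} for any $x\in(\Z^d)^*$ and $\mvec a,\mvec b\in\ratplus^d$ one has $\mvec a\step{\ratplus}{x}\mvec b$ iff $\mvec b\step{\ratplus}{\overline x}\mvec a$, since a run and its reversal traverse exactly the same configurations with the same firing fractions, so nonnegativity is preserved in both directions. \emph{Block independence:} if every letter of $z\in(\Z^{2d})^*$ lies in $\iota_1(\Z^d)\cup\iota_2(\Z^d)$ and $\pi_1(z),\pi_2(z)\in(\Z^d)^*$ are the subsequences of first- resp.\ second-block letters with their tags removed, then $(\mvec a,\mvec c)\step{\ratplus}{z}(\mvec b,\mvec d)$ iff $\mvec a\step{\ratplus}{\pi_1(z)}\mvec b$ and $\mvec c\step{\ratplus}{\pi_2(z)}\mvec d$, because each step changes only one block and leaves the (already nonnegative) other block alone. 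Combining the two, $\iota_1(w)\,\iota_2(\overline{w'})$ realizes the transformation $(\mvec u,\mvec v')\mapsto(\mvec v,\mvec u')$ precisely when $\mvec u\step{\ratplus}{w}\mvec v$ and $\mvec u'\step{\ratplus}{w'}\mvec v'$.

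I would first build, in polynomial time, a context-free grammar $\cG'$ over $\Z^{2d}$ with $L(\cG')=\{\,\iota_1(w)\,\iota_2(\overline{w'})\mid A\derivstar wAw'\,\}$. This is the standard ``left part forwards, right part backwards'' construction for pump contexts cited just above: for each nonterminal $X$ one takes a copy $[X]$, with $[A]\to\varepsilon$ and, for every rule $X\to YZ$ of the (Chomsky normal form) grammar, productions that peel one edge off the path from $A$ down to the pumped occurrence of $A$ --- routing the yield of the off-path sibling subtree through $x\mapsto\iota_1(x)$ into the first block when the path descends to $Z$, and through $x\mapsto\iota_2(\overline x)$ into the second block when it descends to $Y$ --- together with auxiliary tagged copies of $\cG$ generating $\iota_1(L_\cG(Y))$ and $\iota_2(\overline{L_\cG(Z)})$. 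Then set $K:=L(\cG')^{+}$: this is context free, closed under concatenation (hence a semigroup), and still computable in polynomial time (adjoin a fresh start symbol $S'$ with $S'\to S'\,S_0\mid S_0$, where $S_0$ is the start symbol of $\cG'$). This gives part~(i).

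For part~(ii), ``$\Leftarrow$'' is immediate: a witness $A\derivstar wAw'$, $\mvec u\step{\ratplus}{w}\mvec v$, $\mvec u'\step{\ratplus}{w'}\mvec v'$ for $(\mvec u,\mvec v,\mvec u',\mvec v')\in P_A$ gives $z:=\iota_1(w)\,\iota_2(\overline{w'})\in L(\cG')\subseteq K$ with $(\mvec u,\mvec v')\step{\ratplus}{z}(\mvec v,\mvec u')$ by the two facts. For ``$\Rightarrow$'', write $z\in K$ as $z=z_1\cdots z_k$ with $z_i=\iota_1(w_i)\,\iota_2(\overline{w_i'})$ and $A\derivstar w_iAw_i'$. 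Block independence yields $\pi_1(z)=w_1\cdots w_k$ and $\pi_2(z)=\overline{w_1'}\cdots\overline{w_k'}=\overline{\,w_k'\cdots w_1'\,}$, so from $(\mvec u,\mvec v')\step{\ratplus}{z}(\mvec v,\mvec u')$ and reversal we obtain $\mvec u\step{\ratplus}{W}\mvec v$ and $\mvec u'\step{\ratplus}{W'}\mvec v'$ with $W=w_1\cdots w_k$ and $W'=w_k'\cdots w_1'$; composing the $k$ pumps in order gives $A\derivstar WAW'$, so $(\mvec u,\mvec v,\mvec u',\mvec v')\in P_A$. (Matching condition~(ii) as stated is then just a matter of how one names the two coordinate blocks; the correspondence above runs from source $(\mvec u,\mvec v')$ to target $(\mvec v,\mvec u')$.)

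The step I expect to be the main obstacle is making the semigroup closure \emph{faithful}, and this is exactly where the backward encoding of $w'$ is indispensable: reversing $w'$ inside each $z_i$ lets the reversal fact turn a backward run into a bona fide $w'$-run, and --- because reversal flips concatenation order --- $\overline{w_1'}\cdots\overline{w_k'}=\overline{w_k'\cdots w_1'}$ is precisely the right part that arises when one composes the pumps; a forward encoding would yield only $\mvec u'\step{\ratplus}{w_1'\cdots w_k'}\mvec v'$, which in the continuous semantics need not come from any single pump, so $\step{\ratplus}{K}$ would then define a relation strictly larger than $P_A$. The one genuinely fiddly piece is arranging the pump-context grammar to emit the second block in \emph{reversed} order rather than in derivation order (the naive productions give the latter); this is routine but is best handled by generating that block along the root-to-pump path from the outside in.
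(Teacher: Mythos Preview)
Your approach is the paper's approach, but one technical claim is false and needs repair. You assert that one can build a context-free grammar with language exactly $\{\iota_1(w)\,\iota_2(\overline{w'}) \mid A \derivstar wAw'\}$, i.e.\ words where all first-block letters precede all second-block letters. This language is \emph{not} context-free in general. Take $A\to aAb\mid cAd$ (easily put in Chomsky normal form): the pumps are $A\derivstar x_1\cdots x_n\,A\,y_n\cdots y_1$ with $(x_i,y_i)\in\{(a,b),(c,d)\}$, so your claimed $L(\cG')$ becomes $\{\iota_1(x_1)\cdots\iota_1(x_n)\,\iota_2(-y_1)\cdots\iota_2(-y_n) : (x_i,y_i)\in\{(a,b),(c,d)\}\}$, a relabelled copy language $\{u\,\phi(u)\}$. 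The construction you sketch (and which the paper carries out) does \emph{not} separate the blocks: the productions $\hat X\to\overrightarrow{Y}\,\hat Z$ and $\hat X\to\overleftarrow{Z}\,\hat Y$ emit first- and second-block segments interleaved along the root-to-pump path, yielding words of the form $\overrightarrow{\ell_1}\overleftarrow{r_1}\overrightarrow{\ell_2}\overleftarrow{r_2}\cdots$ rather than $\iota_1(w)\iota_2(\overline{w'})$. Your last paragraph calls the separation ``routine''; it is in fact impossible.

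Fortunately this does not break your overall argument, because your correctness proof only uses block independence, which cares about $\pi_1(z_i)$ and $\pi_2(z_i)$, not about how the two blocks are interleaved inside $z_i$. So the fix is simply to replace ``$z_i=\iota_1(w_i)\iota_2(\overline{w_i'})$'' by ``$\pi_1(z_i)=w_i$ and $\pi_2(z_i)=\overline{w_i'}$'' and to describe $L(\cG')$ accordingly. One further simplification the paper exploits: because the hat nonterminal is always the rightmost symbol in any sentential form and only $\hat A$ terminates (via $\hat A\to\varepsilon$), whenever $\hat A\derivstar u$ one also has $\hat A\derivstar u\hat A$, so the generated language is already a semigroup and your explicit $^{+}$-closure is unnecessary.
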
 
Suppose $(S,N,\Sigma,P)$ is a context-free grammar in Chomsky normal form, with $\Sigma \subseteq\Z^d$ and $A\in N$.  
The idea is to take a derivation tree for $A\derivstar wAw'$ and
consider the path from the root to the $A$ in the derived word, see
\cref{subtree-flipping} on the left. We transform the tree as follows. Each
subtree on the left of this path ($\ell_1$ and $\ell_2$ in the figure) is left
unchanged, except that each produced vector $ a\in\Z^d$ is padded so as to
obtain $( a,0,\ldots,0)\in\Z^{2d}$. In the figure, the resulting subtrees are
$\overrightarrow{\ell_1}$, $\overrightarrow{\ell_2}$. Each subtree on the right
($r_1$ and $r_2$ in the figure), however, is moved to the left side of the path
and it is \emph{reversed}, meaning in particular that the word produced by it
is reversed. Moreover, each vector $b\in\Z^d$ occurring at a leaf is turned
into $(0,\ldots,0,-b)\in\Z^{2d}$.

Then, every word generated by the new grammar is of
the form $\overrightarrow{x_1}\overleftarrow{y_n}\cdots
\overrightarrow{x_n}\overleftarrow{y_1}$, where $x_1\cdots x_n y_1\cdots y_n$ is the word produced by the original grammar.
Here, for a word $w\in(\Z^d)^*$, $\overrightarrow{w}$ is obtained from $w$ by
replacing each vector $ a \in\Z^d$ in $w$ by $( a,0,\ldots,0)\in\Z^{2d}$, and 
$\overleftarrow{w}$ is obtained from $w$ by reversing the word and replacing
each $ a\in\Z^d$ by $(0,\ldots,0,-a)\in\Z^{2d}$. Conversely, for every $A\derivstar
wAw'$, we can find a word $\overrightarrow{x_1}\overleftarrow{y_n}\cdots
\overrightarrow{x_n}\overleftarrow{y_1}$ in the new grammar such that
$x_1\cdots x_n=w$ and $y_1\cdots y_n=w'$. Thus, we clearly have
$(\mvec{u},\mvec{v}')\step{\ratplus}{K} (\mvec{u}',\mvec{v})$ if and only
if $(\mvec{u},\mvec{v},\mvec{u}',\mvec{v}')\in P_A$ for every $\mvec u,\mvec
v\in\ratplus^d$. 

Formally, in the new grammar for $K$, we have three copies of the non-terminals
in $N$, hence we have $N'=\overleftarrow{N}\cup\overrightarrow{N}\cup\hat{N}$,
where $\overleftarrow{N}=\{\overleftarrow{B}\mid B\in N\}$,
$\overrightarrow{N}=\{\overrightarrow{B}\mid B\in N\}$ and
$\hat{N}=\{\hat{B}\mid B\in N\}$ are disjoint copies of $N$. The productions in
the new grammar are as follows: For every production $B\to CD$ in $P$, we
include productions $\overrightarrow{B}\to
\overrightarrow{C}\overrightarrow{D}$, $\hat{B}\to \overrightarrow{C}\hat{D}$
and $\hat{B}\to\overleftarrow{D}\hat{C}$,
$\overleftarrow{B}\to\overleftarrow{D}\overleftarrow{C}$. Moreover, for every
$B\to  b$ with $ b\in\Z^d$, we include the productions $\overrightarrow{B}\to
( b,0,\ldots,0)\in \Z^{2d}$ and $\overleftarrow{B}\to
(0,\ldots,0,- b)\in\Z^{2d}$. Finally, we add $\hat{A}\to\varepsilon$ and set the start symbol to $\hat{A}$. Observe that the generated language is closed under concatenation. This is because at any point in any derivation, there is exactly one hat nonterminal symbol which always occurs as the last symbol and only $\hat{A}$ can be replaced by $\varepsilon$. This means whenever $\hat{A} \derivstar u$, it is the case that $\hat{A} \derivstar u \hat{A}$ and hence if $\hat{A} \derivstar v$ as well, then it is the case that $\hat{A} \derivstar uv$ as well as $\hat{A} \derivstar vu$.
This grammar achieves the required transformation.

	\begin{figure}[t]
		\begin{center}
			\begin{tikzpicture}[level distance=0.7cm, sibling distance=0.8cm]
				\node [nonterminal] (root original) {} 
				child[npath] { node[ltree] (l1) {}
				}
				child[epath] { node[nonterminal] {}
					child[epath] { node[nonterminal] {}
						child[npath] { node[ltree] (l2) {} 
						} 
						child[epath] { node[nonterminal] {}
							child[epath] { node[nonterminal] (last original) {}
							}
							child[npath] { node[rtree] (r2) {}
							}
						}
					}
					child[npath] { node[rtree] (r1) {}
					}
				};
				\node[right=0.05cm of root original] {$A$};	
				\node[right=0.05cm of last original] {$A$};
				
				\node at ($(l1)+(-0.3*\twidth,-0.7*\theight)$) {$\ell_1$};
				\node at ($(l2)+(-0.3*\twidth,-0.7*\theight)$) {$\ell_2$};
				\node at ($(r1)+(0.3*\twidth,-0.7*\theight)$) {$r_1$};
				\node at ($(r2)+(0.3*\twidth,-0.7*\theight)$) {$r_2$};
				
				\node[right=5cm of root original, nonterminal] (root new) {} 
				child[npath] { node[ltree] (nl1) {}
				}
				child[epath] { node[nonterminal] {}
					child[npath] { node[ltree] (nl2) {}
					}
					child[epath] { node[nonterminal] {}
						child[npath] { node[ltree] (nl3) {} 
						} 
						child[epath] { node[nonterminal] {}
							child[npath] { node[ltree] (nl4) {}
							}
							child[epath] { node[nonterminal] (last new) {}
							}
						}
					}
				};
				
				\node[right=0.05cm of root new] {$\hat{A}$};
				\node[right=0.05cm of last new] {$\hat{A}$};
				\node at ($(nl1)+(-0.3*\twidth,-0.7*\theight)$) {$\overrightarrow{\ell_1}$};
				\node at ($(nl2)+(-0.3*\twidth,-0.7*\theight)$) {$\overleftarrow{r_1}$};
				\node at ($(nl3)+(-0.3*\twidth,-0.7*\theight)$) {$\overrightarrow{\ell_2}$};
				\node at ($(nl4)+(-0.3*\twidth,-0.7*\theight)$) {$\overleftarrow{r_2}$};

				\node at (current bounding box.center) {$\leadsto$};
			\end{tikzpicture}
		\end{center}
		\vspace*{-0.1cm}
		\caption{Illustration of \cref{lem:reduction-to-semigroup}. A derivation of the original grammar (shown on the left) is transformed into a derivation of the new grammar (on the right).}\label{subtree-flipping}
		\vspace*{-0.4cm}
	\end{figure}
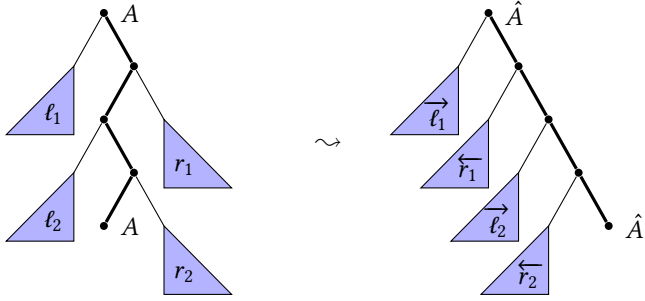

	\smallskip
	\noindent\textbf{Reduction to letter-uniform semigroup} 
	As a second step, we will further reduce the problem to the case where 
	n all runs, the letters (i.e.\ added vectors) appear uniformly (in some precise sense). The \emph{support sequence} of a word $w\in\Sigma$ is the tuple $(\Gamma,<)$ where $\Gamma \subseteq \Sigma$ is the subset of letters occuring in $w$ and $<$ is a total order on $\Gamma$ which corresponds to the order of first occurrence of the letters in $w$. For example the support sequence of $aacabbc$ consists of $\Gamma =\{ a,b,c\}$ and the linear ordering $a < c < b$.
	
	A context-free
	language $K\subseteq(\Z^d)^*$ is \emph{letter-uniform} if any two
	words in $K$ have the same support sequence. Let $\Sigma\subseteq\Z^d$
	be the set of letters occurring in $K$. Moreover, for every subset
	$\Gamma\subseteq\Sigma$ and a total order $<$ on $\Gamma=\{ \gamma_1, \ldots, \gamma_l\}$ given as $\gamma_1 < \gamma_2 \ldots < \gamma_l$, let $K_{(\Gamma,<)}=\{w \in K \mid \exists u_1,u_2,\ldots, u_l \; w=\gamma_1 u_1 \gamma_2 u_2 \ldots \gamma_l u_l \text{ where } u_i \in \{ \gamma_1,\ldots, \gamma_i\}^* \}$ denote the set of all words in $K$ with support sequence $(\Gamma,<)$.
	
	Then we can observe that each $K_{(\Gamma,<)}$ is
	letter-uniform and also a semigroup: for any two words $u,v \in K_{(\Gamma,<)}$, it is the case that the letters occurring in $uv$ and $vu$ are exactly $\Gamma$ and furthermore, the order of first occurrence of the letters from $\Gamma$ in the two words also corresponds to the total order $<$. Furthermore, $uv,vu \in K$ since $K$ is a semigroup. Hence both of these words also belong to $K_{(\Gamma,<)}$.
	Moreover, we have $\mvec
	u\step{\ratplus}{K}\mvec v$ if and only if there exists some 
	$\Gamma\subseteq\Sigma$ and total order $<$ on $\Gamma$ with $\mvec u\step{\ratplus}{K_{(\Gamma,<)}}\mvec v$.
	We shall prove the following:
	\begin{proposition}\label{letter-uniform-semigroup}
		Given a context-free letter-uniform semigroup $K\subseteq(\Z^d)^*$,
		we can in $\NEXPTIME$ construct an ELRA formula for the relation $R_K$.
	\end{proposition}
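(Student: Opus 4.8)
The plan is to realize the letter-uniform semigroup $K$ as the set of words labeling \emph{cyclic runs} of a $\QVASS$ with infinitely many control states, and then to invoke the characterization of cyclic continuous runs due to Blondin and Haase~\cite{blondinLogicsContinuousReachability2017}, after observing that this characterization does not depend on the number of control states. Concretely, since $K$ is context-free it is accepted by a pushdown automaton $P$, which can be normalized so that it starts and accepts with the same (state, stack) configuration; encoding the stack of $P$ into the control state then yields a $\QVASS$ $\mathcal A$ with a distinguished state $q_0$, whose transitions carry the update vectors of $\Sigma$ (and whose internal $\varepsilon$-moves carry $\vec 0$), such that the words labeling runs from $q_0$ to $q_0$ are exactly $K$. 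Hence $R_K(\mvec u,\mvec v)$ holds if and only if $\mathcal A$, read as a continuous VASS, admits a cyclic run at $q_0$ from $\mvec u$ to $\mvec v$. The point to verify is that the witness for a cyclic run is a \emph{finite} object (each run visits finitely many states and uses finitely many transitions), so that Blondin and Haase's argument never counts control states and carries over verbatim to $\mathcal A$.

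The characterization of a cyclic continuous run at $q_0$ from $\mvec u$ to $\mvec v$ is (up to routine reformulation) a Boolean combination of conditions over the finite set $T'$ of transitions actually used: (i)~an \emph{effect/cone condition}, namely that $\mvec v-\mvec u$ is a $\Q_{>0}$-combination of the vectors in $T'$; (ii)~a \emph{forward-enabledness} condition on $\mvec u$, stating that the transitions of $T'$ can be ``switched on'' one after another from $\mvec u$; and (iii)~a symmetric \emph{backward-enabledness} condition on $\mvec v$, obtained by reading the run backwards with negated vectors. Conditions (ii) and (iii) are the continuous-firability conditions of Fraca and Haddad~\cite{fraca2015complexity}, and each depends only on $\mvec u$ (resp.\ $\mvec v$) and on the order in which the transitions are switched on.

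Here letter-uniformity does the real work. Every word of $K$ has the same support sequence $(\Gamma,<)$ with $\Gamma=\{\gamma_1<\dots<\gamma_l\}$, so every cyclic run of $\mathcal A$ uses exactly $\Gamma$ (the zero vectors, in particular the $\varepsilon$-moves, are irrelevant to all three conditions) and switches the $\gamma_i$ on in the order $\gamma_1,\dots,\gamma_l$. Thus $T'=\Gamma$ is fixed, and the forward condition unfolds into the concrete recursion ``$\gamma_i$ gets switched on iff for every coordinate $c$ with $\gamma_i(c)<0$ either $\mvec u(c)>0$ or $\gamma_j(c)>0$ for some already-switched-on $\gamma_j$ with $j<i$; the forward condition asks that all of $\gamma_1,\dots,\gamma_l$ get switched on''. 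Since the signs $\gamma_i(c)$ are constants, this becomes a fixed quantifier-free linear formula $\Phi^{\rightarrow}(\mvec u)$. For the cone condition, the semigroup property lets us replace any $w\in K$ by $w^k\in K$ and hence realize arbitrarily large positive coefficients, so it is simply $\exists (c_\gamma)_{\gamma\in\Gamma}\colon \bigwedge_{\gamma}c_\gamma>0\wedge \mvec v-\mvec u=\sum_{\gamma}c_\gamma\gamma$; moreover the three conditions compose into one run (a tiny forward switch-on phase near $\mvec u$, a bulk phase realizing the $c_\gamma$, a tiny backward switch-off phase near $\mvec v$), exactly as in the Fraca--Haddad argument, so no compatibility between the $c_\gamma$ and the switch-on orders is needed. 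The one delicate point is the backward condition: when the run is read backwards the switch-on order is the \emph{last-occurrence} order of the chosen word, which $(\Gamma,<)$ does not pin down; we deal with this by enumerating the at most $l!$ total orders $\tau$ on $\Gamma$, and for each that is realizable as the last-occurrence order of some word in $K$ (a context-free nonemptiness check) we produce the corresponding backward formula $\Phi^{\leftarrow}_\tau(\mvec v)$. Because $K$ is a semigroup, any such $\tau$ is realizable simultaneously with the fixed first-occurrence order as soon as $K\neq\emptyset$. Altogether, for $K\neq\emptyset$ the formula for $R_K$ is $\bigvee_{\tau}\bigl(\Phi^{\rightarrow}(\mvec u)\wedge\Phi^{\leftarrow}_\tau(\mvec v)\wedge \exists (c_\gamma)\colon \bigwedge_{\gamma}c_\gamma>0\wedge \mvec v-\mvec u=\sum_{\gamma}c_\gamma\gamma\bigr)$, and it is $\false$ if $K=\emptyset$; this has at most exponential size and is produced in exponential time, which proves the proposition.

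I expect the main obstacle to be the formal verification that Blondin and Haase's cyclic-run characterization transfers to $\QVASS$ with infinitely many control states: one must isolate the finite witness (a finite sub-$\QVASS$ through $q_0$ together with the effect/forward/backward data) and check that their correctness proof never uses finiteness of $Q$. A secondary, more bookkeeping-heavy obstacle is justifying that letter-uniformity genuinely forces $T'=\Gamma$ and the first-switch-on order $\gamma_1,\dots,\gamma_l$, and that the cone, forward, and backward conditions decouple into the three-phase run sketched above, so that the conjunction of the three small formulas is exactly $R_K$. The last-occurrence subtlety in the backward direction is, by comparison, a minor nuisance handled by the finite enumeration over $\tau$.
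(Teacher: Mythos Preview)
Your approach is correct and follows the same overall route as the paper: both reduce to the Blondin--Haase characterization of cyclic continuous runs, which the paper packages as the ``three-runs'' Lemma (\cref{three-runs}) for letter-uniform semigroups and then instantiates with a formula for $R_K^{\Q}$ (\cref{q-reachability}) and one for admissibility $A_K$ (\cref{lem:constructCoverKFormulaInNP}). Your detour through a PDA normalized to start and accept in the same configuration, then encoded as an infinite-state $\QVASS$, is harmless but unnecessary---the paper simply observes directly that the Blondin--Haase proof never counts states and goes through for any semigroup $K$.

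Where you genuinely diverge is in two places, both to your advantage. First, your effect/cone condition $\exists (c_\gamma)_{\gamma\in\Gamma}\colon \bigwedge_\gamma c_\gamma>0 \wedge \mvec v-\mvec u=\sum_\gamma c_\gamma\gamma$ is a polynomial-size formula for $R_K^\Q$, exploiting the semigroup property (replace $w$ by $w^k$) to lift the upper bounds on the coefficients; the paper instead passes through a Parikh-equivalent NFA of exponential size and then applies the Blondin--Haase formula construction, which is more general but overkill here. Second, you explicitly confront the backward-admissibility condition, which the paper's proof states only as ``formulae for $R_K^\Q$ and $A_K$'' and leaves the symmetric third condition to the reader. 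You are right that the reversed-and-negated language $K^{-1}$ need not be letter-uniform (the last-occurrence order is not pinned down by the support sequence), and your fix---enumerate the at most $l!$ realizable last-occurrence orders $\tau$, check realizability by a context-free nonemptiness test, and disjoin the resulting $\Phi^{\leftarrow}_\tau(\mvec v)$---is exactly what is needed. One can equivalently observe that each $K_\tau=\{w\in K: w \text{ has last-occurrence order }\tau\}$ is itself a letter-uniform semigroup whose reverse is letter-uniform, which reconnects with the paper's lemma; either way the argument is sound and yields an at most exponential formula.
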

	Let us see how \cref{thm:ratArithCycleFormula} follows from \cref{letter-uniform-semigroup}.
	Given some nonterminal $A$ of a CFG, we want a formula for $P_A$. We first use \cref{lem:reduction-to-semigroup} to compute a context-free language $K$ such that $R_K$ captures $P_A$ (up to permuting some counters).
	Suppose $K\subseteq\Sigma^*$ for some $\Sigma\subseteq\Z^d$. For each subset
	$\Gamma\subseteq\Sigma$ and total order $<$ on $\Gamma$, consider the set $K_{(\Gamma,<)}$ as defined earlier. As we already observed,  (i)~each $K_{(\Gamma,<)}$ is a semigroup,
	(ii)~each $K_{(\Gamma,<)}$ is letter-uniform, and (iii)~$K$ is the union of all
	$K_{(\Gamma,<)}$. 
	Therefore, our construction proceeds as follows. We guess
	${(\Gamma,<)}$ and then apply \cref{letter-uniform-semigroup} to
	compute in $\NEXPTIME$ an existential $\FO(\rat,+,<)$ formula for $R_{K_{(\Gamma,<)}}$.
	Then, the disjunction of all resulting formulas clearly defines $R_K$. We note that given some $(\Gamma,<)$, a grammar for $K_{(\Gamma,<)}$ can be constructed from the grammar for $K$ in polynomial time. This is because we need to construct a grammar for the intersection of $K$ with the language of the regular expression given by $R := \gamma_1 \gamma_1^* \gamma_2 (\gamma_1+\gamma_2)^*\gamma_3\ldots\gamma_k(\gamma_1+\gamma_2\ldots \gamma_k)^*$, which only incurs a polynomial blowup. In fact, without the linear order and only the subset $\Gamma$, the same construction would lead to an exponential blowup since we would then have to remember all possible subsets of $\Gamma$ while reading a word.  
	
	\smallskip
	\noindent\textbf{Characterizing reachability by three runs}
	It remains to show \cref{letter-uniform-semigroup}. 
	The advantage of reducing our problem to the letter-uniform case is
	that we can employ a characterization of \citet{blondinLogicsContinuousReachability2017}
	about the existence of runs. 
	In the rest of this section, we will assume that the language $K$ comes with a corresponding support sequence $(\Gamma,<)$.
	
	The following \lcnamecref{three-runs} tells us that reachability along a
	letter-uniform semigroup can be characterized by the existence of three
	runs: One run that witnesses reachability under $\Q$-semantics, and two runs
	that witness ``admissibility'' in both directions. Here, the ``only if'' direction
	is trivial, because the run from $\mvec u$ to $\mvec v$ along $K$ is a run of
	all three types. For the converse, we use the fact that $K$ is a semigroup and
	letter-uniform to compose the three runs into a run under
	$\ratplus$-semantics.
	\begin{lemma}\label{three-runs}
		Let $K\subseteq(\Z^d)^*$ be a letter-uniform semigroup. Then we have $\mvec{u}\step{\ratplus}{K}\mvec{v}$ if and only if there are $\mvec{u}',\mvec{v}'\in\Q^d$ such that: 
		\begin{align*}
			\mvec{u}\step{\rat}{K}\mvec{v}, &&
			\mvec{u}\step{\ratplus}{K}\mvec{v}', &&
			\text{and} &&
			\mvec{u}'\step{\ratplus}{K}\mvec{v}.
		\end{align*}
	\end{lemma}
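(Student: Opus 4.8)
The "only if" direction is immediate: if $\mvec{u}\step{\ratplus}{K}\mvec{v}$ via a word $w\in K$ and firing fractions $\alpha_1,\dots,\alpha_n\in(0,1]$, then this same run is also a run under $\rat$-semantics (so $\mvec{u}\step{\rat}{K}\mvec{v}$), and it witnesses both $\mvec{u}\step{\ratplus}{K}\mvec{v}'$ with $\mvec{v}'=\mvec{v}$ and $\mvec{u}'\step{\ratplus}{K}\mvec{v}$ with $\mvec{u}'=\mvec{u}$. So the content is entirely in the "if" direction.

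For the converse, I would fix three words $w_0, w_1, w_2 \in K$ realizing the three runs: $w_0$ gives $\mvec{u}\step{\rat}{K}\mvec{v}$ (possibly using negative intermediate values and firing fractions, but still with fractions in $(0,1]$ on each step), $w_1$ gives $\mvec{u}\step{\ratplus}{K}\mvec{v}'$, and $w_2$ gives $\mvec{u}'\step{\ratplus}{K}\mvec{v}$. The key structural fact is that $K$ is a semigroup, so any concatenation of these words (in any order, with any multiplicity) again lies in $K$; and $K$ is letter-uniform with support sequence $(\Gamma,<)$, so the first occurrence of every letter of $\Gamma$ happens in a controlled position. The plan is to build a single run along a word of the form $w_2^{\,a}\, w_1^{\,b}\, w_0$ (or a similar combination — the precise shape is where care is needed), starting from $\mvec{u}$, such that: first the $w_2^a$ prefix is fired with very small fractions to "nudge" the configuration from $\mvec{u}$ towards $\mvec{u}'$ while keeping all intermediate values nonnegative; then the $w_1^b$ block is fired with small fractions to similarly stay near the safe region; and finally $w_0$ is fired with genuinely tiny fractions so that the (possibly negative) excursions of the $\rat$-run $w_0$ are scaled down enough that, added on top of a configuration already close to where $w_0$'s actual execution would keep things nonnegative, everything stays in $\ratplus^d$. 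The effect must come out to land exactly on $\mvec{v}$; this is arranged by choosing the total scaling coefficients to solve a small linear system, exploiting that the overall displacement is a positive combination of $\delta(w_0)=\mvec v-\mvec u$, $\delta(w_1)=\mvec v'-\mvec u$, $\delta(w_2)=\mvec v-\mvec u'$ and that we have freedom in how much of each block to use.

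A cleaner way to organize this, which I would pursue, is to invoke the known characterization of continuous reachability from Blondin–Haase directly: for a letter-uniform set, $\mvec{u}\step{\ratplus}{K}\mvec{v}$ holds iff there is a $\rat$-run from $\mvec u$ to $\mvec v$ (witnessed by $w_0$) together with a "forward-admissible" witness (a run from $\mvec u$ that stays nonnegative and uses all of $\Gamma$ — witnessed by $w_1$, ending anywhere, here at $\mvec v'$) and a "backward-admissible" witness (a run ending at $\mvec v$ that stays nonnegative — witnessed by $w_2$, starting anywhere, here at $\mvec u'$). Concretely, I would fire a long prefix consisting of many scaled-down copies of $w_1$ to move from $\mvec u$ into the interior of the nonnegative orthant in all coordinates touched by $\Gamma$, having recorded the first occurrences of all letters; then splice in a heavily scaled-down copy of $w_0$, whose signed effect is $\mvec v - \mvec u$ but whose intermediate dips are now negligible compared to the cushion we built; then append many scaled-down copies of $w_2$ (run towards their end) to absorb the leftover discrepancy and arrive exactly at $\mvec v$ while staying nonnegative. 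Letter-uniformity guarantees that inserting or reordering these blocks does not change the set of letters used, so the resulting word is still in $K$ (semigroup closure), and one checks the support sequence is preserved because $w_1$ alone already exhibits $(\Gamma,<)$ as a prefix.

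The main obstacle is the bookkeeping that simultaneously achieves (a) the exact final value $\mvec v$ — requiring the net effect to be a specific rational vector, hence a precise choice of the block multiplicities and firing fractions solving a linear feasibility problem — and (b) nonnegativity of every intermediate configuration, which fails for $w_0$ on its own and is only rescued because the forward/backward admissible runs let us approach along a path that stays strictly inside the orthant in the relevant coordinates, so that a sufficiently small scaling of $w_0$'s excursion cannot escape it. Getting the quantifiers right — "for all small enough $\varepsilon$, the $\varepsilon$-scaled $w_0$ stays nonnegative when prepended with enough cushion" — and confirming that the arithmetic of the displacements closes up exactly (not just approximately) is the delicate part; this is exactly the place where the semigroup and letter-uniformity hypotheses are used in full force, and where one leans on the Blondin–Haase machinery rather than reproving it from scratch.
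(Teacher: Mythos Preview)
Your proposal is correct and is essentially the paper's approach: invoke Blondin--Haase's cyclic-reachability characterization, with letter-uniformity supplying the equal-support hypothesis and semigroup closure ensuring the concatenated witness word stays in $K$. The paper makes this precise by encoding $K$ as an \emph{infinite}-state $\QVASS$ (one cycle through a common state $q_0$ for each word of $K$) and noting that the Blondin--Haase proof goes through unchanged for infinite state sets; the actual combined run there has the shape $\lambda\pi'_f\cdot\sigma'\cdot\lambda\pi'_b$ --- one scaled forward run, a Parikh-adjusted middle piece obtained from the $\rat$-run via the run-sufficiency lemma, and one scaled backward run --- rather than the ``many copies'' you sketch, but that is exactly the machinery you already defer to.
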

	
	\Cref{three-runs} is an extension of \cite[Proposition 4.5]{blondinLogicsContinuousReachability2017}. The only difference is that in \cite{blondinLogicsContinuousReachability2017}, $K$ is given by a non-deterministic finite automaton where one state is both the initial and the final state.
	The ``only if'' direction is trivial. 
	For the ``if'' direction, the proof in \cite{blondinLogicsContinuousReachability2017} takes the three runs and shows that a suitable concatenation of these runs, together with an appropriate choice of multiplicities, yields the desired run $\mvec u\step{\ratplus}{K}\mvec v$.
	Since $K$ is a semigroup, the same argument yields \cref{three-runs}.
	See Subsection~\ref{sec:appendix-upper-bound-three-runs} of the appendix for details.
	
	\Cref{three-runs} allows us to express the reachability relation along $K$:
	It tells us that we merely have to express existence of the three simpler types of runs. The first of the three runs is 
	reachability under $\rat$-semantics while the second and third are examples of \emph{admissible} runs under $\ratplus$-semantics. Thus we need to characterize these two types of runs.
	We will do this in the following two subsections.

	\smallskip
	\noindent\textbf{Characterizing reachability under $\rat$-semantics}
	We first show how to construct an ELRA formula for the $\rat$-reachability relation along a letter-uniform context-free $K$. 
	\begin{lemma}\label{q-reachability}
		Given a letter-uniform context-free language $K\subseteq(\Z^d)^*$, we can construct in exponential time an ELRA formula for the relation
		\[ R_K^{\Q} = \{(\mvec u,\mvec v)\in\Q_+^d\times\Q_+^d \mid \mvec u\step{\rat}{K} \mvec v\}.\]
	\end{lemma}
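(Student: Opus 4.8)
The plan is to reduce $\Q$-reachability along a context-free language to a question about the Parikh image, and then use the fact that Parikh images of context-free languages are semilinear and---crucially for the complexity---that such a semilinear set can be described by an existentially-quantified ELRA formula of exponential size. The starting observation is that, under $\Q$-semantics (where firing fractions may be any rational, including negatives being irrelevant since letters come with signs already, and intermediate configurations may go negative), the relation $\mvec u\step{\rat}{K}\mvec v$ holds if and only if there is a word $w\in K$ together with firing fractions $\alpha_1,\dots,\alpha_n\in(0,1]$ such that $\mvec v=\mvec u+\sum_i\alpha_i w_i$. Because $K$ is letter-uniform with support sequence $(\Gamma,<)$, every $w\in K$ uses exactly the letters of $\Gamma$, each at least once. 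I would first argue that for such words the only thing that matters is, for each letter $\gamma\in\Gamma$, the \emph{total fractional weight} $\beta_\gamma := \sum_{i\,:\,w_i=\gamma}\alpha_i$ assigned to occurrences of $\gamma$: indeed $\mvec v-\mvec u=\sum_{\gamma\in\Gamma}\beta_\gamma\,\gamma$. Conversely, given any target weights $\beta_\gamma$, if there exists \emph{some} $w\in K$ with, say, $k_\gamma$ occurrences of $\gamma$, one can realize weight $\beta_\gamma$ by splitting it among the $k_\gamma$ occurrences as long as $0<\beta_\gamma$ and one is willing to use small enough fractions; the constraint $\alpha_i\le 1$ is then not binding provided $k_\gamma$ is large enough, but since $k_\gamma\ge1$ always and we can also pick a longer word in $K$ if $K$ is infinite, the real constraint is just $\beta_\gamma>0$ when $\gamma$ must be used (which it must, by letter-uniformity) --- and here I would be careful: if $K$ happens to be finite, the maximum number of occurrences of $\gamma$ over words in $K$, call it $M_\gamma$, bounds $\beta_\gamma\le M_\gamma$, but $M_\gamma$ is at most exponential and expressible.

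So the construction is: compute (in exponential time, via the standard triangulation of the grammar) the Parikh image $\Parikh(K)\subseteq\N^\Gamma$ as a semilinear set $\bigcup_{j} (\mvec b_j + \N\mvec p_{j,1}+\cdots+\N\mvec p_{j,k_j})$, where all the base and period vectors have entries of at most exponential magnitude and there are at most exponentially many of them. Then write the ELRA formula
\[
  R_K^{\Q}(\mvec u,\mvec v)\ \equiv\ \exists (\beta_\gamma)_{\gamma\in\Gamma}\colon\ \mvec v=\mvec u+\textstyle\sum_{\gamma\in\Gamma}\beta_\gamma\,\gamma\ \wedge\ \bigwedge_{\gamma\in\Gamma}\beta_\gamma>0\ \wedge\ \Theta\big((\beta_\gamma)_\gamma\big),
\]
where $\Theta$ is a formula asserting that $(\beta_\gamma)_\gamma$ lies in the ``rational cone/hull'' determined by the semilinear description of $\Parikh(K)$ --- more precisely, asserting the existence of a word length profile in $\Q_{\ge0}$ that could support those weights. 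The clean way to phrase $\Theta$ is: there exist nonnegative rationals $\lambda$ (choosing a base $\mvec b_j$, handled by a disjunction over $j$) and $\mu_1,\dots,\mu_{k_j}\ge0$ such that $\beta_\gamma\le (\mvec b_j)_\gamma+\sum_\ell \mu_\ell (\mvec p_{j,\ell})_\gamma$ for each $\gamma$, i.e., the number of available occurrences of $\gamma$ in some word of $K$ dominates the weight we want to place on $\gamma$. This is a valid ELRA formula of exponential size (polynomial in the exponential-size semilinear description), and I would verify it defines exactly $R_K^\Q$: for soundness, any witnessing $w$ and fractions give occurrence counts in $\Parikh(K)$ that dominate the induced weights; for completeness, given $\beta$ satisfying the formula, pick the word $w\in K$ realizing the dominating occurrence count and distribute each $\beta_\gamma$ over its (at least one, possibly many) $\gamma$-occurrences using fractions in $(0,1]$, which is always possible since we may take fractions as small as needed and each occurrence's fraction can be at most $1$ while the count dominates $\beta_\gamma$ --- though I should double check the edge case where $\beta_\gamma$ exceeds the occurrence count by being forced large, which cannot happen precisely because we required domination.

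The main obstacle I anticipate is twofold: first, getting the semilinear description of $\Parikh(K)$ with the right complexity bound --- the classical Parikh theorem constructions give bounds that are exponential in the grammar size, and I need to confirm these are good enough (they are: exponential-size semilinear set, hence exponential-size ELRA formula, which is exactly what ``construct in exponential time'' allows); one could alternatively use existential Presburger formulas for $\Parikh(K)$ directly (Verma--Seidl--Schwentick / Esparza et al.), which are of \emph{polynomial} size, making the construction even cleaner, and then I would only need to relax integrality. Second, and more delicate, is the precise handling of the $\alpha_i\le 1$ upper bound on firing fractions together with the realizability argument: I need the distribution-of-weights lemma to be airtight, in particular that placing weight $\beta_\gamma$ on $k_\gamma$ occurrences with each individual fraction in $(0,1]$ is possible iff $0<\beta_\gamma$ (when $k_\gamma\ge 1$, as here) --- and this is where letter-uniformity is essential, since it guarantees $k_\gamma\ge1$ for all $\gamma\in\Gamma$ in every word, so there is never a ``letter not available'' obstruction, and the ``$\ge$'' rather than ``$=$'' in $\Theta$ is exactly what lets us use $\Q$-semantics' freedom (we need not consume all occurrences fully). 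I would close by remarking that the two ``admissible run'' directions needed for Lemma~\ref{three-runs} will be handled similarly in the next subsection, now additionally tracking sign constraints on the intermediate configuration, which is why those are treated separately.
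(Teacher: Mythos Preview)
Your approach is correct but takes a more explicit route than the paper's. Both rest on the observation that $\mvec u\step{\rat}{w}\mvec v$ depends only on $\Parikh(w)$. The paper exploits this by constructing an exponential-size NFA $\cA$ with the same Parikh image as $K$ (citing Esparza et al.) and then invoking Proposition~B.4 of Blondin--Haase as a black box, which for any NFA produces the desired ELRA formula in polynomial time; letter-uniformity is not actually used in this lemma. You instead unfold the characterization directly: $(\mvec u,\mvec v)\in R_K^\Q$ iff there are per-letter weights $\beta_\gamma>0$ with $\mvec v-\mvec u=\sum_\gamma\beta_\gamma\,\gamma$, dominated componentwise by some $\mvec k\in\Parikh(K)$, and you encode domination via an explicit semilinear description with rationally relaxed multipliers. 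The relaxation is sound (period vectors are componentwise nonnegative, so rounding up preserves domination), and this is precisely what lets you stay in pure ELRA --- the paper remarks, right after its own proof, that a Verma--Seidl--Schwentick route exists but would naturally yield mixed integer-rational arithmetic; your rounding argument shows how to avoid that. In short, the paper's route is more modular (it never materializes the semilinear set, relying on an existing ELRA result for NFAs); yours is more self-contained but leans on the fact that the semilinear representation of a CFL's Parikh image has at most exponential size, which is true but deserves a citation rather than an assertion.

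Two small points to tidy. Your opening parenthetical suggests that $\Q$-semantics allows arbitrary rational firing fractions; it does not --- fractions remain in $(0,1]$, and it is only the intermediate configurations that are permitted to leave $\Q_+^d$. Your subsequent formulas use the right convention, so this is purely expository. Your hedging about whether the $\alpha_i\le 1$ bound is ``binding'' and about finite versus infinite $K$ is also muddled: the clean statement is simply that, for a word with $k_\gamma$ occurrences of $\gamma$, the realizable weight range is exactly $0<\beta_\gamma\le k_\gamma$, which is exactly what your domination clause ends up expressing once you quantify over $\mvec k\in\Parikh(K)$.
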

	Our proof relies on the following, which was shown in \cite[Proposition~B.4]{blondinLogicsContinuousReachability2017}:
	\begin{lemma}[\citet{blondinLogicsContinuousReachability2017}]\label{q-reachability-vass}
		Given an NFA
		$\cA$ over some alphabet $\Sigma\subseteq\Z^d$, one can in polynomial time construct an ELRA formula $\varphi$ such that for $\mvec u,\mvec v\in\ratplus^d$, we have $\varphi(\mvec u,\mvec v)$ if and only if $\mvec u\step{\rat}{L(\cA)}\mvec v$.
	\end{lemma}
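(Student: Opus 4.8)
The plan is to reconstruct the construction behind \cref{q-reachability-vass} directly, as a polynomial-time translation of the NFA $\cA$ into an ELRA formula $\varphi$. The starting observation is that under the $\rat$-semantics the nonnegativity constraint on configurations disappears, so reachability along a word $w=w_1\cdots w_n$ does not depend on the order of the letters: it only depends, for each transition $\tau$ of $\cA$, on whether $\tau$ occurs on the chosen accepting run, on how many times it occurs, and on the total firing mass $\sum_{i\colon w_i\text{ uses }\tau}\alpha_i$ assigned to its occurrences; this mass ranges over $(0,k]$ when $\tau$ occurs $k\ge 1$ times. Hence $\mvec u\step{\rat}{L(\cA)}\mvec v$ holds iff there are a final state $q_f$, an accepting walk from $q_0$ to $q_f$, and, writing $x_\tau\ge 0$ for the total firing mass on $\tau$, we have $\mvec v-\mvec u=\sum_\tau x_\tau\cdot\mathrm{lab}(\tau)$, where $x_\tau>0$ exactly for the transitions appearing on the walk, and each such transition is used at least $\lceil x_\tau\rceil$ times.

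So I would let $\varphi(\mvec u,\mvec v)$ carry one rational variable $x_\tau\ge 0$ per transition of $\cA$, plus auxiliary rational variables encoding a witnessing walk, and assert, as a disjunction over $q_f\in F$: (i) $\mvec v-\mvec u=\sum_\tau x_\tau\cdot\mathrm{lab}(\tau)$; and (ii) the transition set $S=\{\tau\colon x_\tau>0\}$ is exactly the transition set of some accepting walk from $q_0$ to $q_f$ in which $\tau$ is used at least $x_\tau$ times. Point (ii) is where the combinatorics lives. By Euler's theorem for directed multigraphs, $S$ equipped with suitable positive integer multiplicities $\mu_\tau$ is the transition multiset of a walk from $q_0$ to $q_f$ iff $S\cup\{q_0\}$ is weakly connected and $\mu$ satisfies the degree balance $\sum_{\tau\text{ out of }q}\mu_\tau-\sum_{\tau\text{ into }q}\mu_\tau=[q=q_0]-[q=q_f]$ for every state $q$. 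The crucial point that keeps us inside ELRA is that the arc--node incidence matrix is totally unimodular and this right-hand side is integral, so the polyhedron $\{\mu\colon S\to\qnz,\ \mu\ge 1,\ \text{degree balance holds}\}$ has a rational point iff it has an integer point; thus $\mu$ may be a vector of \emph{rational} variables. Weak connectivity of $S\cup\{q_0\}$ is encoded by the standard certificate trick (à la Verma, Seidl and Schwentick): a ``rank'' variable per state, forcing every transition in $S$ to lie on a spanning structure rooted at $q_0$ along which the rank strictly increases. Finally, ``$\tau$ used at least $x_\tau$ times'': a transition can be traversed arbitrarily often iff it lies on a directed cycle inside $S$, which I certify by a rational circulation $z\colon S\to\qnz$ with $z_\tau>0$; a transition on no cycle of $S$ is used at most once (two occurrences would already close a cycle), so for it we just conjoin $x_\tau\le 1$. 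All variables and constraints are polynomial in $|\cA|$, so $\varphi$ is produced in polynomial time.

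For correctness, the ``only if'' direction reads off $S$, the masses $x_\tau$, the multiplicities of the witnessing walk, and the connectivity and circulation certificates. The ``if'' direction reconstructs from $\mu$, via Euler, an accepting walk that uses each $\tau\in S$ exactly $\mu_\tau\ge\lceil x_\tau\rceil$ times, then fires each occurrence of $\tau$ with fraction $x_\tau/\mu_\tau\in(0,1]$; the total effect is $\sum_\tau x_\tau\cdot\mathrm{lab}(\tau)=\mvec v-\mvec u$, and the degenerate case $S=\emptyset$ (forcing $q_f=q_0$, hence $q_0\in F$, and $\mvec u=\mvec v$) is subsumed by the $\mu=0$ solution. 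The hard part, the step that really needs care, is precisely this realization: guessing a rational flow and a rational connectivity certificate is easy, but one must be certain the guessed data genuinely arises from a concrete walk of $\cA$ whose firing fractions all stay in $(0,1]$. The total-unimodularity argument (which turns the rigid ``unit $q_0\to q_f$ flow'' back into an honest single walk, ruling out spurious witnesses such as simultaneously using two parallel $q_0\to q_f$ transitions) together with the cycle analysis bounding the per-transition mass is what makes this go through.
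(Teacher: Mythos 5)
The paper does not actually prove this lemma --- it imports it verbatim as Proposition~B.4 of Blondin and Haase --- so there is no in-paper argument to compare against; what you have written is a self-contained reconstruction, and as far as I can check it is correct. Your characterization is the right one: under $\rat$-semantics only the per-transition accumulated mass $x_\tau$ matters, a transition used $k$ times carries mass in $(0,k]$, and a transition used twice on a walk necessarily closes a directed cycle inside the used edge set, so the only genuine constraint beyond the displacement equation is ``$S$ is the edge set of an accepting walk'' plus ``$x_\tau\le 1$ for transitions of $S$ on no cycle of $S$.'' The two delicate points are exactly the ones you isolate, and both go through: (i) total unimodularity of the node--arc incidence matrix (after substituting $\mu_S-\mathbf{1}\ge 0$ the right-hand side stays integral and the polyhedron is pointed, so rational feasibility yields an integral Eulerian multiplicity vector), which together with the connectivity certificate turns the relaxed flow back into an honest walk; and (ii) the realization step, where you must pump the cycles witnessing $z_\tau>0$ so that \emph{all} transitions simultaneously reach multiplicity at least $\lceil x_\tau\rceil$ --- this works because splicing a cycle of $S$ into the walk at the source of $\tau$ only increases multiplicities and never leaves $S$. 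Two small presentational caveats: your formula does not literally assert ``$\tau$ used at least $x_\tau$ times'' via $\mu$ (the phrase ``uses each $\tau$ exactly $\mu_\tau\ge\lceil x_\tau\rceil$ times'' in the last paragraph is not what the constraints give you directly; it is the pumping argument that bridges this), and the Verma--Seidl--Schwentick rank certificate must be checked to remain sound with rational rank variables --- it does, since strict inequalities along the parent edges still force acyclicity. Compared with the route the paper takes (a bare citation), your argument is somewhat heavier machinery than strictly necessary --- Blondin and Haase's own proof follows essentially the same flow-plus-connectivity template --- but it has the virtue of making explicit why the formula can be kept in pure existential linear rational arithmetic rather than mixed integer--rational arithmetic, which is precisely the point the authors gesture at in their remark following \cref{q-reachability}.
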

	
	\begin{proof}[Proof of Lemma~\ref{q-reachability}]
		The key observation is that in the case of $\rat$-semantics, reachability along
		a word $w\in(\Z^d)^*$ does not depend on the exact order of the letters in $w$. Let $\Parikh(w) \in \mathbb{N}^{|\Sigma|}$ be the \emph{Parikh image} of $w$ i.e. $\Parikh(w)(a)$ for $a \in \Sigma$ denotes the number of times $a$ occurs in $w$.
		Formally, if $\Parikh(w)=\Parikh(w')$, then $\mvec u\step{\rat}{w}\mvec v$
		if and only if $\mvec u\step{\rat}{w'}\mvec v$. In particular, for
		languages $K,K'\subseteq(\Z^d)^*$, if $\Parikh(K)=\Parikh(K')$, then
		$R_{K}^{\rat}=R_{K'}^{\rat}$. We use this to reduce the case of context-free
		$K$ to the case of regular languages $K$.
		
		It is well known that given a context-free grammar, one can construct an NFA of
		exponential size such that the NFA accepts a language of the same Parikh image
		as the grammar. For example, a simple construction with a close-to-tight size
		bound can be found in~\cite{DBLP:journals/ipl/EsparzaGKL11}. Therefore, given
		$K$, we can construct an exponential-sized NFA $\cA$ such that
		$\Parikh(L(\cA))=\Parikh(K)$. 
		
		
		
		Observe that $\Parikh(L(\cA))=\Parikh(K)$ implies that for any $\mvec u,\mvec
		v\in\ratplus^d$, we have $\mvec u\step{\rat}{K}\mvec v$ if and only if
		$\mvec u\step{\rat}{L(\cA)}\mvec v$. 
		Therefore, we apply
		\cref{q-reachability-vass} to compute a formula $\varphi$ from $\cA$.
		Since $\cA$ is exponential in size, this computation takes exponential time
		and results in an exponential formula $\varphi$. Then, for $\mvec u,\mvec
		v\in\ratplus^d$, we have $\varphi(\mvec u,\mvec v)$ if and only if $\mvec
		u\step{\rat}{L(\cA)}\mvec v$, which is equivalent to $\mvec
		u\step{\rat}{K} \mvec v$.  
	\end{proof}
	
	We note that it is also possible to use a construction of~\cite{VermaSS05} to construct a formula for $R_K^{\Q}$ in polynomial time -- the reason we chose the current presentation is that applying the result from~\cite{VermaSS05} as a black box would result in a formula over \emph{mixed} integer-rational arithmetic (of polynomial size): 
	One would use integer variables to implement the construction from~\cite{VermaSS05} and then rational variables to account for continuous semantics. This would yield the same complexity bound in the end (existential mixed linear arithmetic is still in NP), but we preferred not to introduce another logic.

	\smallskip
	\noindent\textbf{Characterizing admissibility under $\ratplus$-semantics}
	Finally, we construct an ELRA formula for the set of vectors $\mvec u$ such that there exists a $\mvec v'\in\ratplus^d$ with $\mvec u\step{\ratplus}{K}\mvec v'$. 
	We call such vectors \emph{$K$-admissible} and denote the set of $K$-admissible vectors as $A_K$.
	The key observation is that $\mvec u$ is $K$-admissible if and only if the total order $<$ satisfies some simple properties.
	Intuitively, $\mvec u$ is \emph{$<$-admissible} if for each letter that decrements a counter, either (i)~that counter is positive in $\mvec u$ or (ii)~there is an earlier letter that increments this counter.
	For $ a \in \mathbb{Z}^d$, we denote by $\supp{a}$ (resp. $\supp{a}^+$ or $\supp{a}^-$) the subset of indices $i$ where $a(i) \neq 0$ (resp. $a(i) > 0$ or $a(i) < 0$).  
	Formally, $\mvec u$ is $<$-admissible if for each $a\in\Gamma$ and each $j\in\supp{a}^-$, we either have (i)~$\mvec u(j)>0$ or (ii)~there is a $b\in\Gamma$ with $b<a$ and $j\in \supp{b}^+$.
	We show the following:
	\begin{lemma}\label{lem:coverKcharacter}
		Let $K$ be letter-uniform and $K \neq \emptyset$.  Then $\mvec u\in A_K$ if and only if $\mvec u$ is $<$-admissible.
	\end{lemma}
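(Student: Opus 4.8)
I would prove the two implications directly.

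\emph{From $K$-admissibility to $<$-admissibility (by contraposition).} Suppose $\mvec u$ is not $<$-admissible, witnessed by a letter $a\in\Gamma$ and an index $j\in\supp{a}^-$ with $\mvec u(j)=0$ (recall $\mvec u\in\ratplus^d$) such that no $b\in\Gamma$ with $b<a$ has $j\in\supp{b}^+$. Let $w=w_1\cdots w_n\in K$ be arbitrary with $\mvec u\step{\ratplus}{w}\mvec v'$ for some $\mvec v'$. Since $K$ is letter-uniform, the support sequence of $w$ is exactly $(\Gamma,<)$, so every letter occurring strictly before the first occurrence of $a$ in $w$ is some $b<a$, and thus contributes a nonpositive amount to counter $j$. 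Hence the value of counter $j$ in the configuration just before this first occurrence of $a$ is at most $\mvec u(j)=0$; being a configuration along a $\ratplus$-run it is also $\ge 0$, hence $=0$. Firing $a$ there with a positive fraction would push counter $j$ strictly below $0$, a contradiction. Therefore $\mvec u$ is $<$-admissible.

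\emph{From $<$-admissibility to $K$-admissibility.} Here I construct an explicit witnessing run. Fix any $w=w_1\cdots w_n\in K$ (possible since $K\neq\emptyset$), let $W$ be the largest absolute value of an entry of a letter in $\Gamma$, and plan to fire $w_i$ with the geometrically decreasing fraction $\alpha_i=\epsilon^i$ for a small rational $\epsilon\in(0,1)$; note $\alpha_i\in(0,1)$. Writing $V_j(k):=\mvec u(j)+\sum_{i\le k}\epsilon^i\,w_i(j)$ for the prefix values, the firing sequence $\alpha_1w_1\cdots\alpha_nw_n$ is a valid $\ratplus$-run from $\mvec u$ precisely when $V_j(k)\ge 0$ for every counter $j$ and every $0\le k\le n$ (there is no extra ``enough-to-fire'' requirement beyond nonnegativity of the configurations). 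I verify $V_j(k)\ge 0$ by cases. (a) If no $a\in\Gamma$ has $j\in\supp{a}^-$, every summand is $\ge 0$, so $V_j(k)\ge\mvec u(j)\ge 0$. (b) If $\mvec u(j)>0$, then $V_j(k)\ge\mvec u(j)-W\sum_{i\ge 1}\epsilon^i=\mvec u(j)-W\epsilon/(1-\epsilon)$. (c) If $\mvec u(j)=0$ but some position decrements $j$ in $w$, let $i_1$ be the least position with $w_{i_1}(j)<0$; for $k<i_1$ every summand is $\ge 0$, so $V_j(k)\ge 0$. Applying $<$-admissibility to the letter $w_{i_1}$ (and using $\mvec u(j)=0$) gives $b<w_{i_1}$ with $j\in\supp{b}^+$; since the support sequence of $w$ is $(\Gamma,<)$, this $b$ occurs at some position $i_0<i_1$, so $V_j(i_0)=V_j(i_0-1)+\epsilon^{i_0}b(j)\ge\epsilon^{i_0}$ (as $V_j(i_0-1)\ge 0$ and $b(j)\ge 1$), and for all $k\ge i_0$, $V_j(k)\ge\epsilon^{i_0}-W\sum_{i>i_0}\epsilon^i=\epsilon^{i_0}\bigl(1-W\epsilon/(1-\epsilon)\bigr)$. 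Since $i_0<i_1$, the ranges $k<i_1$ and $k\ge i_0$ cover all $k$. Finally I pick one rational $\epsilon\in(0,1)$ with $W\epsilon/(1-\epsilon)<1$ and $W\epsilon/(1-\epsilon)$ strictly below every positive value $\mvec u(j)$ at a counter that some letter decrements; then cases (a)--(c) all yield $V_j(k)\ge 0$, so $\mvec u\step{\ratplus}{w}\mvec v'$ for the resulting $\mvec v'\in\ratplus^d$, i.e., $\mvec u\in A_K$.

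The main obstacle is the second implication: one must choose the firing fractions so that counters starting at $0$ that are later decremented never go negative. The geometric schedule $\alpha_i=\epsilon^i$ is the key device — it makes the single early increment guaranteed by $<$-admissibility dominate the sum of all later decrements, since those together drain at most a factor $W\epsilon/(1-\epsilon)<1$ of that increment. Letter-uniformity is used precisely to guarantee that this compensating letter $b$ really does occur before the first decrement of the counter in the chosen word $w$. The first implication is comparatively routine, once one observes that the first occurrence of a decrementing letter is preceded only by strictly $<$-smaller letters.
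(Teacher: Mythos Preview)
Your proof is correct. The first direction (by contraposition) coincides with the paper's argument essentially verbatim.

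For the second direction, however, you take a genuinely different route. The paper proceeds by induction on the prefix length $i$, maintaining the invariant that after firing $w_1\cdots w_i$ the configuration $\mvec v_i$ is strictly positive on the set $I_i=\supp{\mvec u}^+\cup\bigcup_{\ell\le i}\supp{w_\ell}^+$; the firing fraction at step $i{+}1$ is chosen \emph{adaptively} as $\alpha_{i+1}=\min\{-\mvec v_i(j)/(2w_{i+1}(j))\mid j\in\supp{w_{i+1}}^-\}$, just small enough to keep positivity. Letter-uniformity is then used to argue $\supp{w_{i+1}}^-\subseteq I_i$. By contrast, you fix the entire schedule in advance as the geometric sequence $\alpha_i=\epsilon^i$ and verify nonnegativity by a direct quantitative bound: one early increment of size $\epsilon^{i_0}$ dominates the tail of all later decrements, whose total is at most $W\epsilon^{i_0+1}/(1-\epsilon)$. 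The paper's argument is a bit cleaner conceptually (no estimates, no choice of $\epsilon$), while yours has the virtue of giving a single explicit, non-adaptive witness run and making the role of the support-sequence condition very transparent. Both use letter-uniformity at exactly the same point: to guarantee that the compensating letter $b<w_{i_1}$ really occurs before the first decrement of counter~$j$.
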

	The ``if'' direction is easy. If $\vec u$ is not $<-admissible$, this means that there is some index $j$ and some letter $\gamma_i$ such that $\gamma_i$ decrements $j$, $\vec u(j)=0$ and $\gamma_k(j)\leq 0$ for all $k < i$. This means that starting from $\vec u$, on any word $w \in K$, we would go below $0$ in index $j$ when $\gamma_i$ first occurs in $w$. Hence $\vec u \not\in K$. 
	
	For the converse, suppose that $\mvec u$ is $<$-admissible and $w\in K$ be any word. Write $w=w_1\cdots w_n$ with $w_1,\ldots,w_n\in\Gamma$. For each $i\in\{0,\ldots,n\}$, we define
	\[ I_i = \supp{\vec u}^+\cup\bigcup_{\ell=1}^i\supp{w_\ell}^+,\]
	i.e.\ the set of components that are incremented at some point when firing the prefix $w_1\cdots w_i$.
	
	We will show the following for every $i$: There exists a $\mvec v_i$ such that $\mvec u\step{\ratplus}{w_1\cdots w_i}\mvec v_i$ such that $\mvec v_i \in \ratplus$ and $\mvec v_i(j) >0$ for $j \in I_i$.
	
	We proceed by induction on $i$. For $i=0$, the statement clearly holds, because $\mvec u$ is positive on all co-ordinates in $I_0=\supp{\mvec u}^+$ and $\mvec u \in \ratplus$. Now suppose there is a run $\mvec u\step{\ratplus}{w_1\cdots w_i}\mvec v_i$. We know that $\supp{w_{i+1}}^-\subseteq I_i$ and thus $\mvec v_i$ is positive on all indices where $w_{i+1}$ is negative. Let $\alpha_{i+1}= \min \{ -\frac{\mvec v_i(j)}{2w_{i+1}(j)} \mid j \in \supp{w_{i+1}}^-\}$. Clearly $\mvec v_i + \alpha_{i+1}w_{i+1}= \mvec v_{i+1} \in \ratplus$ and also,  $\mvec v_{i+1}(j)>0$ for all $j\in I_{i+1}$. 
	In particular this means that $\mvec u\step{\ratplus}{w}\mvec v_n$ and thus $\mvec u\in A_K$.

	\begin{lemma}\label{lem:constructCoverKFormulaInNP}
		Given a non-empty letter-uniform context-free language $K\subseteq(\Z^d)^*$, we can construct in polytime
		an ELRA formula for the relation $A_K$.
	\end{lemma}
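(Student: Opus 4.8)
The plan is to read off the formula directly from the combinatorial characterization in \cref{lem:coverKcharacter}. Recall that $K$ comes with its support sequence $(\Gamma,<)$ (if it is not supplied, it can be computed from a grammar for $K$ in polynomial time: $\Gamma$ is the set of \emph{useful} terminals, found by the standard productivity and reachability analysis of context-free grammars, and the order $<$ is recovered by testing, for each ordered pair of distinct letters $c,c'\in\Gamma$, whether $K\cap(\Sigma\setminus\{c\})^*\,c'\,\Sigma^*=\emptyset$ -- a polynomial-time emptiness check on the intersection of a context-free and a regular language; emptiness holds precisely when the first occurrence of $c$ precedes that of $c'$). By \cref{lem:coverKcharacter}, $A_K$ is exactly the set of $<$-admissible vectors, so it suffices to build, in polynomial time, an ELRA formula for $<$-admissibility.

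The key point is that condition~(ii) in the definition of $<$-admissibility -- ``there is a $b\in\Gamma$ with $b<a$ and $j\in\supp{b}^+$'' -- does not involve $\mvec u$; it is a purely combinatorial property of $(\Gamma,<)$ and the (constant) vectors in $\Gamma$, which we can evaluate while constructing the formula. Concretely, set
\[ D=\{\,j\in\{1,\dots,d\}\ :\ \exists a\in\Gamma\text{ with }j\in\supp{a}^-\text{ and }\neg\exists b\in\Gamma\text{ with }b<a\text{ and }j\in\supp{b}^+\,\}. \]
Unfolding \cref{lem:coverKcharacter}: for a pair $(a,j)$ with $j\in\supp{a}^-$, if some $b<a$ increments coordinate $j$ then the associated disjunction holds for \emph{every} $\mvec u$ and can be discarded; otherwise it forces $\mvec u(j)>0$. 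Hence $\mvec u$ is $<$-admissible if and only if $\mvec u(j)>0$ for every $j\in D$.

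We therefore output the ELRA formula
\[ \varphi(\mvec u)\ :=\ \bigwedge_{j\in D}\mvec u(j)>0 \]
(the empty conjunction $\true$ when $D=\emptyset$), which has at most $d$ atoms. Computing $D$ only requires, for each coordinate $j$ and each $a\in\Gamma$ with $a(j)<0$, checking whether some $<$-smaller $b\in\Gamma$ has $b(j)>0$, so the entire construction runs in polynomial time. Correctness is immediate: for $\mvec u\in\qnz^d$ we have $\mvec u\in A_K$ iff $\mvec u$ is $<$-admissible (by \cref{lem:coverKcharacter}, using $K\neq\emptyset$) iff $\varphi(\mvec u)$ holds (by the previous paragraph).

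The only step needing care is the extraction of $(\Gamma,<)$ when it is not given: a shortest word of $K$ may have exponential length, so one cannot afford to exhibit a witness word and read off the order of first occurrences; instead one interrogates the language through the regular-intersection-emptiness tests above. Apart from that, the lemma is a direct rewriting of \cref{lem:coverKcharacter}.
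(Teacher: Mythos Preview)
Your proof is correct and follows essentially the same approach as the paper: both translate the characterization of \cref{lem:coverKcharacter} directly into an ELRA formula for $<$-admissibility. The only difference is cosmetic—you pre-evaluate the constant subformula $\bigvee_{\eta<\gamma} i\in\supp{\eta}^+$ to obtain the simplified conjunction $\bigwedge_{j\in D}\mvec u(j)>0$, whereas the paper leaves it as $\bigwedge_{\gamma\in\Gamma}\bigwedge_{i\in\supp{\gamma}^-}\bigl(\bx(i)>0\ \lor\ \bigvee_{\eta<\gamma} i\in\supp{\eta}^+\bigr)$. Your extra paragraph on extracting $(\Gamma,<)$ is unnecessary here, since the paper stipulates that letter-uniform languages come equipped with their support sequence, but it is correct and harmless.
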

	The formula $\phi_<$, where $\phi_<(\bu)$ for a vector $\bu \in \ratplus^d$ is true
	iff $\bu$ is $<$-admissible can be written as:
	$$\phi_<(\bx) = \bigwedge_{\gamma \in \Gamma} \bigwedge_{i \in \supp{\gamma}^-} \bx(i) > 0 \lor \bigvee_{\eta < \gamma} i \in \supp{\eta}^+$$

	\begin{proof}[Proof of \cref{letter-uniform-semigroup}]
		We are now ready to prove \cref{letter-uniform-semigroup}.
		By \cref{three-runs}, it suffices to show that there are formulae for the relations $R_K^{\Q}$ and $A_K$. These formulae have been obtained in \cref{q-reachability} and \cref{lem:constructCoverKFormulaInNP} respectively. 
	\end{proof}
	This concludes the proof that reachability in $\QPVASS$ is in $\NEXPTIME$.  
	
	\smallskip
	\noindent\textbf{State reachability} 
	The material in this section also allows us to derive \cref{main-result-state-reachability}. 
	Since state-reachability is $\NP$-hard already for $\QVASS$~\cite{blondinLogicsContinuousReachability2017}, we only have to show membership in $\NP$. 
	Using the language-theoretic translation from the beginning of this section, state reachability
	can be phrased as: Given a set of vectors $\Sigma \subseteq \Z^d$, a letter-uniform context-free language $K \subseteq \Sigma^*$ (which comes with an associated $(\Gamma,<)$) and $\bu \in \ratplus^d$,
	decide if $\bu$ is $K$-admissible. This is because, given an arbitrary context-free language $K$, we can see it as the disjoint union of (exponentially) many $K_{(\Gamma,<)}$ each of which is 
	letter-uniform. 
	Furthermore, we can construct each $K_{(\Gamma,<)}$ in polynomial time and check if they are non-empty. The $\NP$ upper bound then follows from \cref{lem:constructCoverKFormulaInNP}:
	We can guess $(\Gamma,<)$, construct $K_{(\Gamma,<)}$ and  \cref{lem:constructCoverKFormulaInNP} lets us build an ELRA formula for $A_{K_{(\Gamma,<)}}$. 
	Since the truth problem for ELRA is in $\NP$~\cite{sontagRealAdditionPolynomial1985}, we can then check whether $\bu$ satisfies the constructed formula for $A_K$.

\section{$\NEXPTIME$-hardness of $\MCM$}\label{sec:lower-bound-mcm-hardness}
We now move on to proving the $\NEXPTIME$-hardness of reachability in $\QPVASS$. 
As outlined in the introduction, we do this by a chain of reductions. Our reduction chain starts with the machine model $\MCM$.
Informally, a $\MCM$ has a finite-state control along with two counters, each of which can hold a non-negative integer.
A rule of the machine allows us to move from one state to another whilst either incrementing the value of a counter by 1 or doubling the value of a counter. 
The set of final configurations of such a machine will be given by a final state and \emph{an equality condition on the two counters}.

Formally, a $\MCM$ is a tuple $\mach = (Q,\init,\fin,\Delta)$ where $Q$ is a finite set of states, $\init, \fin \in Q$ are the initial and the final states respectively,
and $\Delta \subseteq Q \times \{\inc_0,\inc_1,\double_0, \double_1, \nop \} \times Q$ is a finite set of rules. A configuration of $\mathcal{M}$ is a triple
$(q,v_0,v_1)$ where $q \in Q$ is the current state of $\mach$ and $v_0, v_1 \in \nn$ are the current values of the two counters respectively. Let
$r = (q,t,q') \in \Delta$ be a rule.
A \emph{step} from a configuration $C = (p,v_0,v_1)$ to another configuration $C' = (p',v_0',v_1')$ by means of the rule $r$ (denoted by $C \act{r} C'$) is possible if and only if $p = q, p' = q',$ and 
	$$\text{If } t = \inc_i \text{ then } v'_{i} = v_{i}+1,  \ v'_{1-i} = v_{1-i} \qquad  \text{ If } t = \double_i \text{ then } v'_i = 2v_i, \ v'_{1-i} = v_{1-i}$$
	$$\text{If } t = \nop \text{ then } v'_0 = v_0, \ v'_1 = v_1$$

We then say that a configuration $C$ can reach another configuration $C'$ 
if $C'$ can be reached from $C$ by a sequence of steps.
The initial configuration of $\mach$ is $C_{init} := (\init,0,0)$. The set of \emph{final configurations} of $\mach$ is taken to be $\{(\fin,n,n) : n \in \nn\}$.

The reachability problem for $\MCM$ asks, given a $\MCM$ $\mach$ and a number $m$ \emph{in binary}, if the initial configuration can reach \emph{some final configuration} in exactly $m$ steps. 

\begin{example}
	Let us consider the $\MCM$ given in Figure~\ref{fig:example-mcm}, which we shall denote by $\mach$. The initial state is $q_0$ and the final state is $q_2$. 
	Note that the initial configuration $(q_0,0,0)$ can reach $(q_2,1,1)$ in exactly 2 steps. Hence if we set the length 
	of the run $m$ to 2, then the instance $\langle \mach, m \rangle$
	is a positive instance of the reachability problem for $\MCM$. On the other hand, for any other value of $m$, the instance
	$\langle \mach, m \rangle$ is a negative instance of the reachability problem for $\MCM$. 
	Indeed, first note that in this $\MCM$, each state has exactly one outgoing transition. Hence, there is exactly one run
	starting from $(q_0,0,0)$ and that run is as follows: First it reaches the configuration $(q_2,1,1)$ in exactly 2 steps.
	Then from there, it follows the following  (cyclical) pattern.
	\begin{multline*}
		(q_2,x,y) \act{} (q_3,2x,y) \act{} (q_4,2x+1,y) \act{} (q_5,2x+1,2y) \act{} (q_2,2x+1,2y) \act{} (q_3,4x+2,2y) \\
		(q_4,4x+3,2y) \act{} (q_5,4x+3,4y) \act{} (q_2,4x+3,4y) \dots
	\end{multline*}

	This pattern indicates that after the configuration $(q_2,1,1)$, 
	whenever the run reaches the state $q_2$, the first counter has an odd value, whereas the second counter
	has an even value. Hence, the run will never reach a final configuration and so $\langle \mach, m \rangle$ is a negative 
	instance of the reachability problem whenever $m \neq 2$.
\end{example}

\begin{figure}
	\begin{center}
		\tikzstyle{node}=[circle,draw=black,thick,minimum size=12mm,inner sep=0.75mm,font=\normalsize]
		\tikzstyle{edgelabelabove}=[sloped, above, align= center]
		\tikzstyle{edgelabelbelow}=[sloped, below, align= center]
		\begin{tikzpicture}[->,node distance = 1cm,scale=0.8, every node/.style={scale=0.8}]
			\node[node, initial, initial text = \text{}] (q0) {$q_0$};
			\node[node, right = of q0] (q1) {$q_1$};
			\node[node, accepting,right = of q1] (q2) {$q_2$};
			\node[node, right = of q2] (q3) {$q_3$};
			\node[node, right = of q3] (q4) {$q_4$};
			\node[node, right = of q4] (q5) {$q_5$};
			
			\draw(q0) edge[edgelabelabove] node[above]{$\inc_0$} (q1);
			\draw(q1) edge[edgelabelabove] node[above]{$\inc_1$} (q2);
			\draw(q2) edge[edgelabelabove] node[above]{$\double_0$} (q3);
			\draw(q3) edge[edgelabelabove] node[above]{$\inc_0$} (q4);
			\draw(q4) edge[edgelabelabove] node[above]{$\double_1$} (q5);
			\draw(q5) edge[edgelabelabove, bend left] node[above]{$\nop$} (q2);
		\end{tikzpicture}
	\end{center}
	\caption{An example $\MCM$. The initial state is $q_0$ and the final state is $q_2$.}
	\label{fig:example-mcm}
\end{figure}
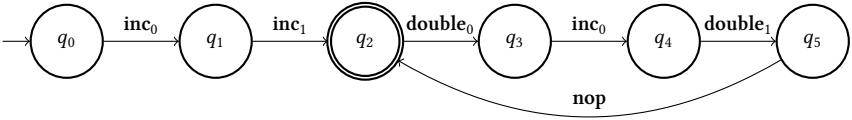

%
We shall prove the following:
\begin{restatable}{theorem}{mcm}\label{thm:mcm}
	The reachability problem for $\MCM$ is $\NEXPTIME$-hard.
\end{restatable}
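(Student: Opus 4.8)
The plan is to reduce from the \emph{length-bounded Post Correspondence Problem}, which is $\NEXPTIME$-hard by~\cite{AiswaryaMS22}: given a PCP instance $\{(u_i,v_i)\}_{i=1}^k$ over a finite alphabet $\Sigma$ together with a number $\ell$ encoded in binary, decide whether there is an index sequence $i_1,\dots,i_\ell$ of length exactly $\ell$ with $u_{i_1}\cdots u_{i_\ell}=v_{i_1}\cdots v_{i_\ell}$. The idea is to encode words as natural numbers and to simulate concatenation using only increments and doublings on a single $\MCM$-counter, while the run-length constraint of the $\MCM$ enforces the length bound $\ell$.

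The key trick is to work in a base that is a power of two. Fix $b=2^d$ with $b>|\Sigma|$, identify $\Sigma$ with a subset of $\{1,\dots,b-1\}$, and encode a word $w=a_1\cdots a_r$ by $\langle w\rangle=\sum_{j=1}^r a_j\,b^{\,r-j}\in\nn$. Since all digits are nonzero, $w\mapsto\langle w\rangle$ is injective on $\Sigma^*$: words of different lengths have disjoint value ranges, and within a fixed length this is ordinary base-$b$ notation. Appending a word $s=c_1\cdots c_p$ on the right satisfies $\langle ws\rangle=b^{\,p}\langle w\rangle+\langle s\rangle$, and because $b=2^d$ the map $x\mapsto bx$ is just $d$ consecutive $\double$ steps. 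Hence the update $x\mapsto b^{\,p}x+\langle s\rangle$ is realised on one counter by performing, for $j=1,\dots,p$ in order, the block ``$d$ times $\double$, then $c_j$ times $\inc$'' (which sends $x$ to $bx+c_j$); this takes $d p+\sum_j c_j\le(d+b)p$ steps, polynomial in the instance size.

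Now I would build an $\MCM$ $\mach$ whose counter $0$ holds $\langle u_{i_1}\cdots u_{i_t}\rangle$ and whose counter $1$ holds $\langle v_{i_1}\cdots v_{i_t}\rangle$, both starting at $0=\langle\varepsilon\rangle$. From a central state $\mach$ nondeterministically picks an index $i\in\{1,\dots,k\}$ and runs a gadget $G_i$ that first updates counter $0$ by the block sequence for appending $u_i$ (using only $\inc_0,\double_0$, which leave counter $1$ untouched), then updates counter $1$ for appending $v_i$ (using only $\inc_1,\double_1$), and returns to the central state. Each $G_i$ is a simple path through fresh states, and I would pad all of them with extra $\nop$ steps to a common length $L:=\max_i|G_i|$, which is polynomial; I would also let $\mach$ use a fixed number $c_0'$ of steps from $\init$ to reach the central state and one more $\nop$ step from the central state to $\fin$. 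Then every run from $\init$ to $\fin$ has length exactly $c_0+tL$, where $t$ is the number of gadget iterations and $c_0:=c_0'+1$, and it ends in the configuration $(\fin,\langle u_{i_1}\cdots u_{i_t}\rangle,\langle v_{i_1}\cdots v_{i_t}\rangle)$; by injectivity of $\langle\cdot\rangle$ this is a final configuration iff $u_{i_1}\cdots u_{i_t}=v_{i_1}\cdots v_{i_t}$. Setting $m:=c_0+\ell L$ — which has polynomially many bits since $\ell$ is given in binary and $c_0,L$ are polynomial — the $\MCM$ reaches a final configuration in exactly $m$ steps iff $t=\ell$ and the two concatenations agree, i.e.\ iff the PCP instance has a solution of length exactly $\ell$. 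The construction is clearly polynomial-time, giving $\NEXPTIME$-hardness.

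The step I expect to require the most care is the exact run-length accounting: the natural gadget $G_i$ has length depending on $i$ (even on the digits of $u_i$ and $v_i$), so one must pad all gadgets with $\nop$ steps to a uniform length before the total number of steps becomes a function of the number of chosen indices alone; only then can the binary bound $\ell$ be transferred verbatim into the target step count $m$. The other ingredient worth highlighting is the power-of-two base: it is exactly what turns the ``multiply by $b^{\,p}$'' operation needed for concatenation into a short sequence of $\double$ steps, something that would be impossible for a base that is not a power of two.
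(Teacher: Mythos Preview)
Your proposal is correct and follows essentially the same approach as the paper: reduce from a length-bounded PCP variant, encode words as base-$2^d$ integers in the two counters (so that appending a letter becomes $d$ doublings plus a few increments), and pad the per-index gadgets with $\nop$ steps to a uniform length so that the exact run-length constraint enforces the PCP bound. The only cosmetic differences are that the paper uses all digits $\{0,\dots,2^k-1\}$ whereas you use nonzero digits (which makes your injectivity argument cleaner), and the paper phrases its Bounded-PCP as bounding the length of the concatenated string rather than the number of chosen indices.
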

\cref{thm:mcm} is shown using a bounded version of the classical Post Correspondence Problem (PCP). 
Recall that, in the PCP problem, we are given a set of pairs of words $(u_1,v_1),(u_2,v_2),\dots,(u_m,v_m)$ over a common alphabet $\Sigma$ and
we are asked to decide if there is a sequence of indices $i_1, i_2, \dots, i_k$ for some $k$ such that $u_{i_1} \cdot u_{i_2} \cdot \cdots \cdot u_{i_k} = v_{i_1} \cdot v_{i_2} \cdot \cdots \cdot v_{i_k}$. It is well-known that this problem is undecidable~\cite{sipserIntroductionTheoryComputation2012}.
For our purposes, we shall use a bounded version of PCP, called bounded PCP, defined as follows.
\begin{description}
	\item[Input:] A set of pairs of words $(u_1,v_1),(u_2,v_2),\dots,(u_m,v_m)$ over an alphabet $\Sigma$ such that none of the given words is the empty string, and a number $\ell$ encoded in binary.
	\item[Question:] Is there a sequence of indices $i_1, i_2, \dots, i_k$ such that $u_{i_1} \cdot u_{i_2} \cdot \cdots \cdot u_{i_k} = v_{i_1} \cdot v_{i_2} \cdot \cdots \cdot v_{i_k}$, and the length of $u_{i_1} \cdots \cdot u_{i_k}$ is exactly $\ell$. 
\end{description}

Note that this problem is decidable -- we simply have to guess a sequence of indices of length at most $\ell$ and check that
the resulting words from these indices satisfy the given property. In~\cite[Section 6.1]{AiswaryaMS22}, Bounded-PCP was shown to be $\NEXPTIME$-hard.
We now prove~\cref{thm:mcm} by giving a reduction from Bounded-PCP to the reachability problem for $\MCM$.

Let $(u_1,v_1),\dots,(u_m,v_m)$ be a set of pairs of words over a common
alphabet $\Sigma$ and let $\ell \in \nn$. Without loss of generality we assume that $|\Sigma| = 2^k$ for some $k \ge 1$, for instance, by adding at most twice as many dummy letters
as the size of the alphabet. 
With this assumption, there are two essential ideas behind this reduction, which we now briefly outline.

The first idea is as follows: Since the size of $\Sigma$ is $2^k$, we can identify $\Sigma$ with the set $\{0,1,\dots,2^k-1\}$, by mapping
each letter in $\Sigma$ to some unique number in $\{0,1,\dots,2^k-1\}$. 
This identification means that, any non-empty word $w$
represents a number $n$ in base $|\Sigma|$ in the most significant bit notation.  
In this way, to any number $n$ we can bijectively map a non-empty word $w$. 

The second idea is as follows: Assume that we have a word $w$ and its corresponding number $n$.
Suppose we are given another word $w'$ and we are asked to compute the number corresponding to the concatenated word $w \cdot w'$. 
We can do that as follows: Let $w' = w'_1,\dots,w'_j$ with each $w'_i$ being a letter. 
Construct the sequence of numbers $n_0, n_1, \dots, n_j$ given by $n_0 = n$ and $n_i = |\Sigma| \cdot n_{i-1} + w'_i$. 
Notice that each $n_i$ is essentially the representation of the string $w \cdot w'_1 \cdot w'_2 \cdots w'_i$ and so $n_j$ is the 
representation for the word $w \cdot w'$.

These two ideas essentially illustrate the reduction from the Bounded-PCP problem to the reachability problem for $\MCM$.  Given a Bounded-PCP instance $\langle (u_1,v_1),\dots(u_m,v_m), \ell \rangle$,  we construct a $\MCM$ as follows: 
Initially it starts at an initial state $\init$ with both of its counters set to 0. 
From here it executes a loop in the following manner: Suppose at some point, the machine is at state $\init$ with counter values $n_1$ and $n_2$ \rst{Overloads the strings of the PCP.} \bala{thanks for noticing this. changed now.}
corresponding to some strings $w_1$ and $w_2$ respectively. Then the machine picks some index between 1 and $k$ and then by the idea given in the previous paragraph, it updates the values of its counters to $n_1'$ and $n_2'$ corresponding to the strings $w_1 \cdot u_i$ and $w_2 \cdot v_i$, respectively and then comes back to the state $\init$. 

We can hard-code the rules in this machine so that whenever it has the representation for two strings $w, w'$ in its counters and it wants to 
compute the representation for $w \cdot u_i$ and $w' \cdot v_i$ for some $1 \le i \le k$, it takes \emph{exactly} $t$ steps for some $t$ which is polynomial in the size of the given Bounded-PCP instance.
Then clearly, reaching a configuration $(\init,z,z)$ for some number $z$ in the machine in exactly $t \ell$ steps is equivalent
to finding a sequence of indices $i_1,\dots,i_k$ such that $u_{i_1} \cdots u_{i_k} = v_{i_1} \cdots v_{i_k}$ and the length of $u_{i_1} \cdots u_{i_k}$ is exactly $\ell$. This completes the reduction.

\balain[As we had discussed, added proof for bounded MPCP]


\section{From $\MCM$ to $\NCCM$}\label{sec:lower-bound-mcm-nccm}
The next step in our reduction chain moves from $\MCM$ to $\NCCM$. 
Intuitively, a $\NCCM$ has a finite-state control along with some number of \emph{continuous counters}, each of which can only hold \emph{a fractional number} belonging to the interval $[0,1]$. A rule of such a machine allows us to move from one state to another whilst incrementing or decrementing some counters by \emph{some fractional number}.
Further a rule can also specify that \emph{the effect of firing that rule} makes some counters 0, thereby allowing us to perform zero-tests. 
Note that a $\NCCM$ is different from $\CVASS$ in two aspects: First, the counters of a $\NCCM$ can only hold numbers in $[0,1]$, whereas the counters of a $\CVASS$ can hold any rational number. Second, the counters of a $\NCCM$ can be tested for zero, which is not possible in a $\CVASS$.
We now proceed to formally define the model of a $\NCCM$.

More formally, a $d$-dimensional $\NCCM$ (or $d$-$\NCCM$ or simply $\NCCM$) is a tuple $\cC = (Q, T, \Delta)$ where $Q$ is a finite state of states, $T \subseteq \zn^d \times 2^{[d]}$ is a finite set of \emph{transitions} and $\Delta \subseteq Q \times T \times Q$ is a finite set of \emph{rules}. A configuration of $\cC$ is a tuple $C = (q,\bv)$ where $q \in Q$ is the current state of $\cC$ and $\bv \in [0,1]^d$
is the vector representing the current values of the counters of $\cC$. 
We use the notations $\sta{C}, \valu{C}, C(i)$ to denote $q, \bv, \bv(i)$,
respectively.
Let $I = (q,t,q') \in \Delta$ be a rule with $t = (r,s)$ and let $\alpha \in (0,1]$. A \emph{step} from a configuration $C$ to another
configuration $C'$ via the pair $(\alpha,I)$ (denoted by $C \act{\alpha I} C')$ is possible if and only if $\sta{C} = q, \sta{C'} = q'$ and
$$\valu{C'} = \valu{C} + \alpha r \quad  \text{ and }  \quad \valu{C'}(i) = 0 \text{ for all } i \in s $$ 

Note that we implicitly require that $\valu{C} + \alpha r \in [0,1]^d$ and also that the value obtained after firing $\alpha I$ is 0 on all the counters in the set $s$. 
We define the notions of firing sequences $\alpha_1 I_1, \dots, \alpha_n I_n$ and reachability between configurations $C \act{\alpha_1 I_1, \dots, \alpha_n I_n}~C'$
as for $\CVASS$, 
The reachability problem for $\NCCM$ asks, given a $\NCCM$ $\cC$, two configurations $c_{\mathit{init}}, c_{\mathit{fin}}$ and a number $m$ encoded \emph{in binary}, 
whether $c_{\mathit{init}}$ can reach $c_{\mathit{fin}}$ in exactly $m$ steps. We show that
\begin{theorem}
	The reachability problem for $\NCCM$ is $\NEXPTIME$-hard.
\end{theorem}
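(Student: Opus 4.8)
The plan is to reduce from reachability in $\MCM$, which is $\NEXPTIME$-hard by \cref{thm:mcm}. Given a $\MCM$ $\mach$ with initial state $\init$ and final state $\fin$, together with a bound $m$ in binary, I will construct a $\NCCM$ $\cC$, configurations $c_{\mathit{init}},c_{\mathit{fin}}$, and a bound $m'$ whose binary encoding has size polynomial in $|\mach|+\log m$, such that $\mach$ has an accepting run of length exactly $m$ if and only if $\cC$ has a run from $c_{\mathit{init}}$ to $c_{\mathit{fin}}$ of length exactly $m'$. The guiding idea is \emph{exponential precision}: along any run of $\mach$ of length at most $m$ each counter value is bounded by $2^{m-1}$, so I represent a counter value $v\in\nn$ of $\mach$ by the fraction $v/2^m\in[0,1]$ in a counter of $\cC$. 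Under this encoding the $\MCM$ acceptance condition ``$v_0=v_1$'' becomes equality of the two encoding counters.

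\textbf{Toolbox of gadgets.} First I would build $\NCCM$ gadgets from the following basic observation: a step with update $-1$ on a counter $x$ that is required (by its zero-test set) to make $x$ zero forces the firing fraction to equal the current value of $x$. This yields a deterministic \emph{full move} of all of one counter into another, and hence \emph{duplication} (move, then split the moved value back into two copies with one zero-tested step) and \emph{copy-with-restore}. From these, \emph{doubling} $x\leftarrow 2x$ takes a constant number of steps; it keeps $\cC$ inside $[0,1]$ on faithful runs precisely because $2v/2^m\le 1$ when $v\le 2^{m-1}$, and an overflow merely blocks the run. The heart of the construction is a gadget producing the number $1/2^m$ in a designated counter: (i)~produce $1/m$ once by guessing $\rho\in[0,1]$, computing $m\rho$ by $O(\log m)$ double-and-add steps following Horner's scheme on the binary digits of $m$ (which stays in $[0,1]$ exactly when $\rho\le 1/m$ and otherwise blocks), and zero-testing the complement of $m\rho$ to force $\rho=1/m$, which is then stored in a protected counter; (ii)~put $1$ into a counter $x$ and run a loop whose body halves $x$ and adds exactly $1/m$ to an accumulator $c$ (copy $1/m$ out of the protected counter and move it into $c$, decrementing the complement $1-c$ by the same zero-tested fraction). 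A zero test of $1-c$ at the end of the loop forces exactly $m$ iterations --- with fewer iterations $1-c>0$, and a further iteration would drive $c$ above $1$ --- so $x=1/2^m$.

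\textbf{Assembling the simulation.} With these gadgets I would simulate $\mach$ step by step: states of $\mach$ become states of $\cC$; a rule with $\inc_i$ runs the $1/2^m$-gadget followed by a move into the encoding counter $x_i$; a rule with $\double_i$ runs the doubling gadget on $x_i$; $\nop$ does nothing. Each simulated $\MCM$-step is padded with idle steps to a common length $N$ (the length of the $\inc$-gadget, which dominates, with $N=O(m)$ and hence $m'=O(m^2)$), so that a run of $\cC$ of length $m'$ in this phase corresponds to exactly $m$ simulated steps. On entering $\fin$, a fixed-length cleanup phase checks $x_0=x_1$ by two zero-tested moves and zeroes all counters, reaching $c_{\mathit{fin}}=(q_{\mathit{fin}},\vec 0)$; $c_{\mathit{init}}$ is the all-zero valuation at a start state from which $\cC$ first runs the $1/m$-subgadget and then enters $\init$; and $m'$ is set accordingly. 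For correctness, the forward direction plays back an accepting length-$m$ run of $\mach$, and one checks that no value ever leaves $[0,1]$; for the backward direction, every gadget is deterministic in effect except for the adversarially chosen firing fractions, which are pinned by the zero-tests, and the step count pins down that exactly $m$ $\MCM$-steps are simulated, each $\inc$ genuinely adding $1/2^m$ thanks to the accumulator check, so any run of $\cC$ reaching $c_{\mathit{fin}}$ in $m'$ steps yields an accepting length-$m$ run of $\mach$.

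\textbf{Main obstacle.} The delicate point is the $1/2^m$-gadget together with the run-length accounting: one must argue that the accumulator-and-complement trick really forces the number of halvings to be $m$ against arbitrary firing fractions, that every intermediate value in the Horner computation and in the doublings provably stays in $[0,1]$ on intended runs while every cheating attempt is merely blocked (so soundness holds in both directions), and that the padding makes ``exactly $m'$ steps of $\cC$'' equivalent to ``exactly $m$ steps of $\mach$''. The remaining work --- verifying the constant-step gadgets and the polynomial size bound on $m'$ --- is routine.
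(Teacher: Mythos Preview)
Your high-level strategy coincides with the paper's: reduce from $\MCM$, encode a counter value $v$ as $v/2^m$, use zero-tests to pin firing fractions, and obtain $1/2^m$ by halving $m$ times while an accumulator that is incremented by $1/m$ per iteration enforces exactly $m$ rounds. That core is right.

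There is, however, a genuine gap in the step-count bookkeeping. You run the whole $1/2^m$-gadget anew on every $\inc$, so one simulated $\inc$ costs $\Theta(m)$ $\NCCM$-steps, and you then ``pad with idle steps'' every $\double$/$\nop$ to the same length $N=\Theta(m)$. But $m$ is given \emph{in binary}: laying down $\Theta(m)$ distinct idle transitions would make the $\NCCM$ exponential in the input and destroy the polynomial-time reduction, while a single self-loop of nops leaves the number of iterations unconstrained, so a length-$m'$ run of $\cC$ no longer determines how many $\MCM$ steps were simulated and the backward direction fails. You could repair this by giving $\double$/$\nop$ the \emph{same} $m$-iteration counting loop (driven by the $1/m$-accumulator) so that every simulated step has a semantically fixed length, but that is exactly the kind of mechanism you used for $\inc$ and you do not say so; as written, the padding does not work.

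The paper sidesteps this entirely by two simplifications you miss. First, it does not build $1/m$ by a Horner gadget: since $m$ is in binary, $1/m$ has polynomial description and is simply placed in the initial configuration (your ``all-zero $c_{\mathit{init}}$'' is in any case problematic, because your Horner check $m\rho=1$ needs a complement counter initialised to $1$). Second---and this is the key point---the paper computes $1/2^m$ \emph{once} in an initialisation phase and stores it in a counter $st$; each $\inc$ then adds the stored $st$ to $c_i$ and restores $st$ via a temporary, in exactly two steps. With $\double$ and $\nop$ also costing two steps, no padding is needed at all, and $m'=O(m)$ rather than $O(m^2)$.
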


We prove this theorem by exhibiting a reduction from the reachability problem for $\MCM$.
Fix a $\MCM$ $\mach$ and a number $m$ in binary. 
Since the initial values of both the counters are 0, the largest value we can attain in any counter during a run of length $m$ is at most $2^m$ (in fact, the bound is $2^{m-1}$).
Hence, we shall implicitly assume that 
the set of configurations of $\mach$ that are under consideration are those where the counter values are bounded by $2^m$. 

\smallskip\noindent\textbf{Overview of the reduction.} We want to construct an $\NCCM$ $\cC$ that simulates $\mach$.
As already mentioned in the introduction, we use \emph{exponential precision} and represent a discrete counter value $n$ in a configuration of $\mach$ as the value $\frac{n}{2^m}$ in a 
continuous counter of $\cC$. 
Furthermore, we want to correctly simulate increment and doubling operations on $\mach$ which correspond to addition of $\frac{1}{2^m}$ and doubling in $\cC$ respectively. 
Since we do not control the fraction $\alpha$ in a rule, we have to overcome the following challenge:
\begin{itemize}
	\item[(C1)] How can we create gadgets which simulate addition of $\frac{1}{2^m}$ and doubling?
\end{itemize}

Towards solving this challenge, we use the following idea: Suppose we are in some configuration $C$ and suppose we want to make a step from $C$ by adding $\frac{1}{2^m}$ to a counter $c$. Assume that there are two other counters $st$ and $te$ whose values in $C$ are $\frac{1}{2^m}$ and $0$, respectively.
Suppose $I$ is a rule which decrements $st$ by 1, increments $c$ and $te$ both by 1 and then checks that
the value of $st$ (after firing $I$) is 0. Then, if $C \act{\alpha I} C'$ is a step, it must be that $\alpha$ is \emph{exactly}
$\frac{1}{2^m}$. This is because, by assumption, before firing this rule the value of $st$ was $\frac{1}{2^m}$ and
after firing this rule, the zero-test ensures that the value of $st$ is 0. Hence, the only possible 
value that $\alpha$ can take is $\frac{1}{2^m}$. Therefore, this rule allows us to add $\frac{1}{2^m}$ to the counter $c$.

However, note that after firing $I$, the values of $st$ and $te$ are reversed, i.e., the values of $st$ and $te$ are 0 and $\frac{1}{2^m}$, respectively.
This is undesirable, as we might once again want to use $st$ to simulate addition by $\frac{1}{2^m}$. 
Therefore, we 
add another rule $J$, which decrements $te$ by 1, increments $st$ by 1 and then checks that the value of $te$ (after firing $J$) is 0.
Then, a successful firing of the rule $J$ by some fraction $\beta$ means that $\beta = \frac{1}{2^m}$ (due to the same reasons as above)
and so this would mean that the values of $st$ and $te$ after firing $J$ would again become $\frac{1}{2^m}$ and $0$, respectively.
Hence, the counter $te$ essentially acts as a temporary holder of the value of $st$ and allows us to ``refill'' the value of $st$.

Generalizing this technique allows us to \emph{control the firing fraction} to perform doubling as well. 
However, this technique has a single obstacle, which we now address.



For this technique to work, we need a counter $st$ initially which stores the value $\frac{1}{2^m}$.
It might be tempting to simply declare that the value of $st$ in the initial configuration is $\frac{1}{2^m}$.
However, this cannot be done, because the number $m$ is given to us in binary and so the number of bits needed to write
down the number $\frac{1}{2^m}$ is exponential in the size of the given input $\langle \mach,m \rangle$, which 
would not give as a polynomial-time reduction.
This raises the following challenge as well:
\ramen{Wouldnt it be better to say "But this begs the question of, how do we create a value of $\frac{1}{2^m}$ in a counter in the first place?", and then briefly explain "write down in polynomial time". May be we could also already mention how we can retain the created $\frac{1}{2^m}$ by copying to another counter $st'$ which has value 0.}
\balain[Addressed both of these issues]
\begin{itemize}
	\item[(C2)] How can we create a value of $\frac{1}{2^m}$ in a continuous counter?
\end{itemize}
We show that challenge C2 can also be solved by our idea of controlling the firing fraction.  


\smallskip\noindent\textbf{Solving Challenge (C1). }  From the $\MCM$ $\mach = (Q,\init,\fin,\Delta)$, we construct a $\NCCM$ $\cC_0$ as follows. 
$\cC_0$ will have 4 counters $c_0$, $c_1$, $st$, $te$, i.e., it will be 4-dimensional.
Intuitively, each $c_i$ will store the value of one of the counters of $\mach$, $st$ will store the value $1/2^m$ that 
will be needed for simulating the addition operation, and
$te$ will be used to temporarily store the values of $c_0, c_1$ and $st$ at some points along a run.
A rule in $\cC_0$ consists of a vector $r \in \zn^4$ and a subset 
$s \subseteq \{1,2,3,4\}$. For ease of reading,  we write the vector $r$ as a sequence of increment or decrement operations $c \texttt{+=}n$ (or $c \texttt{-=}n$) whose intended meaning is that counter $c$ is incremented (or decremented) by $n$, followed by a sequence of zero-tests. 
For example, $t=(r, s)$ where $r=(1,0,0,-2)$ and $s=\{1,3\}$ is represented by $c_0\texttt{+=}1,te\texttt{-=}2;\ c_0=0?,st=0?$.

$\cC_0$ will have all the states of $\mach$ and in addition, for every rule $r$ of $\mach$, it will have a state $r_{mid}$. 
The set of rules of $\cC_0$ will be given as follows.
\begin{itemize}
	\item For the rule $r := (q,\inc_i,q')$ of $\mach$, $\cC_0$ will have the ``increment$(i)$'' gadget given in Figure~\ref{fig:case-one}.
	\item For the rule $r := (q,\double_i,q')$, $\cC_0$ will have the ``double$(i)$'' gadget given in Figure~\ref{fig:case-two}.
	\item For the rule $r := (q,\nop,q')$, $\cC_0$ will have the ``nop'' gadget given in Figure~\ref{fig:case-three}.
\end{itemize}

%
	%
	%
	%
	%
	%

\begin{figure}
	\centering
	\begin{subfigure}[b]{0.45\textwidth}
		\begin{center}
			\tikzstyle{node}=[circle,draw=black,thick,minimum size=12mm,inner sep=0.75mm,font=\normalsize]
			\tikzstyle{edgelabelabove}=[sloped, above, align= center]
			\tikzstyle{edgelabelbelow}=[sloped, below, align= center]
			\begin{tikzpicture}[->,node distance = 2cm,scale=0.8, every node/.style={scale=0.8}]
				\node[node] (q) {$q$};
				\node[node, right = of q] (r) {{\small $r_{mid}$}};
				\node[node, right = of r] (q') {$q'$};
				
				\draw(q) edge[edgelabelabove] node[above]{{\small $\cpp{c_i}, \ \cpp{te}, \ \cmm{st}$}} node[below]{{\small $st = 0?$}} (r);
				\draw(r) edge[edgelabelabove] node[above]{{\small $\cpp{st}, \ \cmm{te}$}} node[below]{{\small $te=0?$}} (q');
			\end{tikzpicture}
		\end{center}
		\caption{Gadget for the rule $r := (q,\inc_i,q')$. 
		}
		\label{fig:case-one}
	\end{subfigure} 
	\hfill
	\begin{subfigure}[b]{0.45\textwidth}
		\begin{center}
			\tikzstyle{node}=[circle,draw=black,thick,minimum size=12mm,inner sep=0.75mm,font=\normalsize]
			\tikzstyle{edgelabelabove}=[sloped, above, align= center]
			\tikzstyle{edgelabelbelow}=[sloped, below, align= center]
			\begin{tikzpicture}[->,node distance = 2cm,scale=0.8, every node/.style={scale=0.8}]
				\node[node] (q) {$q$};
				\node[node, right = of q] (r) {{\small $r_{mid}$}};
				\node[node, right = of r] (q') {$q'$};
				
				\draw(q) edge[edgelabelabove] node[above]{{\small $\cpptwo{te}, \ \cmm{c_i}$}} node[below]{{\small $c_i = 0?$}} (r);
				\draw(r) edge[edgelabelabove] node[above]{{\small $\cpp{c_i}, \ \cmm{te}$}} node[below]{{\small $te = 0?$}} (q');
			\end{tikzpicture}
		\end{center}
		\caption{Gadget for the rule $r:= (q,\double_i,q')$. 
		}
		\label{fig:case-two}
	\end{subfigure}
	\vskip \baselineskip
	\begin{subfigure}[b]{0.45\textwidth}
		\begin{center}
			\tikzstyle{node}=[circle,draw=black,thick,minimum size=12mm,inner sep=0.75mm,font=\normalsize]
			\tikzstyle{edgelabelabove}=[sloped, above, align= center]
			\tikzstyle{edgelabelbelow}=[sloped, below, align= center]
			\begin{tikzpicture}[->,node distance = 2cm,scale=0.8, every node/.style={scale=0.8}]
				\node[node] (q) {$q$};
				\node[node, right = of q] (r) {{\small $r_{mid}$}};
				\node[node, right = of r] (q') {$q'$};
				
				\draw(q) edge[edgelabelabove] node{} (r);
				\draw(r) edge[edgelabelabove] node{} (q');
			\end{tikzpicture}
		\end{center}
		\caption{Gadget for the rule $r:= (q,\nop,q')$.
		}
		\label{fig:case-three}
	\end{subfigure}
	\hfill
	\begin{subfigure}[b]{0.45\textwidth}
		\begin{center}
			\tikzstyle{node}=[circle,draw=black,thick,minimum size=12mm,inner sep=0.75mm,font=\normalsize]
			\tikzstyle{edgelabelabove}=[sloped, above, align= center]
			\tikzstyle{edgelabelbelow}=[sloped, below, align= center]
			\begin{tikzpicture}[->,node distance = 2cm,scale=0.8, every node/.style={scale=0.8}]
				\node[node] (q) {$\fin$};
				\node[node, right = of q] (r) {$\fin'$};
				\node[node, right = of r] (q') {$\overline{\fin}$};
				
				\draw(q) edge[edgelabelabove] node[above]{{\small $\cmm{st}$}} node[below]{{\small $st = 0?, te = 0?$}} (r);
				\draw(r) edge[edgelabelabove] node[above]{{\small $\cmm{c_0}, \cmm{c_1}$}} node[below]{{\small $c_0 = 0?, c_1 = 0?$}} (q');
			\end{tikzpicture}
		\end{center}
		\caption{The ``finish'' gadget. 
		}
		\label{fig:finish}
	\end{subfigure}
\end{figure}


Note that for every rule $r$ of $\mach$, the corresponding 
gadget in $\cC_0$ has exactly two rules, where the first rule (from $q$ to $r_{mid}$) will be denoted by $r^b$ and the second rule (from $r_{mid}$ to $q'$) by $r^e$.
We would now like to show that the rules of $\mach$ are simulated by their corresponding gadgets. To this end, we first define a mapping $g$ from configurations of $\mach$
to configurations of $\cC_0$ as follows: 
If $C = (q,v_0,v_1)$, then $g(C)$ is the configuration of $\cC_0$ such that
$$\sta{g(C)} = q , \ g(C)(c_0) = v_0/2^m, \ g(C)(c_1) = v_1/2^m, \ g(C)(st) = 1/2^m, \ g(C)(te) = 0$$
We now have the following ``gadget simulation'' lemma, which solves Challenge (C1). 

\begin{restatable}[Gadget Simulation]{lemma}{gadgetsim}~\label{lem:gadget-simulation}
	Suppose $C$ is a configuration and $r$ is a rule of $\mach$.
	\begin{itemize}
		\item Soundness: If $C \act{r} C'$, then there exists $\alpha, \beta$ such that $g(C) \act{\alpha r^b, \beta r^e} g(C')$.
		\item Completeness: If $g(C) \act{\alpha r^b, \beta r^e} D$ for some $\alpha, \beta$ and $D$, then there exists $C'$ such that $D = g(C')$ and $C \act{r} C'$.
	\end{itemize}	
\end{restatable}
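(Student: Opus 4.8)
The plan is to prove both directions at once by tracing the four counters $c_0,c_1,st,te$ through the two rules $r^b,r^e$ of each of the three gadgets, exploiting the principle stressed in the overview: \emph{each zero test pins down the firing fraction of the rule that carries it}, so that once the pre-values of the tested counters are known, $\alpha$ and $\beta$ --- and hence the reached configuration --- are forced. Soundness then amounts to checking that these forced fractions are legal (they lie in $(0,1]$ and keep every counter in $[0,1]$), while completeness amounts to checking that applying them produces exactly $g(C')$.

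For the increment$(i)$ gadget (\cref{fig:case-one}), recall that $g(C)(st)=1/2^m$ and $g(C)(te)=0$. Firing $r^b$ with fraction $\alpha$ sends $st$ to $1/2^m-\alpha$, so the test $st=0?$ forces $\alpha=1/2^m$; the configuration reached at $r_{mid}$ is then $c_i=(v_i+1)/2^m$, $c_{1-i}=v_{1-i}/2^m$, $st=0$, $te=1/2^m$, which lies in $[0,1]^4$ precisely because $v_i+1\le 2^m$ (in the soundness direction this is the standing bound applied to $C'$; in the completeness direction it is enforced by the mere existence of the step). Firing $r^e$ with fraction $\beta$ sends $te$ to $1/2^m-\beta$, so $te=0?$ forces $\beta=1/2^m$, which restores $st=1/2^m$, $te=0$ and keeps $c_i=(v_i+1)/2^m$; thus the configuration reached is exactly $g(C')$, where $C'$ is $C$ with its $i$-th counter incremented. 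Since $(q,\inc_i,q')$ imposes no precondition on the counters, $C\act{r}C'$ holds unconditionally, so both directions follow --- and for completeness $\alpha$, $\beta$, and $D$ are in fact uniquely determined, so there is nothing left to guess. The $\nop$ gadget (\cref{fig:case-three}) is immediate: neither of its rules touches a counter or carries a zero test, so for soundness any $\alpha,\beta\in(0,1]$ work (with $g(C)=g(C')$), and for completeness $D=g(C)$ for every choice of $\alpha,\beta$.

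The double$(i)$ gadget (\cref{fig:case-two}) is handled the same way, except that the forced fractions now depend on $v_i$. Firing $r^b$ (which adds $2$ to $te$, subtracts $1$ from $c_i$, then tests $c_i=0$) with fraction $\alpha$ forces $\alpha=v_i/2^m$, yielding $c_i=0$ and $te=2v_i/2^m$; firing $r^e$ (which adds $1$ to $c_i$, subtracts $1$ from $te$, then tests $te=0$) with fraction $\beta$ forces $\beta=2v_i/2^m$, yielding $c_i=2v_i/2^m$ and $te=0$, i.e.\ $g(C')$ with the $i$-th counter doubled. The legality conditions are $\alpha\in(0,1]$, i.e.\ $1\le v_i\le 2^m$, and $\beta\in(0,1]$, i.e.\ $1\le v_i$ and $2v_i\le 2^m$; the upper bound $2v_i\le 2^m$ is the standing bound applied to $C'$, and the lower bound $v_i\ge 1$ holds for every doubling step that actually occurs along a run of the $\MCM$ we construct (doublings of a zero counter are avoidable --- the leading digit of an encoded word is installed using increments rather than a doubling of $0$ --- and can be assumed absent).

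I expect this last point --- reconciling the requirement $\alpha\in(0,1]$ with a potential doubling of value $0$ --- to be the only genuinely delicate issue, and it will have to be made explicit either in the statement of the lemma or in the construction of the $\MCM$. Everything else is a mechanical trace through the two-rule gadgets; the one check that recurs in every case is that the configuration at $r_{mid}$, and the final configuration, remain inside $[0,1]^4$, and this is exactly where the standing assumption that all counter values are bounded by $2^m$ is used.
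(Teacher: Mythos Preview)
Your argument is essentially the same as the paper's: trace $c_0,c_1,st,te$ through the two rules of each gadget, let the zero tests pin down $\alpha$ and $\beta$, and verify that the result is $g(C')$. The paper's appendix does exactly this case analysis (increment, then doubling, with $\nop$ declared immediate), with the same forced values $\alpha=\beta=1/2^m$ for increments and $\alpha=g(C)(c_i)$, $\beta=2g(C)(c_i)$ for doubling.

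Your observation about doubling a zero counter is a genuine subtlety that the paper does not address: when $v_i=0$, soundness would require $\alpha=0\notin(0,1]$, so the lemma as literally stated fails in that case. The paper's proof simply writes $\alpha=g(C)(c_i)$ without commenting on this. Your proposed fix --- ensuring the $\MCM$ produced by the reduction never doubles a zero counter --- is the natural way out, and it is compatible with the Bounded-PCP encoding (one can always precede the first doubling on each counter by an increment, or shift the letter encoding to $\{1,\dots,2^k\}$). So you have not only matched the paper's proof but also spotted and patched a small gap in it.
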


\begin{proof}[Proof sketch.]
	We have already discussed the case of increments in some detail before
	and so we will concentrate on when $r$ is a doubling 
	rule of the form $(q,\double_i,q')$.
	The soundness part can be easily obtained by setting $\alpha = g(C)(c_i)$ and $\beta = 2 \cdot g(C)(c_i)$. 
	For completeness, note that since $r^b$ has a zero-test on $c_i$,
	it must be that $\alpha = g(C)(c_i)$. Hence, after firing 
	$\alpha r^b$, the value of $te$ must be $2 \cdot g(C)(c_i)$.
	Now since $r^e$ has a zero-test on $te$, it must be that
	$\beta = 2 \cdot g(C)(c_i)$. So the net
	effect of firing $\alpha r^b, \beta r^e$ is to make the 
	value of $c_i$ to be $2 \cdot g(C)(c_i)$. 
	Hence, if we let $C'$ be such that $C \act{r} C'$ in $\mach$, it can be verified that $g(C') = D$. 
	For more details, see Subsection~\ref{app:gadget-simulation} of the appendix.
\end{proof}


\smallskip\noindent\textbf{The ``finish'' gadget. } Before, we solve Challenge (C2), we make a small modification to $\cC_0$. 
Recall that in $\mach$, we have a \emph{set of final configurations} 
given by $F := \{(\fin,n,n) : n \le 2^m\}$, whereas in a $\NCCM$, we are
allowed to specify only one final configuration. 
However, the $\NCCM$ $\cC_0$ only promises us that the initial configuration $c_{init}$
of $\mach$ can reach some configuration in $F$ in $m$ steps
iff $g(c_{init})$ can reach some configuration in the 
set $\{g(D) : D \in F\}$ in $2m$ steps. 
Hence, we need to make a modification to $\cC_0$ which allows us to
replace the set of configurations with a single final configuration.
To this end, we modify $\cC_0$ by adding the ``finish gadget'' from Figure~\ref{fig:finish}, where $\fin'$ and $\overline{\fin}$ are two fresh states and the first and the second rule are respectively denoted by $f^b$ and $f^e$. Let us call the resulting $\NCCM$ as $\cC_1$. 

Note that the effect of firing $f^b$ is to set the values of $st$ and
$te$ to 0. Further, if $f^e$ is fired, then $c_0$ and $c_1$ are decremented by the same amount and both of them are tested for zero.
This means that $f^e$ could be fired successfully only if the counter values of $c_0$ and $c_1$ at state $\fin'$ are the same
and the effect of firing $f^e$ is to set the values of $c_0$ and $c_1$ to 0.
This observation along with repeated applications of the Gadget Simulation lemma give us the following Simulation theorem.

\begin{restatable}[Simulation Theorem]{theorem}{simulation}\label{thm:simulation}
	The initial configuration $c_{init}$ of $\mach$ can reach a final configuration in $m$ steps iff $g(c_{init})$ can reach the configuration $(\overline{\fin},\mathbf{0})$
	in $2m+2$ steps in $\cC_1$.
\end{restatable}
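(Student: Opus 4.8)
The plan is to prove the two directions separately, in each case delegating the per‑gadget arithmetic to the already‑established Gadget Simulation Lemma (\cref{lem:gadget-simulation}) and reasoning directly about the two‑rule ``finish'' gadget $f^b,f^e$. Throughout I will write a final configuration of $\mach$ as $C_f=(\fin,n,n)$; since every counter of $\mach$ stays bounded by $2^m$ along any run of length $m$, we have $0\le n\le 2^m$, and I will use $n\ge 1$ for the final configuration actually reached (this holds for the $\MCM$ instances we reduce from in \cref{sec:lower-bound-mcm-hardness}; alternatively $\mach$ can be preprocessed so that $(\fin,0,0)$ is unreachable).

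For the forward direction I would start from a run $c_{init}=C_0\act{r_1}C_1\act{}\cdots\act{r_m}C_m=C_f$ of $\mach$, apply the \emph{soundness} half of \cref{lem:gadget-simulation} to each step, and concatenate the resulting two‑step segments to get a run $g(c_{init})=g(C_0)\act{}g(C_1)\act{}\cdots\act{}g(C_m)$ of $\cC_0$, which is also a run of $\cC_1$, of length $2m$. By definition of $g$, the configuration $g(C_m)$ has state $\fin$ and counter values $c_0=c_1=n/2^m$, $st=1/2^m$, $te=0$. It then remains to append the finish gadget: the step $f^b$ with firing fraction $\alpha=1/2^m\in(0,1]$ is legal, since the zero‑test on $st$ reads exactly $1/2^m-\alpha=0$ and $te$ is untouched, and it yields the $\fin'$‑configuration with $c_0=c_1=n/2^m$ and $st=te=0$; then the step $f^e$ with firing fraction $\beta=n/2^m\in(0,1]$ is legal, since the zero‑tests on $c_0$ and $c_1$ read $n/2^m-\beta=0$, and it lands in $(\overline{\fin},\mathbf{0})$. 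This run from $g(c_{init})$ to $(\overline{\fin},\mathbf{0})$ has length $2m+2$.

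For the converse, the crucial ingredient is a purely structural observation about the control graph of $\cC_1$: the state $\overline{\fin}$ is entered only via $f^e$ (from $\fin'$) and has no outgoing rule; $\fin'$ is entered only via $f^b$ (from $\fin$); and for every rule $r$ of $\mach$ the intermediate state $r_{mid}$ has $r^e$ as its only outgoing rule and $r^b$ as its only incoming rule. Hence any run of length $2m+2$ from $g(c_{init})$ to $(\overline{\fin},\mathbf{0})$ must consist of $m$ consecutive gadget pairs $r_1^b r_1^e,\dots,r_m^b r_m^e$ — beginning at state $\init$ and ending at state $\fin$ — followed by $f^b$ and then $f^e$. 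I would then apply the \emph{completeness} half of \cref{lem:gadget-simulation} inductively: writing the prefix as $g(c_{init})=D_0\act{}D_1\act{}\cdots\act{}D_m$ with $D_{i-1}\act{\alpha_i r_i^b,\beta_i r_i^e}D_i$, and noting $D_0=g(C_0)$ with $C_0=c_{init}$, completeness produces $C_i$ with $D_i=g(C_i)$ and $C_{i-1}\act{r_i}C_i$, so $c_{init}=C_0\act{}\cdots\act{r_m}C_m$ is a length‑$m$ run of $\mach$ with $g(C_m)=D_m$ in state $\fin$. Writing $C_m=(\fin,n_0,n_1)$, the last two steps force $\alpha=1/2^m$ (zero‑test on $st$) and then $n_0/2^m-\beta=n_1/2^m-\beta=0$ (zero‑tests on $c_0,c_1$), whence $n_0=n_1=:n$, $\beta=n/2^m>0$, and $C_m=(\fin,n,n)$ is a final configuration reached from $c_{init}$ in exactly $m$ steps.

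I expect the main obstacle to be making the structural rigidity argument in the converse watertight — ruling out every run of length $2m+2$ that fails to split into $m$ clean gadget pairs followed by $f^b f^e$, since this is exactly where the step‑count bookkeeping of the reduction is discharged. The remaining work (forcing the firing fractions through the zero‑tests, and checking that $g$ preserves the exponential‑precision invariants) is comparatively routine and is already packaged inside \cref{lem:gadget-simulation}.
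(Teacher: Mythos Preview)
Your proposal is correct and follows essentially the same approach as the paper's proof: both directions use the Gadget Simulation Lemma for the $2m$ gadget steps and then analyse the two finish rules directly, with the converse relying on the control-graph rigidity of $\cC_1$ to force the $r_1^b r_1^e\cdots r_m^b r_m^e\,f^b f^e$ shape. You are in fact slightly more careful than the paper on one point: you flag that the forward direction needs $n\ge 1$ so that $\beta=n/2^m\in(0,1]$ is a legal firing fraction for $f^e$, whereas the paper silently sets $\beta=D(c_0)$ without excluding $n=0$; your justification via the Bounded-PCP reduction (nonempty words) is the right fix.
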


The full proof of this theorem can be found in Subsection~\ref{app:simulation} of the appendix. 
We now move on to solving Challenge (C2).

\smallskip\noindent\textbf{Solving Challenge (C2). } 
Thanks to the Simulation theorem, the required reduction is almost over. 
As we had discussed before, the only remaining part is that since $g(c_{init})(st) = 1/2^m$ and $m$ is already given in binary, we cannot write down $g(c_{init})$ in polynomial time. \rst{See previous comment.} \bala{added a mention to discussion before}
To handle this challenge (Challenge (C2)), we construct an ``initialization'' gadget which starts from a ``small'' initial configuration and then ``sets up'' the configuration
$g(c_{init})$. 

The initialization gadget is shown in the Figure~\ref{fig:init}. The gadget shares the counters $st$ and $te$ with $\cC_1$ and has two new counters $x$ and $count$. 
Initially, the gadget will start in $in_0$ and will have the values $1, 0, 1/m$ and $0$ in $st, te, x$ and $count$ respectively. 
In each iteration of the gadget, the value of $st$ will be halved. The function of $x$ is to 
store the value $1/m$ and the function of $count$ is to count the number of executions of 
this gadget. Initially the value 
of $count$ is 0 and in every iteration its value will increase by $1/m$. Hence, if we finally require the value of $count$ to be 1, then we would have
executed this gadget precisely $m$ times, thereby setting the value of $st$ to $1/2^m$.

The following lemma 
,whose full proof could be found in Subsection~\ref{app:lemma-init} of the appendix, 
follows from an analysis of the initalization gadget, similar to the one for the Gadget Simulation lemma. 
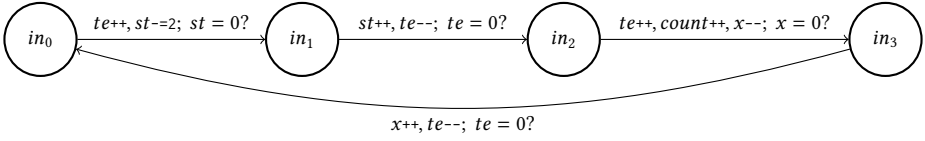
\begin{figure}
	\begin{center}
		\tikzstyle{node}=[circle,draw=black,thick,minimum size=12mm,inner sep=0.75mm,font=\normalsize]
		\tikzstyle{edgelabelabove}=[sloped, above, align= center]
		\tikzstyle{edgelabelbelow}=[sloped, below, align= center]
		\begin{tikzpicture}[->,node distance = 2.5cm,scale=0.8, every node/.style={scale=0.8}]
			\node[node] (in0) {$in_0$};
			\node[node, right = of in0] (in1) {$in_1$};
			\node[node, right = of in1] (in2) {$in_2$};
			\node[node, right = of in2, xshift=1cm] (in3) {$in_3$};
			
			\draw(in0) edge[edgelabelabove] node{$\cpp{te}, \cmmtwo{st}; \ st = 0?$} (in1);
			\draw(in1) edge[edgelabelabove] node{$\cpp{st}, \cmm{te}; \ te = 0?$} (in2);
			\draw(in2) edge[edgelabelabove] node{$\cpp{te}, \cpp{count}, \cmm{x}; \ x = 0?$} (in3);
			\draw(in3) edge[edgelabelbelow, bend left = 15] node{$\cpp{x}, \cmm{te}; \ te = 0?$} (in0);
		\end{tikzpicture}
	\end{center}
	\caption{The ``initialization'' gadget.} 
\label{fig:init}
\end{figure}

\begin{restatable}[The Initialization lemma]{lemma}{initlemma}\label{lem:init}
Suppose $C$ is a configuration of the initialization gadget such that $\sta{C} = in_0, C(te) = 0$ and $C(x) = 1/m$. Then we can execute one iteration of the 
gadget from $C$ to go to a configuration $C'$ if and only if $C'$ is the same as $C$ except that $C'(st) = C(st)/2$ and $C'(count) = C(count) + 1/m$.
\end{restatable}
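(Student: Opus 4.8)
The plan is to prove the lemma by a direct, step-by-step trace of the unique iteration of the gadget from $C$. Each of the states $in_0, in_1, in_2, in_3$ in Figure~\ref{fig:init} has exactly one outgoing rule, so ``one iteration'' from a configuration $C$ with $\sta{C}=in_0$ is forced to be the four-step run $in_0\act{}in_1\act{}in_2\act{}in_3\act{}in_0$; I denote the four rules, in this order, by $\tau_1,\tau_2,\tau_3,\tau_4$. The only remaining freedom is in the four firing fractions $\alpha_1,\alpha_2,\alpha_3,\alpha_4\in(0,1]$, and the key point is that each of the four zero-tests, which constrains the value of a counter \emph{after} the corresponding update, removes exactly one degree of freedom and thereby pins down $\alpha_i$ uniquely in terms of $C$. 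Once this is established, both directions of the iff follow at once: the ``only if'' direction is obtained by reading off the end configuration from the forced fractions, and the ``if'' direction is obtained by running the gadget with precisely those fractions.

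For the trace, write $s=C(st)$ and $n=C(count)$; by hypothesis $C(te)=0$ and $C(x)=1/m$, and recall $s\in[0,1]$. Firing $\tau_1$ (which does $\cpp{te},\cmmtwo{st}$ and tests $st=0$) with fraction $\alpha_1$ leaves $st$ at $s-2\alpha_1$, so the test forces $\alpha_1=s/2$, after which $st=0$, $te=s/2$, while $x$ and $count$ are unchanged. Firing $\tau_2$ ($\cpp{st},\cmm{te}$, test $te=0$) with fraction $\alpha_2$ leaves $te$ at $s/2-\alpha_2$, forcing $\alpha_2=s/2$, after which $st=s/2$ and $te=0$. Firing $\tau_3$ ($\cpp{te},\cpp{count},\cmm{x}$, test $x=0$) with fraction $\alpha_3$ leaves $x$ at $1/m-\alpha_3$, forcing $\alpha_3=1/m$, after which $te=1/m$, $x=0$ and $count=n+1/m$. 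Firing $\tau_4$ ($\cpp{x},\cmm{te}$, test $te=0$) with fraction $\alpha_4$ leaves $te$ at $1/m-\alpha_4$, forcing $\alpha_4=1/m$, after which $te=0$ and $x=1/m$. Thus the end configuration $C'$ has $\sta{C'}=in_0$, $C'(st)=s/2$, $C'(te)=0=C(te)$, $C'(x)=1/m=C(x)$ and $C'(count)=n+1/m$, which is exactly the claimed $C'$; this gives the ``only if'' direction.

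For the converse, given $C'$ of the stated form, the same four steps with $\alpha_1=\alpha_2=s/2$ and $\alpha_3=\alpha_4=1/m$ take $C$ to $C'$, and it only remains to check the side-conditions: the fractions lie in $(0,1]$ (using $m\ge 1$ and $s>0$, which holds throughout the intended use of the gadget, where $s$ ranges over $1,\tfrac12,\dots,\tfrac{1}{2^{m-1}}$) and every intermediate valuation stays in $[0,1]^4$ (using $s\le 1$ and $n+1/m\le1$, again guaranteed in the regime where the lemma is applied, namely at most $m$ iterations). There is no genuine obstacle in this argument; the only point one must be careful about is the semantics of the zero-tests — that $st=0?$ on $\tau_1$ refers to the value \emph{after} subtracting $2\alpha_1$, not before — since this is precisely what makes each firing fraction uniquely determined rather than underdetermined. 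Full details are deferred to the appendix.
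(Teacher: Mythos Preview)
Your proof is correct and follows essentially the same approach as the paper: a direct four-step trace showing that each zero-test pins down the corresponding firing fraction ($\alpha_1=\alpha_2=C(st)/2$, $\alpha_3=\alpha_4=1/m$), and then verifying that these fractions realise the claimed run for the converse. Your discussion of the side-conditions ($\alpha_i\in(0,1]$ and intermediate values in $[0,1]$) is slightly more explicit than the paper's, which simply asserts the run can be verified; both treatments implicitly rely on the gadget being used only in the regime $C(st)>0$ and $C(count)+1/m\le 1$.
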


We now construct our final $\NCCM$ $\cC$ as follows: We take the initialization gadget and the $\NCCM$ $\cC_1$ and we add a rule from $in_0$ to $\init$ which does not do anything to the
counters. Intuitively, we first execute the initialization gadget for some
steps and then pass the control flow to $\cC_1$.
We let $d_{init}$ be the configuration of $\cC$ whose state is $in_0$ and whose counter values are all 0, except for $d_{init}(x) = 1/m$ and $d_{init}(st) = 1$.
Then, we let $d_{fin}$ be the configuration of $\cC$ whose state is $\overline{\fin}$ and whose counter values are all 0, except for $d_{fin}(x) = 1/m$ and $d_{fin}(count) = 1$.
If we encode $d_{init}$ and $d_{fin}$ in binary, then they can be written down in polynomial time. Since $d_{fin}(count) = 1$, when
the control flow passes from the initialization gadget to $\cC_1$,
the value of $st$ must be $1/2^m$, which is exactly what we want. 

\begin{restatable}{theorem}{cOneToC}\label{thm:challenge2}
$g(c_{init})$ can reach the configuration $(\overline{\fin},\mathbf{0})$ in the $\NCCM$ $\cC_1$ in $2(m+1)$ steps if and only if $d_{init}$ can reach $d_{fin}$ in the
$\NCCM$ $\cC$ in $4m+1+2(m+1)$ steps.	
\end{restatable}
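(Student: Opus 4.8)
The plan is to exploit the ``sequential composition'' shape of $\cC$. By construction, $\cC$ is the disjoint union of the initialization gadget and $\cC_1$, joined only by the single silent rule $in_0\to\init$; moreover the counters $x$ and $count$ occur in no rule of $\cC_1$, the counters $c_0,c_1$ occur in no rule of the gadget, and there is no rule leading from a state of $\cC_1$ to a state of the gadget. Consequently, any run of $\cC$ starting in $d_{init}$ that ever enters a state of $\cC_1$ decomposes uniquely as $\rho_1\cdot(in_0\to\init)\cdot\rho_2$, where $\rho_1$ uses only gadget states (starting and ending in $in_0$) and $\rho_2$ uses only states of $\cC_1$. I would first record these facts, together with the observation that, since the gadget transitions form the directed $4$-cycle $in_0\to in_1\to in_2\to in_3\to in_0$ (each of $in_1,in_2,in_3$ having a unique outgoing rule), the prefix $\rho_1$ necessarily consists of a whole number $j\ge 0$ of complete four-step iterations.

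For the ``only if'' direction, assume $g(c_{init})\act{*}(\overline{\fin},\mathbf 0)$ in $\cC_1$ in $2(m+1)$ steps. From $d_{init}$ I would run the gadget for exactly $m$ iterations. The hypothesis of \cref{lem:init} ($\sta{C}=in_0$, $C(te)=0$, $C(x)=1/m$) holds at $d_{init}$, and by \cref{lem:init} each iteration (four steps) halves $st$, adds $1/m$ to $count$, and leaves $te=0$, $x=1/m$, $c_0$, $c_1$ unchanged, so the hypothesis is reestablished; by induction, after $m$ iterations ($4m$ steps) we reach the configuration with state $in_0$, $st=1/2^m$, $count=1$, $x=1/m$, $te=0$, $c_0=c_1=0$. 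Firing $in_0\to\init$ (one step) produces a configuration whose restriction to the counters of $\cC_1$ is exactly $g(c_{init})$ and which still has $x=1/m$, $count=1$. I would then replay the assumed $\cC_1$-run step by step; it is a valid run of $\cC$ since no rule of $\cC_1$ touches $x$ or $count$, which therefore stay at $1/m$ and $1$. Its endpoint is the configuration with state $\overline{\fin}$, $c_0=c_1=st=te=0$, $x=1/m$, $count=1$, \ie\ $d_{fin}$, and the total length is $4m+1+2(m+1)$.

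For the ``if'' direction, assume $d_{init}\act{*}d_{fin}$ in $\cC$ in exactly $N:=4m+1+2(m+1)$ steps. Since $\sta{d_{fin}}=\overline{\fin}$ is a state of $\cC_1$, the run enters $\cC_1$ and hence, by the structural facts, decomposes as $\rho_1\cdot(in_0\to\init)\cdot\rho_2$ with $\rho_1$ being $j$ complete gadget iterations for some $j\ge0$, so $|\rho_1|=4j$. Applying \cref{lem:init} inductively to $\rho_1$ (the hypothesis being reestablished after each iteration, exactly as above) shows that the configuration after $\rho_1$ has $count=j/m$ (and $st=1/2^j$, $x=1/m$, $te=0$, $c_0=c_1=0$). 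No rule of $\cC_1$ changes $count$, so the value of $count$ in $d_{fin}$ equals $j/m$; since $d_{fin}(count)=1$, we get $j=m$. Hence $|\rho_1\cdot(in_0\to\init)|=4m+1$, this prefix ends in a configuration whose $\cC_1$-restriction is $g(c_{init})$, and $\rho_2$ has length $N-(4m+1)=2(m+1)$, lies entirely in $\cC_1$, and ends in a configuration whose $\cC_1$-restriction is $(\overline{\fin},\mathbf 0)$ (the restriction of $d_{fin}$). Restricting $\rho_2$ to the counters of $\cC_1$ yields a run of $\cC_1$ from $g(c_{init})$ to $(\overline{\fin},\mathbf 0)$ of length $2(m+1)$, as desired.

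The main obstacle is making the phase decomposition airtight: that the connecting rule is the only way out of the gadget and is used exactly once, that the gadget prefix is a whole number of complete four-step iterations (this is where the $4$-cycle structure is used), and that the values of $count$ and $x$ are ``frozen'' once control passes into $\cC_1$. Once these structural points are established, the remaining argument is just bookkeeping of counter values and step counts, governed entirely by \cref{lem:init} and the definition of $g$.
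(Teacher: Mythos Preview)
Your proposal is correct and follows essentially the same approach as the paper's proof: both directions exploit the sequential structure of $\cC$ (gadget, then the silent rule $in_0\to\init$, then $\cC_1$), use that $x$ and $count$ are frozen inside $\cC_1$ to pin down the number of gadget iterations via $d_{fin}(count)=1$, and then apply \cref{lem:init} inductively together with a restriction/replay of the $\cC_1$-run. Your explicit justification that the gadget prefix must consist of complete four-step cycles is a detail the paper leaves implicit, but otherwise the arguments coincide.
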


The full details behind this theorem can be found in Subsection~\ref{app:challenge2} of the appendix.
Combining this theorem with Theorem~\ref{thm:simulation}, proves the correctness of our reduction.



\section{From $\NCCM$ to $\QPVASS$}\label{sec:lower-bound-nccm-pvass}

We now move on to the next step in our reduction chain with the following problem called the reachability problem for $\QVASSRL$, defined
as follows:
Given a $\CVASS$ $\mach$, two configurations $c_{\mathit{init}}, c_{\mathit{fin}}$, and a number $m$ \emph{in binary}, whether one can
reach $c_{\mathit{fin}}$ from $c_{\mathit{init}}$ in exactly $m$ steps. 
\begin{theorem}
	The reachability problem for $\QVASSRL$ is $\NEXPTIME$-hard.
\end{theorem}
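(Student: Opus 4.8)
The plan is to reduce the reachability problem for $\NCCM$, which has just been shown $\NEXPTIME$-hard, to the reachability problem for $\QVASSRL$. Fix an $\NCCM$ instance consisting of a $d$-$\NCCM$ $\cC$, configurations $c_{\mathit{init}},c_{\mathit{fin}}$, and a bound $m$ in binary. I would construct a $\CVASS$ $\mach'$ (with \emph{no} zero tests --- those are to be simulated) over $2d+1$ counters: for every $i\in[d]$ a counter $x_i$ and a \emph{complement counter} $\bar x_i$, together with a single \emph{controlling counter} $\mathit{ctrl}$. Every transition of $\mach'$ will preserve the invariant $x_i+\bar x_i=1$ for all $i$; under the nonnegative semantics this automatically confines each $x_i$ to $[0,1]$, which is how the $[0,1]$-restriction built into $\NCCM$ gets mirrored. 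The initial configuration of $\mach'$ sets $x_i=\valu{c_{\mathit{init}}}(i)$, $\bar x_i=1-\valu{c_{\mathit{init}}}(i)$, $\mathit{ctrl}=0$, and the target configuration sets $x_i=\valu{c_{\mathit{fin}}}(i)$, $\bar x_i=1-\valu{c_{\mathit{fin}}}(i)$, $\mathit{ctrl}=Z$, where $Z$ is a fixed number determined below; the states of $\mach'$ are those of $\cC$ plus fresh internal gadget states.

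Each $\NCCM$ rule $(q,(r,s),q')$ is replaced by a gadget that is an acyclic chain of fresh states of \emph{fixed length} $1+2d$, independent of $s$. Its first transition adds $r$ to the block of $x$-counters and $-r$ to the block of $\bar x$-counters, with a free firing fraction $\alpha$: this simulates the $\NCCM$ update $\valu{C'}=\valu{C}+\alpha r$ on the $x_i$'s, preserves the invariant, and is enabled precisely when $\valu{C}+\alpha r\in[0,1]^d$. It is followed by $d$ \emph{slots}, two transitions each; the slot for index $i$ is a genuine zero test if $i\in s$ and a dummy otherwise. A genuine zero-test slot for $x_i$ is the pair of transitions ``increment $x_i$ and $\mathit{ctrl}$, decrement $\bar x_i$'' and then ``decrement $x_i$, increment $\bar x_i$ and $\mathit{ctrl}$''; a dummy slot is two transitions, each merely incrementing $\mathit{ctrl}$. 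Thus \emph{every} gadget increments $\mathit{ctrl}$ exactly $2d$ times, so a run traversing $m$ gadgets increments it exactly $Z:=2dm$ times. Since each $\mathit{ctrl}$-increment uses a fraction in $(0,1]$, after such a run $\mathit{ctrl}\le Z$, with equality if and only if every one of these steps fired with fraction exactly $1$; requiring the target value $\mathit{ctrl}=Z$ therefore forces fraction $1$ on all of them. A genuine zero test's first transition at fraction $1$ requires $\bar x_i\ge 1$, which by the invariant $\bar x_i=1-x_i\le 1$ (as $x_i\ge 0$) forces $\bar x_i=1$, i.e.\ $x_i=0$ --- exactly the zero test --- after which its second transition (also at fraction $1$) restores $x_i=0$, $\bar x_i=1$; dummy slots at fraction $1$ are harmless. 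The run-length bound of the output instance is $m':=m(1+2d)$.

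Correctness splits into the two directions. For soundness, an accepting $\NCCM$ run $c_{\mathit{init}}=C_0\to\cdots\to C_m=c_{\mathit{fin}}$ with fractions $\alpha_1,\dots,\alpha_m$ is simulated gadget by gadget: fire the $j$-th gadget's first transition with $\alpha_j$ (enabled since the $\NCCM$ step was valid) and all of its slot transitions with fraction $1$ (enabled since, for $i\in s_j$, validity gives $x_i=0$, hence $\bar x_i=1$, just after the main step); this produces a run of $\mach'$ of length exactly $m'$ to the target configuration, with $\mathit{ctrl}=Z$. For completeness, take any run of $\mach'$ of length exactly $m'$ from the initial to the target configuration. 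Every gadget is an acyclic chain of length $1+2d$ and the only way to leave a control state of $\cC$ is to enter a gadget, so, since both endpoints of the run carry states of $\cC$, the run is a concatenation of exactly $m'/(1+2d)=m$ gadget traversals; hence $\mathit{ctrl}$ was incremented exactly $Z$ times, so every genuine zero-test transition fired with fraction $1$ and truthfully witnessed that its counter was $0$ at the corresponding point. Reading off the first-transition fractions therefore yields a valid $\NCCM$ run of length $m$ from $c_{\mathit{init}}$ to $c_{\mathit{fin}}$. Everything is computable in polynomial time --- in particular $Z$ and $m'$ have bit-length polynomial in $|m|$ and $\log d$ --- so the reduction goes through.

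The subtlest point, on which I would spend most care, is the control over firing fractions: one must argue that the run-length constraint really pins the total number of $\mathit{ctrl}$-increments to the constant $Z$ (this is exactly why every gadget is padded to the uniform length $1+2d$ with uniformly many $\mathit{ctrl}$-increments, regardless of $s$), and that the invariant $x_i+\bar x_i=1$ is preserved by \emph{every} transition --- including the two transitions inside a genuine zero-test slot --- since it is this invariant that turns the enabling condition ``$\bar x_i\ge 1$'' into the equality ``$\bar x_i=1$'' that realizes the zero test.
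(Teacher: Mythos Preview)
Your proposal is correct and follows essentially the same approach as the paper: complement counters to encode the $[0,1]$ constraint, a controlling counter together with the run-length bound to force the zero-test transitions to fire with fraction exactly~$1$, and the invariant $x_i+\bar x_i=1$ to turn the enabling condition $\bar x_i\ge 1$ into the equality certifying $x_i=0$. The only cosmetic difference is that the paper batches all zero tests of a rule into a single pair of transitions (decrementing each $\bar x$ with $x\in s$ simultaneously), yielding gadgets of uniform length~$3$ and target $\mathit{ctrl}=2m$, whereas you allocate a separate two-transition slot per counter (genuine or dummy), yielding gadgets of length $1+2d$ and target $\mathit{ctrl}=2dm$; both are polynomial and the correctness arguments are identical.
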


We prove this theorem by giving a reduction from the reachability problem for $\NCCM$. 
Fix a $\NCCM$ $\cC$, two of its configurations $c_{\mathit{init}}, c_{\mathit{fin}}$, and a number $m$.
Without loss of generality,
we assume that every rule in $\cC$ performs at least one zero-test.

\smallskip\noindent\textbf{Overview of the reduction.} We want to construct a $\CVASS$ $\mach$ that simulates $\cC$ for $m$ steps. The primary challenge that prevents us from doing this is the following:
\begin{itemize}
	\item[(D1)] How can we create gadgets to simulate exactly $m$ zero-tests of $\cC$ in $\mach$?
\end{itemize}

We circumvent this challenge as follows: We know that in a $\NCCM$, the value of every counter will always be in the range $[0,1]$. Hence, 
for every counter $x$, we introduce another counter $\bar{x}$, called the \emph{complementary counter} of $x$ and 
maintain the invariant $x + \bar{x} = 1$ throughout a run.
Then testing if the value of $x$ is 0, amounts to testing if the value of $\bar{x}$ is at least 1.
\newcommand{\geqone}{geq1}
This allows us to replace a zero-test with a greater than or equal to 1 (\geqone) test. 

The latter can be implemented as follows:
If $t$ and $t'$ are rules which decrement and increment $\bar{x}$ by 1 respectively and $C \act{1 t} C' \act{1 t'} C''$ is a run,
then we know that the value of $\bar{x}$ in $C$ is at least 1, which lets us implement a \geqone \ test. Note that for this to succeed, 
we require that both $t$ and $t'$ are fired completely, i.e.,
with fraction 1.

To sum this up, this means that if were to simulate a rule $r = (q,(w,s),q')$ of the $\NCCM$ $\cC$ in our new machine with the complementary counters, we need one rule to take care of the updates corresponding to $w$
and two rules to take care of \geqone \ tests corresponding to the zero tests in $s$, both of which
must be \emph{fired completely}.
Hence, simulating $m$ steps of $\cC$ in our new machine 
requires $3m$ steps, of which
exactly $2m$ steps must be \emph{fired completely}. 
This leads us to
\begin{itemize}
	\item[(D2)] How can we force the rules corresponding to \geqone \ tests to be fired completely, for exactly $2m$ times?
\end{itemize}
To solve this challenge, we introduce another counter $ctrl$, called the \emph{controlling counter}. 
We modify every rule corresponding to a \geqone \ test to also
increment the value of the counter $ctrl$ by 1. 
This means that, if $\rho$ is a run of $3m$ steps such that the value
of $ctrl$ after $\rho$ is exactly $2m$, then every rule corresponding to a \geqone
\ test must have been fired completely along the run $\rho$. 

\smallskip
\noindent\textbf{Formal construction.} Having given an informal overview of the reduction, we now proceed to the formal construction.
We are given an $\NCCM$ $\cC$ and a number $m$ in binary.
From the $\NCCM$ $\cC$, we will construct a $\CVASS$ $\mach$
as follows. For every counter $x$ of $\cC$, $\mach$ will have two counters $x$ and $\bar{x}$.
Every transition that increments $x$ will decrement $\bar{x}$ by the same amount and vice-versa, so that the sum of the values of $x$ and $\bar{x}$ will be equal to 1 throughout. Further, $\mach$
will have another counter $ctrl$, called the \emph{controlling counter}.

Suppose $r := (q,t,q')$ is a rule of $\cC$ such that $t = (w,s)$. 
Denote by $\bar{w}$ the vector such that $w(ctrl) = 0$ and
for every counter $x$ of $\cC$, $\bar{w}(x) = w(x)$ and $\bar{w}(\bar{x}) = -w(x)$. Then corresponding to the rule $r$ in $\cC$, $\mach$ will have the gadget in Figure~\ref{fig:cvass}, whose first, second
and third rules will be denoted by $r^b, r^m$ and $r^e$ respectively.

\begin{figure}
	\begin{center}
		\tikzstyle{node}=[circle,draw=black,thick,minimum size=12mm,inner sep=0.75mm,font=\normalsize]
		\tikzstyle{edgelabelabove}=[sloped, above, align= center]
		\tikzstyle{edgelabelbelow}=[sloped, below, align= center]
		\begin{tikzpicture}[->,node distance = 0.6cm,scale=0.8, every node/.style={scale=0.8}]
			\node[node] (q) {$q$};
			\node[node, right = of q] (r) {$r_{mid}$};
			\node[node, right = of r, xshift = 2.5cm] (r') {$r'_{mid}$};
			\node[node, right = of r', xshift = 2.5cm] (q') {$q'$};
			
			\draw(q) edge[edgelabelabove] node{$\bar{w}$} (r);
			\draw(r) edge[edgelabelabove] node{$\land_{x \in s}\ \cmm{\bar{x}}, \cpp{x}; \cpp{ctrl}$} (r');
			\draw(r') edge[edgelabelabove] node{$\land_{x \in s}\ \cpp{\bar{x}}, \cmm{x}; \cpp{ctrl}$} (q');	
		\end{tikzpicture}
	\end{center}
	\caption{Gadget for the rule $r := (q,t,q')$ with $t = (w,s)$.
	}
	\label{fig:cvass}
\end{figure}
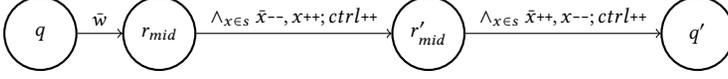

For any configuration $C$ of $\cC$, let $G(C)$ \ramen{In the previous section, $g$ was a function. Here it is a relation. I propose using $G$ or something else so that the reader is not confused. One coud also use notation like $C[ctrl \leftarrow \alpha]$ to indicate a configuration which only differs in $ctrl$.} \bala{done} denote the \emph{set of configurations} of $\mach$ such that $D \in G(C)$ iff $\sta{D} = \sta{C}$, $D(x) = C(x)$
and $D(\bar{x}) = 1-C(x)$ for every counter $x$ of $\cC$. Note that
any two configurations in $G(C)$ differ only in their value of the counter $ctrl$. For any number $\alpha$, let $G(C)_\alpha$ denote the unique
configuration in $G(C)$ whose $ctrl$ value is $\alpha$.
The following lemma is a consequence of the discussion given in the overview section.
, whose full proof could be found in Subsection~\ref{app:ctrl} of the appendix.

\begin{restatable}[Control counter simulation]{lemma}{ctrl}~\label{lem:ctrl}
	\begin{itemize}
		\item Soundness: If $C \act{\alpha r} C'$ in $\cC$, then
		for any $\zeta$, $G(C)_\zeta \act{\alpha r^b, r^m, r^e} G(C')_{\zeta+2}$. 
		\item Completeness: If $G(C)_\zeta \act{\alpha r^b, \beta r^m, \gamma r^e } D'$ for some $\alpha, \beta, \gamma, \zeta$ and $D$
		such that $D(ctrl) = \zeta + 2$, 
		then there exists $C'$ such that $D' = G(C')_{\zeta+2}$, $\beta = \gamma = 1$ and 
		$C \act{\alpha r} C'$.
	\end{itemize}
\end{restatable}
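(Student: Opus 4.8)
The plan is to establish both directions by executing the three-rule gadget of \cref{fig:cvass} step by step and tracking the configuration of $\mach$ symbolically, using two facts. First, for every counter $x$ of $\cC$, each of $r^b$ (whose update is $\bar w$), $r^m$, and $r^e$ changes the $x$- and $\bar x$-components by exactly opposite amounts and touches no other $x/\bar x$ pair; hence $x+\bar x$ is preserved along \emph{any} run of $\mach$, so starting from a configuration in $G(C)$ every later configuration $D$ still satisfies $D(x)+D(\bar x)=1$ for all $x$, and is therefore of the form $G(C'')_\eta$ for a unique $\NCCM$-configuration $C''$ and unique $\eta$. Second, $r^b$ leaves $ctrl$ unchanged while $r^m$ and $r^e$ each increase $ctrl$ by their firing fraction. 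Throughout I would abbreviate $e(x):=C(x)+\alpha w(x)$, the value of $x$ after the first step.

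For \emph{soundness}, assume $C\act{\alpha r}C'$ with $t=(w,s)$, so $\valu{C'}=\valu{C}+\alpha w\in[0,1]^d$ and $\valu{C'}(x)=0$ for $x\in s$. From $G(C)_\zeta$, firing $\alpha r^b$ yields $D_1$ with $D_1(x)=e(x)=C'(x)$ and $D_1(\bar x)=1-C'(x)$, both nonnegative since $\valu{C'}\in[0,1]^d$, and $D_1(ctrl)=\zeta$; so the step is legal. Next fire $r^m$ with fraction $1$: this is legal because for $x\in s$ we have $D_1(\bar x)=1-C'(x)=1\ge 1$, and it produces $D_2$ with $D_2(x)=1$, $D_2(\bar x)=0$ for $x\in s$, everything else unchanged, $D_2(ctrl)=\zeta+1$. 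Finally fire $r^e$ with fraction $1$: legal because $D_2(x)=1\ge 1$ for $x\in s$, giving $D_3$ with $D_3(x)=0=C'(x)$ and $D_3(\bar x)=1=1-C'(x)$ for $x\in s$, all remaining components as in $D_1$, and $D_3(ctrl)=\zeta+2$. Hence $D_3=G(C')_{\zeta+2}$.

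For \emph{completeness}, suppose $G(C)_\zeta\act{\alpha r^b,\beta r^m,\gamma r^e}D'$ with $D'(ctrl)=\zeta+2$, and let $D_1,D_2,D'$ be the configurations after the three steps. Legality of the first step forces $D_1(x)=e(x)\ge 0$ and $D_1(\bar x)=1-e(x)\ge 0$, i.e.\ $e(x)\in[0,1]$ for every counter $x$. Since $D'(ctrl)=\zeta+\beta+\gamma$ we get $\beta+\gamma=2$, and as $\beta,\gamma\in(0,1]$ this forces $\beta=\gamma=1$. Now legality of the ($r^m$)-step, which is a fraction-$1$ step, requires $D_1(\bar x)\ge 1$ for each $x\in s$, i.e.\ $1-e(x)\ge 1$, i.e.\ $e(x)\le 0$; together with $e(x)\ge 0$ this gives $e(x)=0$ for all $x\in s$. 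With $\beta=\gamma=1$ the combined effect of $r^m$ then $r^e$ on each $x/\bar x$ pair is zero and nothing else changes except $ctrl$, so $D'(x)=e(x)$ and $D'(\bar x)=1-e(x)$ for all $x$. Defining $C'$ by $\sta{C'}=q'$ and $\valu{C'}(x)=e(x)$, we have $\valu{C'}=\valu{C}+\alpha w\in[0,1]^d$ and $\valu{C'}(x)=0$ for $x\in s$, hence $C\act{\alpha r}C'$ in $\cC$, and $D'=G(C')_{\zeta+2}$ by construction.

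The substantive point — and really the whole reason the gadget works — is the short chain in the completeness argument: the $ctrl$ counter (ultimately fed by the run-length bound) pins $\beta=\gamma=1$, and only once $r^m$ is forced to fire fully does $\CVASS$-nonnegativity of $\bar x$ upgrade the ``\geqone-test'' into a genuine zero-test $e(x)=0$. Everything else is routine bookkeeping, which I would keep manageable by carrying the single abbreviation $e(x)=C(x)+\alpha w(x)$ throughout.
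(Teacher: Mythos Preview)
Your proof is correct and follows essentially the same approach as the paper's: both argue soundness by firing $r^b$ with fraction $\alpha$ and $r^m,r^e$ with fraction $1$, and both argue completeness by first using $D'(ctrl)=\zeta+2$ to pin $\beta=\gamma=1$, then using legality of the full $r^m$-step to force $\bar x\ge 1$ (hence $x=0$) for $x\in s$, and finally reading off $C'$. Your write-up is a bit more explicit about the invariant $x+\bar x=1$ and about intermediate legality, but the structure is the same.
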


Repeated applications of the Control Counter Simulation lemma give us the following theorem, which completes our reduction.

\begin{restatable}{theorem}{nccmToCvass}\label{thm:nccm-to-cvass}
	$c_{init}$ can reach $c_{fin}$ in $\cC$ in $m$ steps iff $G(c_{init})_0$ can reach $G(c_{fin})_{2m}$ in $3m$ steps.	
\end{restatable}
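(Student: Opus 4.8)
The plan is to obtain \cref{thm:nccm-to-cvass} by iterating the Control Counter Simulation lemma (\cref{lem:ctrl}), once for each step of $\cC$.

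For the ``only if'' direction, take a run $c_{init} = C_0 \act{\alpha_1 r_1} C_1 \act{\alpha_2 r_2} \cdots \act{\alpha_m r_m} C_m = c_{fin}$ of $\cC$. I would show by induction on $k$ that $G(C_0)_0 \act{*} G(C_k)_{2k}$ using exactly $3k$ steps of $\mach$: the base case $k=0$ is immediate, and for the inductive step the soundness part of \cref{lem:ctrl} applied to $C_k \act{\alpha_{k+1} r_{k+1}} C_{k+1}$ with $\zeta = 2k$ gives $G(C_k)_{2k} \act{\alpha_{k+1} r_{k+1}^b,\ r_{k+1}^m,\ r_{k+1}^e} G(C_{k+1})_{2k+2}$, three further steps. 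Instantiating $k = m$ yields the required run $G(c_{init})_0 \act{*} G(c_{fin})_{2m}$ of length $3m$.

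For the ``if'' direction, let $\rho$ be a run of $\mach$ from $G(c_{init})_0$ to $G(c_{fin})_{2m}$ of length $3m$. The first observation is that $\rho$ has a rigid block structure: from a state $q$ of $\cC$ the only outgoing rules of $\mach$ are the rules $r^b$ for rules $r$ of $\cC$ starting at $q$; from $r_{mid}$ the only outgoing rule is $r^m$; and from $r'_{mid}$ the only outgoing rule is $r^e$. Hence $\rho$ is at a state of $\cC$ exactly after $0,3,6,\dots,3m$ steps, and each block of three consecutive steps is a firing of $r_k^b, r_k^m, r_k^e$ (in that order) for some rule $r_k$ of $\cC$; write $G(c_{init})_0 = D_0 \act{*} D_1 \act{*} \cdots \act{*} D_m = G(c_{fin})_{2m}$ for the configurations at the block boundaries, so $D_{k-1} \act{\alpha_k r_k^b,\ \beta_k r_k^m,\ \gamma_k r_k^e} D_k$ for suitable firing fractions $\alpha_k,\beta_k,\gamma_k \in (0,1]$. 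The second observation is a counting argument on the controlling counter: $r_k^b$ does not change $ctrl$ (as $\bar{w}(ctrl) = 0$) while $r_k^m, r_k^e$ increase it by $\beta_k, \gamma_k$ respectively, so $2m = D_m(ctrl) - D_0(ctrl) = \sum_{k=1}^m(\beta_k + \gamma_k)$; since each $\beta_k + \gamma_k \le 2$, this forces $\beta_k + \gamma_k = 2$, hence $D_k(ctrl) = D_{k-1}(ctrl)+2$, for every $k$.

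With these two facts I would conclude by induction on $k$: $D_0 = G(c_{init})_0 = G(C_0)_0$ with $C_0 := c_{init}$, and given $D_{k-1} = G(C_{k-1})_{2(k-1)}$, the completeness part of \cref{lem:ctrl}---whose hypothesis $D_k(ctrl) = D_{k-1}(ctrl)+2$ was just verified---produces a configuration $C_k$ of $\cC$ with $D_k = G(C_k)_{2k}$, $\beta_k = \gamma_k = 1$, and $C_{k-1} \act{\alpha_k r_k} C_k$. At $k=m$ this gives $G(C_m)_{2m} = G(c_{fin})_{2m}$; since the map $C \mapsto G(C)_{2m}$ is injective (from $G(C)_{2m}$ one reads off the state and every counter value $x$ and $\bar{x}$, hence the underlying $\cC$-configuration), $C_m = c_{fin}$, and $C_0 \act{\alpha_1 r_1} \cdots \act{\alpha_m r_m} C_m$ is a run of $\cC$ of length $m$ from $c_{init}$ to $c_{fin}$. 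The real work here is done by \cref{lem:ctrl}; the main point requiring care is the ``if'' direction's bookkeeping---one must note that $3\mid 3m$ together with the rigid block structure forces $\rho$ to end precisely at a gadget boundary (no partial gadget can dangle), and one must sequence the two observations correctly, deriving the block decomposition first, then the global $ctrl$ count, and only then applying completeness block by block, so that the argument is not circular.
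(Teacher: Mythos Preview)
Your proposal is correct and follows essentially the same approach as the paper: iterate the soundness part of \cref{lem:ctrl} for the forward direction, and for the converse first observe the three-step block structure of any run in $\mach$, then use the global count on $ctrl$ to force each block to raise $ctrl$ by exactly~$2$, and finally apply the completeness part of \cref{lem:ctrl} block by block. Your write-up is in fact more explicit than the paper's on two points---the reason the block decomposition is forced by the state structure of the gadget, and the injectivity of $C\mapsto G(C)_{2m}$ needed to conclude $C_m=c_{fin}$---both of which the paper leaves implicit.
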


The full proof of the above theorem can be found in Subsection~\ref{app:nccm-to-cvass} of the appendix.

\begin{example}
	Let us see a concrete application of this reduction on some example.
	To this end, consider the $\NCCM$ given in Figure~\ref{fig:inc-renamed}. Note
	that this is essentially a renamed version of the ``increment(i)'' gadget
	described in Figure~\ref{fig:case-one}. We consider this version here
	since it makes it easier to describe the effect of our reduction.
	The result of the application of the reduction on this $\NCCM$ is
	given in Figure~\ref{fig:cvass-inc-renamed}.
	
	Suppose for some $u, v \in [0,1]$ with $u+v \le 1$,
	we start in state $q_0$ in the $\NCCM$ given
	in Figure~\ref{fig:inc-renamed} with counter values $u, v$ and $0$
	for the counters $c, st$ and $te$, respectively. 
	From the argument given in the previous section, we know that if we 
	fire the $\NCCM$ in Figure~\ref{fig:inc-renamed} once, then
	we will reach the state $q_2$ with counter values $u + v, v$ and $0$
	for the counters $c, st$ and $te$, respectively. 
	
	Now, suppose we start in $q_0$ in the $\QVASSRL$ given in 
	Figure~\ref{fig:cvass-inc-renamed} with counter values
	$u, 1-u, v, 1-v, 0, 1$ and $0$ in $c, \overline{c}, st, \overline{st}, te, \overline{te}$ and $ctrl$,
	respectively. From the reduction, we know that if we fire the gadget in Figure~\ref{fig:cvass-inc-renamed}
	once, and reach the state $q_2$ with counter value $4$ for the controlling counter $ctrl$, 
	then the counter values for counters $c, \overline{c}, st, \overline{st}, te, \overline{te}$ are $u+v, 1 - u -v, v, 1-v, 0$ and $1$ respectively.
\end{example}

\begin{figure}
	\begin{center}
		\tikzstyle{node}=[circle,draw=black,thick,minimum size=12mm,inner sep=0.75mm,font=\normalsize]
		\tikzstyle{edgelabelabove}=[sloped, above, align= center]
		\tikzstyle{edgelabelbelow}=[sloped, below, align= center]
		\begin{tikzpicture}[->,node distance = 2cm,scale=0.8, every node/.style={scale=0.8}]
			\node[node] (q) {$q_0$};
			\node[node, right = of q] (r) {$q_1$};
			\node[node, right = of r] (q') {$q_2$};
			
			\draw(q) edge[edgelabelabove] node[above]{{\small $\cpp{c}, \ \cpp{te}, \ \cmm{st}$}} node[below]{{\small $st = 0?$}} (r);
			\draw(r) edge[edgelabelabove] node[above]{{\small $\cpp{st}, \ \cmm{te}$}} node[below]{{\small $te=0?$}} (q');
		\end{tikzpicture}
	\end{center}
	\caption{Renamed version of the ``increment(0)'' gadget from Figure~\ref{fig:case-one}.
		The rule from $q_0$ to $q_1$ shall be denoted  
		by $I$ and the rule from $q_1$ to $q_2$ shall be denoted by $J$.}
	\label{fig:inc-renamed}
\end{figure}
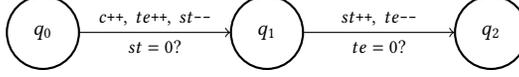 
\ramen{Why not continue to use $r^b$ notation in \cref{fig:inc-renamed}, especialy since $s$ is used for the subset for which we zero test in previous section?}
\balain[I am using this notation again in this section. Hence the renaming. However, I have now renamed it to $I$ and $J$, instead of $r$ and $s$.]

\begin{figure}
	\begin{center}
		\tikzstyle{node}=[circle,draw=black,thick,minimum size=12mm,inner sep=0.75mm,font=\normalsize]
		\tikzstyle{edgelabelabove}=[sloped, above, align= center]
		\tikzstyle{edgelabelbelow}=[sloped, below, align= center]
		\begin{tikzpicture}[->,node distance = 0.6cm,scale=0.8, every node/.style={scale=0.8}]
			\node[node] (q0) {$q_0$};
			\node[node, right = of q0, xshift=2cm] (r) {$I_{mid}$};
			\node[node, right = of r, xshift = 2cm] (r') {$I'_{mid}$};
			\node[node, below right = of r', xshift = 2cm, yshift = 0.5cm] (q1) {$q_1$};
			\node[node, below = of r'] (s) {$J_{mid}$};
			\node[node, left = of s, xshift = -2cm] (s') {$J'_{mid}$};
			\node[node, left = of s', xshift = -2cm] (q2) {$q_2$};
			
			\draw(q0) edge[edgelabelabove] node[above]{{\small $\cpp{c},\cpp{te},\cmm{st}$}} node[below]{{\small $\cmm{\overline{c}},\cmm{\overline{te}},\cpp{\overline{st}}$}} (r);
			\draw(r) edge[edgelabelabove] node{$\cmm{\overline{st}}, \cpp{st}, \cpp{ctrl}$} (r');
			\draw(r') edge[edgelabelabove] node{$\cpp{\overline{st}}, \cmm{st}, \cpp{ctrl}$} (q1);	
			\draw(q1) edge[edgelabelabove] node[above]{{\small $\cpp{st},\cmm{te}$}} node[below]{{\small $\cmm{\overline{st}},\cpp{\overline{te}}$}} (s);
			\draw(s) edge[edgelabelabove] node{$\cmm{\overline{te}}, \cpp{te}, \cpp{ctrl}$} (s');
			\draw(s') edge[edgelabelabove] node{$\cpp{\overline{te}}, \cmm{te}, \cpp{ctrl}$} (q2);	
		\end{tikzpicture}
	\end{center}
	\caption{Application of the reduction described in this section on the example $\NCCM$ given in Figure~\ref{fig:inc-renamed}.
	}
	\label{fig:cvass-inc-renamed}
\end{figure}
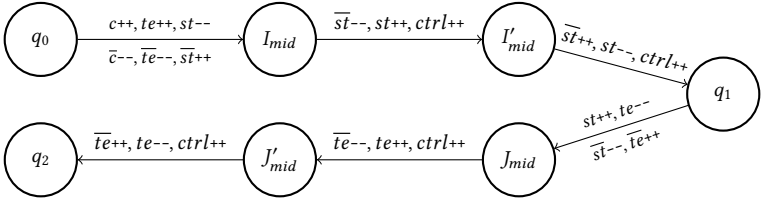


\subsection{Wrapping up} \label{subsec:cvass-to-cgvas}

We now provide the final steps to prove that reachability for 
$\QPVASS$ is $\NEXPTIME$-hard. To do this, we recall a well-known 
folklore fact about pushdown automata. It essentially states
that we can implement a binary counter in a PDA.

\begin{lemma}\label{lem:folklore}
	For any number $m$, in polynomial time in $\log(m)$, we
	can construct a PDA $P_m$ of bounded stack-height and two configurations $C$ and $C'$ such that there is
	exactly one run
	from $C$ to $C'$. Moreover, this run is of length exactly $m$.
\end{lemma}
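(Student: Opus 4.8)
The plan is to implement a deterministic binary counter on the stack and make its run length come out to be exactly $m$. Fix the target length $m$, let $k=\lceil\log_2(m)\rceil$, and pick a value $N\le m$ (specified below). The PDA $P_m$ will have three phases, all glued into a single deterministic transition structure: (1)~a \emph{push phase} that pushes the $k$-bit binary representation of $N$ onto the empty stack, most significant bit first, so that afterwards the least significant bit is on top; (2)~a \emph{decrement phase} that repeatedly decrements this stack-counter until it underflows; and (3)~a \emph{padding phase}, which is just a simple chain of $\nop$-transitions. We take $C$ to be the start state of phase~(1) with empty stack, and $C'$ to be the last state of phase~(3), which will have no outgoing rule. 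Since in every configuration of $P_m$ at most one rule is applicable (the push phase is hard-coded, the decrement gadget below is deterministic, and the padding chain is a line), there is a \emph{unique} maximal run from $C$; as it reaches $C'$ and $C'$ occurs on no other configuration, there is exactly one run from $C$ to $C'$. It remains to control its length and the stack height.

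The decrement gadget works on a value $v>0$ stored with the least significant bit on top, as follows: pop the top symbol; if it is $1$, push $0$ and finish; if it is $0$, push $1$ and repeat on the new top (a borrow). If during a borrow the stack becomes empty, then $v=0$ and the decrement \emph{underflows}; this is exactly how the machine detects that the counter has reached $0$, whereupon it passes control to the padding phase. In particular no zero-test on the stack contents is needed. Decrementing a value whose binary representation has $j$ trailing zeros costs $2j+O(1)$ transitions (pop/push each of the $j+1$ touched bits, plus a constant for state bookkeeping), so the total number of transitions spent while the counter runs from $N$ down to underflow is a fixed quantity $\ell(N)$ that is $\Theta(N)$, is computable in time polynomial in $\log N$ from the bits of $N$ (it is a sum of $O(\log N)$ explicit terms), is strictly increasing in $N$, and satisfies $\ell(N+1)-\ell(N)=O(\log N)$. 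Adding the $k=O(\log m)$ transitions of the push phase, the length of phases~(1)--(2) is some explicitly computable $L(N)=k+\ell(N)$ with the same growth properties.

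To pin the total length to exactly $m$, we choose $N$ to be the largest value with $L(N)\le m$; this $N$ can be found by binary search over $\{0,\dots,m\}$ in time polynomial in $\log m$, since $L$ is monotone and efficiently evaluable. Because $L(N+1)-L(N)=O(\log m)$, the deficit $m-L(N)$ is $O(\log m)$, so phase~(3) is a chain of exactly $m-L(N)$ many $\nop$-transitions, which we can write down in time polynomial in $\log m$. The resulting run from $C$ to $C'$ then has length precisely $L(N)+(m-L(N))=m$. Throughout the run the stack holds at most $\lceil\log_2 N\rceil+O(1)=O(\log m)$ symbols, so $P_m$ has bounded stack-height, and everything was constructed in time polynomial in $\log m$. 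The one point requiring care — and the reason for the $N$-plus-padding detour rather than simply counting $0\to m$ — is that a plain binary counter up to $m$ uses about $4m$ transitions and one cannot force every elementary counter step to cost exactly one transition; we therefore undershoot with $\ell(N)$ and pad, the padding being short precisely because consecutive values of $\ell$ differ by only $O(\log m)$.
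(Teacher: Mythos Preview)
Your approach is correct but takes a genuinely different route from the paper. You implement a binary counter on the stack and pad to the exact length, which requires computing the closed-form run length $\ell(N)$ (via Legendre's formula $\sum_{v=1}^{N}\nu_2(v)=N-s_2(N)$), binary-searching for the right $N$, and appending an $O(\log m)$ chain of $\nop$-transitions. The paper instead performs a depth-first traversal of a binary tree encoded on the stack: for $m=2^k$ it uses stack symbols $S_1,\ldots,S_k$, one initial push of $S_1$, and self-loops ``pop $S_i$, push $S_{i+1}S_{i+1}$'' for $i<k$ together with ``pop $S_k$''; since each $S_i$ is popped exactly $2^{i-1}$ times, the run has length $1+\sum_{i=1}^{k}2^{i-1}=2^k$. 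For arbitrary $m$ one simply pushes one starting symbol per set bit of $m$ and lets the same expansion run, so no length arithmetic or padding is needed at all. Both constructions achieve $O(\log m)$ stack height and size polynomial in $\log m$; yours buys a constant-size stack alphabet but pays with extra states (to count borrowed bits) and the auxiliary binary search, while the paper's is more direct. One wording glitch worth fixing: your decrement gadget says ``if it is $0$, push $1$ and repeat on the new top'', which taken literally makes the just-pushed $1$ the new top and breaks the borrow; what you mean is to pop all trailing $0$s first (counting them in the finite control, at most $k$), flip the first $1$ to $0$, and only then push the $1$s back---your cost formula $2j+O(1)$ confirms this is the gadget you have in mind.
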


\begin{proof}
	The essential idea is to use the stack to do a depth-first search of a binary tree of size $O(m)$. At each point, the PDA will only store at most $O(\log m)$ many entries in its stack, because
	the depth of the tree is $O(\log m)$. We now give a more precise construction.
	
	Note that when $m = 1$, $P_1$ can simply be taken to be a PDA with two states and a single transition
	between the first state and the second state which does nothing to the stack.
	Now, let us consider the case when $m > 1$ is a power of 2, i.e., $m = 2^k$ for some $k$.
	Consider the following PDA $P_m$ with $k$ stack symbols $S_1^k, S_2^k, \dots, S_k^k$.
	$P_m$ starts in the state $b_m$ with the empty stack. It then moves to 
	state $e_m$ while pushing $S_1^k$ onto the stack. The state $e_m$ has $k$ self-loop transitions
	as follows: For each $1 \le i < k$, the $i^{th}$ self-loop pops $S_i^k$ and pushes $S_{i+1}^k$ twice.
	Further, the $k^{th}$ self-loop simply pops $S_k^k$. 
	It can be easily verified that
	starting from state $b_m$ with the empty stack, there is exactly one path to the configuration
	whose state is $e_m$ and whose stack is empty. Moreover this path is of length exactly $m$.
	This is because the desired path is essentially the depth-first search traversal of a binary tree of size $m-1$,
	where the root is labelled by $S_1^k$ and each node at height $i$ is labelled by $S_{i+1}^k$.
	Due to the depth-first search traversal, the number of elements stored in the stack at any point during the run is $O(k)$.

	Now for the general case, suppose $m = \sum_{1 \le i \le n} 2^{k_i}$ for some $k_1 < k_2 < \dots < k_n \le \log(m)$. The desired PDA $P_m$ has $\sum_{1 \le i \le n} k_i$ stack symbols
	given by $S_1^{k_1},\dots,S_{k_1}^{k_1},S_1^{k_2},\dots,S_{k_2}^{k_2},\dots,S_1^{k_n},\dots,S_{k_n}^{k_n}$.
	Further, $P_m$ has $n+1$ states $b_m^1, \dots, b_m^n, b_m^{n+1}$. Initially, it starts in the state $b_m^1$
	with the empty stack. Then for each $1 \le i \le n$, it has a transition from $b_m^i$ to $b_m^{i+1}$
	which pushes $S_1^{k_i}$ onto the stack. Then, at state $b_m^{n+1}$, it has
	the following set of self-loops: For each $1 \le i \le n$ and each $1 \le j < k_i$, 
	it pops $S_j^{k_i}$ from the stack and pushes $S_{j+1}^{k_i}$ twice. Further for each $1 \le i \le n$,
	it pops $S_{k_i}^{k_i}$. It can now be easily verified that starting from state $b_m^1$ with the empty stack, there is exactly one path to the configuration
	whose state is $b_m^{n+1}$ and whose stack is empty and also that this path is of length exactly $m$.
\end{proof}

We now give a reduction from reachability for $\QVASSRL$ to 
reachability for $\QPVASS$.
Let $\mach = (Q,T,\Delta)$ be a $\CVASS$ such that $c_{init}$
and $c_{fin}$ are two of its configurations and let $m$ be a number,
encoded in binary. Construct the pair $(P_m,C,C')$ as given
by the Folklore lemma~\ref{lem:folklore}. We now take the usual cross product, i.e., the Cartesian product between $P_m$ and $\mach$,
to obtain a $\QPVASS$ $\cC$. (This operation is very similar to taking the cross product 
between a PDA and an NFA).
Intuitively, the PDA part of $\cC$ corresponds to simulating a binary
counter, counting till the value $m$ 
and the $\QVASS$ part of $\cC$ corresponds
to simulating the $\QVASS$ $\mach$.

Let $\bu$ (resp. $\bv$) be the configuration of $\cC$ such that
$\sta{\bu} = (\sta{C},\sta{c_{init}}), \stack(\bu) = \stack(C), \val(\bu) = \val(c_{init})$ (resp. $\sta{\bv} = (\sta{C'},\sta{c_{fin}}), \stack(\bv) = \stack(C'), \val(\bv) = \val(c_{fin})$).
%
By construction of $\cC$, 
$c_{init}$ can reach $c_{fin}$ in $\mach$ in $m$ steps
iff $\bu$ can reach $\bv$ in $\cC$.



\begin{theorem}
	Reachability in $\QPVASS$ is $\NEXPTIME$-hard.
\end{theorem}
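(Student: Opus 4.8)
The plan is to assemble the chain of reductions built up in this section. We have already shown, going from Bounded-PCP to $\MCM$ (\cref{thm:mcm}), then from $\MCM$ to $\NCCM$, and then from $\NCCM$ to $\QVASSRL$ (\cref{thm:nccm-to-cvass}), that reachability for $\QVASSRL$ is $\NEXPTIME$-hard. It therefore suffices to verify that the construction described immediately above — forming the cross product $\cC$ of the PDA $P_m$ from \cref{lem:folklore} with the $\CVASS$ $\mach$ — is a correct polynomial-time reduction from $\QVASSRL$ reachability to $\QPVASS$ reachability.

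First I would spell out the cross product precisely: the states of $\cC$ are pairs $(p,q)$ with $p$ a state of $P_m$ and $q$ a state of $\mach$; a step of $\cC$ fires simultaneously a transition of $P_m$ (which manipulates the stack, and which carries the zero update vector on the counters) together with a rule of $\mach$ (which updates the continuous counters with some firing fraction $\alpha \in (0,1]$ and carries $\varepsilon$ on the stack). The stack alphabet and push/pop operations are inherited from $P_m$, and the counters and their updates from $\mach$. The source and target configurations $\bu$ and $\bv$ of $\cC$ combine $C$ with $c_{init}$ and $C'$ with $c_{fin}$, respectively, exactly as defined before the theorem.

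Next, correctness: every run of $\cC$ from $\bu$ to $\bv$ projects onto (i) a run of $P_m$ from $C$ to $C'$ and (ii) a run of $\mach$ from $c_{init}$ to $c_{fin}$ of the \emph{same} length, and conversely any pair of such runs of equal length can be interleaved step-by-step into a single run of $\cC$. By \cref{lem:folklore}, the run of $P_m$ from $C$ to $C'$ is the \emph{unique} such run and it has length exactly $m$. Hence a run of $\cC$ from $\bu$ to $\bv$ exists if and only if $\mach$ has a run of length exactly $m$ from $c_{init}$ to $c_{fin}$, i.e.\ if and only if the given $\QVASSRL$ instance is positive. Since $P_m$ has size polynomial in $\log m$ (which is the bit-length of the binary encoding of $m$), and the cross product only multiplies state spaces, $\cC$, $\bu$, $\bv$ can be produced in polynomial time; the construction moreover keeps the stack-height bounded and adds no counters, which is what will later be needed for \cref{main-result-finer}.

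Combining the two facts — $\QVASSRL$ reachability reduces in polynomial time to $\QPVASS$ reachability, and $\QVASSRL$ reachability is $\NEXPTIME$-hard — yields $\NEXPTIME$-hardness of $\QPVASS$ reachability. The only mildly delicate point is the synchronization argument: one must check that the stack discipline of $P_m$ cannot interact with the counter dynamics of $\mach$ to create spurious runs of a different length, and conversely that a run of $\mach$ of length exactly $m$ can always be carried along the unique stack-run of $P_m$. This is immediate from the product construction together with the exact run-length guarantee of \cref{lem:folklore}, so no genuinely new obstacle arises at this final step — the technical heavy lifting has all been done in the earlier reductions.
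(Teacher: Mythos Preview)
Your proposal is correct and follows essentially the same approach as the paper: the paper also concludes the hardness theorem by taking the cross product of the PDA $P_m$ from \cref{lem:folklore} with the $\CVASS$ $\mach$, and relies on the unique-run-of-length-$m$ property to equate reachability in the product with $m$-step reachability in $\mach$. In fact you spell out the synchronization argument in more detail than the paper does (the paper simply asserts ``By construction of $\cC$, $c_{init}$ can reach $c_{fin}$ in $\mach$ in $m$ steps iff $\bu$ can reach $\bv$ in $\cC$''), but the underlying reasoning is identical.
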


	\section{Coverability, number of counters and encodings}\label{sec:lower-bound-discussion}

The chain of reductions from reachability in $\MCM$ to reachability in $\QPVASS$ prove that the latter is $\NEXPTIME$-hard. 
The reduction from $\MCM$ to $\NCCM$ was accomplished by using 6 counters,
and the reduction from $\NCCM$ to $\QVASSRL$ used $2x+1$ counters where $x$ is the number of counters of the $\NCCM$ instance.
Finally, the reduction from $\QVASSRL$ to $\QPVASS$ did not add any new counters. It follows that the lower bound already holds for $\QPVASS$
of dimension 13.

We can go one step further. Similar to reachability in $\QVASSRL$
we can define coverability in $\QVASSRL$, where want to cover
some configuration in a given number of steps.
Let us inspect the $\QVASSRL$ $\mach$ that we constructed
in Section 6. We claim that 
\begin{quote}
	$G(c_{init})_0$ can reach $G(c_{fin})_{2m}$ in $3m$ steps iff $G(c_{init})_0$ can cover $G(c_{fin})_{2m}$ in $3m$ steps.
\end{quote}

The left-to-right implication is trivial. For the other direction, notice that in any run of $3m$ steps in $\mach$ starting from $G(c_{init})_0$, the value of $ctrl$ can be
increased by at most $2m$. Further, for every counter $x \neq ctrl$, we maintain the invariant $x + \bar{x} = 1$ throughout.
It then follows that the only way to cover $G(c_{fin})_{2m}$ in $3m$
steps is by actually reaching $G(c_{fin})_{2m}$.
Hence, coverability in $\QVASSRL$ is also $\NEXPTIME$-hard.
Since the reduction in \cref{subsec:cvass-to-cgvas} preserves coverability, we obtain:
\begin{theorem}
	The coverability problem for 13-dimensional $\QPVASS$ is $\NEXPTIME$-hard. 
\end{theorem}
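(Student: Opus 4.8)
The plan is to assemble the pieces already laid out: we have just argued that coverability in $\QVASSRL$ is $\NEXPTIME$-hard, so it suffices to push this hardness through a reduction to $\QPVASS$ that preserves coverability, and then to count the counters used along the whole reduction chain. Concretely, I would start from the $\QVASSRL$ instance $(\mach, c_{init}, c_{fin}, m)$ produced in \cref{sec:lower-bound-nccm-pvass}, for which $G(c_{init})_0$ covers $G(c_{fin})_{2m}$ in $3m$ steps if and only if it reaches $G(c_{fin})_{2m}$ in $3m$ steps (as noted, this equivalence holds because $ctrl$ rises by at most $2m$ over $3m$ steps and the invariant $x+\bar x = 1$ is maintained for every other counter $x$, so nothing strictly above $G(c_{fin})_{2m}$ is reachable in $3m$ steps).

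Next I would apply the reduction of \cref{subsec:cvass-to-cgvas}, which takes this $\QVASSRL$ instance to the $\QPVASS$ $\cC$ obtained as the cross product of the binary-counter PDA $P_m$ from \cref{lem:folklore} with $\mach$, together with configurations $\bu,\bv$. The point to check is that this reduction preserves coverability, i.e.\ $\bu$ covers $\bv$ in $\cC$ iff $c_{fin}$ can be covered from $c_{init}$ in exactly $m$ steps in $\mach$. This follows from \cref{lem:folklore}: $P_m$ admits exactly one run from $C$ to $C'$ and that run has length exactly $m$. Since the PDA component and the $\QVASS$ component evolve independently in the product (the counter values never influence the PDA moves, and vice versa), any run of $\cC$ that reaches a configuration with control state $\sta{\bv}$ and stack $\stack(\bv)$ must have length exactly $m$, and along it the $\QVASS$ component performs exactly $m$ steps of $\mach$. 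Hence covering $\bv$ in $\cC$ is equivalent to a length-$m$ run of $\mach$ reaching a valuation $\ge \val(c_{fin})$, i.e.\ covering $c_{fin}$ from $c_{init}$ in $m$ steps; together with the $\QVASSRL$ coverability hardness this yields $\NEXPTIME$-hardness of coverability in $\QPVASS$.

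Finally I would tally the dimension: the $\MCM \to \NCCM$ reduction of \cref{sec:lower-bound-mcm-nccm} uses $6$ counters, the $\NCCM \to \QVASSRL$ reduction of \cref{sec:lower-bound-nccm-pvass} uses $2x+1 = 13$ counters, and the $\QVASSRL \to \QPVASS$ step adds no counters (and $P_m$ has bounded stack-height), so $\cC$ is a $13$-dimensional $\QPVASS$ with bounded stack-height, as required. The main obstacle is the careful verification that the cross-product construction introduces no slack for coverability — that is, making fully precise that the fixed length and uniqueness of the $P_m$-run pins down the number of $\QVASS$ steps and that no stack symbol or control-state mismatch can be "covered around"; everything else is bookkeeping.
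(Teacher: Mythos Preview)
Your proposal is correct and follows essentially the same approach as the paper: you show that for the particular $\QVASSRL$ instance constructed, coverability and reachability coincide (via the $ctrl$ bound and the $x+\bar x=1$ invariant), then push this through the cross-product with $P_m$, and finally tally the counters as $6 \leadsto 2\cdot 6+1=13$. You even spell out explicitly why the cross-product step preserves coverability (uniqueness and exact length of the $P_m$ run forces the state/stack match and the step count), which the paper leaves as the one-line remark ``the reduction in \cref{subsec:cvass-to-cgvas} preserves coverability.''
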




Let us now consider the encoding of the numbers that we use. 
It can be easily verified that in the final $\QPVASS$ instance that we construct using our chain of reductions from Sections~\ref{sec:lower-bound-mcm-hardness} till~\ref{sec:lower-bound-nccm-pvass}, all the numbers are fixed constants, except for the numbers appearing in the initial and final configurations, which are encoded in binary. 
Hence, the above theorem holds for 13-dimensional $\QPVASS$ where the numbers are encoded in binary. We show that it is possible to strengthen this result to unary-encoded numbers at the cost of increasing the number of counters by a constant. 
More specifically, in Section~\ref{app:discussion} of the appendix, we present an alternate reduction, which given an instance of reachability in $\NCCM$ over
$x$ counters produces an instance of coverability in $\QVASSRL$ over $10x+25$ counters where all numbers are encoded in unary. 
(We have already discussed the idea of this reduction in the \nameref{ingre:v}). Since the proof in Section~\ref{sec:lower-bound-mcm-nccm} shows that reachability in $\NCCM$ is $\NEXPTIME$-hard already over 6 counters, 
this will prove that coverability in $\QVASSRL$ over 85 counters where all numbers are encoded in unary is also $\NEXPTIME$-hard. Since the reduction given in Section~\ref{subsec:cvass-to-cgvas} from $\QVASSRL$ to $\QPVASS$ produces a $\QPVASS$ of bounded stack-height, does not add any new counters and does not change the encodings of the numbers, we can now conclude the following theorem.

\begin{restatable}{theorem}{discussionhardness}\label{thm:discussion-hardness}
	The coverability problem for $\QPVASS$ is $\NEXPTIME$-hard, already over $\QPVASS$es of dimension 85, bounded stack-height, and when all numbers are encoded in unary.
\end{restatable}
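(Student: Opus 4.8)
The plan is to reuse the reduction chain from the binary-encoded case, but to prepend a preprocessing stage that fabricates every required fractional constant starting from a \emph{unary}-encoded configuration. I would argue in two moves. First, recalling from \cref{sec:lower-bound-mcm-nccm} that reachability in $\NCCM$ is already $\NEXPTIME$-hard over $6$ counters, I would give an alternate reduction from reachability in an $x$-counter $\NCCM$ to coverability in a $\QVASSRL$ with $10x+25$ counters in which \emph{every} number occurring in the instance is encoded in unary; instantiating $x=6$ yields $85$ counters. Second, I would feed the resulting $\QVASSRL$ instance through the reduction of \cref{subsec:cvass-to-cgvas} from $\QVASSRL$ to $\QPVASS$: it introduces no new counters, produces bounded stack-height, and does not alter the encoding of numbers; moreover, as observed in \cref{sec:lower-bound-discussion}, it preserves coverability. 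Chaining the two moves gives $\NEXPTIME$-hardness of coverability for $85$-dimensional $\QPVASS$ of bounded stack-height with unary-encoded numbers.

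The only new content is the alternate reduction, which I would build in two layers. The first layer is a ``precision-shift'' analogous to the exponential-precision trick: I would re-run the \cref{sec:lower-bound-nccm-pvass} construction but represent an $\NCCM$-value $v\in[0,1]$ inside a small window, maintaining the complementary invariant $x+\bar x=c$ for a fixed small dyadic rational $c$ instead of $c=1$, and replacing the unit increments of the controlling counter $ctrl$ by increments of $1/2^{K}$ with $K$ chosen large enough that the final value of $ctrl$ stays strictly below $1$. One checks that \cref{lem:ctrl} and \cref{thm:nccm-to-cvass} survive the rescaling verbatim, since all that matters is that the fraction forced on the $\geqone$ steps is still pinned to $1$. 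After this layer we have a run-length-constrained $\QVASS$ whose two boundary configurations contain only a handful of distinct constants, each of the form $p/2^{K}$ with $p$ and $K$ of \emph{polynomially many bits}; crucially, although $K$ is exponential as a number, it is only logarithmic in the $\MCM$ run length $m$.

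The second layer replaces these binary-encoded constants by an \emph{amplifier}: a polynomial-size gadget that, started from a configuration all of whose entries are small unary numbers, deposits the exact value $p/2^{K}$ into a designated counter. I would assemble it from the two gadgets already at hand: the halving/initialization gadget of \cref{lem:init}, iterated $K$ times while a ``round counter'' is ticked by the literal constant $1/K$ and required to reach $1$ at the end, first produces $1/2^{K}$; then the doubling gadget of \cref{lem:gadget-simulation}, iterated along the binary expansion of $p$ in Horner fashion (double, then optionally add a refilled copy of $1/2^{K}$), lifts $1/2^{K}$ to $p/2^{K}$. Since $\QVASS$ have no native zero tests, each such sub-gadget is realized inside the run-length/complementary-/controlling-counter framework of \cref{sec:lower-bound-nccm-pvass}; this is precisely why an amplifier needs a bounded pool of auxiliary counters, and why the final dimension is $10x+25$ rather than something closer to $2x+1$. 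I would prepend one amplifier per distinct constant of the source configuration (to set up the rescaled $\NCCM$ start) and append a corresponding block so that the target configuration of the final instance also has only small unary entries, and then argue that covering the target within the prescribed step count forces the amplifiers to run correctly and hence the core $\QVASS$ to faithfully simulate $\cC$ --- here, as in \cref{sec:lower-bound-discussion}, one uses that the controlling counter caps the number of completed firings, so coverage cannot be achieved by overshooting. The literal constant $1/K$ has denominator $K=O(\log m)$ and is therefore writable in unary in polynomial time, and by the precision-shift every other numeric constant in the final instance is either fixed or a boundary coordinate $\le 1$ with denominator $2^{O(\log m)}$, likewise polynomial in unary.

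I expect the main obstacle to be the bookkeeping in the second layer: showing that a chain of amplifiers placed before and after the core $\QVASS$ cannot let its auxiliary counters interfere with one another or with the simulation, re-proving soundness/completeness for the composed machine (a longer but essentially routine reprise of \cref{lem:ctrl,thm:nccm-to-cvass,lem:init,lem:gadget-simulation}), and --- most delicately --- pinning down the auxiliary-counter budget of each amplifier so that the grand total is exactly $10x+25$ while simultaneously confirming that the precision parameter $K$ stays $O(\log m)$, which is what keeps every introduced constant polynomial in unary. The rescaling that makes all denominators logarithmic in $m$ is the one genuinely new idea; everything else is a careful re-packaging of the gadgets of \cref{sec:lower-bound-mcm-nccm} and \cref{sec:lower-bound-nccm-pvass}.
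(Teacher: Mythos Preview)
Your plan is the paper's own proof. The ``precision-shift'' is exactly the paper's Stage~2 (add two auxiliary counters $b,\bar b$ whose sum is $1/2^k$ so that every firing fraction, hence every counter value including $ctrl$, is scaled into $[0,1]$), and your amplifiers built from halving/doubling and then passed through the \cref{sec:lower-bound-nccm-pvass} zero-test elimination are exactly the paper's Stages~3--4; your dimension formula $10x+25$ with $x=6$ coincides with the paper's $m+2(2m+5)$ with $m=2x+3=15$.

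One expository wrinkle to straighten out: you call the precision parameter $K$ ``exponential as a number'' and, in the same breath, ``logarithmic in the $\MCM$ run length $m$''. Since $m$ is given in binary, $\log m$ is polynomial in the input, so $K$ is polynomial, not exponential. This is not cosmetic: it is precisely why the paper's amplifier can simply \emph{unroll} the $K$ halvings (and the Horner-style doublings) into a straight-line control-flow path of length $O(K+nm)$, rather than using a $1/K$-ticked loop as you propose. Your loop would also work --- $1/K$ is indeed writable in unary for the same reason --- but the unrolled version makes the ``exactly one run covers the control value'' argument immediate and avoids an extra round-counter.
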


This hardness result is very strong, as it simultaneously achieves coverability, bounded stack, constant dimensions, and unary encodings. In contrast, in $\NEXPTIME$,
we can decide reachability of $\QPVASS$ over arbitary dimension, even when all the numbers are encoded in binary. 

Finally, 
the reduction from $\QVASSRL$ to $\QPVASS$ in~\cref{subsec:cvass-to-cgvas} only used the fact that for every $m$, (1)~there is a PDA of size $O(\log(m))$ which can ``count''
exactly till $m$ and (2)~we can take product of a PDA with a $\QVASS$.
For any model of computation that satisfies these two constraints, the corresponding reachability problem over continuous counters should also be $\NEXPTIME$-hard.
For instance, if we replace a stack in $\QPVASS$ with \emph{Boolean programs} to define \emph{Boolean programs with continuous counters} then their reachability and coverability
problems are also $\NEXPTIME$-hard. A similar result also holds when we replace the stack with a (discrete) one-counter machine which can only increment its counter and 
whose accepting condition is reaching a particular counter value given in binary. For both models, the reachability and coverability problems must also be in $\NEXPTIME$,
because the former can be converted into an exponentially bigger $\QVASS$, for which these problems are in $\NP$~\cite[Theorem 4.14]{blondinLogicsContinuousReachability2017}.

\section{Conclusion}

We have shown that the reachability problem for continuous pushdown VASS is $\NEXPTIME$-complete.
While our upper bound works for any arbitrary number of counters, our lower bound already
holds for the coverability problem for continuous pushdown VASS with constant number of counters, bounded stack-height
and when all numbers are encoded in unary. 

As part of future work, it might be interesting to study the complexity of coverability reachability for continuous pushdown VASS over low dimensions. It might also be interesting to study the coverability and reachability problems for extensions of continuous
pushdown VASS.
For instance, it is already known that reachability in continuous VASS in which the counters are allowed to be tested for zero is undecidable~\cite[Theorem 4.17]{blondinLogicsContinuousReachability2017}.
It might be interesting to see if this is also the case when the continuous counters are endowed with
operations such as resets or transfers. Finally, it would be nice to extend the decidability result
here
to other machine models, such as continuous VASS with higher-order stacks.
	
	\begin{acks}
		The authors are grateful to the anonymous reviewers for their helpful comments and for pointing out a small (and easily fixable) mistake in an earlier version.
		This research was sponsored in part by the \grantsponsor{DFG}{Deutsche Forschungsgemeinschaft}{https://www.dfg.de/} project \href{https://gepris.dfg.de/gepris/projekt/389792660}{\grantnum{DFG}{389792660}} TRR 248–CPEC.
		This project has received funding from the \grantsponsor{501100000781}{European Research Council (ERC)}{http://dx.doi.org/10.13039/501100000781} under
		the European Union’s Horizon 2020 research and innovation programme under grant
		agreement number \href{https://doi.org/10.3030/787367}{\grantnum{501100000781}{787367}} (PaVeS). 
		Funded by the European Union (ERC, FINABIS, \href{https://doi.org/10.3030/101077902}{\grantnum{501100000781}{101077902}})\marginpar{\includegraphics[width=1.3cm]{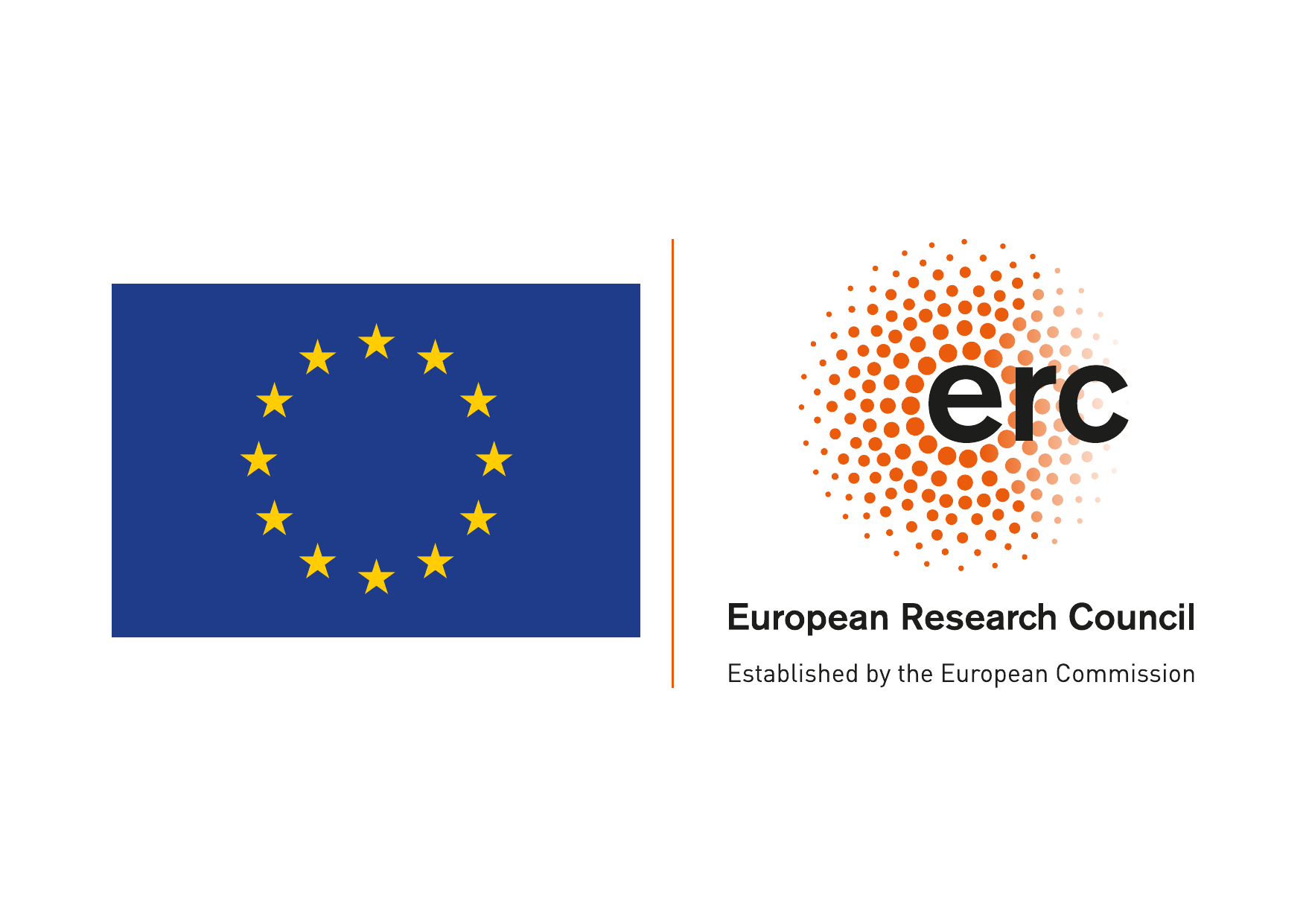}}. Views and opinions expressed are however those of the authors only and do not necessarily reflect those of the European Union or the European Research Council Executive Agency. Neither the European Union nor the granting authority can be held responsible for them.
	\end{acks}
	
	\label{beforebibliography}
	\newoutputstream{pages}
	\openoutputfile{main.pages.ctr}{pages}
	\addtostream{pages}{\getpagerefnumber{beforebibliography}}
	\closeoutputstream{pages}
	
	\bibliographystyle{ACM-Reference-Format}
	\bibliography{refs}


\begin{thebibliography}{38}


\ifx \showCODEN    \undefined \def \showCODEN     #1{\unskip}     \fi
\ifx \showDOI      \undefined \def \showDOI       #1{#1}\fi
\ifx \showISBNx    \undefined \def \showISBNx     #1{\unskip}     \fi
\ifx \showISBNxiii \undefined \def \showISBNxiii  #1{\unskip}     \fi
\ifx \showISSN     \undefined \def \showISSN      #1{\unskip}     \fi
\ifx \showLCCN     \undefined \def \showLCCN      #1{\unskip}     \fi
\ifx \shownote     \undefined \def \shownote      #1{#1}          \fi
\ifx \showarticletitle \undefined \def \showarticletitle #1{#1}   \fi
\ifx \showURL      \undefined \def \showURL       {\relax}        \fi
\providecommand\bibfield[2]{#2}
\providecommand\bibinfo[2]{#2}
\providecommand\natexlab[1]{#1}
\providecommand\showeprint[2][]{arXiv:#2}

\bibitem[Aiswarya et~al\mbox{.}(2022)]%
        {AiswaryaMS22}
\bibfield{author}{\bibinfo{person}{C. Aiswarya}, \bibinfo{person}{Soumodev
  Mal}, {and} \bibinfo{person}{Prakash Saivasan}.}
  \bibinfo{year}{2022}\natexlab{}.
\newblock \showarticletitle{On the Satisfiability of Context-free String
  Constraints with Subword-Ordering}. In \bibinfo{booktitle}{\emph{{LICS} '22:
  37th Annual {ACM/IEEE} Symposium on Logic in Computer Science, Haifa, Israel,
  August 2 - 5, 2022}}, \bibfield{editor}{\bibinfo{person}{Christel Baier}
  {and} \bibinfo{person}{Dana Fisman}} (Eds.). \bibinfo{publisher}{{ACM}},
  \bibinfo{pages}{6:1--6:13}.
\newblock
\urldef\tempurl%
\url{https://doi.org/10.1145/3531130.3533329}
\showDOI{\tempurl}


\bibitem[Alur et~al\mbox{.}(2005)]%
        {alurAnalysisRecursiveState}
\bibfield{author}{\bibinfo{person}{Rajeev Alur}, \bibinfo{person}{Michael
  Benedikt}, \bibinfo{person}{Kousha Etessami}, \bibinfo{person}{Patrice
  Godefroid}, \bibinfo{person}{Thomas~W. Reps}, {and} \bibinfo{person}{Mihalis
  Yannakakis}.} \bibinfo{year}{2005}\natexlab{}.
\newblock \showarticletitle{Analysis of recursive state machines}.
\newblock \bibinfo{journal}{\emph{{ACM} Trans. Program. Lang. Syst.}}
  \bibinfo{volume}{27}, \bibinfo{number}{4} (\bibinfo{year}{2005}),
  \bibinfo{pages}{786--818}.
\newblock
\urldef\tempurl%
\url{https://doi.org/10.1145/1075382.1075387}
\showDOI{\tempurl}


\bibitem[Alur et~al\mbox{.}(2012)]%
        {AlurTW12}
\bibfield{author}{\bibinfo{person}{Rajeev Alur}, \bibinfo{person}{Ashutosh
  Trivedi}, {and} \bibinfo{person}{Dominik Wojtczak}.}
  \bibinfo{year}{2012}\natexlab{}.
\newblock \showarticletitle{Optimal scheduling for constant-rate multi-mode
  systems}. In \bibinfo{booktitle}{\emph{Hybrid Systems: Computation and
  Control (part of {CPS} Week 2012), HSCC'12, Beijing, China, April 17-19,
  2012}}, \bibfield{editor}{\bibinfo{person}{Thao Dang} {and}
  \bibinfo{person}{Ian~M. Mitchell}} (Eds.). \bibinfo{publisher}{{ACM}},
  \bibinfo{pages}{75--84}.
\newblock
\urldef\tempurl%
\url{https://doi.org/10.1145/2185632.2185647}
\showDOI{\tempurl}


\bibitem[Atig et~al\mbox{.}(2009)]%
        {ABQ09}
\bibfield{author}{\bibinfo{person}{Mohamed~Faouzi Atig}, \bibinfo{person}{Ahmed
  Bouajjani}, {and} \bibinfo{person}{Shaz Qadeer}.}
  \bibinfo{year}{2009}\natexlab{}.
\newblock \showarticletitle{Context-Bounded Analysis for Concurrent Programs
  with Dynamic Creation of Threads}. In \bibinfo{booktitle}{\emph{Tools and
  Algorithms for the Construction and Analysis of Systems, 15th International
  Conference, {TACAS} 2009, Held as Part of the Joint European Conferences on
  Theory and Practice of Software, {ETAPS} 2009, York, UK, March 22-29, 2009.
  Proceedings}} \emph{(\bibinfo{series}{Lecture Notes in Computer Science},
  Vol.~\bibinfo{volume}{5505})}, \bibfield{editor}{\bibinfo{person}{Stefan
  Kowalewski} {and} \bibinfo{person}{Anna Philippou}} (Eds.).
  \bibinfo{publisher}{Springer}, \bibinfo{pages}{107--123}.
\newblock
\urldef\tempurl%
\url{https://doi.org/10.1007/978-3-642-00768-2\_11}
\showDOI{\tempurl}


\bibitem[Atig and Ganty(2011)]%
        {atigApproximatingPetriNet2011}
\bibfield{author}{\bibinfo{person}{Mohamed~Faouzi Atig} {and}
  \bibinfo{person}{Pierre Ganty}.} \bibinfo{year}{2011}\natexlab{}.
\newblock \showarticletitle{Approximating {{Petri Net Reachability Along
  Context-free Traces}}}. In \bibinfo{booktitle}{\emph{{{IARCS Annual
  Conference}} on {{Foundations}} of {{Software Technology}} and {{Theoretical
  Computer Science}} ({{FSTTCS}} 2011)}} \emph{(\bibinfo{series}{Leibniz
  {{International Proceedings}} in {{Informatics}} ({{LIPIcs}})},
  Vol.~\bibinfo{volume}{13})}, \bibfield{editor}{\bibinfo{person}{Supratik
  Chakraborty} {and} \bibinfo{person}{Amit Kumar}} (Eds.).
  \bibinfo{publisher}{{Schloss Dagstuhl\textendash Leibniz-Zentrum fuer
  Informatik}}, \bibinfo{address}{{Dagstuhl, Germany}},
  \bibinfo{pages}{152--163}.
\newblock
\showISBNx{978-3-939897-34-7}
\showISSN{1868-8969}
\urldef\tempurl%
\url{https://doi.org/10.4230/LIPIcs.FSTTCS.2011.152}
\showDOI{\tempurl}


\bibitem[Baumann et~al\mbox{.}(2023)]%
        {DBLP:journals/pacmpl/BaumannGMTZ23}
\bibfield{author}{\bibinfo{person}{Pascal Baumann}, \bibinfo{person}{Moses
  Ganardi}, \bibinfo{person}{Rupak Majumdar}, \bibinfo{person}{Ramanathan~S.
  Thinniyam}, {and} \bibinfo{person}{Georg Zetzsche}.}
  \bibinfo{year}{2023}\natexlab{}.
\newblock \showarticletitle{Context-Bounded Verification of Context-Free
  Specifications}.
\newblock \bibinfo{journal}{\emph{Proc. {ACM} Program. Lang.}}
  \bibinfo{volume}{7}, \bibinfo{number}{{POPL}} (\bibinfo{year}{2023}),
  \bibinfo{pages}{2141--2170}.
\newblock
\urldef\tempurl%
\url{https://doi.org/10.1145/3571266}
\showDOI{\tempurl}


\bibitem[Berstel(1979)]%
        {Berstel1979}
\bibfield{author}{\bibinfo{person}{Jean Berstel}.}
  \bibinfo{year}{1979}\natexlab{}.
\newblock \bibinfo{booktitle}{\emph{Transductions and context-free languages}}.
\newblock \bibinfo{publisher}{Teubner}, \bibinfo{address}{Stuttgart}.
\newblock


\bibitem[Blondin et~al\mbox{.}(2016)]%
        {blondinApproachingCoverabilityProblem2016}
\bibfield{author}{\bibinfo{person}{Michael Blondin}, \bibinfo{person}{Alain
  Finkel}, \bibinfo{person}{Christoph Haase}, {and} \bibinfo{person}{Serge
  Haddad}.} \bibinfo{year}{2016}\natexlab{}.
\newblock \showarticletitle{Approaching the {{Coverability Problem
  Continuously}}}. In \bibinfo{booktitle}{\emph{Tools and {{Algorithms}} for
  the {{Construction}} and {{Analysis}} of {{Systems}}}}
  \emph{(\bibinfo{series}{Lecture {{Notes}} in {{Computer Science}}})},
  \bibfield{editor}{\bibinfo{person}{Marsha Chechik} {and}
  \bibinfo{person}{Jean-Fran{\c c}ois Raskin}} (Eds.).
  \bibinfo{publisher}{{Springer}}, \bibinfo{address}{{Berlin, Heidelberg}},
  \bibinfo{pages}{480--496}.
\newblock
\urldef\tempurl%
\url{https://doi.org/10.1007/978-3-662-49674-9_28}
\showDOI{\tempurl}


\bibitem[Blondin and Haase(2017)]%
        {blondinLogicsContinuousReachability2017}
\bibfield{author}{\bibinfo{person}{Michael Blondin} {and}
  \bibinfo{person}{Christoph Haase}.} \bibinfo{year}{2017}\natexlab{}.
\newblock \showarticletitle{Logics for Continuous Reachability in {{Petri}}
  Nets and Vector Addition Systems with States}. In
  \bibinfo{booktitle}{\emph{2017 32nd {{Annual ACM}}/{{IEEE Symposium}} on
  {{Logic}} in {{Computer Science}} ({{LICS}})}}. \bibinfo{publisher}{{IEEE}},
  \bibinfo{address}{{Reykjavik, Iceland}}, \bibinfo{pages}{1--12}.
\newblock
\showISBNx{978-1-5090-3018-7}
\urldef\tempurl%
\url{https://doi.org/10.1109/LICS.2017.8005068}
\showDOI{\tempurl}


\bibitem[Chaudhuri(2008)]%
        {chaudhuriSubcubicAlgorithmsRecursive}
\bibfield{author}{\bibinfo{person}{Swarat Chaudhuri}.}
  \bibinfo{year}{2008}\natexlab{}.
\newblock \showarticletitle{Subcubic algorithms for recursive state machines}.
  In \bibinfo{booktitle}{\emph{Proceedings of the 35th {ACM} {SIGPLAN-SIGACT}
  Symposium on Principles of Programming Languages, {POPL} 2008, San Francisco,
  California, USA, January 7-12, 2008}},
  \bibfield{editor}{\bibinfo{person}{George~C. Necula} {and}
  \bibinfo{person}{Philip Wadler}} (Eds.). \bibinfo{publisher}{{ACM}},
  \bibinfo{pages}{159--169}.
\newblock
\urldef\tempurl%
\url{https://doi.org/10.1145/1328438.1328460}
\showDOI{\tempurl}


\bibitem[Col{\'{o}}n et~al\mbox{.}(2003)]%
        {ColonSankaranarayananSipmaCAV03}
\bibfield{author}{\bibinfo{person}{Michael Col{\'{o}}n},
  \bibinfo{person}{Sriram Sankaranarayanan}, {and} \bibinfo{person}{Henny
  Sipma}.} \bibinfo{year}{2003}\natexlab{}.
\newblock \showarticletitle{Linear Invariant Generation Using Non-linear
  Constraint Solving}. In \bibinfo{booktitle}{\emph{Computer Aided
  Verification, 15th International Conference, {CAV} 2003, Boulder, CO, USA,
  July 8-12, 2003, Proceedings}} \emph{(\bibinfo{series}{Lecture Notes in
  Computer Science}, Vol.~\bibinfo{volume}{2725})},
  \bibfield{editor}{\bibinfo{person}{Warren A.~Hunt Jr.} {and}
  \bibinfo{person}{Fabio Somenzi}} (Eds.). \bibinfo{publisher}{Springer},
  \bibinfo{pages}{420--432}.
\newblock
\urldef\tempurl%
\url{https://doi.org/10.1007/978-3-540-45069-6\_39}
\showDOI{\tempurl}


\bibitem[Czerwi{\'n}ski and Orlikowski(2022)]%
        {czerwinskiReachabilityVectorAddition2022}
\bibfield{author}{\bibinfo{person}{Wojciech Czerwi{\'n}ski} {and}
  \bibinfo{person}{{\L}ukasz Orlikowski}.} \bibinfo{year}{2022}\natexlab{}.
\newblock \showarticletitle{Reachability in {{Vector Addition Systems}} Is
  {{Ackermann-complete}}}. In \bibinfo{booktitle}{\emph{2021 {{IEEE}} 62nd
  {{Annual Symposium}} on {{Foundations}} of {{Computer Science}} ({{FOCS}})}}.
  \bibinfo{pages}{1229--1240}.
\newblock
\showISSN{2575-8454}
\urldef\tempurl%
\url{https://doi.org/10.1109/FOCS52979.2021.00120}
\showDOI{\tempurl}


\bibitem[David(1987)]%
        {david1987continuous}
\bibfield{author}{\bibinfo{person}{Ren{\'e} David}.}
  \bibinfo{year}{1987}\natexlab{}.
\newblock \showarticletitle{Continuous {P}etri nets}. In
  \bibinfo{booktitle}{\emph{Proc. 8th European Workshop on Appli. \& Theory of
  Petri nets, 1987}}.
\newblock


\bibitem[Englert et~al\mbox{.}(2021)]%
        {englertLowerBoundCoverability2021}
\bibfield{author}{\bibinfo{person}{Matthias Englert}, \bibinfo{person}{Piotr
  Hofman}, \bibinfo{person}{S{\l}awomir Lasota}, \bibinfo{person}{Ranko
  Lazi{\'c}}, \bibinfo{person}{J{\'e}r{\^o}me Leroux}, {and}
  \bibinfo{person}{Juliusz Straszy{\'n}ski}.} \bibinfo{year}{2021}\natexlab{}.
\newblock \showarticletitle{A Lower Bound for the Coverability Problem in
  Acyclic Pushdown {{VAS}}}.
\newblock \bibinfo{journal}{\emph{Inform. Process. Lett.}}
  \bibinfo{volume}{167} (\bibinfo{date}{April} \bibinfo{year}{2021}),
  \bibinfo{pages}{106079}.
\newblock
\showISSN{0020-0190}
\urldef\tempurl%
\url{https://doi.org/10.1016/j.ipl.2020.106079}
\showDOI{\tempurl}


\bibitem[Esparza et~al\mbox{.}(2011)]%
        {DBLP:journals/ipl/EsparzaGKL11}
\bibfield{author}{\bibinfo{person}{Javier Esparza}, \bibinfo{person}{Pierre
  Ganty}, \bibinfo{person}{Stefan Kiefer}, {and} \bibinfo{person}{Michael
  Luttenberger}.} \bibinfo{year}{2011}\natexlab{}.
\newblock \showarticletitle{Parikh's theorem: {A} simple and direct automaton
  construction}.
\newblock \bibinfo{journal}{\emph{Inf. Process. Lett.}} \bibinfo{volume}{111},
  \bibinfo{number}{12} (\bibinfo{year}{2011}), \bibinfo{pages}{614--619}.
\newblock
\urldef\tempurl%
\url{https://doi.org/10.1016/j.ipl.2011.03.019}
\showDOI{\tempurl}


\bibitem[Fraca and Haddad(2015)]%
        {fraca2015complexity}
\bibfield{author}{\bibinfo{person}{Est{\'\i}baliz Fraca} {and}
  \bibinfo{person}{Serge Haddad}.} \bibinfo{year}{2015}\natexlab{}.
\newblock \showarticletitle{Complexity analysis of continuous {P}etri nets}.
\newblock \bibinfo{journal}{\emph{Fundamenta informaticae}}
  \bibinfo{volume}{137}, \bibinfo{number}{1} (\bibinfo{year}{2015}),
  \bibinfo{pages}{1--28}.
\newblock


\bibitem[Ganardi et~al\mbox{.}(2022)]%
        {ganardiReachabilityBidirectedPushdown2022}
\bibfield{author}{\bibinfo{person}{Moses Ganardi}, \bibinfo{person}{Rupak
  Majumdar}, \bibinfo{person}{Andreas Pavlogiannis}, \bibinfo{person}{Lia
  Sch{\"u}tze}, {and} \bibinfo{person}{Georg Zetzsche}.}
  \bibinfo{year}{2022}\natexlab{}.
\newblock \showarticletitle{Reachability in {{Bidirected Pushdown VASS}}}. In
  \bibinfo{booktitle}{\emph{49th {{International Colloquium}} on {{Automata}},
  {{Languages}}, and {{Programming}} ({{ICALP}} 2022)}}
  \emph{(\bibinfo{series}{Leibniz {{International Proceedings}} in
  {{Informatics}} ({{LIPIcs}})}, Vol.~\bibinfo{volume}{229})},
  \bibfield{editor}{\bibinfo{person}{Miko{\l}aj Boja{\'n}czyk},
  \bibinfo{person}{Emanuela Merelli}, {and} \bibinfo{person}{David~P.
  Woodruff}} (Eds.). \bibinfo{publisher}{{Schloss Dagstuhl \textendash{}
  Leibniz-Zentrum f\"ur Informatik}}, \bibinfo{address}{{Dagstuhl, Germany}},
  \bibinfo{pages}{124:1--124:20}.
\newblock
\showISBNx{978-3-95977-235-8}
\showISSN{1868-8969}
\urldef\tempurl%
\url{https://doi.org/10.4230/LIPIcs.ICALP.2022.124}
\showDOI{\tempurl}


\bibitem[Hack(1976)]%
        {hack1976decidability}
\bibfield{author}{\bibinfo{person}{Michel Henri~Th{\'e}odore Hack}.}
  \bibinfo{year}{1976}\natexlab{}.
\newblock \emph{\bibinfo{title}{Decidability questions for Petri Nets.}}
\newblock \bibinfo{thesistype}{Ph.\,D. Dissertation}.
  \bibinfo{school}{Massachusetts Institute of Technology}.
\newblock


\bibitem[Karp and Miller(1969)]%
        {karpParallelProgramSchemata1969}
\bibfield{author}{\bibinfo{person}{Richard~M. Karp} {and}
  \bibinfo{person}{Raymond~E. Miller}.} \bibinfo{year}{1969}\natexlab{}.
\newblock \showarticletitle{Parallel Program Schemata}.
\newblock \bibinfo{journal}{\emph{J. Comput. System Sci.}} \bibinfo{volume}{3},
  \bibinfo{number}{2} (\bibinfo{date}{May} \bibinfo{year}{1969}),
  \bibinfo{pages}{147--195}.
\newblock
\showISSN{0022-0000}
\urldef\tempurl%
\url{https://doi.org/10.1016/S0022-0000(69)80011-5}
\showDOI{\tempurl}


\bibitem[Kjelstr{\o}m and Pavlogiannis(2022)]%
        {Pavlogiannis}
\bibfield{author}{\bibinfo{person}{Adam~Husted Kjelstr{\o}m} {and}
  \bibinfo{person}{Andreas Pavlogiannis}.} \bibinfo{year}{2022}\natexlab{}.
\newblock \showarticletitle{The decidability and complexity of interleaved
  bidirected Dyck reachability}.
\newblock \bibinfo{journal}{\emph{Proc. {ACM} Program. Lang.}}
  \bibinfo{volume}{6}, \bibinfo{number}{{POPL}} (\bibinfo{year}{2022}),
  \bibinfo{pages}{1--26}.
\newblock
\urldef\tempurl%
\url{https://doi.org/10.1145/3498673}
\showDOI{\tempurl}


\bibitem[Leroux(2022)]%
        {leroux2022reachability}
\bibfield{author}{\bibinfo{person}{J{\'e}r{\^o}me Leroux}.}
  \bibinfo{year}{2022}\natexlab{}.
\newblock \showarticletitle{The reachability problem for Petri nets is not
  primitive recursive}. In \bibinfo{booktitle}{\emph{2021 IEEE 62nd Annual
  Symposium on Foundations of Computer Science (FOCS)}}. IEEE,
  \bibinfo{pages}{1241--1252}.
\newblock


\bibitem[Leroux and Schmitz(2019)]%
        {DBLP:conf/lics/LerouxS19}
\bibfield{author}{\bibinfo{person}{J{\'{e}}r{\^{o}}me Leroux} {and}
  \bibinfo{person}{Sylvain Schmitz}.} \bibinfo{year}{2019}\natexlab{}.
\newblock \showarticletitle{Reachability in Vector Addition Systems is
  Primitive-Recursive in Fixed Dimension}. In \bibinfo{booktitle}{\emph{34th
  Annual {ACM/IEEE} Symposium on Logic in Computer Science, {LICS} 2019,
  Vancouver, BC, Canada, June 24-27, 2019}}. \bibinfo{publisher}{{IEEE}},
  \bibinfo{pages}{1--13}.
\newblock
\urldef\tempurl%
\url{https://doi.org/10.1109/LICS.2019.8785796}
\showDOI{\tempurl}


\bibitem[Leroux et~al\mbox{.}(2015)]%
        {lerouxCoverabilityProblemPushdown2015}
\bibfield{author}{\bibinfo{person}{J{\'e}r{\^o}me Leroux},
  \bibinfo{person}{Gr{\'e}goire Sutre}, {and} \bibinfo{person}{Patrick
  Totzke}.} \bibinfo{year}{2015}\natexlab{}.
\newblock \showarticletitle{On the {{Coverability Problem}} for {{Pushdown
  Vector Addition Systems}} in {{One Dimension}}}.
\newblock In \bibinfo{booktitle}{\emph{Automata, {{Languages}}, and
  {{Programming}}}}, \bibfield{editor}{\bibinfo{person}{Magn{\'u}s~M.
  Halld{\'o}rsson}, \bibinfo{person}{Kazuo Iwama}, \bibinfo{person}{Naoki
  Kobayashi}, {and} \bibinfo{person}{Bettina Speckmann}} (Eds.).
  Vol.~\bibinfo{volume}{9135}. \bibinfo{publisher}{{Springer Berlin
  Heidelberg}}, \bibinfo{address}{{Berlin, Heidelberg}},
  \bibinfo{pages}{324--336}.
\newblock
\showISBNx{978-3-662-47665-9 978-3-662-47666-6}
\urldef\tempurl%
\url{https://doi.org/10.1007/978-3-662-47666-6_26}
\showDOI{\tempurl}


\bibitem[Li et~al\mbox{.}(2021)]%
        {LiReps}
\bibfield{author}{\bibinfo{person}{Yuanbo Li}, \bibinfo{person}{Qirun Zhang},
  {and} \bibinfo{person}{Thomas~W. Reps}.} \bibinfo{year}{2021}\natexlab{}.
\newblock \showarticletitle{On the complexity of bidirected interleaved
  Dyck-reachability}.
\newblock \bibinfo{journal}{\emph{Proc. {ACM} Program. Lang.}}
  \bibinfo{volume}{5}, \bibinfo{number}{{POPL}} (\bibinfo{year}{2021}),
  \bibinfo{pages}{1--28}.
\newblock
\urldef\tempurl%
\url{https://doi.org/10.1145/3434340}
\showDOI{\tempurl}


\bibitem[Lohrey et~al\mbox{.}(2022)]%
        {DBLP:conf/mfcs/LohreyRZ22}
\bibfield{author}{\bibinfo{person}{Markus Lohrey}, \bibinfo{person}{Andreas
  Rosowski}, {and} \bibinfo{person}{Georg Zetzsche}.}
  \bibinfo{year}{2022}\natexlab{}.
\newblock \showarticletitle{Membership Problems in Finite Groups}. In
  \bibinfo{booktitle}{\emph{47th International Symposium on Mathematical
  Foundations of Computer Science, {MFCS} 2022, August 22-26, 2022, Vienna,
  Austria}} \emph{(\bibinfo{series}{LIPIcs}, Vol.~\bibinfo{volume}{241})},
  \bibfield{editor}{\bibinfo{person}{Stefan Szeider}, \bibinfo{person}{Robert
  Ganian}, {and} \bibinfo{person}{Alexandra Silva}} (Eds.).
  \bibinfo{publisher}{Schloss Dagstuhl - Leibniz-Zentrum f{\"{u}}r Informatik},
  \bibinfo{pages}{71:1--71:16}.
\newblock
\urldef\tempurl%
\url{https://doi.org/10.4230/LIPICS.MFCS.2022.71}
\showDOI{\tempurl}


\bibitem[Papadimitriou(2007)]%
        {papaComplexityBook2007}
\bibfield{author}{\bibinfo{person}{Christos~H. Papadimitriou}.}
  \bibinfo{year}{2007}\natexlab{}.
\newblock \bibinfo{booktitle}{\emph{Computational complexity}}.
\newblock \bibinfo{publisher}{Academic Internet Publ.}
\newblock
\showISBNx{978-1-4288-1409-7}


\bibitem[Rackoff(1978)]%
        {CovPetrinets}
\bibfield{author}{\bibinfo{person}{Charles Rackoff}.}
  \bibinfo{year}{1978}\natexlab{}.
\newblock \showarticletitle{The Covering and Boundedness Problems for Vector
  Addition Systems}.
\newblock \bibinfo{journal}{\emph{Theor. Comput. Sci.}}  \bibinfo{volume}{6}
  (\bibinfo{year}{1978}), \bibinfo{pages}{223--231}.
\newblock
\urldef\tempurl%
\url{https://doi.org/10.1016/0304-3975(78)90036-1}
\showDOI{\tempurl}


\bibitem[Reinhardt(2008)]%
        {reinhardt2008reachability}
\bibfield{author}{\bibinfo{person}{Klaus Reinhardt}.}
  \bibinfo{year}{2008}\natexlab{}.
\newblock \showarticletitle{Reachability in {P}etri nets with inhibitor arcs}.
\newblock \bibinfo{journal}{\emph{Electronic Notes in Theoretical Computer
  Science}}  \bibinfo{volume}{223} (\bibinfo{year}{2008}),
  \bibinfo{pages}{239--264}.
\newblock


\bibitem[Reps et~al\mbox{.}(1995)]%
        {RHS95}
\bibfield{author}{\bibinfo{person}{Thomas~W. Reps}, \bibinfo{person}{Susan
  Horwitz}, {and} \bibinfo{person}{Shmuel Sagiv}.}
  \bibinfo{year}{1995}\natexlab{}.
\newblock \showarticletitle{Precise Interprocedural Dataflow Analysis via Graph
  Reachability}. In \bibinfo{booktitle}{\emph{Conference Record of POPL'95:
  22nd {ACM} {SIGPLAN-SIGACT} Symposium on Principles of Programming Languages,
  San Francisco, California, USA, January 23-25, 1995}},
  \bibfield{editor}{\bibinfo{person}{Ron~K. Cytron} {and}
  \bibinfo{person}{Peter Lee}} (Eds.). \bibinfo{publisher}{{ACM} Press},
  \bibinfo{pages}{49--61}.
\newblock
\urldef\tempurl%
\url{https://doi.org/10.1145/199448.199462}
\showDOI{\tempurl}


\bibitem[Reps et~al\mbox{.}(2016)]%
        {DBLP:conf/popl/RepsTP16}
\bibfield{author}{\bibinfo{person}{Thomas~W. Reps}, \bibinfo{person}{Emma
  Turetsky}, {and} \bibinfo{person}{Prathmesh Prabhu}.}
  \bibinfo{year}{2016}\natexlab{}.
\newblock \showarticletitle{Newtonian program analysis via tensor product}. In
  \bibinfo{booktitle}{\emph{Proceedings of the 43rd Annual {ACM}
  {SIGPLAN-SIGACT} Symposium on Principles of Programming Languages, {POPL}
  2016, St. Petersburg, FL, USA, January 20 - 22, 2016}},
  \bibfield{editor}{\bibinfo{person}{Rastislav Bod{\'{\i}}k} {and}
  \bibinfo{person}{Rupak Majumdar}} (Eds.). \bibinfo{publisher}{{ACM}},
  \bibinfo{pages}{663--677}.
\newblock
\urldef\tempurl%
\url{https://doi.org/10.1145/2837614.2837659}
\showDOI{\tempurl}


\bibitem[Rosenberg(1967)]%
        {rosenberg1967machine}
\bibfield{author}{\bibinfo{person}{Arnold~L Rosenberg}.}
  \bibinfo{year}{1967}\natexlab{}.
\newblock \showarticletitle{A machine realization of the linear context-free
  languages}.
\newblock \bibinfo{journal}{\emph{Information and Control}}
  \bibinfo{volume}{10}, \bibinfo{number}{2} (\bibinfo{year}{1967}),
  \bibinfo{pages}{175--188}.
\newblock
\urldef\tempurl%
\url{https://doi.org/10.1016/S0019-9958(67)80006-8}
\showDOI{\tempurl}


\bibitem[Rosier and Yen(1986)]%
        {RosierY86}
\bibfield{author}{\bibinfo{person}{Louis~E. Rosier} {and}
  \bibinfo{person}{Hsu{-}Chun Yen}.} \bibinfo{year}{1986}\natexlab{}.
\newblock \showarticletitle{A Multiparameter Analysis of the Boundedness
  Problem for Vector Addition Systems}.
\newblock \bibinfo{journal}{\emph{J. Comput. Syst. Sci.}} \bibinfo{volume}{32},
  \bibinfo{number}{1} (\bibinfo{year}{1986}), \bibinfo{pages}{105--135}.
\newblock
\urldef\tempurl%
\url{https://doi.org/10.1016/0022-0000(86)90006-1}
\showDOI{\tempurl}


\bibitem[Schmitz and Zetzsche(2019)]%
        {schmitzCoverabilityUndecidableOneDimensional2019a}
\bibfield{author}{\bibinfo{person}{Sylvain Schmitz} {and}
  \bibinfo{person}{Georg Zetzsche}.} \bibinfo{year}{2019}\natexlab{}.
\newblock \showarticletitle{Coverability {{Is Undecidable}} in
  {{One-Dimensional Pushdown Vector Addition Systems}} with {{Resets}}}. In
  \bibinfo{booktitle}{\emph{13th {{International Conference}} on {{Reachability
  Problems}} ({{RP}} 2019)}} \emph{(\bibinfo{series}{Lecture {{Notes}} in
  {{Computer Science}}}, Vol.~\bibinfo{volume}{11674})},
  \bibfield{editor}{\bibinfo{person}{Emmanuel Filiot},
  \bibinfo{person}{Rapha{\"e}l Jungers}, {and} \bibinfo{person}{Igor Potapov}}
  (Eds.). \bibinfo{publisher}{{Springer International Publishing}},
  \bibinfo{address}{{Brussels, Belgium}}, \bibinfo{pages}{193--201}.
\newblock
\urldef\tempurl%
\url{https://doi.org/10.1007/978-3-030-30806-3_15}
\showDOI{\tempurl}


\bibitem[Sipser(2012)]%
        {sipserIntroductionTheoryComputation2012}
\bibfield{author}{\bibinfo{person}{Michael Sipser}.}
  \bibinfo{year}{2012}\natexlab{}.
\newblock \bibinfo{booktitle}{\emph{Introduction to the {{Theory}} of
  {{Computation}}}}.
\newblock \bibinfo{publisher}{{Cengage Learning}}.
\newblock
\showISBNx{978-1-133-18781-3}
\showLCCN{2012938665}


\bibitem[Sontag(1985)]%
        {sontagRealAdditionPolynomial1985}
\bibfield{author}{\bibinfo{person}{Eduardo~D. Sontag}.}
  \bibinfo{year}{1985}\natexlab{}.
\newblock \showarticletitle{Real Addition and the Polynomial Hierarchy}.
\newblock \bibinfo{journal}{\emph{Inform. Process. Lett.}}
  \bibinfo{volume}{20}, \bibinfo{number}{3} (\bibinfo{date}{April}
  \bibinfo{year}{1985}), \bibinfo{pages}{115--120}.
\newblock
\showISSN{00200190}
\urldef\tempurl%
\url{https://doi.org/10.1016/0020-0190(85)90076-6}
\showDOI{\tempurl}


\bibitem[Srivastava et~al\mbox{.}(2010)]%
        {SrivastavaGulwaniFoster}
\bibfield{author}{\bibinfo{person}{Saurabh Srivastava}, \bibinfo{person}{Sumit
  Gulwani}, {and} \bibinfo{person}{Jeffrey~S. Foster}.}
  \bibinfo{year}{2010}\natexlab{}.
\newblock \showarticletitle{From program verification to program synthesis}. In
  \bibinfo{booktitle}{\emph{Proceedings of the 37th {ACM} {SIGPLAN-SIGACT}
  Symposium on Principles of Programming Languages, {POPL} 2010, Madrid, Spain,
  January 17-23, 2010}}, \bibfield{editor}{\bibinfo{person}{Manuel~V.
  Hermenegildo} {and} \bibinfo{person}{Jens Palsberg}} (Eds.).
  \bibinfo{publisher}{{ACM}}, \bibinfo{pages}{313--326}.
\newblock
\urldef\tempurl%
\url{https://doi.org/10.1145/1706299.1706337}
\showDOI{\tempurl}


\bibitem[Verma et~al\mbox{.}(2005)]%
        {VermaSS05}
\bibfield{author}{\bibinfo{person}{Kumar~Neeraj Verma}, \bibinfo{person}{Helmut
  Seidl}, {and} \bibinfo{person}{Thomas Schwentick}.}
  \bibinfo{year}{2005}\natexlab{}.
\newblock \showarticletitle{On the Complexity of Equational Horn Clauses}. In
  \bibinfo{booktitle}{\emph{Automated Deduction - CADE-20, 20th International
  Conference on Automated Deduction, Tallinn, Estonia, July 22-27, 2005,
  Proceedings}} \emph{(\bibinfo{series}{Lecture Notes in Computer Science},
  Vol.~\bibinfo{volume}{3632})}, \bibfield{editor}{\bibinfo{person}{Robert
  Nieuwenhuis}} (Ed.). \bibinfo{publisher}{Springer},
  \bibinfo{pages}{337--352}.
\newblock
\urldef\tempurl%
\url{https://doi.org/10.1007/11532231\_25}
\showDOI{\tempurl}


\bibitem[Yannakakis(1990)]%
        {Yannakakis90}
\bibfield{author}{\bibinfo{person}{Mihalis Yannakakis}.}
  \bibinfo{year}{1990}\natexlab{}.
\newblock \showarticletitle{Graph-Theoretic Methods in Database Theory}. In
  \bibinfo{booktitle}{\emph{Proceedings of the Ninth {ACM}
  {SIGACT-SIGMOD-SIGART} Symposium on Principles of Database Systems, April
  2-4, 1990, Nashville, Tennessee, {USA}}}. \bibinfo{publisher}{{ACM} Press},
  \bibinfo{pages}{230--242}.
\newblock


\end{thebibliography}
	
		\newpage
		\appendix

\section{Proofs for Section~\ref{sec:upper-bound}}\label{sec:appendix-upper-bound}
\subsection{Proof details of \cref{three-runs} [Proposition 4.5, \cite{blondinLogicsContinuousReachability2017}]}
\label{sec:appendix-upper-bound-three-runs}

Let us first sketch the proof. Recall that a $d$-dimensional $\CVASS$ $\cC=(Q,T,\Delta)$ has a finite set of states $Q$ and a finite set of vectors $T \subseteq \integ^d$ and transitions $\Delta \subseteq Q \times T \times Q$ which can be fired fractionally. 
	Proposition~4.5 in \cite{blondinLogicsContinuousReachability2017} states the corresponding result for $\CVASS$ with a finite state set, but actually works for an infinite state set and infinitely many transitions. This is because the proof works by concatenation and repeated fractional firing of the three runs. The lemma in \cite{blondinLogicsContinuousReachability2017} contains a fourth condition on the supports of the three runs which we express as letter-uniformity. We can reduce our case of the letter-uniform semigroup $K$ to an infinite $\CVASS$ $\cC_{\infty}=(Q_{\infty}, T, \Delta_{\infty})$ where $Q_{\infty}=\{ q_0\} \cup \bigcup_{w \in K} Q_w$ contains $|w|$ many states $Q_w =\{q_{w,1},\ldots, q_{w,n}\}$ for each $w=w_1\ldots w_n \in K$ along with a single distinguished state $q_0$, $\Delta_{\infty}$ comprises of transitions $q_0 \step{}{w_1} q_{w,1}, q_{w,1} \step{}{w_2} q_{w,2}, \ldots, q_{w,n} \step{}{w_n} q_0$ on  $\{q_0\} \cup Q_w$ for each $Q_w$, and $T=\Gamma$ where $K \subseteq \Gamma^*$ . Clearly there exists a cyclic run $(q_0,\mvec u) \step{\ratplus}{w} (q_0, \mvec v)$ iff $\mvec{u}\step{\ratplus}{K}\mvec{v}$. Applying the result from \cite{blondinLogicsContinuousReachability2017} to $\cC_{\infty}$ proves the lemma.

We showed that the result follows from the corresponding statement for infinite $\CVASS$, whose proof we outline. In particular, condition (C4) in the lemma below is satisfied in our case due to the letter-uniformity condition. 
\begin{lemma}[Proposition 4.5, \cite{blondinLogicsContinuousReachability2017} ] \label{lem:threerunInf}
		Let $(Q,T)$ be a $d$-dimensional $\CVASS$, $q_0 \in Q$ where $|Q|$ is infinite and $T \subseteq Q \times \integ^d \times Q$. Then we have $(q_0,\vec u) \step{\ratplus}{} (q_0,\vec v)$ if and only if  there	are $\mvec{u}',\mvec{v}'\in\Q^d$ such that: 
	\begin{itemize}
		\item[(C1)] $(q_0,\mvec{u})\step{\rat}{\pi}(q_0,\mvec{v})$,
		\item[(C2)] $(q_0,\mvec{u})\step{\ratplus}{\pi_f}(q_0,\mvec{v}')$, 
		\item[(C3)] $(q_0,\mvec{u}')\step{\ratplus}{\pi_b}(q_0,\mvec{v})$, and
		\item[(C4)] $\supp{\pi}=\supp{\pi_f}=\supp{\pi_b}$.
	\end{itemize}
\end{lemma}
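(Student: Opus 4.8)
The two directions are very asymmetric, so I would treat them separately.

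\emph{``Only if''.} This is immediate. If $(q_0,\vec u)\step{\ratplus}{\rho}(q_0,\vec v)$ for some run $\rho$, put $\pi=\pi_f=\pi_b=\rho$ and $\vec u'=\vec u$, $\vec v'=\vec v$: a $\ratplus$-run is in particular a $\rat$-run, so (C1) holds, (C2) and (C3) are the run $\rho$ itself, and (C4) is trivial. (In the intended application to a letter-uniform semigroup $K$, the three runs produced by the ``if''-side may use three different words of $K$, but letter-uniformity forces them to use the same set of counter updates, so (C4) still holds; this is a remark about how the lemma is used, not about its proof.)

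\emph{``If''.} Here I would follow the proof of Proposition~4.5 of \citet{blondinLogicsContinuousReachability2017}, which \emph{stitches the three witness runs together with suitably chosen multiplicities}, and then check that the argument never uses finiteness of $Q$. The first step is to record the elementary manipulation rules for continuous runs, each of which is proved by looking at a single fractional step $\vec c\act{}\vec c+\alpha t$ and verifying only that consecutive control states agree: \emph{(i)} rescaling, $(q_0,\vec a)\step{\ratplus}{\rho}(q_0,\vec b)$ implies $(q_0,\lambda\vec a)\step{\ratplus}{\lambda\rho}(q_0,\lambda\vec b)$ for $\lambda\in(0,1]$; \emph{(ii)} the convex-combination variant $(q_0,\vec a)\step{\ratplus}{\lambda\rho}(q_0,(1-\lambda)\vec a+\lambda\vec b)$; \emph{(iii)} superposition of two runs along the \emph{same} path whose fractions at each position sum to at most $1$; and \emph{(iv)} time-reversal, $(q_0,\vec a)\step{\ratplus}{\rho}(q_0,\vec b)$ iff $(q_0,\vec b)\step{\ratplus}{\rho^\dagger}(q_0,\vec a)$. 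All four are ``state-oblivious'' — they inspect the state only to match it — hence remain valid when $Q$ and $\Delta$ are infinite.

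The stitch then goes as follows. Let $\Sigma\subseteq\Z^d$ be the common support from (C4). The run $\pi_f$ certifies that, from near $\vec u$, the updates of $\Sigma$ can be played in a safe order without underflow; using rule (ii) on $\pi_f$ one fires a scaled copy of $\pi_f$ from $(q_0,\vec u)$ to reach a configuration that is still nonnegative but carries a strictly positive ``budget'' on every coordinate that some update of $\Sigma$ decrements. Symmetrically, by rule (iv) applied to $\pi_b$, one produces a configuration from which a scaled copy of $\pi_b$ lands exactly on $(q_0,\vec v)$ while keeping an analogous budget. Into the middle one inserts the $\rat$-run $\pi$ scaled by a factor small enough that all of its finitely many intermediate excursions — bounded because $\Sigma$ is finite — fit inside the budget created by the $\pi_f$-part; since $\pi$ has the exact effect $\vec v-\vec u$, the concatenation $\lambda_f\pi_f\cdot\lambda_z\pi\cdot\lambda_b\pi_b$, with the three multiplicities chosen compatibly, is a $\ratplus$-run from $(q_0,\vec u)$ to $(q_0,\vec v)$. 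Condition (C4) is exactly what makes the budget argument work: the pre- and post-charging runs act on the same updates as the bulk run, so no coordinate is driven negative by $\pi$ before being supplied.

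\textbf{Main obstacle.} I do not expect the infinite state set to be the difficulty: concatenation of runs only needs the state and the counter vector to agree at the junction, time-reversal and rescaling are local, and the construction never enumerates $Q$ or $\Delta$, never invokes a well-quasi-order on configurations, and never uses any pumping argument — the only global fact it relies on is that all three runs begin and end in the distinguished state $q_0$, which the stitched run inherits. The genuinely fiddly part is the simultaneous choice of $\lambda_f,\lambda_z,\lambda_b$ so that the $\pi_b$-part terminates at $\vec v$ \emph{exactly} while nonnegativity is preserved throughout; but this computation is already carried out in \citet{blondinLogicsContinuousReachability2017}, and the work here reduces to a routine inspection confirming that every step of it is state-oblivious in the above sense.
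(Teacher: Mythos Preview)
Your plan matches the paper's: the forward direction is trivial (take $\pi=\pi_f=\pi_b$), and for the converse the paper, like you, recounts the Blondin--Haase construction and observes that every step---rescaling, convex combination, superposition, time reversal---is local and never inspects $|Q|$. Your rules (i)--(iv) are exactly the paper's scaling properties (S1)--(S4).

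One correction to your informal sketch of the stitch: the middle segment is \emph{not} a uniformly scaled $\lambda_z\pi$. That would change the net effect to $\lambda_z(\vec v-\vec u)$, and the effects of the two outer scaled runs need not make up the difference---in dimension $d>3$ this is $d$ linear constraints on three scalar knobs. What the construction actually does (recapped in the paper as \cref{lem:suppCoverRun} and \cref{lem:runSufficiency}) is: first replace $\pi_f,\pi_b$ by runs $\pi'_f,\pi'_b$ along the same paths whose endpoints are strictly positive on every coordinate touched by $\supp{\pi}$; then define $\sigma$ along $path(\pi)$ with \emph{per-transition-occurrence} fractions $c_j=\alpha_j-\lambda(a_j+b_j)$ so that $\Psi(\lambda\pi'_f)+\Psi(\sigma)+\Psi(\lambda\pi'_b)=\Psi(\pi)$ exactly; and finally convert the $\rat$-run $\sigma$ into a $\ratplus$-run $\sigma'$ of the same Parikh image by firing $n$ copies of $\sigma/n$. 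Condition (C4) is what guarantees each $c_j\ge 0$ for small $\lambda$. Since you explicitly defer to the original for this computation and your state-obliviousness claim is correct, the proposal stands; but the object $\lambda_f\pi_f\cdot\lambda_z\pi\cdot\lambda_b\pi_b$ you wrote down is not the right one, and your ``budget'' intuition belongs to the last step (\cref{lem:runSufficiency}), not to a direct scaling of $\pi$.
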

As can be seen, a run from $\vec u$ to $\vec v$ immediately implies the four conditions. 
The proof of the converse direction follows from a series of lemmas which we describe below. 

The first lemma gives us \emph{scaling} properties of $\CVASS$ runs
\begin{lemma}[Lemma 4.2, \cite{blondinLogicsContinuousReachability2017}]
	\label{lem:CVASSscaling}
	\begin{itemize}
		\item[(S1)] $(p,\vec u) \step{\ratplus}{\pi} (q,\vec v)$ iff $(p,\alpha \vec u) \step{\ratplus}{\alpha \pi} (q,\alpha \vec v)$. 
		\item[(S2)] If $(p, \vec u) \step{\ratplus}{\alpha \pi} (q,\vec v)$, $(p, \vec u') \step{\ratplus}{\alpha' \pi} (q, \vec v')$ and $\alpha+\alpha' \leq 1$ then $(p,\alpha \vec u + \alpha' \vec u') \step{\ratplus}{(\alpha + \alpha') \pi} (q,\alpha \vec v +\alpha' \vec v')$.
		\item[(S3)] If $(p, \vec u) \step{\ratplus}{ \pi} (q, \vec v)$ then $(p, \vec u) \step{\ratplus}{\alpha \pi} (q,(1-\alpha)\vec u + \alpha\vec v )$.
		\item[(S4)]  $(p, \vec u) \step{\ratplus}{ \pi} (q, \vec v)$ in $\cV$ iff $(q, \vec v) \step{\ratplus}{ \pi^{-1}} (p, \vec u)$ in $\cV^{-1}$ where $\cV^{-1}$ is the dual $\CVASS$ where all transitions are reversed in both direction and sign. 
	\end{itemize}
\end{lemma}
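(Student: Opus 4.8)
I would prove all four properties directly from the definition of a continuous step, by manipulating the underlying chain of configurations of a run. Unfold $(p,\vec u)\step{\ratplus}{\alpha\pi}(q,\vec v)$ as follows: write the path as $\pi = t_1\cdots t_n$ with $t_i\in\Z^d$; then the run is a choice of rational fractions $\gamma_1,\dots,\gamma_n$ in the admissible range (rescaled by $\alpha$) together with configurations $\vec c_0 = \vec u,\vec c_1,\dots,\vec c_n = \vec v$ in $\ratplus^d$ satisfying $\vec c_i = \vec c_{i-1}+\gamma_i t_i$ and whose control states follow the edges of $\pi$ (resp.\ of $\pi^{-1}$ in $\cV^{-1}$). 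The control-state component is simply carried along unchanged in every argument below, so I suppress it. The engine of all four proofs is that $\ratplus^d$ is a convex cone — closed under addition, under multiplication by nonnegative rationals, and hence under convex combinations — and that negating every transition turns a configuration chain into the same chain traversed backwards. I would establish (S1) and (S3) first, then (S2), then (S4); here $\alpha,\alpha'$ are understood to be rationals in $(0,1]$, and the statements are for $\ratplus$-semantics (the $\rat$-case is identical and easier, there being no nonnegativity constraint).

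For (S1), given a run $\vec u = \vec c_0\to\cdots\to\vec c_n = \vec v$ with fractions $\gamma_i$, the rescaled chain $\alpha\vec c_0,\dots,\alpha\vec c_n$ is again a run: it realizes the fractions $\alpha\gamma_i$ (still admissible since $\alpha\le 1$), and each $\alpha\vec c_i$ is still nonnegative since $\alpha>0$; this gives $\alpha\vec u\step{\ratplus}{\alpha\pi}\alpha\vec v$. The converse divides the configuration chain and the fractions of an $\alpha\pi$-run by $\alpha$, which again preserves nonnegativity (a cone) and admissibility. For (S3), starting from the same run and setting $\vec d_i := (1-\alpha)\vec u + \alpha\vec c_i$, one gets $\vec d_0 = \vec u$, $\vec d_n = (1-\alpha)\vec u+\alpha\vec v$, $\vec d_i = \vec d_{i-1}+\alpha\gamma_i t_i$, and each $\vec d_i$ is a convex combination of the nonnegative vectors $\vec u$ and $\vec c_i$; this is exactly $(p,\vec u)\step{\ratplus}{\alpha\pi}(q,(1-\alpha)\vec u+\alpha\vec v)$.

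For (S2), let $\vec u = \vec a_0\to\cdots\to\vec a_n = \vec v$ and $\vec u' = \vec b_0\to\cdots\to\vec b_n = \vec v'$ be the two given runs along (the $\alpha$-, resp.\ $\alpha'$-rescaled copies of) the same path, with fractions $\gamma_i$ and $\gamma_i'$; form $\vec e_i := \alpha\vec a_i+\alpha'\vec b_i$. Then $\vec e_0 = \alpha\vec u+\alpha'\vec u'$, $\vec e_n = \alpha\vec v+\alpha'\vec v'$, each $\vec e_i$ is a nonnegative combination of nonnegative vectors, and $\vec e_i = \vec e_{i-1}+(\alpha\gamma_i+\alpha'\gamma_i')t_i$, where the new fraction is positive and bounded by $\alpha^2+\alpha'^2\le\alpha+\alpha'\le 1$, hence admissible for the $(\alpha+\alpha')$-rescaled path. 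This yields $(p,\alpha\vec u+\alpha'\vec u')\step{\ratplus}{(\alpha+\alpha')\pi}(q,\alpha\vec v+\alpha'\vec v')$. For (S4), reading the configuration chain $\vec c_0,\dots,\vec c_n$ of a $\cV$-run from right to left gives $\vec c_n = \vec v,\vec c_{n-1},\dots,\vec c_0 = \vec u$ with $\vec c_{i-1} = \vec c_i+\gamma_i(-t_i)$; since $-t_i$ is a transition of $\cV^{-1}$ and the reversed edges of $\pi$ are the path $\pi^{-1}$, this is a run $(q,\vec v)\step{\ratplus}{\pi^{-1}}(p,\vec u)$ in $\cV^{-1}$ with the same configurations (so the same nonnegativity) and the same fractions; the converse is the same statement applied to $\cV^{-1}$, using $(\cV^{-1})^{-1} = \cV$ and $(\pi^{-1})^{-1} = \pi$.

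There is no deep obstacle here; the only thing to be careful about — and the only place the hypotheses are actually used — is verifying in each of the four transformed chains that (i) every intermediate configuration stays in $\ratplus^d$ and (ii) every fraction stays a rational in the admissible range. Both are consequences of the convex-cone structure of $\ratplus^d$ together with $\alpha,\alpha'\le 1$; dropping either ingredient breaks the statements (e.g.\ the up-scaling by $1/\alpha$ in the converse of (S1) needs a cone, and the bound $\alpha^2+\alpha'^2\le\alpha+\alpha'$ in (S2) is exactly why $\alpha+\alpha'\le 1$ is assumed). Everything else is routine, so in the write-up I would present (S1)–(S4) as four short paragraphs of this form, possibly after first recording the one-step facts (rescaling and reversal of a single step) as a sub-observation and lifting by induction on $n$.
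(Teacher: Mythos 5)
Your treatments of (S1), (S3) and (S4) are correct and are exactly the intended argument (the paper itself only remarks that the properties ``follow by simple induction on the length of runs'', which is what your chain manipulations amount to). The problem is (S2).

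Write $\pi=\beta_1t_1\cdots\beta_nt_n$, so the two premise runs have step increments $\gamma_i t_i=\alpha\beta_i t_i$ and $\gamma_i't_i=\alpha'\beta_i t_i$. Your chain $\vec e_i=\alpha\vec a_i+\alpha'\vec b_i$ then has increments $(\alpha\gamma_i+\alpha'\gamma_i')t_i=(\alpha^2+\alpha'^2)\beta_i t_i$, i.e.\ it realizes the firing sequence $(\alpha^2+\alpha'^2)\pi$, not $(\alpha+\alpha')\pi$. Your justification that the fraction is ``bounded by $\alpha^2+\alpha'^2\le\alpha+\alpha'$, hence admissible for the $(\alpha+\alpha')$-rescaled path'' rests on a misconception: for a firing sequence $\sigma$ (as opposed to a bare word of transitions), the relation $\step{\ratplus}{\sigma}$ prescribes the firing fractions \emph{exactly}; there is no freedom to fire by anything smaller. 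So your chain connects the claimed endpoints, but not along the claimed firing sequence, and (S2) is not proved.

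In fact no proof of (S2) as literally printed can succeed, because the printed statement is false: take $d=1$, $\pi=1\cdot t$ with $t=+1$, $\alpha=\alpha'=\tfrac12$, $\vec u=\vec u'=0$. The premises hold with $\vec v=\vec v'=\tfrac12$, but the conclusion would assert $0\step{\ratplus}{1\cdot t}\tfrac12$, whereas firing $t$ with fraction $1$ from $0$ lands at $1$. The statement is a mistranscription of the Blondin--Haase original, whose conclusion is $(p,\vec u+\vec u')\step{\ratplus}{(\alpha+\alpha')\pi}(q,\vec v+\vec v')$; for that version the right construction is the \emph{unweighted} sum $\vec e_i=\vec a_i+\vec b_i$, whose increments are $(\gamma_i+\gamma_i')t_i=(\alpha+\alpha')\beta_i t_i$, exactly the fractions of $(\alpha+\alpha')\pi$, with nonnegativity again free from the cone structure. (The variant with endpoints $\alpha\vec u+\alpha'\vec u'$ is then recovered by first applying your (S1) to each premise and summing, which is presumably how the mis-stated version arose; this is also how the lemma is actually invoked in the proof of the sufficiency lemma, where $\lambda+\mu=1$.) You should flag the discrepancy in the statement and prove the summed version.
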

The properties follow by simple induction on the length of runs. 

The second lemma states a set of sufficient conditions for the existence of a run. For any transition $t=(q,\delta(t),q') \in T$, let $\supp{\delta(t)}^- \subseteq [d]$ (resp. $\supp{\delta(t)}^+ \subseteq [d]$) be the set of indices $i$ where $t(i) <0$ (resp. $t(i) \geq 0$). 
\begin{lemma}
	\label{lem:runSufficiency}
	If for $q \in Q$, $\vec u, \vec v \in \ratplus^d$ and $\pi \in \addstar{T}$ the following conditions hold:
	\begin{itemize}
		\item[(1)] $(q, \vec u) \step{\rat}{ \pi} (q, \vec v)$,
		\item[(2)] $\supp{\delta(t)}^- \subseteq \supp{\vec u}$ for each $t \in \supp{\pi}$, and 
		\item[(3)] $\supp{\delta(t)}^+ \subseteq \supp{\vec v}$ for each $t \in \supp{\pi}$, 
	\end{itemize}
	then there is $\pi' \in \addstar{T}$ such that $(q, \vec u) \step{\ratplus}{ \pi'} (q, \vec v)$ and $\Psi(\pi)=\Psi(\pi')$
\end{lemma}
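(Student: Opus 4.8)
\textbf{Proof plan for Lemma~\ref{lem:runSufficiency}.}

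The plan is to establish the three sufficient conditions by an induction on the structure of the path underlying $\pi$, using the scaling properties of \cref{lem:CVASSscaling}. The key idea is that condition (1) gives us a $\rat$-run (one that may drop below zero on some coordinates), and conditions (2) and (3) tell us that whenever a transition $t$ decrements a coordinate $i$, then $i$ is positive in the \emph{starting} vector $\vec u$, and whenever $t$ increments a coordinate $j$, then $j$ is positive in the \emph{ending} vector $\vec v$. We want to turn the $\rat$-run into a genuine $\ratplus$-run with the same Parikh image.

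First I would handle the ``forward'' half: I claim that from $(q,\vec u)$ one can fire $\pi$ with \emph{rescaled} (shrunk) firing fractions and stay nonnegative, reaching some configuration $(q,\vec w_f)$ with $\Psi$ of this new run proportional to $\Psi(\pi)$ and with $\vec w_f$ positive on exactly the coordinates in $\supp{\vec u}\cup\bigcup_{t\in\supp{\pi}}\supp{\delta(t)}^+$. This is essentially the same argument as in the proof of \cref{lem:coverKcharacter} (the admissibility argument): process the transitions of $\pi$ in order, and at each step choose the firing fraction small enough (e.g.\ at most half of what would make any decremented coordinate hit zero). Condition (2) guarantees that every coordinate decremented along the way is already strictly positive when we reach it (it was positive in $\vec u$ and has only been touched by scaled-down updates since). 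Symmetrically, applying \cref{lem:CVASSscaling}~(S4) to the dual $\CVASS$ and using condition (3) in place of (2), I get a ``backward'' run: there exists $\vec w_b$ with $(q,\vec w_b)\step{\ratplus}{\pi_b}(q,\vec v)$ where $\pi_b$ is $\pi$ with rescaled fractions, $\Psi(\pi_b)$ proportional to $\Psi(\pi)$, and $\vec v$ positive on the relevant coordinates.

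Next I would combine the three runs — the $\rat$-run from (1), the forward $\ratplus$-run, and the backward $\ratplus$-run — exactly as in the proof of \cref{three-runs}. Concretely: take a large common ``dilution'' factor, use (S1) to scale the $\rat$-run down so that it becomes a $\ratplus$-run off a positive ``base'' configuration (adding a small positive multiple of the forward run at the start and the backward run at the end absorbs the negative excursions of the $\rat$-run), and use (S2) to add the pieces together with multiplicities summing to at most $1$. Choosing the multiplicities of the three component runs appropriately, the combined Parikh image can be made to equal $\Psi(\pi)$ exactly: the $\rat$-run contributes $\Psi(\pi)$ scaled by some $\lambda\in(0,1)$, and the forward and backward runs (which have Parikh image proportional to $\Psi(\pi)$) contribute the remaining $(1-\lambda)\Psi(\pi)$, split between the two ends. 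The result is the desired $\pi'\in\addstar{T}$ with $(q,\vec u)\step{\ratplus}{\pi'}(q,\vec v)$ and $\Psi(\pi')=\Psi(\pi)$.

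The main obstacle I anticipate is the bookkeeping in the combination step: one must simultaneously (a) keep all intermediate configurations nonnegative, (b) make the firing-fraction multiplicities of the three runs sum to at most $1$ at each transition, and (c) hit $\Psi(\pi)$ \emph{on the nose} rather than merely proportionally. Points (a)--(b) are exactly the content of \cref{lem:CVASSscaling} and pose no new difficulty; point (c) requires noting that the forward and backward runs can be taken with Parikh image an \emph{arbitrary} positive scalar multiple of $\Psi(\pi)$ (just shrink the fractions further), so there is enough freedom to solve the linear equation $\lambda + \mu_f + \mu_b = 1$ with the induced Parikh images adding to $\Psi(\pi)$. I would also double-check the base case of the internal induction (empty $\pi$, where $\vec u=\vec v$ and there is nothing to do) and the degenerate case $\supp{\pi}=\emptyset$.
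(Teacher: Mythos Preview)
Your plan has a genuine gap at the combination step. First, there is a circularity: the proof of \cref{three-runs} (i.e.\ of \cref{lem:threerunInf}) \emph{invokes} \cref{lem:runSufficiency} to handle its middle segment $\sigma$, so ``combine exactly as in the proof of \cref{three-runs}'' assumes what you are trying to prove. Second, and more substantively, even if you unfold that proof by hand, the three-piece scheme cannot simultaneously satisfy your constraints (a)--(c). The forward $\ratplus$-run from $\vec u$ can only be $\mu_f\pi$ for \emph{small} $\mu_f$ (bounded by $\min_i \vec u(i)/neg(i)$), and symmetrically the backward run into $\vec v$ can only be $\mu_b\pi$ for small $\mu_b$. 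That forces the middle piece to carry Parikh mass $(1-\mu_f-\mu_b)\Psi(\pi)$, close to the full $\Psi(\pi)$; and a nearly-full copy of $\pi$ fired from a vector that is only \emph{slightly} more positive than $\vec u$ is not a $\ratplus$-run in general. Concretely, in dimension $1$ take $\vec u=\vec v=1$ and $\pi=1\cdot(-10),\ 1\cdot(+10)$: conditions (1)--(3) hold, yet any admissible $\mu_f,\mu_b\le 1/10$ leave a middle segment $\lambda\pi$ with $\lambda\ge 8/10$, which immediately dips to $1-10\lambda<0$.

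The paper's argument avoids this by \emph{iterating} instead of sandwiching. It sets $\sigma=\pi/n$ for a suitably large $n$ and takes $\pi'=\sigma^n$, so the run passes through the evenly spaced points $\vec u_i=\vec u+\tfrac{i}{n}\delta(\pi)$. The first segment $\vec u_0\to\vec u_1$ is a $\ratplus$-run by condition~(2) (choose $n$ with $\tfrac{1}{n}\cdot neg(i)\le\vec u(i)$), the last segment $\vec u_{n-1}\to\vec u_n$ is a $\ratplus$-run by condition~(3) symmetrically, and each intermediate segment is then a $\ratplus$-run by writing $\vec u_i$ and $\vec u_{i+1}$ as the same convex combination of the endpoints of the first and last segments and applying (S2). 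The idea missing from your plan is precisely this convex-interpolation step: it is what lets $n$ small copies of $\pi$ jointly realise the full Parikh image $\Psi(\pi)$ while each individual copy stays nonnegative.
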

The key idea is to show that we can repeatedly fire a small fraction of $\pi$ from (1) is order to get the desired $\pi'$. Conditions (2) and (3) ensure that there exists a small enough fraction $\alpha$ for which, when $\alpha \pi$ is fired, then we obtain a run with $\ratplus$ semantics. Define 
\[neg(i)= \sum_{t \in T, i \in \supp{\delta(t)}^-} \Psi(\pi)(t)|\delta(t)(i)|,\]
\[pos(i)= \sum_{t \in T, i \in \supp{\delta(t)}^+} \Psi(\pi)(t)\delta(t)(i)\]
$neg(i)$ and $pos(i)$ capture the total negative positive effect on counter $i$ by the transitions in $\pi$.
Let
\[ \alpha_{neg}=\min(1,\min\{ \frac{\vec u(i)}{neg(i)} \mid i \in [d], neg(i) >0 \}),\]
\[ \alpha_{pos}=\min(1,\min\{ \frac{\vec v(i)}{pos(i)} \mid i \in [d], pos(i) >0 \}),\]
\[ n=\max(\lceil \frac{1}{\alpha_{neg}}\rceil,\lceil \frac{1}{\alpha_{pos}}\rceil,2), \quad \sigma=\frac{\pi}{n}, \quad \pi'=\sigma^n.\]
One can show that each subrun $\vec u_i \step{\ratplus}{\sigma} \vec u_{i+1}$ exists for $\vec u_i=\vec u + \frac{i}{n}\delta(\pi)$. For the case where $i=0$ (resp. $i=n-1$), this follows from the fact that $\frac{1}{n}$ is chosen smaller than $\alpha_{neg}$ (resp. $\alpha_{pos}$). For other values of $i$, one shows that $\vec u_i=\frac{n-1-i}{n-1} \vec u_0 + \frac{i}{n-1} \vec u_{n-1}$ and $\vec u_{i+1}=\frac{n-1-i}{n-1} \vec u_1 + \frac{i}{n-1} \vec u_{n}$ and applies scaling property (S2). Thus $(q, \vec u) \step{\ratplus}{ \pi'} (q, \vec v)$.

The next lemma observes that if there is a run $(p, \vec u) \step{\ratplus}{ \pi} (q, \vec v)$ then there is a run reaching $q$ which satisfies a support condition. 
\begin{lemma}[Lemma 4.4, \cite{blondinLogicsContinuousReachability2017}] \label{lem:suppCoverRun}
	Let $q \in Q$, $\vec u, \vec v \in \ratplus^d$ and $\pi \in \addstar{T}$. If $(p, \vec u) \step{\ratplus}{ \pi} (q, \vec v)$ then there is $\vec v'$ and $\pi' \in \addstar{T}$ such that
	\begin{itemize}
		\item $(p, \vec u) \step{\ratplus}{ \pi'} (q, \vec v')$,
		\item $path(\pi)=path(\pi')$, and
		\item $\supp{\delta(t)} \subseteq	\supp{\vec v'}$ for every $t \in \supp{\pi'}$. 
	\end{itemize}
\end{lemma}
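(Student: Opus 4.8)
The plan is to reuse the convexity-plus-scaling reasoning behind the earlier lemmas of this section. Write the given firing sequence as $\pi = \alpha_1 t_1,\dots,\alpha_n t_n$, put $w = path(\pi) = t_1\cdots t_n$, and set $I = \bigcup_{t\in\supp{\pi}}\supp{\delta(t)}$, the set of coordinates moved by some transition occurring in $\pi$. If $I=\emptyset$ the claim holds with $\pi'=\pi$, so assume $I\neq\emptyset$. Since any $\pi'$ with $path(\pi')=w$ uses exactly the transitions of $\supp{\pi}$, the third requirement of the lemma is equivalent to asking that $\vec v'(i)>0$ for every $i\in I$. Hence it suffices to produce a single $\ratplus$-run from $(p,\vec u)$ along the fixed path $w$ whose endpoint is strictly positive on every coordinate in $I$.

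First I would observe that the set $R := \{\vec v'\mid \exists\,\pi'\colon path(\pi')=w \wedge (p,\vec u)\step{\ratplus}{\pi'}(q,\vec v')\}$ is convex. Given two such runs, with multiplicity tuples $\bar\alpha,\bar\beta\in(0,1]^n$ and endpoints $\vec v_1,\vec v_2$, and given $\lambda\in(0,1)$, the tuple $\lambda\bar\alpha+(1-\lambda)\bar\beta$ again lies in $(0,1]^n$, and its $k$-th partial configuration is the $\lambda$-convex combination of the $k$-th configurations of the two runs, hence nonnegative; so it defines a valid $\ratplus$-run along $w$ ending in $\lambda\vec v_1+(1-\lambda)\vec v_2$. (This is exactly the linear bookkeeping used to prove \cref{lem:CVASSscaling}.) Consequently, once we have for each $i\in I$ some $\vec v_i\in R$ with $\vec v_i(i)>0$, the average $\vec v':=\tfrac{1}{|I|}\sum_{i\in I}\vec v_i$ again lies in $R$ (a finite convex combination) and satisfies $\vec v'(i)\ge\tfrac{1}{|I|}\vec v_i(i)>0$ for every $i\in I$; taking $\pi'$ to be any witness of $\vec v'\in R$ completes the proof.

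It remains to construct, for each $i\in I$, a run in $R$ positive on coordinate $i$. Let $\vec c_0,\dots,\vec c_n$ be the configurations visited by $\pi$, so $\vec c_0=\vec u$ and $\vec c_n=\vec v$. If $\vec u(i)>0$: apply property (S3) of \cref{lem:CVASSscaling} to $\pi$ with a fraction $\varepsilon\in(0,1)$; the resulting run along $w$ ends in $(1-\varepsilon)\vec u+\varepsilon\vec v$, whose $i$-th entry $(1-\varepsilon)\vec u(i)+\varepsilon\vec v(i)\ge(1-\varepsilon)\vec u(i)$ is positive. If $\vec u(i)=0$: I first claim that some transition of $w$ must \emph{increment} coordinate $i$. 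Otherwise every transition touching $i$ decrements it, and since $i\in I$, the first such transition fired along $\pi$ would require coordinate $i$ to be strictly positive, yet it is still $0$ at that point — a contradiction. Pick a position $j$ at which a transition incrementing $i$ is fired; then $\vec c_j(i)=\vec c_{j-1}(i)+\alpha_j\,\delta(t_j)(i)\ge\alpha_j>0$ (using $\vec c_{j-1}(i)\ge0$, $\delta(t_j)(i)\ge1$, $\alpha_j>0$). Now replay $\pi$ unchanged up to $\vec c_j$ and then apply (S3) to the suffix $\alpha_{j+1}t_{j+1},\dots,\alpha_nt_n$ with a fraction $\varepsilon\in(0,1)$: this yields a valid $\ratplus$-run along $w$ whose endpoint is $(1-\varepsilon)\vec c_j+\varepsilon\vec v$, which is positive on coordinate $i$.

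The one step that needs genuine care, rather than just more scaling, is the case $\vec u(i)=0$, and in particular coordinates that both start and end at value $0$ yet are nonetheless touched by $\pi$: there one must notice that such a coordinate is necessarily incremented somewhere along $w$, and then check that splicing the original prefix of $\pi$ onto an (S3)-scaled suffix keeps every intermediate configuration nonnegative and leaves the overall path equal to $w$. Everything else is a routine invocation of \cref{lem:CVASSscaling} together with the convexity of the per-path reachability set.
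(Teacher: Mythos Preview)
Your proof is correct, but it takes a noticeably different route from the paper's. The paper dispatches the lemma with a single explicit construction: if $\pi=\alpha_1 t_1,\ldots,\alpha_n t_n$, set $\pi'=\frac{\alpha_1}{2}t_1,\frac{\alpha_2}{4}t_2,\ldots,\frac{\alpha_n}{2^n}t_n$. A short telescoping calculation shows that the $k$-th intermediate configuration of $\pi'$ equals $\frac{1}{2}\vec c_0+\frac{1}{2^k}\vec c_k+\sum_{j=1}^{k-1}\frac{1}{2^{j+1}}\vec c_j$, a strictly positive convex combination of the first $k{+}1$ configurations of $\pi$. Hence every intermediate stays in $\ratplus^d$, and the final vector $\vec v'$ is a strictly positive convex combination of \emph{all} the $\vec c_j$; since (as you also argued) every coordinate in $I$ is positive in some $\vec c_j$, this immediately gives $\supp{\vec v'}\supseteq I$.

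Your argument reaches the same destination by first establishing convexity of the per-path reachability set, then building one witness run per coordinate of $I$ (using (S3) on the whole run or on a suffix starting after an increment of that coordinate), and finally averaging. This is perfectly valid and makes the role of convexity very explicit, but it is more work: you prove convexity separately, handle two cases per coordinate, and invoke an averaging step, whereas the paper's halving trick bakes the required positive convex combination into a single closed-form firing sequence. The paper's version also yields an explicit $\pi'$ rather than an existential one obtained via a finite convex combination.
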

It is easily observed that if $\pi=\alpha_1t_1\ldots \alpha_m t_m$, then the run $\pi'=\frac{\alpha_1}{2}t_1 \ldots \frac{\alpha_i}{2^i}\ldots \frac{\alpha_m}{2^m}t_m$ satisfies the conditions. By halving the firing fraction each time, we can ensure that if a transition removes tokens from a place that was positive in $\vec u$, we retain some tokens in this place in $\vec v'$. If a transition puts tokens in a place, we also retain some fraction of it in $\vec v'$.

We can now prove \cref{lem:threerunInf}. By using Lemma \ref{lem:suppCoverRun}, we obtain $\pi'_f$ from $\pi_f$ in (C2) such that $path(\pi'_f)=path(\pi_f)$, $(q, \vec u) \step{\rat}{ \pi'_f} (q, \vec x)$ and $\supp{\delta_t} \subseteq \supp{\vec x}$ for each $t \in \pi'_f$. Similarly we obtain $(q, \vec y) \step{\rat}{ \pi'_b} (q, \vec v)$ from $\pi_b$ with corresponding support conditions. Let $\pi$ from (C1) be $\alpha_1t_1\ldots \alpha_m t_m$, then for $j \in [m]$ define
\[ a_j = \frac{\Psi(\pi'_f)(t_j)}{\Psi(path(\pi)(t_j)},\]
\[ b_j = \frac{\Psi(\pi'_b)(t_j)}{\Psi(path(\pi)(t_j)},\]
\[\text{ let } 0 < \lambda \leq 1 \text{ small such that } \forall j \in [m], c_j=\alpha_j -\lambda(a_j+b_j) \geq 0, \text{ and }\]
\[ \sigma=c_1t_1\ldots c_mt_m\]
We can now fire this small fraction $\lambda$ of $\pi'_f$ and $\pi'_b$ in order to arrive at vectors which satisfy the conditions (2) and (3) of \cref{lem:runSufficiency}:
\[ (q, \vec u) \step{\ratplus}{ \lambda\pi'_f} (q, \vec x') \text{ where } \supp{\vec x} \subseteq \supp{\vec x'}, \text{ and}\]
\[ (q, \vec y') \step{\ratplus}{ \lambda\pi'_b} (q, \vec v) \text{ where } \supp{\vec y} \subseteq \supp{\vec y'}. \]
Furthermore, we have: 
\[ (q, \vec x') \step{\rat}{ \lambda\sigma} (q, \vec y'). \]
Applying \cref{lem:runSufficiency} to the above run, we get:
\[ (q, \vec x') \step{\ratplus}{ \sigma'} (q, \vec y'), \]
and the desired run $\pi'=\lambda \pi'_f \cdot \sigma' \cdot \lambda \pi'_b$ with $\supp{\pi'}=\supp{\pi}=\supp{\pi'}$ and $\Psi(\pi')=\Psi(\pi)$.

\section{Proofs for Section~\ref{sec:lower-bound-mcm-nccm}}

\subsection{Proof details of Lemma~\ref{lem:gadget-simulation}}\label{app:gadget-simulation}


	The lemma is immediate when $r$ is a nope rule of the form $(q,\nop,q')$. We will first prove it in the case when $r$ is an increment rule of the form $(q,\inc_i,q')$.
	For the soundness part, suppose $C \act{r} C'$. Then, notice that $g(C) \act{1/2^m \ r^b} D' \act{1/2^m \ r^e} g(C')$.
	To prove the completeness part, suppose $g(C) \act{\alpha r^b} D' \act{\beta r^e} D$ for some $\alpha, \beta, D$ and $D'$. 
	Since $r^b$ decrements $st$ by 1 and also has a zero-test on $st$, it follows that $g(C)(st) - \alpha = 0$ and so $\alpha = g(C)(st) = 1/2^m$.
	By construction of the rule $r^b$, it follows that the values of the counters $c_i, c_{1-i}, st$ and $te$ in $D'$
	are $g(C)(c_i) + 1/2^m, g(C)(c_{1-i}), 0$ and $1/2^m$ respectively. 
	Now, since $r^e$ decrements $te$ by 1 and also has a zero-test on $te$, it follows that $D'(te) - \beta = 0$ and so $\beta = D'(te) = 1/2^m$.
	By construction of the rule $r^e$, it follows that $\sta{D} = q'$ and the values of the counters $c_i, c_{1-i}, st$ and $te$ in $D$
	are $D(c_i), D(c_{1-i}), 1/2^m$ and $0$ respectively. Since $D(c_i) = g(C)(c_i) + 1/2^m$ and $D(c_{1-i}) = g(C)(c_{1-i})$, 
	it follows that if we let $C \act{r} C'$ in the machine $\mach$, then $g(C') = D$.

	For the case when $r$ is a doubling rule of the form $(q,\double_i,q')$, a similar argument applies, where
	$\alpha$ and $\beta$ (in both the soundness and the completeness part) will be $g(C)(c_i)$ and $2g(C)(c_i)$ respectively.

\subsection{Proof details of Theorem~\ref{thm:simulation}}\label{app:simulation}


	Suppose $c_{init}$ can reach a final configuration $c_{fin} = (\fin,n,n)$ for some number $n$ in $m$ steps. 
	By repeated applications of the soundness part of the Gadget Simulation lemma (Lemma~\ref{lem:gadget-simulation}), it follows
	that $g(c_{init})$ can reach $D := g(c_{fin})$ in $2m$ steps. Since $\sta{D} = \fin$ and $D(c_0) = D(c_1)$, it follows
	that if we set $\alpha = D(st), \beta = D(c_0)$, then by construction of $f^b$ and $f^e$ we have $D \act{\alpha f^b, \beta f^e} (\overline{q_f},\mathbf{0})$.

	Suppose $g(c_{init})$ can reach the configuration $(\overline{q_f},\mathbf{0})$ in $2(m+1)$ steps in $\cC_1$.
	By construction of $\cC_1$, any execution from $C := g(c_{init})$ to $C' := (\overline{q_f},\mathbf{0})$ must be of the form
	$C \act{\alpha_1 r_1^b, \beta_1 r_1^e} C_1 \act{\alpha_2 r_2^b, \beta_2 r_2^e} C_2 \dots C_{m-1} \act{\alpha_m r_m^b, \beta_m r_m^e} C_m \act{\alpha_{m+1} f^b, 
	\beta_{m+1} f^e} C'$ for some fractions $\alpha_1, \beta_1, \dots, \alpha_{m+1}, \beta_{m+1}$ and some rules $r_1, \dots, r_m$ of the machine $\mach$.
	By the completeness part of the Gadget Simulation lemma (Lemma~\ref{lem:gadget-simulation}), it follows that there exists configurations $D_1, D_2, \dots, D_m$
	such that each $C_i = g(D_i)$ and $c_{init} \act{r_1} D_1 \act{r_2} D_2 \dots D_{m-1} \act{r_m} D_m$. 
	Since $\alpha_{m+1} f^b$ is fireable from $C_m$, it must be the case that $\sta{C_m} = \fin$.
	By construction of the finish gadget, it
	follows that $C_m(c_0) = C_m(c_1)$. By construction of the mapping $g$, it follows that $D_m$ is of the form $(\fin,n,n)$ for some number $n$.
	Hence, $c_{init}$ can reach some final configuration in $m$ steps in $\mach$. 	

\subsection{Proof details of Lemma~\ref{lem:init}}\label{app:lemma-init}


	Suppose $C'$ is the same as $C$ except that $C'(st) = C(st)/2$ and $C'(count) = C(count) + 1/m$. Let $\alpha_1 = C(st)/2, \alpha_2 = C(st)/2, \alpha_3 = 1/m, \alpha_4 = 1/m$.
	It can be verified that $C \act{\alpha_1 i^1} C_1 \act{\alpha_2 i^2} C_2 \act{\alpha_3 i^3} C_4 \act{\alpha_4 i^4} C'$.
	
	Suppose $C \act{\alpha_1 i^1} C_1 \act{\alpha_2 i^2} C_2 \act{\alpha_3 i^3} C_4 \act{\alpha_4 i^4} C'$ is a run such that $C(te) = 0$ and $C(x) = 1/m$. Since $i^1$ decrements $st$ by 2 and also has a zero-test on $st$,
	it must be the case that $\alpha_1 = C(st)/2$. Hence, 
	\begin{equation*}
		C_1(st) = 0, C_1(te) = C(st)/2, C_1(x) = 1/m, C_1(count) = C(count)
	\end{equation*}
	
	Since $i^2$ decrements $te$ by 1 and also has a zero-test on $te$, it must be the case that $\alpha_2 = C_1(te) = C(st)/2$. Hence,
	\begin{equation*}
		C_2(st) = C(st)/2, C_2(te) = 0, C_2(x) = 1/m, C_2(count) = C(count)
	\end{equation*}

	Since $i^3$ decrements $x$ by 1 and also has a zero-test on $x$, it must be the case that $\alpha_3 = C_2(x) = 1/m$. Hence,
	\begin{equation*}
		C_3(st) = C(st)/2, C_3(te) = 1/m, C_3(x) = 0, C_3(count) = C(count) + 1/m
	\end{equation*}

	Since $i^4$ decrements $te$ by 1 and also has a zero-test on $te$, it must be the case that $\alpha_4 = C_3(te) = 1/m$. Hence,
	\begin{equation*}
		C_4(st) = C(st)/2, C_4(te) = 0, C_4(x) = 1/m, C_4(count) = C(count) + 1/m
	\end{equation*}
	Hence, the proof is complete.

\subsection{Proof details of Theorem~\ref{thm:challenge2}}\label{app:challenge2}


	Suppose $g(c_{init})$ can reach $(\overline{q_f},\mathbf{0})$ in $\cC_1$ in $2(m+1)$ steps by means of a run $\rho$.
	By applying the Initialization lemma $m$ times and then using the rule from $in_0$ to $\init$ which does nothing, 
	it follows that starting from $d_{init}$ we can reach the configuration $D$
	where $\sta{D} = \init$ and $D(c_0) = D(c_1) = 0, D(st) = 1/2^m, D(te) = 0, D(x) = 1/m$ and $D(count) = 1$.
	Since $\cC$ contains all the rules of $\cC_1$ and since $g(c_{init})$ and $D$ agree on their states and
	the counter values of $c_0, c_1, st$ and $te$, we can simply execute the run $\rho$ from the configuration $D$
	in $\cC$ to reach $d_{fin}$.
	
	Suppose $d_{init}$ can reach $d_{fin}$ in exactly $\ell := 4m+1+2(m+1)$ steps in $\cC$. Let 
	$D_0 := d_{init} \act{} D_1 \act{} D_2 \dots D_{\ell-1} \act{} d_{fin} := D_{\ell}$ be such a run. 
	By construction of the $\NCCM$ $\cc$, there must be exactly one index $j$ such that $\sta{D_j} = \init$ and $\sta{D_{j-1}} = in_0$
	and the rule fired between $D_{j-1}$ and $D_j$ does nothing to the counters.
	By construction of $\cC$, this means that the steps from $D_0$ to $D_{j-1}$ are steps in the initialization gadget
	and the steps from $D_{j-1}$ to $D_{\ell}$ are steps in $\cC_1$. Since no step in $\cC_1$ affects the value of the counter $count$
	and since $d_{fin}(count)$ is 1, it follows that $D_{j-1}(count) = 1$. Since the steps from $D_0$ to $D_{j-1}$
	are steps in the initialization gadget and since $\sta{D_0} = \sta{D_{j-1}} = in_0$, by repeated applications of the Initialization lemma,
	it follows that $j-1$ must be $4m$ and the values of the counters in $D_{j-1}$ satisfy
	\begin{equation*}
		D_{j-1}(c_0) = D_{j-1}(c_1) = 0, D_{j-1}(st) = 1/2^m, D_{j-1}(te) = 0, D_{j-1}(x) = 1/m, D_{j-1}(count) = 1
	\end{equation*}
	Since the counter values of $D_j$ and $D_{j-1}$ are the same, the same equation holds if we replace $D_{j-1}$ with $D_j$. 
	Notice that $\sta{D_j} = \init, D_j(c_0) = D_j(c_1) = D_j(te) = 0, D_j(st) = 1/2^m$. Since the run from $D_j$ to $d_{fin}$
	uses only the rules from $\cC_1$, it follows that executing the same sequence of rules also
	gives us a run from $g(c_{init})$ to $(\overline{q_f},\mathbf{0})$ in $\cC_1$ of length $\ell-j = \ell - 4m - 1 = 2(m+1)$.		

\section{Proofs for Section~\ref{sec:lower-bound-nccm-pvass}}

\subsection{Proof details of Lemma~\ref{lem:ctrl}}\label{app:ctrl}


	Let us fix a rule $r = (q,t,q')$ in $\cC$ with $t = (w,s)$.
	
	For the soundness claim, suppose $C \act{\alpha r} C'$ in $\cC$. It follows that from $G(C)_\zeta$ we can fire $r^b$ to reach a configuration $D$
	whose counter values are the same as $G(C')_\zeta$. By assumption, $C'(x) = 0$ for every $x \in s$. 
	It follows that from $G(C')_\zeta$ we can fire $r^m$ and then $r^e$ to reach the configuration $G(C')_{\zeta+2}$.
	
	For the completeness part, suppose $G(C)_\zeta \act{\alpha r^b} D^b \act{\beta r^m} D^m \act{\gamma r^e} D'$ for some $\alpha, \beta, \gamma, \zeta$ and $D^b, D^m, D'$ such that $D'(ctrl) = \zeta+2$. Let $D = G(C)_\zeta$.
	By construction of $r^b$, it follows that $D^b(c) = D(c) + \alpha \bar{w}$ for every counter $c$.
	By definition of $D$ and $\bar{w}$, it follows that 
	\begin{equation}\label{eq:one}
			D^b(c) \in [0,1] \text{ for every counter } c \neq ctrl.		
	\end{equation}

	Further, since the value of $ctrl$ has increased by 2 between $g(C)_\zeta$ and $D'$, by construction of the gadget, it follows that $\beta = \gamma = 1$. 
	By construction of $r^m$, this means that $D^b(\bar{x}) = 1$ for every $x \in S$ and so 
	\begin{equation}\label{eq:two}
		D^b(x) = 0 \text{ for every } x \in S.
	\end{equation}
	
	Combining equations~\ref{eq:one} and~\ref{eq:two} and using the definition of the rules $r$ and $r^b$, it follows that there exists $C'$ such that $C \act{\alpha r} C'$ in $\cC$ and
	$\valu{D^b} = \valu{G(C')_{\zeta}}$. Notice that since $\beta = \gamma = 1$, by construction of $r^m$ and $r^e$, it follows that $\valu{D^b}(c) = \valu{D'}(c)$ for every $c \neq ctrl$.
	Since $D'(ctrl) = \zeta+2$, we have that $\valu{D'} = \valu{G(C')}_{\zeta+2}$. Further, since $\sta{D'} = \sta{G(C')_{\zeta+2}}$, it follows that $D' = G(C')_{\zeta + 2}$.	

\subsection{Proof details of Lemma~\ref{thm:nccm-to-cvass}}\label{app:nccm-to-cvass}


	Suppose $c_{init}$ can reach $c_{fin}$ in $\cC$ in $m$ steps. Repeatedly applying the soundness part of the control counter simulation lemma (Lemma~\ref{lem:ctrl}) to every
	step of this run, it follows that $G(c_{init})_0$ can reach $G(c_{fin})_{2m}$ in $3m$ steps in $\mach$.
	
	Suppose $C_0 := G(c_{init})_0 \act{} C_1 \act{} C_2 \dots C_{3m-1} \act{} C_{3m} := G(c_{fin})_{2m}$ is a run in $\mach$. By construction of $\mach$ it follows that
	for every $0 \le i \le m-1$, there exists a rule $r_i$ of $\cC$ and fractions $\alpha_i, \beta_i, \gamma_i$ such that $C_{3i} \act{\alpha_i r^b} C_{3i+1} \act{\beta_i r^m} C_{3i+2}
	\act{\gamma_i r^e} C_{3i+3}$. By construction of $r^b, r^m$ and $r^e$, we have $C_{3i+3}(ctrl) = C_{3i}(ctrl) + \beta_i + \gamma_i$. 
	Hence, $2m = C_{3m}(ctrl) = C_0(ctrl) + \sum_{0 \le i \le m-1} (\beta_i + \gamma_i)$. Since $C_0(ctrl) = 0$ and since each $\beta_i$ and $\gamma_i$ is a fraction,
	it follows that for every $i$, $\beta_i = \gamma_i = 1$. Hence, $C_{3i+3}(ctrl) = C_{3i}(ctrl) + 2$. Repeatedly applying the completeness part of the control counter
	simulation lemma (Lemma~\ref{lem:ctrl}), it follows that there is a run from $c_{init}$ to $c_{fin}$ using the sequence $\alpha_1 r_1, \alpha_2 r_2, \dots, \alpha_m r_m$,
	thereby completing the proof.

		\section{Proof details for Theorem~\ref{thm:discussion-hardness}}\label{app:discussion}

In this section, we present the proof details behind Theorem~\ref{thm:discussion-hardness}. 
To prove this, we first show that coverability in $\QVASSRL$ is $\NEXPTIME$-hard even for a constant number of counters and even when all the numbers are encoded in unary. We do this in four stages.
The first two stages recall the chain of reductions starting from Bounded-MPCP and observe some minute details about those reductions.
The next two stages exploit these details in order to prove the desired hardness result.


\subsection*{Stage 1: Hardness of structured instances of reachability in $\NCCM$. }

We first prove that reachability in $\NCCM$ is $\NEXPTIME$-hard even under some special constraints on the instances. An instance $\langle \cC,c_{init},c_{fin},m \rangle$ of the reachability problem for $\NCCM$ is said to be 
\emph{structured} if all the \emph{denominators} of the initial and the final 
configurations are powers of 2. 
We now show that reachability in $\NCCM$ is 
$\NEXPTIME$-hard even for structured instances. 
To do this, we first trace the chain of reductions starting from Bounded-PCP.

In~\cite[Section 6.1]{AiswaryaMS22}, it was shown that Bounded-PCP is $\NEXPTIME$-hard
even when the bound $\ell$ is restricted to be a power of 2. By slightly modifying the
reduction given in the proof of Theorem~\ref{thm:mcm} (by adding more nope instructions
if necessary), we can then show that reachability in $\MCM$ is $\NEXPTIME$-hard
even when the bound on the length is restricted to be a power of 2.
Examining the reduction from reachability in $\MCM$ to reachability in $\NCCM$ given in Section~\ref{sec:lower-bound-mcm-nccm}, we notice that all the numerators of the initial and the 
final configurations constructed by the reduction are all either 0 or 1 and all the
denominators are either $1$ or $m$, where $m$ is the bound on the length of the reachability in the given $\MCM$. Hence, this immediately implies that reachability in $\NCCM$
is $\NEXPTIME$-hard for structured instances.

\subsection*{Stage 2: Hardness of super-structured instances of coverability in $\QVASSRL$. }

Next we prove that coverability in $\QVASSRL$ is hard even with some special 
constraints on its instances. An instance $\langle \mach,c_{init},c_{fin},m \rangle$ of the coverability problem for $\QVASSRL$ is said to be 
\emph{structured} if all the \emph{denominators} of the initial and the final 
configurations are powers of 2. It is said to be \emph{super-structured} if it is structured and all the values of the initial and the final configurations are \emph{at most 1}. We now show that coverability in $\QVASSRL$ is
$\NEXPTIME$-hard even when restricted to super-structured instances.

Consider the reduction from reachability in $\NCCM$ to coverability in $\QVASSRL$
given in Section~\ref{sec:lower-bound-nccm-pvass}. If the given instance of reachability in $\NCCM$
is structured, then the reduction also outputs a structured instance of coverability in $\QVASSRL$. We now give a reduction from structured instances of coverability in $\QVASSRL$ to super-structured instances of coverability in $\QVASSRL$.

Let $\langle \mach,c_{init},c_{fin},m \rangle$ be a structured instance of coverability in $\QVASSRL$.
Let $2^k$ be the smallest power of two which is strictly bigger than all the numbers
appearing in $c_{init}, c_{fin}$ and $m$. (Notice that $2^k$ can be at most twice
the maximum number appearing in any one of these terms).
To $\mach$, add two new counters $b$ and $\overline{b}$ and replace each rule $r := (q,t,q')$ with the gadget in Figure~\ref{fig:full-to-super}.

\begin{figure}
	\begin{center}
		\tikzstyle{node}=[circle,draw=black,thick,minimum size=12mm,inner sep=0.75mm,font=\normalsize]
		\tikzstyle{edgelabelabove}=[sloped, above, align= center]
		\tikzstyle{edgelabelbelow}=[sloped, below, align= center]
		\begin{tikzpicture}[->,node distance = 2cm,scale=0.8, every node/.style={scale=0.8}]
			\node[node] (q) {$q$};
			\node[node, right = of q] (r) {$r_{mid}$};
			\node[node, right = of r] (q') {$q'$};
			
			\draw(q) edge[edgelabelabove] node{$t, \cmm{b}, \cpp{\overline{b}}$} (r);
			\draw(r) edge[edgelabelabove] node{$\cpp{b}, \cmm{\overline{b}}$} (q');
		\end{tikzpicture}
	\end{center}
	\caption{Gadget for the rule $r := (q,t,q')$. Assume that the 
		values of $b$ and $\overline{b}$ are $1/2^k$ and 0 to begin with respectively.
		The first rule ($r^b$) from $q$ to $r_{mid}$ does everything that $r$ does and in addition decrements $b$ and increments $\overline{b}$. The second rule ($r^e$) from $r_{mid}$ to $q'$ aims to bring the values of $b$ and $\overline{b}$ to $1/2^k$ and 0 respectively.}
	\label{fig:full-to-super}
\end{figure}
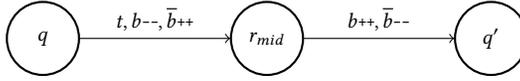 

Call the resulting $\CVASS$ as $\mach_1$. For any configuration $C$ of $\mach$, let
$H(C)$ denote the \emph{set of configurations} of $\mach_1$ such that $D \in H(C)$
iff $\sta{D} = \sta{C}, D(x) = C(x)/2^k$ for every counter $x$ of $\mach$ and
$D(b) + D(\overline{b}) = 1/2^k$. Note that any two configurations in $H(C)$
differ only by their values of $b$ and $\overline{b}$. Further, fixing any one of these
two values determines the value of the other as well. For any $\zeta \le 1/2^k$, let 
$H(C)_\zeta$ denote the unique configuration in $H(C)$ whose $b$ value is $\zeta$.
The following lemma is a simple consequence of the construction of the gadgets in $\mach_1$.

\begin{lemma}[Fractional Simulation Lemma]
	Suppose $C$ is a configuration and $r$ is a rule of $\mach$.
	\begin{itemize}
		\item (Soundness): If $C \act{\alpha r} C'$ in $\mach$, then for any
		$\zeta \le 1/2^k$, $H(C)_\zeta \act{\alpha/2^k r^b, (1-\alpha)/2^k r^e} H(C')_\zeta$.
		\item (Completeness): If $H(C)_\zeta \act{\alpha r^b, \beta r^e} D$ for 
		some $\alpha, \beta, D$ and some $0 < \zeta \le 1/2^k$ then $\alpha, \beta \le 1/2^k$ and
		there exists $C',\zeta'$ such that $0 < \zeta' \le 1/2^k, D = H(C')_{\zeta'}$ and $C \act{(\alpha \cdot 2^k) r} C'$.
	\end{itemize}
\end{lemma}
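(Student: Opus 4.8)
The plan is to prove both directions by a direct calculation on the two–rule gadget of \cref{fig:full-to-super}, using only two structural facts: the \emph{complement invariant} $D(b)+D(\overline b)=1/2^k$, which every gadget preserves, and the \emph{scaling dictionary} $D(x_i)=C(x_i)/2^k$ relating a configuration $C$ of $\mach$ to the members of $H(C)$. The conceptual point is that $b$ plays the role of a ceiling: since $b$ never exceeds $1/2^k$ and $r^b$ decrements $b$, the rule $r^b$ can only ever be fired with a fraction $\le 1/2^k$, and this is precisely the slack that lets the $\qnz$‑semantics of $\mach_1$ mirror that of $\mach$ up to the factor $2^k$ \emph{without any zero tests}.

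For soundness, assume $C\act{\alpha r}C'$ with $r=(q,t,q')$, so $\valu{C'}=\valu{C}+\alpha t$ and $\valu{C'}\ge 0$. Fix $\zeta$ in the intended range (for the reduction one always has $\zeta=1/2^k$; in general one only needs $\alpha/2^k\le\zeta\le1/2^k$) and start from $H(C)_\zeta$. First fire $r^b$ with fraction $\alpha/2^k$: the $\mach_1$‑counters for the $x_i$ become $C(x_i)/2^k+(\alpha/2^k)t(x_i)=C'(x_i)/2^k\ge0$, the counter $b$ becomes $\zeta-\alpha/2^k\ge0$, and $\overline b$ becomes $(1/2^k-\zeta)+\alpha/2^k\ge0$, so the step is legal. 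Then fire $r^e$ with fraction $\alpha/2^k$, which restores $b$ to $\zeta$ and $\overline b$ to $1/2^k-\zeta$ while leaving the $x_i$‑counters at $C'(x_i)/2^k$; the endpoint is then by definition $H(C')_\zeta$. (Using instead the $r^e$‑fraction $(1-\alpha)/2^k$ lands at $H(C')_{\zeta'}$ with $\zeta'=\zeta+(1-2\alpha)/2^k$ by the same computation; the choice $\alpha/2^k$ is the clean one.)

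For completeness, take a run $H(C)_\zeta\act{\alpha r^b}D_1\act{\beta r^e}D$ with $\alpha,\beta\in(0,1]$ and $0<\zeta\le1/2^k$, and read off the constraints forced by nonnegativity. The $r^b$‑step gives $D_1(b)=\zeta-\alpha\ge0$, hence $\alpha\le\zeta\le1/2^k$, and $D_1(x_i)=C(x_i)/2^k+\alpha t(x_i)\ge0$, hence $C(x_i)+\alpha 2^k t(x_i)\ge0$. The $r^e$‑step gives $D(\overline b)=(1/2^k-\zeta+\alpha)-\beta\ge0$, which together with $\alpha\le\zeta$ yields $\beta\le1/2^k-\zeta+\alpha\le1/2^k$; the complement invariant is preserved, and $\zeta':=D(b)=\zeta-\alpha+\beta$ satisfies $0<\zeta'\le1/2^k$ (from $\beta>0$, $\zeta\ge\alpha$, and the bound on $\beta$). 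Now define $C'$ to be the configuration with state $q'$ and $C'(x_i):=C(x_i)+(\alpha 2^k)t(x_i)$; by the inequality just derived $C'$ is a nonnegative configuration of $\mach$, and since $\alpha 2^k\in(0,1]$ we get $C\act{(\alpha 2^k)r}C'$. Finally $D(x_i)=C(x_i)/2^k+\alpha t(x_i)=C'(x_i)/2^k$, $\sta{D}=q'=\sta{C'}$ and $D(b)=\zeta'$, so $D=H(C')_{\zeta'}$, as required.

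The calculation itself has no real obstacle; the one point that needs care is matching the nonnegativity constraints on $b$ and $\overline b$ exactly to the bounds $\alpha,\beta\le1/2^k$ and to the range of $\zeta'$ — in particular observing that $\alpha\le1/2^k$ is precisely what makes the decoded fraction $\alpha 2^k$ legal in $\mach$ (the entire reason for the ceiling counter $b$), and, on the soundness side, keeping $\zeta$ large enough that the intermediate configuration stays nonnegative. With the lemma in hand, \cref{thm:discussion-hardness} follows by the expected route: iterate the lemma along a run (induction on length) to transfer, with a factor $2^k$ on fractions and a doubling of the step count, runs of $\mach$ between $c_{init}$ and $c_{fin}$ into runs of $\mach_1$ between $H(c_{init})_{1/2^k}$ and a suitable $H(c_{fin})_{\zeta'}$; since $2^k$ exceeds every number occurring in $c_{init},c_{fin},m$, all entries of these endpoint configurations are at most $1$ with power‑of‑two denominators, so the resulting instance is super‑structured and uses only two extra counters. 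Combining this with the structured‑ness of the instances from Stages~1--2, the amplifier gadgets of the later stages (which move to unary‑encoded numbers at the cost of a constant number of extra counters), and the reduction of \cref{subsec:cvass-to-cgvas} from $\QVASSRL$ to $\QPVASS$ (which preserves coverability, bounded stack‑height, the number of counters, and the encoding) yields the claimed $\NEXPTIME$‑hardness of $\QPVASS$ coverability over dimension $85$, bounded stack‑height, and unary‑encoded numbers.
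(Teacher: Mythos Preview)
Your argument is correct and is exactly the direct computation the paper intends (the paper itself only calls the lemma ``a simple consequence of the construction'' and gives no further details). You are also right to flag the $r^e$-fraction in the soundness clause: with the paper's literal $(1-\alpha)/2^k$ one lands at $H(C')_{\zeta+(1-2\alpha)/2^k}$ rather than $H(C')_\zeta$ (and the fraction is $0$ when $\alpha=1$, hence not a legal firing); your choice $\alpha/2^k$ for $r^e$ is the intended one and matches the caption of \cref{fig:full-to-super}.
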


Because of the above lemma, we immediately get
\begin{theorem}
	$c_{init}$ can cover $c_{fin}$ in $\mach$ in exactly $m$ steps if and only if $H(c_{init})_{1/2^k}$ can
	cover $H(c_{fin})_{1/2^k}$ in $\mach_1$ in exactly $2m$ steps. 
\end{theorem}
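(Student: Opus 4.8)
The plan is to obtain the theorem as a direct corollary of the Fractional Simulation Lemma, iterating it along the run; this is the same pattern used in the proofs of \cref{thm:simulation} and \cref{thm:nccm-to-cvass}, and both directions amount to book‑keeping once the lemma is available.

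For the ``only if'' direction I would take a length‑$m$ covering run $c_{init}=C_0\act{\alpha_1 r_1}C_1\act{\alpha_2 r_2}\cdots\act{\alpha_m r_m}C_m$ of $\mach$, so that $\sta{C_m}=\sta{c_{fin}}$ and $C_m(x)\ge c_{fin}(x)$ for every counter $x$ of $\mach$. Applying the soundness part of the Fractional Simulation Lemma to each step, keeping the auxiliary parameter equal to $1/2^k$ throughout, and concatenating the resulting two‑step runs yields a run $H(c_{init})_{1/2^k}\act{*}H(C_m)_{1/2^k}$ of exactly $2m$ steps in $\mach_1$. It then remains to check that $H(C_m)_{1/2^k}\ge H(c_{fin})_{1/2^k}$: the two configurations share the state $\sta{c_{fin}}$, the new counters $b,\overline b$ carry the fixed values $1/2^k$ and $0$ in both, and for each original counter $x$ we have $H(C_m)_{1/2^k}(x)=C_m(x)/2^k\ge c_{fin}(x)/2^k=H(c_{fin})_{1/2^k}(x)$.

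For the ``if'' direction I would take a run of exactly $2m$ steps from $H(c_{init})_{1/2^k}$ ending at a configuration $D$ with $\sta{D}=\sta{c_{fin}}$ and $D\ge H(c_{fin})_{1/2^k}$. The key structural observation about \cref{fig:full-to-super} is that in $\mach_1$ every original state has only outgoing edges labelled $r^b$ (each leading to the fresh mid‑state $r_{mid}$), and every mid‑state $r_{mid}$ has $r^e$ as its unique outgoing edge; since both endpoints of the run lie at original states, the run is forced to factor as $m$ consecutive gadget traversals $H(C_{i-1})_{\zeta_{i-1}}\act{\alpha_i r_i^b,\ \beta_i r_i^e}H(C_i)_{\zeta_i}$. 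Starting from $C_0=c_{init}$ and $\zeta_0=1/2^k$, I would then apply the completeness part of the Fractional Simulation Lemma inductively: each application produces a configuration $C_i$ of $\mach$, a value $0<\zeta_i\le 1/2^k$ with the $i$‑th intermediate configuration equal to $H(C_i)_{\zeta_i}$, and a step $C_{i-1}\act{(\alpha_i\cdot 2^k) r_i}C_i$ in $\mach$; the hypothesis $0<\zeta_{i-1}\le 1/2^k$ needed to invoke completeness is exactly what the previous application guarantees, with the base case $\zeta_0=1/2^k$ also lying in range. After $m$ applications one gets a length‑$m$ run $c_{init}\act{*}C_m$ with $D=H(C_m)_{\zeta_m}$; restricting $D\ge H(c_{fin})_{1/2^k}$ to the original counters gives $C_m(x)/2^k\ge c_{fin}(x)/2^k$, hence $C_m\ge c_{fin}$, and with $\sta{C_m}=\sta{c_{fin}}$ this run witnesses coverability of $c_{fin}$ in $\mach$ in $m$ steps.

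I do not expect a genuine obstacle here; the two points that need explicit attention are (i)~the combinatorial argument forcing the $\mach_1$‑run to be a concatenation of complete gadgets, so that completeness applies step by step, and (ii)~the invariant that the scaling parameters $\zeta_i$ remain in $(0,1/2^k]$, which is what permits reapplying completeness. Both are routine but must be spelled out, since omitting them would leave the same kind of gap the analogous simulation theorems in Sections~\ref{sec:lower-bound-mcm-nccm} and~\ref{sec:lower-bound-nccm-pvass} take care to close.
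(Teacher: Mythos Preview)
Your proposal is correct and matches the paper's approach: the paper simply states that the theorem follows immediately from the Fractional Simulation Lemma, and your write-up is precisely the routine inductive unrolling of that lemma in both directions, with the same structural observation about alternating $r^b/r^e$ steps and the same invariant on the $\zeta$-parameters.
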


Notice that $\langle \mach_1,H(c_{init})_{1/2^k},H(c_{fin})_{1/2^k},2m \rangle$ is a super-structured 
instance of coverability in $\QVASSRL$, which completes the required claim. 

\subsection*{Stage 3: Hardness of coverability in $\QVASSRL$ with unary encodings using amplifiers.}

In the next stage, assuming the existence of objects called \emph{amplifiers}, we prove our main result, that the coverability problem for $\QVASSRL$ is $\NEXPTIME$-hard even for a constant number of counters and even
when all the numbers are encoded in unary. We proceed to first define the notion of an amplifier.

Let $p_1, p_2, \dots, p_m$ be natural numbers, all of them encoded in binary using exactly $n$ bits, and let $k$ be a number such that
each $p_i \le 2^k$. Assume the following theorem.
\begin{theorem}[$\CVASS$ Amplifiers Theorem]~\label{thm:amplifier}
	In polynomial time in $n, m$ and $k$, we can construct a $\CVASS$ $Amp(p_1,\dots,p_m,k)$ and
	two configurations $c$ and $c'$ satisfying the following properties:
	\begin{itemize}
		\item $Amp(p_1,\dots,p_m,k)$ has $2m+5$ counters, which are called $y,y_1,\dots,y_{m+2}, \overline{y_1},\dots,\overline{y_{m+2}}$.
		\item The control flow graph of $Amp(p_1,\dots,p_m,k)$ is a path of length exactly $3(2k+4nm)$ which begins at $\sta{c}$ and 
		ends at $\sta{c'}$.
		\item Every value of $c$ and the value $c'(y)$ is encoded in unary.
		\item $c'(y_i) = p_i/2^k$ for every $i \le m$, $c'(\overline{y_i}) = 1 - c'(y_i)$ for every $i \le m+2$.
		\item There is exactly one run from $c$ which can reach a configuration $d$ such that $d(y) \ge c'(y)$.
		Further, this run is of length exactly $3(2k+4nm)$ and the configuration reached at the end of this run is $c'$ itself.
	\end{itemize}	
Such a tuple $\langle Amp(p_1,\dots,p_m,k),c,c' \rangle$ is called a $\CVASS$ amplifier for $(p_1,\dots,p_m,k)$.
\end{theorem}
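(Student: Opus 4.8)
The plan is to realise the amplifier as a small deterministic $\NCCM$ that computes the required fractions, and then to run it through the $\NCCM\to\QVASSRL$ translation of \cref{sec:lower-bound-nccm-pvass}. Concretely, I would first build an $\NCCM$ $\cC_{amp}$ with $m+2$ counters $z_1,\dots,z_m,s,t$ whose control-flow graph is a single path. Phase~1 uses $2k$ rules: starting from the configuration with $s=1$ and all other counters $0$, halve $s$ exactly $k$ times, each halving being the two-rule block ``$\cpp{t},\cmmtwo{s};\,s=0?$'' followed by ``$\cpp{s},\cmm{t};\,t=0?$'' (the forced-fraction trick of the initialization gadget in \cref{fig:init}, stripped of its bookkeeping counters, which are unnecessary here since the path length already fixes the number of halvings). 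After Phase~1 we have $s=2^{-k}$, $t=0$. Phase~2 uses $4nm$ rules: for each $i=1,\dots,m$ in turn, run Horner's scheme $v\leftarrow 2v+b\cdot 2^{-k}$ on the $n$ binary digits $b_{n-1},\dots,b_0$ of $p_i$ (hard-wired into the states, since $p_1,\dots,p_m$ are part of the input), realising the ``$2v$'' part by the doubling gadget of \cref{fig:case-two} on $z_i$ and, when the current digit is $1$, realising ``$+2^{-k}$'' by the two-rule gadget of \cref{fig:case-one} with $s,t$ in the roles of $st,te$; when the digit is $0$ we pad with two zero-effect rules that merely zero-test $t$. Since $p_i\le 2^k$, one checks that every intermediate value of $z_i$ and of the shared temporary $t$ stays in $[0,1]$, so the computation is valid in the $[0,1]$ semantics; after processing $p_i$ one has $z_i=p_i/2^k$ while $s$ is back to $2^{-k}$ and $t=0$, so the next block can reuse $s,t$. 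The resulting $\NCCM$ has exactly $2k+4nm$ rules, has a single run from its initial configuration $c_{in}$, and that run reaches the configuration $c_{fin}$ with $z_i=p_i/2^k$ for $i\le m$, $s=2^{-k}$, $t=0$.

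Second, I would apply the reduction of \cref{sec:lower-bound-nccm-pvass} to $\cC_{amp}$: add a complement counter for each of the $m+2$ counters, identify the controlling counter of that construction with $y$, and translate each of the $2k+4nm$ rules into the three-transition gadget of \cref{fig:cvass} (each rule is given a zero-test on $t$ if it had none). Since $\cC_{amp}$ is a path, so is the resulting $\QVASSRL$, and it has exactly $3(2k+4nm)$ transitions. Set $c:=G(c_{in})_0$, i.e.\ $s=1,\bar s=0$, $z_i=0,\bar z_i=1$, $t=0,\bar t=1$, $y=0$; all entries are $0$ or $1$, hence unary-encodable. Set $c':=G(c_{fin})_{2(2k+4nm)}$; by construction $c'(y_i)=p_i/2^k$ for $i\le m$ and $c'(\overline{y_i})=1-c'(y_i)$ for all $i\le m+2$, while $c'(y)=2(2k+4nm)$, which is polynomial in $k,n,m$ and hence unary-encodable in polynomial time. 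Renaming $z_i\mapsto y_i$, $s\mapsto y_{m+1}$, $t\mapsto y_{m+2}$ gives the counter names of the statement, and the whole construction is clearly polynomial time in $n,m,k$; it also produces the claimed $2m+5$ counters and the claimed path of length $3(2k+4nm)$.

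Finally, the uniqueness claim. Along the $\QVASSRL$, $y$ is only ever incremented, and only on the two ``equality-test'' transitions of each of the $2k+4nm$ gadgets, each time by a fraction at most $1$; hence along any run starting at $c$ the value of $y$ is at most $2(2k+4nm)=c'(y)$, and $d(y)\ge c'(y)$ forces $d(y)=c'(y)$, which in turn forces the run to traverse the entire path and to fire every such test transition with fraction exactly $1$. Applying \cref{lem:ctrl} block by block, exactly as in the proof of \cref{thm:nccm-to-cvass}, a run of the $\QVASSRL$ in which all test transitions fire completely corresponds exactly to a run of $\cC_{amp}$; since $\cC_{amp}$ has a single run of the relevant length and it ends in $c_{fin}$, there is exactly one run of the $\QVASSRL$ from $c$ reaching a configuration $d$ with $d(y)\ge c'(y)$, this run has length exactly $3(2k+4nm)$, and it ends in $c'=G(c_{fin})_{2(2k+4nm)}$, as required.

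The main obstacle I anticipate is the careful bookkeeping in Phase~2: one must verify that the Horner recurrence together with the ``borrow $s$ / refill $s$'' mechanism never pushes any counter — in particular the single shared temporary $t$ — outside $[0,1]$. This is exactly where the hypothesis $p_i\le 2^k$ is used, it is what pins down the step count $4nm$ (and hence the advertised path length $3(2k+4nm)$), and it is also where the tight counter budget bites, since $s$ and $t$ must be time-shared across all $m$ outputs. The remaining verifications — that the control-flow graph is genuinely a simple path (so the run of $\cC_{amp}$, and hence of the $\QVASSRL$, is unique), and that the values in $c$ and $c'$ are as claimed — are routine instantiations of the gadgets and lemmas already established.
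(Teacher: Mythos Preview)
Your proposal is correct and follows essentially the same two-step plan as the paper: first build a deterministic path-shaped $\NCCM$ on $m+2$ counters that computes the target fractions (the paper explicitly isolates this as an ``$\NCCM$ amplifier'' and then applies the halve/add/double gadgets just as you do), then push it through the translation of \cref{sec:lower-bound-nccm-pvass} and argue uniqueness via the controlling counter, exactly as you describe. One small wrinkle the paper handles that your Horner scheme does not: the doubling gadget of \cref{fig:case-two} cannot fire when $z_i=0$ (the zero-test forces the firing fraction to equal $z_i$, but fractions lie in $(0,1]$), so the very first step for each $z_i$ must be padding or an addition rather than a doubling---the paper arranges this by always beginning each block with an addition rather than a doubling.
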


Here the control flow graph of a $\CVASS$ $\mach$ is the graph obtained by taking the set of states of $\mach$ as its vertices
and connecting any two vertices by an edge if there is a rule between them.

The reason behind calling such a $\CVASS$ an amplifier is that starting from a configuration encoded in \emph{unary}, it is able
to reach a configuration encoded in \emph{binary}. 

Assuming the above theorem, we now reduce coverability in $\QVASSRL$ over super-structured instances to coverability in $\QVASSRL$ with unary encodings,
which will prove the required $\NEXPTIME$-hardness result. In the next subsection, we will prove the $\CVASS$ Amplifiers theorem, which
will complete our overall proof.

Let $\langle \mach,c_{init},c_{fin},\ell\rangle$ be a super-structured instance of coverability in $\QVASSRL$ with counters $x_1,\dots,x_m$. (Note that $c_{init}$ and $c_{fin}$ are encoded
in binary). Without loss of generality, we can assume that there are no incoming rules (resp. outgoing rules) to (resp. from) $\sta{c_{init}}$ (resp. $\sta{c_{fin}}$). By definition of a super-structured instance, the denominator of each value in $c_{init}$ and $c_{fin}$ is a power of 2.
By multiplying each numerator with a sufficient power of 2, we may assume that the denominator of each value in the initial and final configurations
is the same $2^k$ for some number $k$. (Note that this is a polynomial-time transformation).
Let $c_{init}(x_i) = \alpha_i/2^k$ and $c_{fin}(x_i) = \beta_i/2^k$ for each $i$. By padding with sufficient zeros, we can assume that each $\alpha_i$ and $\beta_i$
is described the same number of bits (say $n$). Hence, we have $2m$ numerators each of which is less than $2^k$ and each of which are describable by exactly 
$n$ bits.

The essential idea behind the reduction is to simply plugin $Amp(\alpha_1,\dots,\alpha_m,k)$ and $Amp(\beta_1,\dots,\beta_m,k)$ before
and after $\mach$. We now expand on this idea. By Theorem~\ref{thm:amplifier}, in polynomial time in the size
of the given super-structured instance we can construct amplifiers $\langle \mach_{d},d_{init},d_{fin} \rangle$ and $\langle \mach_{e},e_{init},e_{fin} \rangle$
for $(\alpha_1,\dots,\alpha_m,k)$ and $(\beta_1,\dots,\beta_m,k)$ respectively.
Let $y,y_1,\dots,y_{m+2},\overline{y_1},\dots,\overline{y_{m+2}}$ be the counters of $\mach_d$ and 
$z,z_1,\dots,z_{m+2},\overline{z_1},\dots,\overline{z_{m+2}}$ be the counters of $\mach_e$ respectively.
Without loss of generality, we can assume that the sets of counters $\{x_1,\dots,x_m\}, \{y,y_1,\dots,\overline{y_{m+2}}\}, \{z,z_1,\dots,\overline{z_{m+2}}\}$ are disjoint.

\paragraph*{The construction. } 
Let $q_1, q_2, q_3, q_4, q_5, q_6$ be the states of $d_{init}, d_{fin}, c_{init}, c_{fin}, e_{init}, e_{fin}$ respectively.
We now construct a $\CVASS$ $\cC$ as given in Figure~\ref{fig:final}. 
We use the following shorthand notation in that figure as follows:
Between the states $q_2$ and $q_3$ we have a rule labelled by $T^{inc}(x,y,\overline{y})$.
This is shorthand for $m+2$ rules of the form 
$$q_2 \act{\cpp{x_1},\cmm{y_1},\cpp{\overline{y_1}}} p_1 
\act{\cpp{x_2},\cmm{y_2},\cpp{\overline{y_2}}} p_2 \dots p_{m-1} \act{\cpp{x_m},\cmm{y_m},\cpp{\overline{y_m}}} p_m \act{\cmm{y_{m+1}},\cpp{\overline{y_{m+1}}}}
	p_{m+1} \act{\cmm{y_{m+2}},\cpp{\overline{y_{m+2}}}} q_3$$
where $p_1,\dots,p_{m+1}$ are new states. Similarly, from $q_6$ to $end$ we have
a rule labelled by $T^{dec}(x,z,\overline{z})$. This is also shorthand for $m+2$ rules of the same kind
as before, obtained by introducing fresh states $p_1',\dots,p_{m+1}'$ and the following rules,
$$q_6 \act{\cmm{x_1},\cmm{z_1},\cpp{\overline{z_1}}} p_1' 
\act{\cmm{x_2},\cmm{z_2},\cpp{\overline{z_2}}} p_2' \dots p_{m-1}' \act{\cmm{x_m},\cmm{z_m},\cpp{\overline{z_m}}} p_m' \act{\cmm{z_{m+1}},\cpp{\overline{z_{m+1}}}}
p_{m+1}' \act{\cmm{z_{m+2}},\cpp{\overline{z_{m+2}}}} end$$

Note that in the first set of rules the $x_i$'s were incremented, whereas in the second set they are decremented.

\begin{figure}
	\begin{center}
		\tikzstyle{node}=[circle,draw=black,thick,minimum size=7mm,inner sep=0.75mm,font=\normalsize]
		\tikzstyle{edgelabelabove}=[sloped, above, align= center]
		\tikzstyle{edgelabelbelow}=[sloped, below, align= center]
		\begin{tikzpicture}[->,node distance = 0.2cm,scale=0.8, every node/.style={scale=0.8}]
			\node[node] (start1) {{\small $q_1$}};
			\node[node, draw = none, right = of start1] (dots1) {$\dots \dots$};
			\node[node, right = of dots1] (end1) {{\small $q_2$}};
			\node[node, draw = none, below = of start1, xshift = 1.5cm, yshift = -0.2cm] (A1) {{\Large $\mach_{d}$}};
			\draw[dashed] (start1) ++(-0.8,1) rectangle (3,-1);
			
			\node[node, right = of end1, xshift = 2.25cm] (start2) {{\small $q_3$}};
			\node[node, draw = none, right = of start2] (dots2) {$\dots \dots$};
			\node[node, right = of dots2] (end2) {{\small $q_4$}};
			\node[node, draw = none, below = of start2, xshift = 1.5cm, yshift = -0.4cm] (A2) {{\Large $\mach$}};
			\draw[dashed] (start2) ++(-0.6,1) rectangle (8.2,-1);
			
			\node[node, right = of end2, xshift=1cm] (startl) {{\small $q_5$}};
			\node[node, draw = none, right = of startl] (dotsl) {$\dots \dots$};
			\node[node, right = of dotsl] (endl) {{\small $q_6$}};
			\node[node, draw = none, below = of startl, xshift = 1.5cm, yshift = -0.2cm] (Al) {{\Large $\mach_{e}$}};
			\draw[dashed] (startl) ++(-0.8,1) rectangle (13.6,-1);

			\node[node, right = of endl, xshift = 2cm] (end) {{\small $end$}};
			
			\draw(end1) edge[edgelabelabove] node{$T^{inc}(x,y,\overline{y})$} (start2);
			\draw(end2) edge (startl);
			\draw(endl) edge[edgelabelabove] node{$T^{dec}(x,z,\overline{z})$} (end);
		\end{tikzpicture}
	\end{center}
	\caption{Construction of the final $\CVASS$ $\cC$.}
	\label{fig:final}
\end{figure}

\newcommand{\boldy}{\mathbf{y}}
Note that the set of counters of $\cC$ is the union of the set of counters of $\mach_{d}, \mach$ and $\mach_{e}$
which we will respectively denote by $\boldy,\bx,\bz$ respectively.

\subsubsection*{Observations about $\cC$} Before we state the initial and the final configurations of our reduction and prove its correctness, let us observe
some facts about the constructed $\CVASS$ $\cC$. 

\paragraph*{Fact 1: } Let $c$ be any configuration of $\cC$ whose state is $q_1$ such that $c(w) = d_{init}(w)$ for every counter $w \in \boldy$.
Let $c'$ be the configuration of $\cC$ whose state is $q_2$ and whose counter values are the same as
that of $c$, except that $c'(w) = d_{fin}(w)$ for every counter $w \in \boldy$.
Then, there is a run from $c$ to $c'$ of length exactly $3(2k + 4nm)$. Further, any run from $c$ to a configuration $c''$ such that
$c''(y) \ge c'(y)$ must necessarily visit $c'$ after exactly $3(2k+4nm)$ steps.

\paragraph*{Proof of Fact 1: } This is simply the assumption on the amplifier $\langle \mach_d, d_{init}, d_{fin} \rangle $ recast in terms of the 
$\CVASS$ $\cC$. By construction of $\cC$ and by the fact that $\langle \mach_d, d_{init}, d_{fin} \rangle$ is a $\CVASS$ amplifier, it follows that
there is a run from $c$ to $c'$ of length exactly $3(2k+4nm)$ in $\cC$. Further, suppose there is a run from $c$ to some configuration
$c''$ such that $c''(y) \ge c'(y)$. Since the counter $y$ is never affected after leaving the state $q_2$ to fire the rules
corresponding to $T^{inc}(x,y,\overline{y})$, we can assume that the supposed run from $c$ to $c''$ consists only of those rules belonging to $\mach_d$ in $\cC$.
By definition of an amplifier, the required claim then follows.

\paragraph*{Fact 2: } Let $c$ be any configuration of $\cC$ whose state is $q_2$ such that $c(w) = d_{fin}(w)$ for every counter $w \in \boldy$.
Let $c'$ be the configuration of $\cC$ whose state is $q_3$ and whose counter values are the same as that of $c$, except that
$c'(y_i) = 0$ and $c'(\overline{y_i}) = 1$ for every $1 \le i \le m+2$ and $c'(x_i) = c(x_i) + \alpha_i/2^k$ for every $1 \le i \le m$. 
Then, there is a run from $c$ to $c'$ of length exactly $m+2$. Further, any run from $c$ to a configuration $c''$ such that
$c''(y_i) \ge c'(y_i)$ and $c''(\overline{y_i}) \ge c'(\overline{y_i})$ for every $1 \le i \le m+2$ must necessarily visit
$c'$ after exactly $m+2$ steps.

\paragraph*{Proof of Fact 2: } By construction of the rules of the gadget $T^{inc}(x,y,\overline{y})$, it follows that there is a run
from $c$ to $c'$ of length exactly $m+2$. Further, suppose there is a run from $c$ to some configuration $c''$
such that $c''(y_i) \ge 0$ and $c''(\overline{y_i}) \ge 1$ for every $1 \le i \le m+2$. 
By definition of an amplifier, it follows that
the only outgoing rules from $q_2$ in $\cC$ are the edges corresponding to the rules in $T^{inc}(x,y,\overline{y})$.
By assumption on the configuration $c$, we have 
$c(y_i) = \alpha_i/2^k$ for every $1 \le i \le m$ and $c(\overline{y_i}) = 1 - c(y_i)$ for every $1 \le i \le m+2$. 
Since the counters $\{y_i,\overline{y_i} : 1 \le i \le m+2\}$ are never affected after the rules in $T^{inc}(x,y,\overline{y})$, 
we can assume that supposed run from $c$ to $c''$ consists only of those rules in $T^{inc}(x,y,\overline{y})$.
By construction of the rules in $T^{inc}(x,y,\overline{y})$ and by the definition of $c$ and $c'$, it follows
that this run must necessarily visit $c'$ after exactly $m+2$ steps.

\paragraph*{Fact 3: } Let $c$ be any configuration of $\cC$ whose state is $q_5$ such that $c(w) = e_{init}(w)$ for every counter $w \in \bz$. 
Let $c'$ be the configuration of $\cC$ whose state is $q_6$ and whose counter values are the same as that of $c$, except
that $c'(w) = e_{fin}(w)$ for every counter $w \in \bz$. 
Then there is a run from $c$ to $c'$ of length exactly $3(2k+4nm)$. 
Further, any run from $c$ to a configuration $c''$ such that $c''(z) \ge c'(z)$ must necessarily visit 
$c'$ after exactly $3(2k+4nm)$ steps.

\paragraph*{Proof of Fact 3: } The proof is similar to the proof of Fact 1.

\paragraph*{Fact 4: } Let $c$ be any configuration of $\cC$ whose state is $q_6$ such that $c(w) = e_{fin}(w)$ for every counter $w \in \bz$.
If $c(w) \ge c_{fin}(w)$ for every $w \in \bx$, there is a run from $c$ to $c'$ of length exactly $m+2$ where $c'$
is the configuration of $\cC$ whose state is $end$ and whose counter values are the same as that of $c$, except that
$c'(z_i) = 0$ and $c'(\overline{z_i}) = 1$ for every $1 \le i \le m+2$ and $c'(x_i) = c(x_i) - \beta_i/2^k$ for every $1 \le i \le m$. 
Further, any run from $c$ to a configuration $c''$ such that $c''(z_i) \ge 0$ and $c''(\overline{z_i}) \ge 1$ for every $1 \le i \le m+2$ must necessarily mean that $c(w) \ge c_{fin}(w)$ for every $w \in \bx$
and also that the run must necessarily visit $c'$ after exactly $m+2$ steps.

\paragraph*{Proof of Fact 4: } The proof is similar to the proof of Fact 2.

\subsubsection*{Correctness of the reduction.}

Now we prove the correctness of our reduction. Let $f_{init}$ be the configuration of $\cC$ whose state is $q_1$ such that $f_{init}(w) = d_{init}(w)$ for every $w \in \boldy$,
$f_{init}(w) = e_{init}(w)$ for every $w \in \bz$ and $f_{init}(w) = 0$ for every $w \in \bx$. Let $f_{fin}$ be the configuration of $\cC$ whose
state is $end$ such that $f_{fin}(w) = 0$ for every $w \in \{y_1,\dots,y_{m+2},
z_1,\dots,z_{m+2}\}$, $f_{fin}(w) = 1$ for every $w \in \{\overline{y_1},\dots,\overline{y_{m+2}},\overline{z_1},\dots,\overline{z_{m+2}}\}$,
$f_{fin}(y) = d_{fin}(y), f_{fin}(z) = e_{fin}(z)$ and $f_{fin}(w) = 0$ for every $w \in \bx$. Note that by the assumption on the amplifiers,
both $f_{init}$ and $f_{fin}$ can be encoded in unary. Set $\ell_\cC = 3(2k+4nm) + (m+2) + \ell + 1 + 3(2k+4nm) + (m+2)$. 
We now claim that,

\begin{theorem}
	$f_{init}$ can cover $f_{fin}$ in $\cC$ in $\ell_{\cC}$ steps in $\cC$ iff $c_{init}$ can cover $c_{fin}$ in $\ell$ steps in $\mach$.
\end{theorem}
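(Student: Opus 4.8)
The plan is to prove both implications by \emph{composing} the four building blocks established in Facts~1--4, together with the structural observation that the three machines $\mach_d$, $\mach$, $\mach_e$ sitting inside $\cC$ each modify only their own block of counters ($\boldy$, $\bx$, $\bz$ respectively), so the other two blocks stay frozen while one machine is being run. I would first record this freezing fact explicitly: in $\cC$, the counters $\bz$ are untouched until $\rho$ reaches $q_5$, the counters $\boldy$ are untouched once $\rho$ leaves $q_2$, the marker $y$ (resp. $z$) is untouched once $\rho$ leaves $q_2$ (resp. enters $q_6$), and the $\bx$-block is untouched between $q_4$ and $q_6$; also, since $\sta{f_{fin}}=end$ and the control graph of $\cC$ only reaches $end$ through the single path $q_1\to\dots\to q_2\to\dots\to q_3\to\dots\to q_4\to q_5\to\dots\to q_6\to\dots\to end$, every run to a configuration covering $f_{fin}$ must visit $q_1,q_2,q_3,q_4,q_5,q_6,end$ in that order, and visits each of $q_2,\dots,q_6$ exactly once.

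\textbf{The ``if'' direction.} Assume $c_{init}\act{*}c''$ in $\mach$ with $c''\ge c_{fin}$ along a run of length exactly $\ell$. Starting from $f_{init}$, apply the existence part of Fact~1 to reach in $3(2k+4nm)$ steps the canonical $q_2$-configuration (which agrees with $f_{init}$ except that $\boldy$ now carries $d_{fin}$). Apply the existence part of Fact~2 to reach in $m+2$ further steps the canonical $q_3$-configuration; since $\bx$ was still $\mathbf0$ there, its $\bx$-projection is exactly $\alpha_i/2^k=c_{init}(x_i)$, i.e.\ $c_{init}$. Now replay the given run of $\mach$ (which touches only $\bx$) to reach $q_4$ in $\ell$ steps with $\bx$-projection $c''\ge c_{fin}$. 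Take the dummy edge $q_4\to q_5$ (one step, no effect), then the existence part of Fact~3 to reach $q_6$ in $3(2k+4nm)$ steps with $\bz$ carrying $e_{fin}$ and $\bx$ unchanged (hence still $\ge c_{fin}$). Because $\bx\ge c_{fin}$ at $q_6$, the existence part of Fact~4 applies and yields a run of $m+2$ steps to $end$ after which $z_i=0$, $\overline{z_i}=1$, and each $x_i$ has been decreased by $\beta_i/2^k=c_{fin}(x_i)$, so $x_i\ge c_{fin}(x_i)-c_{fin}(x_i)=0$. The total length is $3(2k+4nm)+(m+2)+\ell+1+3(2k+4nm)+(m+2)=\ell_{\cC}$, and the reached configuration has state $end$, $y=d_{fin}(y)$, $z=e_{fin}(z)$, $y_i=z_i=0$, $\overline{y_i}=\overline{z_i}=1$, $\bx\ge\mathbf0$, hence covers $f_{fin}$.

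\textbf{The ``only if'' direction.} Let $\rho\colon f_{init}\act{*}c''$ have length $\ell_{\cC}$ with $c''\ge f_{fin}$. Apply Fact~1 with $c=f_{init}$ (which has state $q_1$ and $f_{init}{\restriction}\boldy=d_{init}$) and $c''$: since $y$ is frozen after $q_2$ we have $c''(y)\ge f_{fin}(y)=d_{fin}(y)=c'(y)$, so the uniqueness part of Fact~1 forces $\rho$ to reach the canonical $q_2$-configuration after exactly $3(2k+4nm)$ steps. Next apply Fact~2 from that configuration: as $\overline{y_i}$ is frozen after $q_3$ and $c''(\overline{y_i})\ge1$, $c''(y_i)\ge0$, its uniqueness part forces $\rho$ to reach the canonical $q_3$-configuration, whose $\bx$-projection is exactly $c_{init}$, after $m+2$ more steps. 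Symmetrically, $\rho$ reaches $q_5$ with $\bz=e_{init}$ (it is frozen before $\mach_e$), so Fact~3 with $c''(z)\ge f_{fin}(z)=e_{fin}(z)$ forces $\rho$ to reach the canonical $q_6$-configuration (with $\bz=e_{fin}$) after exactly $3(2k+4nm)$ steps; then Fact~4 with $c''(z_i)\ge0$, $c''(\overline{z_i})\ge1$ gives both that the $\bx$-projection when $\rho$ first reaches $q_6$ is $\ge c_{fin}$ \emph{and} that $\rho$ reaches $end$ after exactly $m+2$ more steps. Since the suffix from $q_5$ has length $3(2k+4nm)+(m+2)$ and $q_4\to q_5$ is one step, $\rho$ reaches $q_4$ at time $\ell_{\cC}-[3(2k+4nm)+(m+2)]-1$, so the portion between the canonical $q_3$-configuration and $q_4$ has length exactly $\ell_{\cC}-2[3(2k+4nm)+(m+2)]-1=\ell$; it uses only $\mach$'s rules, so its $\bx$-projection is a run of $\mach$ of length $\ell$ from $c_{init}$ to a configuration at $\sta{c_{fin}}$ whose value equals the $\bx$-projection at $q_6$ (the $\bx$-block being frozen along $q_4\to q_5\to\dots\to q_6$), hence $\ge c_{fin}$. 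Thus $c_{init}$ covers $c_{fin}$ in $\mach$ in $\ell$ steps.

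The routine part is the ``if'' direction. The delicate part is the ``only if'' direction, where one must convert the single coverability hypothesis into the statement that $\rho$ visits each canonical intermediate configuration at exactly the prescribed time: this is precisely where the uniqueness clauses of Facts~1--4 are invoked, and the whole argument rests on the bookkeeping that the amplifier blocks $\boldy,\bz$ (in particular the marker counters $y,z$ and the complement counters $\overline{y_i},\overline{z_i}$) are never modified outside their own segment, so the corresponding coordinates of $c''$ pin $\rho$ down on those segments completely; once that is in place, the remaining step-count arithmetic matching $\ell_{\cC}$ to $\ell$ plus the fixed segment lengths is immediate.
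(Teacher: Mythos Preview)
Your proposal is correct and follows essentially the same approach as the paper's proof: both directions are obtained by composing Facts~1--4, using that the three counter blocks $\boldy,\bx,\bz$ are each modified only in their own segment of $\cC$, and then matching up the fixed segment lengths against $\ell_{\cC}$ to isolate a length-$\ell$ run of $\mach$. Your presentation is slightly tidier (you state the freezing observation once upfront and do the step-count arithmetic explicitly), whereas the paper names intermediate configurations $f_2,\dots,f_7$ and argues $c_7=f$ backwards from the fact that $end$ has no outgoing rules, but the substance is the same.
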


\begin{proof}
	Suppose $c_{init}$ can reach a configuration $c$ which covers $c_{fin}$ in exactly $\ell$ steps in $\mach$ by a run $\rho$. 
	First let us define the following configurations:
	\begin{itemize}
		\item $f_2$ is the same configuration as $f_{init}$ except that its state is $q_2$ and $f_2(w) = d_{fin}(w)$ for every $w \in \boldy$.
		\item $f_3$ is the same configuration as $f_2$ except that its state is $q_3$ and $f_3(y_i) = 0, f_3(\overline{y_i}) = 1$ for every $1 \le i \le m+2$
		and $f_3(x_i) = c_{init}(x_i)$ for every $1 \le i \le m$.
		\item $f_4$ is the same configuration as $f_3$ except that its state is $q_4$ and $f_4(x_i) = c(x_i)$ for every $1 \le i \le m$.
		\item $f_5$ is the same configuration as $f_4$ except that its state is $q_5$.
		\item $f_6$ is the same configuration as $f_5$ except that its state is $q_6$ and $f_6(w) = e_{fin}(w)$ for every $w \in \boldy$.
		\item $f_7$ is the same configuration as $f_6$ except that its state is $end$ and $f_7(z_i) = 0, f_7(\overline{z_i}) = 1$ for every $1 \le i \le m+2$
		and $f_7(x_i) = f_6(x_i) - \beta_i/2^k$ for every $1 \le i \le m$.
	\end{itemize}
	
	By Facts 1, 2, 3 and 4, we have that
	$f_{init}$ can reach $f_2$ in exactly $3(2k+4nm)$ steps, $f_2$ can reach $f_3$ in exactly $m+2$ steps, $f_5$ can reach $f_6$ in exactly
	$3(2k+4nm)$ steps and $f_6$ can reach $f_7$ in exactly $m+2$ steps. Further, from the construction it is clear that $f_4$ can reach
	$f_5$ in exactly one step and also that $f_7$ covers $f_{fin}$. By assumption there is a run $\rho$ from $c_{init}$ to $c$ of exactly $m$
	steps in $\mach$. By construction of $f_3$ and $f_4$ it can be observed that the same run $\rho$ is also a run from $f_3$ to $f_4$
	in $\cC$. It then follows that we have a run from $f_{init}$ to $f_7$ of length exactly $\ell_C$.

	Suppose $f_{init}$ can reach a configuration $f$ which covers $f_{fin}$ in exactly $\ell_\cC$ steps in $\cC$ by a run $\rho$.
	First let us define the following configurations:
	\begin{itemize}
		\item $f_2$ is the same configuration as $f_{init}$ except that its state is $q_2$ and $f_2(w) = d_{fin}(w)$ for every $w \in \boldy$.
		\item $f_3$ is the same configuration as $f_2$ except that its state is $q_3$ and $f_3(y_i) = 0, f_3(\overline{y_i}) = 1$ for every $1 \le i \le m+2$
		and $f_3(x_i) = c_{init}(x_i)$ for every $1 \le i \le m$.
		\item $f_4$ is the same configuration as $f_3$ except that its state is $q_4$ and $f_4(x_i) = f(x_i) + \beta_i/2^k$ for every $1 \le i \le m$.
		\item $f_5$ is the same configuration as $f_4$ except that its state is $q_5$.
		\item $f_6$ is the same configuration as $f_5$ except that its state is $q_6$ and $f_6(w) = e_{fin}(w)$ for every $w \in \boldy$.
	\end{itemize}
	
	By Facts 1 and 2, it must be the case that after $3(2k+4nm)$ steps of $\rho$, $f_2$ is reached and after $m+2$ steps from there,
	$f_3$ is reached. Now let us look at the first points in the run $\rho$ when the state $q_4$ is reached
	and let us call the configuration at that point as $c_4$. 
	By assumption on $\mach$, there are no outgoing rules from $q_4$ in $\mach$
	and so the only way to move out of $q_4$ in $\cC$ is to take the rule to $q_5$ to reach a configuration $c_5$ whose
	counter values are the same as that of $q_4$. By construction of $\cC$, no counter in $\bz$ is affected
	before reaching the state $q_5$ and so this means that $c_5(w) = f_{init}(w) = e_{init}(w)$ for any $w \in \bz$.
	Since $f(z) \ge f_{fin}(z) = e_{fin}(z)$ and since we have a run from $c_5$ to $f$,
	by Fact 3, it must be the case that starting from $c_5$, after $3(2k+4nm)$ steps of $\rho$, we reach a 
	configuration $c_6$ which is the same as $c_5$ except that its state is $q_6$ and $c_6(w) = e_{fin}(w)$ for any $w \in \bz$.
	By Fact 4, it must be the case that starting from $c_6$, after $m+2$ steps of $\rho$, we reach a configuration
	$c_7$ whose state is $end$ and whose counter values are the same as that of $c_6$, except that
	$c_7(z_i) = 0$ and $c_7(\overline{z_i}) = 1$ for every $1 \le i \le m+2$ and $c_7(x_i) = c_6(x_i) - \beta_i/2^k$ for every $1 \le i \le m$. 
	Since there are no outgoing rules from $end$ in $\cC$, it follows that $c_7 = f$.
	This in turn implies that $c_6 = f_6, c_5 = f_5$ and $c_4 = f_4$. This further implies that the path from $f_3$ to $f_4$ in $\rho$
	is a path of length exactly $\ell$ consisting only of those rules from $\mach$. By definition of $f_3$ and $f_4$, this 
	run is also a run of length exactly $\ell$ in $\mach$ from the configuration $c_{init}$ to the configuration $c$ whose state is $q_4$ and $c(x_i) = f(x_i) + \beta_i/2^k$ for every $i$. Since $c$ covers $c_{fin}$, it follows that there is a run of length exactly $\ell$ in $\mach$ from $c_{init}$
	which covers $c_{fin}$.
\end{proof}


Finally, we make an observation on the number of counters that we have used for the reduction. Inspecting the arguments given in 
Stages 1 and 2, we already see that coverability in $\QVASSRL$ for super-structured instances is already $\NEXPTIME$-hard 
for 15 counters. The reduction given in Stage 3 gives us an instance of coverability in $\QVASSRL$ over unary encodings with
$m+2(2m+5)$ counters, where $m$ is the number of counters in the given super-structured instance of coverability in $\QVASSRL$.
Assuming Theorem~\ref{thm:amplifier}, it then follows that

\begin{theorem}
	Coverability in $\QVASSRL$ over unary encodings is $\NEXPTIME$-hard, already over $\CVASS$es of dimension 85.
\end{theorem}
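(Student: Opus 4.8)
The reductions assembled in Stages~1--3 already reduce the final statement to the $\CVASS$ Amplifiers Theorem (Theorem~\ref{thm:amplifier}), which is the only claim above that has not yet been proved. Indeed, Stage~1 traces Bounded-PCP (hard even when the bound is a power of $2$) through $\MCM$ (hard even when the run-length bound is a power of $2$) to $\NCCM$ and observes that all denominators that arise are powers of $2$, giving $\NEXPTIME$-hardness of reachability for \emph{structured} $\NCCM$ over $6$ counters; Stage~2 composes the $\NCCM\to\QVASSRL$ reduction of \cref{sec:lower-bound-nccm-pvass} with the fractional-shift construction, yielding $\NEXPTIME$-hardness of coverability for \emph{super-structured} $\QVASSRL$ over $2\cdot6+1+2 = 15$ counters; and Stage~3 removes the binary-encoded numbers by plugging two amplifiers $\mach_d,\mach_e$ fore and aft of the $\QVASSRL$ instance, which raises the dimension from $m$ to $m+2(2m+5) = 5m+10$, i.e.\ to $5\cdot15+10 = 85$. (Composing once more with the bounded-stack-height product of \cref{subsec:cvass-to-cgvas}, which adds no counters and preserves coverability, then also yields \cref{thm:discussion-hardness}.) Hence the plan is to construct the amplifier of Theorem~\ref{thm:amplifier}.

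Since the amplifier's control-flow graph is a path, any run that reaches the terminal state fires every rule exactly once, so such a run has length equal to the number of rules; the task is to design the rules so that $y$ singles out one run. I would use $y$ as a \emph{controller}, exactly as $ctrl$ is used in \cref{sec:lower-bound-nccm-pvass}: $y$ starts at $0$, a designated set of $Z$ rules each add $1$ to $y$, and I set $c'(y) = Z$. Because $y$ is non-decreasing and is raised only by controller rules, a configuration with $y \ge Z$ can be reached only after firing all $Z$ controller rules, each with fraction exactly $1$; arranging the construction so that the last rule on the path is a controller rule, this forces the full path to be fired, and the forcing then propagates to the remaining ``continuous'' rules by the complement-counter mechanism of \cref{lem:ctrl} and Stage~2: for a counter $w$ carrying the invariant $w+\bar w = 1$, a controller pair that first decrements $\bar w$ by $1$ and then increments it by $1$ behaves as a test ``$w = 0$'', since the first rule can fire completely only when $\bar w = 1$, and this lower bound together with the nonnegativity constraint on the preceding update rule pins that update's firing fraction. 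Thus every rule of the intended run is pinned, the run is unique, and it ends at a single determined configuration, which we take as the definition of $c'$.

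On top of this mechanism I would lay the arithmetic, reusing the halving and doubling gadgets of \cref{sec:lower-bound-mcm-nccm} translated into $\CVASS$ form by the complement-counter technique of \cref{sec:lower-bound-nccm-pvass}. Start with the work counter $y_{m+1}$ holding $1$ (a unary value) and halve it $k$ times (loop-free copies of the halving gadget of \cref{fig:init}), so that it passes through $1/2^k$; then, writing $b_{i,j}$ for the $j$-th bit of $p_i$, process bit positions $j = 0,1,\dots,n-1$, doubling $y_{m+1}$ once between consecutive positions (the doubling gadget of \cref{fig:case-two}) so that at position $j$ it holds $1/2^{k-j}$, and, for each $i\le m$ with $b_{i,j} = 1$, add its value to $y_i$ by emptying $y_{m+1}$ into $y_i$ and the temporary $y_{m+2}$ and then refilling $y_{m+1}$ from $y_{m+2}$, padding with no-op blocks when $b_{i,j} = 0$ so that the rule count does not depend on the bits. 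All complement invariants are maintained throughout; this is legitimate because every counter stays in $[0,1]$: the monotone counter $y_i$ ends at $\sum_{j=0}^{n-1} b_{i,j}/2^{k-j} = p_i/2^k \le 1$, and each work or temporary counter is full or empty at the seams of the gadgets. A direct count of the unrolled program gives exactly $3(2k+4nm)$ rules --- the factor $3$ is the ``one update plus two test rules'' of the controller mechanism, $2k$ counts the two-rule halving steps that bring $y_{m+1}$ to $1/2^k$, and $4nm$ (after no-op padding) the bit-processing phase --- and every value of the initial configuration, as well as $c'(y) = Z$, is polynomially bounded and hence unary-encodable, while the whole construction is produced in polynomial time in $n,m,k$.

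The main obstacle will be the uniqueness/completeness half of Theorem~\ref{thm:amplifier}: that \emph{every} run reaching some $d$ with $d(y)\ge c'(y)$ is forced, step by step, to be the intended run and to end exactly at $c'$. I would handle this by isolating three small amplifier gadget-simulation lemmas --- for ``halve $y_{m+1}$'', ``double $y_{m+1}$'', and ``add $y_{m+1}$ to $y_i$'' --- each stating, in the style of \cref{lem:gadget-simulation} and \cref{lem:ctrl}, that from a configuration of the expected shape there is exactly one way to traverse the gadget consistent with keeping $y$ on schedule, and that it returns the temporaries to their expected values so that the invariants are restored for the next gadget; the claim about the endpoint then follows by induction along the path. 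The one genuinely delicate point is the interaction, at the seams of successive gadgets, between the nonnegativity bounds and the controller-forced full firings --- precisely the chained-forcing argument already carried out for \cref{lem:ctrl} --- so I expect it to go through, just with more bookkeeping.
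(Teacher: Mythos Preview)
Your proposal is correct and rests on the same core ideas as the paper: the controller counter $y$ forcing full firings, complement counters turning $w=0$ into $\bar w\ge 1$, and the halving/doubling/addition gadgets for the arithmetic. The notable difference is one of modularity. The paper does not build the $\CVASS$ amplifier directly; instead it first constructs an \emph{$\NCCM$ amplifier} (where zero tests are native, so uniqueness of the run is immediate from a simple Gadget Lemma in the style of \cref{lem:gadget-simulation}), and then applies the already-proven $\NCCM\to\QVASSRL$ reduction of \cref{sec:lower-bound-nccm-pvass} to that instance. This buys exactly what you identify as ``the main obstacle'': the delicate chained-forcing completeness argument is discharged for free by the existing Control Counter Simulation lemma (\cref{lem:ctrl}), and the path-shape and $3(2k+4nm)$ length fall out of the reduction's $1{:}3$ rule blowup. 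Your direct-$\CVASS$ route also works, but you end up re-proving \cref{lem:ctrl} inside each of your three gadget-simulation lemmas. A secondary, cosmetic difference is in the arithmetic: the paper keeps the store counter $st$ fixed at $1/2^k$ and processes the numbers one at a time via Horner's scheme, doubling the \emph{result} counter $x_i$; you instead process bit-positions across all numbers and double the \emph{increment} counter $y_{m+1}$. Both compute $p_i/2^k$ and, with no-op padding, both meet the $2k+4nm$ budget.
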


Since the reduction from $\QVASSRL$ to $\QPVASS$ given in Subsection~\ref{subsec:cvass-to-cgvas} preserves coverability and produces a $\QPVASS$ of bounded stack-height, we get~\cref{thm:discussion-hardness}.

\subsection*{Stage 4: Constructing $\CVASS$ amplifiers. }

\newcommand{\boldp}{\mathbf{p}}
All that remains is to prove Theorem~\ref{thm:amplifier}, i.e., construct $\CVASS$ amplifiers. Let $p_1, p_2, \dots, p_m$ be natural numbers, all of them encoded in binary using exactly $n$ bits and let $k$ be a number such that each $p_i \le 2^k$. Let $\boldp$ be the vector $(p_1,\dots,p_m)$.
We want to construct a $\CVASS$ amplifier for $(\boldp,k)$ in time polynomial in $n, m$ and $k$. 
We do this in the following way. Instead of constructing a $\CVASS$ which is an amplifier for $(\boldp,k)$, we
first construct a $\NCCM$ which is an amplifier for $(\boldp,k)$. Then we apply the reduction from Section~\ref{sec:lower-bound-nccm-pvass} on this
$\NCCM$ amplifier and show that the resulting $\CVASS$ is also an amplifier for $(\boldp,k)$.

To do this, we first introduce gadgets which can do basic operations on counters such as addition, doubling and halving.
We have already seen all of these gadgets as part of the reduction given in Section~\ref{sec:lower-bound-mcm-nccm}. 

\subsubsection*{Gadgets for addition, doubling and halving.}

Consider the three $\NCCM$'s given in Figures~\ref{fig:add},~\ref{fig:doub},~\ref{fig:halve}, which
we call the addition, subtraction, doubling and halving gadgets
respectively. In these gadgets $x, st$ and $te$ are some three counters.
The addition and subtraction gadgets are parameterized by two counters $x$ and $st$
and the doubling and halving gadgets are parameterized by a single counter $x$.
The addition and the doubling gadgets are the same as the ones that appeared as 
gadgets for the $\inc$ and $\double$ rules respectively in Section~\ref{sec:lower-bound-mcm-nccm}.
The halving gadget is a simplification of the initialization gadget used in Section~\ref{sec:lower-bound-mcm-nccm}.

A configuration $C$ of any one of these gadgets $\mach_g$ is said to be \emph{good} if
\begin{itemize}
	\item $\mach_g$ is the addition gadget and $C(x) + C(st) \leq 1$.
	\item $\mach_g$ is the doubling gadget and $2C(x) \leq 1$.
	\item $\mach_g$ is the halving gadget and $C$ is any configuration.
\end{itemize}

The following lemma is immediate from the construction of these gadgets.
\begin{lemma}[The gadget lemma]\label{lem:gadget}
	Suppose $C$ is a good configuration of one of these gadgets $\mach_g$ such that
	$\sta{C} = q$ and $C(te) = 0$. Then, there is exactly one run
	of length two in any of these gadgets from $C$ which reaches a configuration
	$C'$ which is the same as $C$ except that $\sta{C'} = q'$ and
	\begin{itemize}
		\item $C'(x) = C(x) + C(st)$ if $\mach_g$ is the addition gadget.
		\item $C'(x) = 2C(x)$ if $\mach_g$ is the doubling gadget.
		\item $C'(x) = C(x)/2$ if $\mach_g$ is the halving gadget.
	\end{itemize}
\end{lemma}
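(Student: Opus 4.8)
The plan is to verify the statement directly, one gadget at a time. Each of the three $\NCCM$s is a chain of two rules $q\to r_{mid}\to q'$, so the only run of length two starting in $C$ fires the first rule and then the second; it therefore suffices to check that each of the two firing fractions is forced by the zero-test attached to its rule, that the corresponding updated valuations stay in $[0,1]^d$ (this is exactly where the various ``good configuration'' conditions enter), and then to read off the resulting configuration $C'$. This is precisely the single-iteration computation already carried out inside the proof of the Gadget Simulation lemma (\cref{lem:gadget-simulation}) and in the analysis underlying the Initialization lemma (\cref{lem:init}), so one may either invoke those or redo the short argument.

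Concretely, for the addition gadget with counters $x$, $st$ and scratch counter $te$: starting from a good $C$ with $\sta{C}=q$ and $C(te)=0$, fire the first rule $\cpp{x},\cpp{te},\cmm{st};\ st=0?$ with fraction $\alpha$. The zero-test on $st$ forces $C(st)-\alpha=0$, so $\alpha=C(st)$ (a legitimate firing fraction provided $C(st)>0$, which holds in every application of the lemma), and the only $[0,1]$-constraint that is not automatic, namely $C(x)+C(st)\le 1$, holds because $C$ is good; this reaches $D$ with $D(x)=C(x)+C(st)$, $D(te)=C(st)$, $D(st)=0$ and all other counters unchanged. Firing the second rule $\cpp{st},\cmm{te};\ te=0?$ with fraction $\beta$, the zero-test on $te$ forces $\beta=D(te)=C(st)$, the $[0,1]$-constraint is again met, and one lands in $C'$ that agrees with $C$ except $\sta{C'}=q'$ and $C'(x)=C(x)+C(st)$. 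Since $\alpha,\beta$ were uniquely determined, the run is unique. The doubling gadget is identical: the first rule $\cpptwo{te},\cmm{x};\ x=0?$ forces $\alpha=C(x)$, and the good condition $2C(x)\le 1$ makes $D(te)=2C(x)$ admissible; the second rule $\cpp{x},\cmm{te};\ te=0?$ then forces $\beta=2C(x)$, restoring $te=0$ and leaving $x=2C(x)$. For the simplified halving gadget, the first rule $\cpp{te},\cmmtwo{x};\ x=0?$ forces $\alpha=C(x)/2$, moving that value into $te$, and the second rule $\cpp{x},\cmm{te};\ te=0?$ forces $\beta=C(x)/2$, moving it back into $x$, so $C'(x)=C(x)/2$; here every intermediate value lies in $[0,\tfrac{1}{2}]\subseteq[0,1]$, which is why an arbitrary configuration counts as good.

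The verification is elementary; the two points that require care are (i) checking at every step that the updated valuation stays inside $[0,1]^d$, which is exactly what the relevant good-configuration condition guarantees, and (ii) the use of the hypothesis $C(te)=0$, which is essential since $te$ is reused as scratch space between the two steps and must be restored to $0$ at the end. In all three gadgets only $x$ and $te$ are ever modified and $te$ returns to $0$, so the output $C'$ differs from $C$ only in the control state and in the value of $x$, as asserted; uniqueness of the length-two run is immediate from the fact that both firing fractions are pinned down by the zero-tests. There is no real obstacle here beyond this bookkeeping; the lemma is essentially a clean restatement of the gadget analyses done earlier, packaged for reuse in the amplifier construction.
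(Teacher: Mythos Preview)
Your proposal is correct and matches the paper's approach: the paper simply declares the lemma ``immediate from the construction of these gadgets,'' and your case-by-case verification is exactly the single-iteration bookkeeping already carried out in the proofs of \cref{lem:gadget-simulation} and \cref{lem:init}. One tiny imprecision: in the addition gadget $st$ is also modified (decremented to $0$, then restored), so ``only $x$ and $te$ are ever modified'' is not literally true there, though your conclusion remains correct since $st$ returns to its original value.
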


\begin{figure}[h]
\begin{center}
	\tikzstyle{node}=[circle,draw=black,thick,minimum size=12mm,inner sep=0.75mm,font=\normalsize]
	\tikzstyle{edgelabelabove}=[sloped, above, align= center]
	\tikzstyle{edgelabelbelow}=[sloped, below, align= center]
	\begin{tikzpicture}[->,node distance = 3cm,scale=0.8, every node/.style={scale=0.8}]
		\node[node] (q) {$q$};
		\node[node, right = of q] (r) {$add$};
		\node[node, right = of r] (q') {$q'$};
		
		\draw(q) edge[edgelabelabove] node{$\cpp{x}, \cpp{te}, \cmm{st}; \ st = 0?$} (r);
		\draw(r) edge[edgelabelabove] node{$\cpp{st}, \cmm{te}; \ te = 0?$} (q');
	\end{tikzpicture}
\end{center}
	\caption{The addition gadget $Add(x,st)$. }
\label{fig:add}
\end{figure}
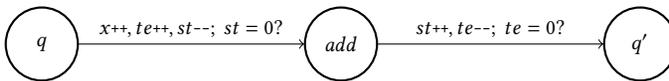 

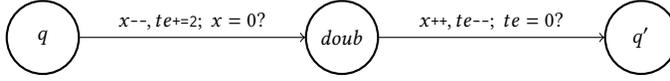
\begin{figure}[h]
	\begin{center}
		\tikzstyle{node}=[circle,draw=black,thick,minimum size=12mm,inner sep=0.75mm,font=\normalsize]
		\tikzstyle{edgelabelabove}=[sloped, above, align= center]
		\tikzstyle{edgelabelbelow}=[sloped, below, align= center]
		\begin{tikzpicture}[->,node distance = 3cm,scale=0.8, every node/.style={scale=0.8}]
			\node[node] (q) {$q$};
			\node[node, right = of q] (r) {$doub$};
			\node[node, right = of r] (q') {$q'$};
			
			\draw(q) edge[edgelabelabove] node{$\cmm{x}, \cpptwo{te}; \ x = 0?$} (r);
			\draw(r) edge[edgelabelabove] node{$\cpp{x}, \cmm{te}; \ te = 0?$} (q');
		\end{tikzpicture}
	\end{center}
	\caption{The doubling gadget $Doub(x)$. }
	\label{fig:doub}
\end{figure}

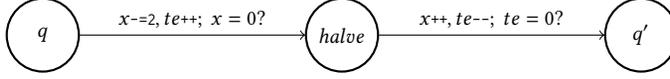
\begin{figure}[h]
	\begin{center}
		\tikzstyle{node}=[circle,draw=black,thick,minimum size=12mm,inner sep=0.75mm,font=\normalsize]
		\tikzstyle{edgelabelabove}=[sloped, above, align= center]
		\tikzstyle{edgelabelbelow}=[sloped, below, align= center]
		\begin{tikzpicture}[->,node distance = 3cm,scale=0.8, every node/.style={scale=0.8}]
			\node[node] (q) {$q$};
			\node[node, right = of q] (r) {$halve$};
			\node[node, right = of r] (q') {$q'$};
			
			\draw(q) edge[edgelabelabove] node{$\cmmtwo{x}, \cpp{te}; \ x = 0?$} (r);
			\draw(r) edge[edgelabelabove] node{$\cpp{x}, \cmm{te}; \ te = 0?$} (q');
		\end{tikzpicture}
	\end{center}
	\caption{The halving gadget $Halve(x)$. }
	\label{fig:halve}
\end{figure}

\subsubsection*{Constructing $\NCCM$ amplifiers.}

Given an $\NCCM$ $\mach$, its control flow graph is the graph obtained by taking the set of states of $\mach$ as its vertices
and connecting any two vertices by an edge if there is a rule between them.
We now define the notion of an $\NCCM$ amplifier for the tuple $(\boldp,k)$. 

\begin{definition}[$\NCCM$ amplifier.]
	An $\NCCM$ amplifier for the tuple $(\boldp,k)$ is a triple $\langle \cC,c_{init},c_{fin} \rangle$ such that $\cC$ is an $\NCCM$ and $c_{init}, c_{fin}$ are two of
	its configurations satisfying the following properties:
	\begin{itemize}
		\item $\cC$ has $m+2$ counters $x_1,\dots,x_{m},st,te$.
		\item The control flow graph of $\cC$ is a path of length exactly $2k+4nm$. 
		\item $c_{init}$ is 0 on every counter, except on $st$ where it takes the value 1. 
		\item $c_{fin}(x_i) = p_i/2^k$ for each $i$, $c_{fin}(st) = 1/2^k$ and $c_{fin}(te) = 0$.
		\item Starting from $c_{init}$ there is exactly one run of length $2k+4nm$. Further, this run ends at $c_{fin}$.
	\end{itemize}
\end{definition}

We now show that
\begin{theorem}
	In time polynomial in $n,m$ and $k$ we can construct a $\NCCM$ amplifier for $(\boldp,k)$.
\end{theorem}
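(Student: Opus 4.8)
The plan is to realise the path‑shaped control flow of the amplifier as a concatenation of fresh copies of the gadgets of Lemma~\ref{lem:gadget}, interspersed with trivial no‑op steps, and to build up $p_i/2^k$ in the counter $x_i$ by the standard base‑$2$ scheme: whenever the auxiliary counter $st$ holds $2^{j-k}$ and the $j$‑th bit of $p_i$ equals $1$, add $st$ to $x_i$ with $Add(x_i,st)$. Starting from $st=1$ and halving $st$ exactly $k$ times with $Halve(st)$, the values $2^0,2^{-1},\dots,2^{-k}$ become available one after another, so after all bits have been considered $x_i=\sum_j b_{i,j}2^{j-k}=p_i/2^k$ (here we use $p_i\le 2^k$), $st$ has reached $2^{-k}$, and $te$ has been restored to $0$.

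In detail, write $b_{i,j}\in\{0,1\}$ for the $j$‑th bit of $p_i$, $0\le j\le n-1$. I would lay the path out by running through the bit positions in order of decreasing significance; for a fixed position $j$ I insert, consecutively for $i=1,\dots,m$, a \emph{bit segment} of length $4$: if $b_{i,j}=1$ it is a fresh copy of $Add(x_i,st)$ (two steps) followed by two no‑op steps, and if $b_{i,j}=0$ it is four no‑op steps, where a \emph{no‑op step} is a rule with update vector $\zerovec$ that zero‑tests $te$ (this rule is always enabled and leaves the configuration unchanged, since $te=0$ between gadget invocations). Between these bit segments I place the $k$ copies of $Halve(st)$ so that, at the moment bit position $j$ is processed, $st=2^{j-k}$ for every $0\le j\le k$: concretely one performs, before the first processed position of index $\le k$, exactly the number of halvings needed, and one more halving on each descent to the next position, for a total of exactly $k$; positions $j>k$, if any, satisfy $b_{i,j}=0$ because $p_i\le 2^k$, so no $Add$ happens there and the current value of $st$ is irrelevant. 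Using fresh states throughout makes the control‑flow graph a simple path, and counting $2$ edges for each halving and $4$ edges for each of the $nm$ bit segments yields length exactly $2k+4nm$; the construction is visibly polynomial in $n,m,k$. Finally $c_{init}$ and $c_{fin}$ are declared as in the definition of a $\NCCM$ amplifier, and $c_{init}$ has $te=0$ and is good for the first gadget.

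For correctness I would prove, by induction along the path, that after every prefix the configuration is determined with $te=0$, that when bit position $j$ is about to be processed $st=2^{j-k}$ and $x_i=\sum_{j'>j}b_{i,j'}2^{j'-k}$, and that at each $Add(x_i,st)$ invocation the configuration is good — indeed there $b_{i,j}=1$, so $x_i+st=\sum_{j'\ge j}b_{i,j'}2^{j'-k}\le p_i/2^k\le 1$. The induction step is exactly Lemma~\ref{lem:gadget}: $Halve(st)$ and $Add(x_i,st)$ are always applied to good configurations with $te=0$, act deterministically as stated, and restore $te=0$, while no‑op steps change nothing. Hence at the end of the path $x_i=\sum_{j=0}^{n-1}b_{i,j}2^{j-k}=p_i/2^k$, $st=2^{-k}$ and $te=0$, i.e.\ the reached configuration is $c_{fin}$; and since the control‑flow graph is a path, the rule sequence of any length‑$(2k+4nm)$ run from $c_{init}$ is forced, so every such run reaches $c_{fin}$. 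As every rule of $\cC$ performs a zero‑test (the update and halving gadgets by construction, the no‑op rules on $te$), this $\NCCM$ is exactly the amplifier required and is in the shape expected by the reduction of Section~\ref{sec:lower-bound-nccm-pvass}. The only delicate point is the bookkeeping that forces the path length to be \emph{exactly} $2k+4nm$ while simultaneously keeping every $Add$‑invocation good and making $st$ carry the right power of two at each bit position; this is where one must be careful about the interplay between $n$ and $k$ and about how the $k$ halvings are threaded through the $n$ rounds of bit segments. Everything else is routine bookkeeping on top of Lemma~\ref{lem:gadget}.
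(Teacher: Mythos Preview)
Your construction is correct and meets all five clauses of the amplifier definition; the bookkeeping you flag as ``delicate'' does indeed work out, and your verification that every $Add(x_i,st)$ is invoked on a good configuration (via $x_i+st=\sum_{j'\ge j}b_{i,j'}2^{j'-k}\le p_i/2^k\le 1$) is the right check.

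Your route, however, differs from the paper's. The paper first performs all $k$ halvings so that $st=1/2^k$ throughout the remainder, and then builds each $p_i/2^k$ counter by counter via Horner's scheme: process the bits of $p_i$, doubling $x_i$ with $Doub(x_i)$ at every bit and adding $st=1/2^k$ with $Add(x_i,st)$ whenever the bit is $1$. You do the opposite: you never use the doubling gadget at all, and instead exploit the fact that $st$ takes the successive values $1,1/2,\dots,1/2^k$ as it is halved, adding $st=2^{j-k}$ to $x_i$ precisely when bit $j$ of $p_i$ is set. Both arrive at the same path length $2k+4nm$.

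Your variant is arguably cleaner on one point. The doubling gadget $Doub(x)$ fires its first rule with fraction exactly $C(x)$, and since firing fractions must lie in $(0,1]$, it cannot be applied when $x=0$. The paper's sketch has to work around this (initialising $x_i$ to a positive value before any doubling), and the handling of leading zero bits and of the case $p_i=0$ becomes fiddly. Your approach sidesteps this entirely: if all bits of $p_i$ vanish, you simply never invoke $Add(x_i,st)$ and $x_i$ stays at $0$. Conversely, the paper's counter-by-counter layout makes the interplay between $n$ and $k$ trivial (all halvings happen first), whereas you must thread the $k$ halvings through the $n$ bit rounds with a case split on whether $n-1\le k$.
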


\begin{proof}
Before we proceed with the construction, we make a small remark on notation.
In the sequel, we will describe $\NCCM$'s using statements of the form
"We move from state $q$ to $q'$ by adding the counter $st$ to the counter $x$".
This statement simply means that we attach the addition gadget $Add(x,st)$ in between
the state $q$ and $q'$. 
We employ a similar notation for doubling and halving as well.

%

Now we move on the construction of the $\NCCM$ amplifier. For the sake of simplicity, let us first handle the case when $\boldp$ is just 
a single number $p$. Hence, we want to construct an $\NCCM$ with counters $x_1, st$ and $te$. 
Let the binary encoding of $p$ in least significant bit notation be $b_1,b_2,\dots,b_n$.

We construct the $\NCCM$ as follows. 
First, we halve the value of $st$ for $k$ times, by introducing $k+1$ states $p_0,p_1,\dots,p_k$ and
by moving from $p_i$ to $p_{i+1}$ by halving the value of $st$. Note that since each halving operation
actually requires 2 rules, we can do this part by using exactly $2k$ rules.
Hence, if we initially began
with value 1 in $st$, after executing these $k$ rules, we would have reached the value $1/2^k$ in $st$.
Then if the bit $b_1$ is 0 we move from $p_k$ to a state $q_1$ by adding the value of
$st$ to $x_1$ two times and if the bit $b_1$ is 1, we move from $p_k$ to $q_1$
by adding the value of $st$ to $x_1$ exactly once. 
Hence, if $b_1$ is 0 we would have reached
$q_1$ with the value $2/2^k$ in the counter $x_1$ and otherwise we would have reached
$q_1$ with the value $1/2^k$ in the counter $x_1$. From here on, we introduce states
$q_2, q_3, \dots, q_n$ and for each $1 \le i \le n-1$, we move from $q_i$ to $q_{i+1}$
in the following manner: If the bit $b_{i+1}$ is 0, we move from $q_i$ to $q_{i+1}$
by doubling the value of the counter $x_1$ and if the bit $b_{i+1}$ is 1, we move
from $q_i$ to $q_{i+1}$ by first doubling the value of the counter $x_1$ and then
adding the value of $st$ to $x_1$. By introducing appropriate dummy rules which do nothing,
we can be assured that moving from $p_k$ to $q_1$ and moving from each $q_i$ to $q_{i+1}$ takes
exactly 4 rules. 

Note that the values of $c_{init}$ and $c_{fin}$
are fixed by the constraints on the definition of the amplifiers and we just need to specify their states in the $\NCCM$.
We set the initial state to be $p_0$ and the final state to be $q_n$.
By repeatedly applying the Gadget lemma (Lemma~\ref{lem:gadget}), we can then easily show
that starting from $c_{init}$ there is exactly one run of length $2k + 4n$ and further
that this run ends at $c_{fin}$.

This construction can be generalized in a straightforward manner to the case when $m > 1$. 
Essentially, we do the same process as above, where
we first halve the counter $st$ for $k$ times to arrive at the value $1/2^k$ in $st$ (by using $2k$ rules), 
then we set $x_1$ to $p_1/2^k$, $x_2$ to $p_2/2^k$ and so on till $x_m$ is set to $p_m/2^k$, all of them by the same procedure defined above (by using $4n$ rules for each $x_i$, thereby resulting in $4nm$ rules totally).

Finally, note that the control flow graph of the constructed $\NCCM$ is a path of length exactly $2k+4nm$.
\end{proof}

\subsubsection*{Constructing $\CVASS$ amplifiers. }

Now we construct $\CVASS$ amplifiers, i.e., the original amplifiers that we wanted to construct. 
Let $\langle \cC,c_{init},c_{fin} \rangle$ be a $\NCCM$ amplifier for the tuple $(\boldp,k)$ where the counters of $\cC$ are $x_1,\dots,x_m,st,te$.
Apply the reduction from Section~\ref{sec:lower-bound-nccm-pvass} on the $\NCCM$ reachability instance
$\langle \cC,c_{init},c_{fin},2k+4nm \rangle$ to get a $\QVASSRL$ instance
$\langle \mach,d_{init},d_{fin},3(2k+4nm) \rangle$. We claim that $\langle \mach,d_{init},d_{fin} \rangle$ is a $\CVASS$ amplifier for $(\boldp,k)$.

First, $\mach$ has $2m+5$ counters $x_1,\dots,x_m,st,te,\overline{x_1},\overline{x_2},\dots,\overline{x_m},\overline{st},\overline{te},ctrl$.
Next, since $c_{init}$ is encoded in unary so is $d_{init}$. Further, $d_{fin}(x_i) = c_{fin}(x_i) = p_i/2^k$ and for 
every $x \in \{x_1,\dots,x_m,st,te\}$, we have $d_{fin}(\overline{x}) = 1 - d_{fin}(x)$. Finally, $c_{fin}(ctrl)$ is simply $2(2k+4nm)$
which can also be encoded in unary. Since $\cC$ is an amplifier, it follows that the control flow graph of $\cC$ is a path of 
length exactly $2k+4nm$. Inspecting the structure of $\mach$, we notice that the control flow graph of $\mach$ is a path
of length exactly $3(2k+4nm)$
Hence, the first four requirements of being a $\CVASS$ amplifier are satisfied.

Now we come to the fifth requirement. Since there is a run from $c_{init}$ to $c_{fin}$ of length exactly $2k+4nm$,
by the correctness of the reduction, there is a run from $d_{init}$ to $d_{fin}$ of length exactly $3(2k+4nm)$.
Now, suppose there is some other run $\rho$ from $d_{init}$ to a configuration $d$ such that $d(ctrl) \ge d_{fin}(ctrl)$.
Inspecting the structure of $\mach$, we notice that the control flow graph of $\mach$ is a path
of length exactly $3(2k+4nm)$ such that only every second and third rule increases the counter $ctrl$ by one.
Since $\rho$ ends in a configuration whose $ctrl$ value is at least $2(2k + 4nm)$, it follows that
$\rho$ is a run of length exactly $3(2k+4nm)$
and also that $d(ctrl) = d_{fin}(ctrl) = 2(2k+4nm)$. 

By slicing the run $\rho$ into rules of three each, applying the completeness part of the control counter simulation lemma repeatedly
and using the fact that $\cC$ is an amplifier, we can then show that $d = d_{fin}$, thereby satisfying the fifth
requirement of being a $\CVASS$ amplifier.

	\newoutputstream{todos}
	\openoutputfile{main.todos.ctr}{todos}
	\addtostream{todos}{\arabic{@todonotes@numberoftodonotes}}
	\closeoutputstream{todos}

\end{document}